\crefname{line}{line}{lines}
\Crefname{line}{Line}{Lines}
\let\ifdraft\iffalse
\theoremstyle{acmdefinition}
\newtheorem{remark}{Remark}
\newtheorem{convention}{Convention}
\newcommand{\adjustfigure}[1][\small]{\centering#1\belowdisplayskip=0pt\belowdisplayshortskip=0pt\abovedisplayskip=0pt\abovedisplayshortskip=0pt}
\let\mathfigsize\footnotesize
\newenvironment{mathfig}[1][tb]{\begin{figure}[#1]\adjustfigure[\mathfigsize]}{\end{figure}}
\newcommand{\paperstats}{\small \textcolor{red}{TOTAL~PAGES:~\pageref*{paper-last-page}}}
\begin{document}

\title[TaDA Live]{TaDA Live: Compositional Reasoning for Termination of Fine-grained Concurrent Programs}

\author{Emanuele D'Osualdo}
\orcid{0000-0002-9179-5827}
\affiliation{
  \institution{Imperial College London}
}
\affiliation{
  \institution{MPI-SWS Saarbrücken}
}
\email{dosualdo@mpi-sws.org}

\author{Julian Sutherland}
\affiliation{
\institution{Imperial College London}
}
\email{julian.sutherland10@ic.ac.uk}

\author{Azadeh Farzan}
\affiliation{
\institution{University of Toronto}
}
\email{azadeh@cs.toronto.edu}

\author{Philippa Gardner}
\affiliation{
\institution{Imperial College London}
}
\email{pg@doc.ic.ac.uk}

\begin{abstract}
  We present TaDA Live, a concurrent separation logic for reasoning compositionally about the termination of blocking fine-grained concurrent programs.
  The crucial challenge
is
  how to deal with \emph{abstract atomic blocking}:
  that is, abstract atomic operations that have blocking behaviour
  arising from busy-waiting patterns as found in, for
  example, fine-grained spin locks.
  Our fundamental innovation is
  with the design of abstract specifications that capture this  blocking
  behaviour 
  as liveness assumptions on the environment.
  We design a logic that can reason about
  the termination of clients which use such 
operations without breaking their abstraction boundaries,
  and
  the correctness of the  implementations of the operations with
  respect to their  abstract specifications.
  We introduce a novel semantic model using 
  layered subjective obligations  to express liveness invariants,
and a proof system that is sound with respect
  to the model. 
  The subtlety of our specifications and reasoning is illustrated using
  several case studies.
\end{abstract}

\begin{CCSXML}
<ccs2012>
   <concept>
       <concept_id>10003752.10010124.10010138.10010142</concept_id>
       <concept_desc>Theory of computation~Program verification</concept_desc>
       <concept_significance>500</concept_significance>
       </concept>
   <concept>
       <concept_id>10003752.10010124.10010138.10010140</concept_id>
       <concept_desc>Theory of computation~Program specifications</concept_desc>
       <concept_significance>500</concept_significance>
       </concept>
   <concept>
       <concept_id>10003752.10003790.10011742</concept_id>
       <concept_desc>Theory of computation~Separation logic</concept_desc>
       <concept_significance>300</concept_significance>
       </concept>
 </ccs2012>
\end{CCSXML}

\ccsdesc[500]{Theory of computation~Program verification}
\ccsdesc[500]{Theory of computation~Program specifications}
\ccsdesc[300]{Theory of computation~Separation logic}

\keywords{fine-grained concurrency, linearizability, busy-waiting, termination, liveness, concurrent separation logics}

\maketitle
\renewcommand{\shortauthors}{E.\,D'Osualdo,
  J.\,Sutherland,
  A.\,Farzan,
  P.\,Gardner}

\expandpqsets

\section{Introduction}
\label{sec:intro}

Compositional reasoning for fine-grained concurrent programs interacting with
shared memory is a fundamental, open research problem.  
We are beginning to obtain a good understanding of how to reason about
\emph{safety properties} of concurrent programs:
i.e.~if the program terminates and the input satisfies the precondition,
then the program does not fault and the result satisfies the postcondition.
O'Hearn and Brookes~\cite{OHearn04,Brookes04} introduced
concurrent separation logic for reasoning compositionally
about course-grained concurrent programs.
Since then, there has been a flowering of work
on modern concurrent separation logics for reasoning compositionally
about safety properties of fine-grained concurrent programs:
e.g.~CAP~\cite{cap},
  TaDA~\cite{tada},
  Iris~\cite{iris} and
  FCSL~\cite{NanevskiLSD14}.
With these modern logics,
it is possible to provide abstract specifications
that match the intuitive software interface understood by the developer,
and to verify both implementations and client programs.

We have comparatively little understanding of  how to reason
compositionally 
about \emph{progress (liveness) properties} for fine-grained
concurrent algorithms: i.e. something good eventually happens.
Examples of progress properties include termination, livelock-freedom,
or that every user request is eventually served.
The intricacies of the design of concurrent programs
often arise precisely from the need to make the program correct
with respect to progress properties.
The goal of this paper is to design a program logic
to reason compositionally about the safety and termination 
of fine-grained concurrent programs:
i.e.~to be able to prove that
if the input satisfies the precondition,
then the program terminates without faulting 
and the result satisfies the postcondition.
As with safety,
the aim is to provide abstract specifications, and
to verify implementations and clients.

A truly compositional approach would achieve  \emph{proof scalability}
through the reduction of large complex proofs
into a composition of smaller, more tractable proofs, 
and \emph{proof reuse}  through the ability to define 
abstract interfaces between independent sub-proofs.
Proof scalability for concurrent systems is achieved through 
\emph{thread-local} reasoning:
i.e.~the proof of the parallel composition of threads
should be the composition of smaller, separate proofs of each thread.
Proof reuse is achieved when the right \emph{abstract interface} for a
module is identified, so that the proof of correctness of the
implementation of the module and the proof of its clients is
decoupled: a proof of a client can be reused when swapping the
implementation of the module for one satisfying the same
specification; a proof of an implementation can be reused when the
specification is general enough to support arbitrary correct clients.

For safety, thread-local reasoning can be obtained through 
rely/guarantee proofs: a protocol on shared state is specified in
terms of the set of \emph{allowed} updates, and each thread is
verified to respect the protocol under the assumption that the
environment respects the protocol.
There have been successful attempts at using
rely/guarantee reasoning to prove progress properties,
such as termination,
of \emph{non-blocking} concurrent programs~\cite{HoffmannMS13,GotsmanCPV09,GotsmanY11,CookPR07,LiangFS14,total-tada},
which are the programs where
the progress of a thread does not depend on the progress of other threads.
For example, the Total~\tada\ concurrent separation logic~\cite{total-tada}
was introduced to provide compositional reasoning
about the safety and termination of non-blocking programs.
It provided thread-local reasoning and
abstract specification of module interfaces,
without the need to extend the rely/guarantee reasoning.

Standard rely/guarantee reasoning is not enough
to prove progress properties for \emph{blocking} programs.
In a blocking program, termination of a thread may depend on
other threads performing some updates to the shared state.
For example, if a thread~$t$ is requesting a lock that has been
acquired by another thread, then 
the lack of progress of the thread currently owning the lock
would hinder the progress of~$t$.
Thread~$t$ is blocked, waiting for the lock owner to release the lock.
In such situation, a safety abstraction of the environment is insufficient
to support a termination argument for~$t$:
knowing that the release of the lock is \emph{always allowed to happen}
does not imply that it is  \emph{eventually happening}.

There has been some work~\cite{Kobayashi06,BostromM15,Jacobs18toplas}
on proving progress properties for programs
where blocking is caused solely by \emph{blocking primitives}
such as built-in locks or channels.
However, it is very common, especially for fine-grained programs,
to use ad hoc busy-waiting patterns.
For example, consider a thread running
\code{while(v!=1)\{v:=[x]\}}. 
The termination of this thread is entirely dependent on the environment
eventually storing~1 in~\p{x}.
This form of blocking is completely different from a call to a
blocking primitive that cannot take a step in the current state.
It instead corresponds to code executing steps without making real progress.
We call this pattern of behaviour \emph{abstract blocking}.

We have identified two ways to reason about progress in the presence of
abstract blocking in the literature:
the history-based approach  and the refinement-based approach. 
The history-based approach~\cite{OwickiL82,Shao17,Shao18}
is very general but results in complex and indirect specifications
with complicated reasoning involving  explicit trace manipulations.
We discuss this approach further in~\cref{sec:relwork}.
In the refinement-based approach,
the LiLi logic~\cite{LiangF16,LiangF18}
is the work most closely related to our goals.
LiLi extends rely/guarantee with liveness information,
to prove a \emph{progress-preserving} contextual refinement
between the implementation of a module's operations and simpler code representing their specifications.
LiLi's extension of rely/guarantee requires, however, heavy use of
global auxiliary shared state manipulated through ghost code,
which makes the proofs less local.
Moreover,
the specification code associated with abstractly atomic operations
that are blocking 
is not atomic and exposes implementation details,
which hinders scalability and reuse.
We give a detailed comparison with our work and LiLi in
\cref{sec:overview,sec:compare-lili-proof,sec:relwork}.

The refinement approach does not prove termination directly, 
but instead relates termination of implementation code
with termination of specification code.
By contrast, our goal is to develop a program logic
with which we are able to verify
specifications that describe termination directly,
without the manipulation of histories,
with proofs that keep auxiliary state as local as possible
without requiring the addition of ghost code,
and with specifications that allow the abstraction of implementation details
while representing precisely the abstract termination guarantees.

\paragraph{Contributions.}
Our starting observation is that
just as safety rely/guarantee arguments are centred around \emph{invariants},
i.e.~facts of the form \emph{always~P}, so liveness rely/guarantee
arguments  for proving progress in the presence of blocking should be  centered around 
\emph{liveness invariants},
i.e.~facts of the form \emph{always eventually~P}.
\tadalive's design is based on the idea that this is not a fluke:
the dependence on liveness invariants  might be considered 
a \emph{definition} of abstract blocking.
To capture  this observation within a program logic,
we introduce a number of key innovations:
\begin{itemize}
  \item \textbf{subjective obligations},
    a new form of logical ghost state to  express liveness invariants
    in a thread-local way without the need for ghost code;
\item \textbf{obligation layers},
    to express  dependencies between liveness invariants
    and avoid unsound circular reasoning;
  \item \textbf{abstract specifications for atomic blocking operations},
    to  express termination guarantees conditionally
    on an \emph{environment liveness assumption} of the form
    ``always eventually~$P$''.
\end{itemize}

We obtain \tadalive, a concurrent separation logic
which  uses liveness invariants  to provide compositional reasoning 
for establishing safety and termination 
for blocking programs.
The logic makes extensive use of 
 abstract specifications for atomic blocking operations  to achieve
proof scalability and reuse.
This paper presents the following contributions:
\begin{itemize}
  \item the \tadalive\ logic and its specification format;
  \item a novel semantic model and soundness proof for the logic:
    the new model is a substantial re-definition of the \tada\ model
to  allow
    for the non-trivial extensions needed
    to incorporate the liveness content of the \tadalive\ specifications;
  \item \tadalive\ proofs for several paradigmatic case studies:
    two fine-grained implementations of locks showcase abstraction in
    the specifications and the obligation mechanism;
    a program mixing locks and busy-waiting
    illustrates common proof patterns for clients;
    two counter modules illustrate \tadalive's
    ability to hide internal blocking and proof reuse; and 
    a set module using a lock-coupling pattern illustrates the generality
    of the layer system.
\end{itemize}

\paragraph{Outline.}
\cref{sec:overview} provides  an example-driven overview of the main innovations
of \tadalive.
\Cref{sec:model} introduces 
the assertion language and the semantics of the \tadalive\
specifications. 
\Cref{sec:rules} presents the crucial proof rules of \tadalive,
 with 
a running example to illustrate their use. \cref{sec:evaluation}
presents \tadalive\ proofs of several key case studies  and a
discussion on the 
limitations of the \tadalive\ reasoning. \cref{sec:relwork} contains 
related work and \cref{sec:concl}
ends with conclusions and future work. 
 \section{An Overview of \tadalive}
\label{sec:overview}

We introduce the main ideas of \tadalive\  in this section, leaving 
the complex technical details  for the following sections. 
Consider a simple example program with non-primitive blocking behaviour:

\[
\color{gray}\cmd_1\left\{\color{black}
\begin{parall}\begin{threadcode}[gobble=2]
  var v = 0 in
  while(v!=1){
    v:=[x]
  }
  \end{threadcode}
  \PARALLEL
  \begin{threadcode}[gobble=2]
  [x]:=1
  \end{threadcode}
\end{parall}
\color{gray}
\raisebox{-.9pt}{$\big\}\;\cmd_2$}\right.
\]

We use a first-order, fine-grained concurrent while language
for manipulating shared state.
The shared state comprises heap cells
which have addresses and store values (addresses, integers, booleans).
The \code{[x]}~notation denotes the value
stored at the heap cell with address \code{x}.
The thread on the left ($\cmd_1$) is busy-waiting
on the value stored at the shared heap cell at \p{x}.
Under fair scheduling, the program is guaranteed to terminate:
eventually, the right-hand thread ($\cmd_2$) will be scheduled, 
and will set the heap cell to 1;
after that, eventually the left-hand thread
will read the value 1 into the local variable \pvar{v}
and the while loop will terminate.
Since we are aiming at a thread-local proof method,
we should be able to break the proof of termination of the program
into two separate proofs for the two threads.

We first explore how to provide a thread-local proof of \emph{safety}
for this example program 
using the \tada\ logic~\cite{tada}. We then  extend the reasoning with the
ingredients needed 
to prove termination.
\tada\ is a concurrent separation logic so it uses the standard
separation logic assertions.
Let us assume the precondition
$ P = \exists v\st\p{x} \mapsto v $
and, for simplicity,  aim at the postcondition $\True$.
\tada\ uses the  standard parallel rule for concurrent separation logics, where the precondition is separated
into two preconditions $P = P_1 * P_2$, one for each thread.
Since both threads dereference \p{x},
we need a means to share the heap cell in the assertions,
turning $\p{x} \mapsto v$ into a duplicable assertion,
called a \emph{shared region} in \tada.
For our example,  we define a shared region $\region{ex}{\rid}(\p{x},v)$
with an associated \emph{interpretation}
$\rInt(\region{ex}{\rid}(x,v)) \is x \mapsto v$,
which specifies which resource is being shared.
The \emph{region type} $\rt[ex]$ (for ``$\rt[ex]$ample'')
is the  name associated with this interpretation,
and the \emph{region identifier} $\rid$ is an abstract identifier associated
with this specific instance of the region type $\rt[ex]$.
The arguments $ (\p{x},v) $ of the region
are called the \emph{abstract state} of the region.
The definition of a region is completed by an
\emph{interference protocol}~$\regLTS[ex]$
which  restricts, in rely/guarantee style,
the allowed updates to the abstract state.
Here,  we  encode the facts that
\begin{enumerate*}[label=(\alph*)]
  \item only $\cmd_2$ can update $\p{x}$ and 
    \label{fact:cmd2-v}
  \item $v$ can only be updated to 1.
    \label{fact:v-to1}
\end{enumerate*}
Although such strong invariants are not required to just prove safety,
they will be useful for the termination proof later.
To encode fact~\ref{fact:cmd2-v},
we introduce a form of ghost state called a \emph{guard},~$\gEx$,
which gives $\guard{e}$xclusive permission to update $\p{x}$.
Formally, guards
(probably first introduced in deny-guarantee reasoning~\cite{deny-guarantee})
form a partial commutative monoid~(PCM), where in this case
$\gEx \guardOp \gEx$ is undefined to capture exclusive permission:
if a thread owns $\gEx$ then no other thread can own it at the same time.
To link $\gEx$ with the ability to change \p{x},
the protocol $\regLTS[ex]$ allows the guarded update
\begin{equation}
  \gEx : (\p{x}, v) \interfTo (\p{x}, 1)
  \label{prot:ex-safety}
\end{equation}
Fact~\ref{fact:v-to1} is encoded by  this being  the only allowed update.

In \tada\ and other modern separation logics such as Iris,
implication is generalised to the \emph{viewshift} construct~($\vshift$)
from~\cite{views},
which can be used to consistently update ghost information,
purely within the logic (as opposed to through ghost {code}).
Here,  it can be used to turn the owned resource $x \mapsto v$
into a shared resource
$
  P = \exists v\st x \mapsto v
    \vshift
      \exists \rid\st(
        \exists v\st \region{ex}{\rid}(\p{x},v) * \guardA{\gEx}{\rid}
      )
    \equiv
      \exists \rid\st (P_1 * P_2)
$
where
$
  P_1 = \exists v\st \region{ex}{\rid}(\p{x},v)
$
and
$
  P_2 = \exists v\st \region{ex}{\rid}(\p{x},v) * \guardA{\gEx}{\rid}
$.
The \emph{guard assertion}~$\guardA{\gEx}{\rid}$ indicates ownership of the guard $\gEx$ for the region with identity~$\rid$.
Using standard reasoning, one can then prove
$ \TRIPLE |- {P_1} {\cmd_1} {\True} $
and
$ \TRIPLE |- {P_2} {\cmd_2} {\True} $,
which entails, by the  parallel rule
$ \TRIPLE |- {P_1 * P_2} {\cmd_1 \parallel \cmd_2} {\True * \True} $.
By consequence and existential elimination on~$\rid$, we obtain our goal
$ \TRIPLE |- {P} {\cmd_1 \parallel \cmd_2} {\True} $.

Let us now turn  to termination.
A thread-local approach would  proceed
by first proving that~$\cmd_1$ and~$\cmd_2$ terminate separately,
and then concluding  that their parallel composition terminates.
In the case of \emph{non-blocking} code,
it is  possible to obtain a proof of this form:
by definition, a non-blocking thread does not need
the progress  of another thread 
in order to terminate.
For non-blocking code, a rely/guarantee protocol that only asserts
safety facts about the extent of the interference of the threads 
is all that is needed to prove termination.
This is exploited by virtually all the program logics which prove 
total specifications for non-blocking programs
\cite{HoffmannMS13,CookPR07,LiangFS14,total-tada}.
The non-blocking case allows the use of
a while rule which is essentially the one of total Hoare logics:
\begin{inlineproofrule}
  \infer*[right={WhileNB}]{
    \label{rule:while-non-blocking}
    \forall \beta \leq \beta_0\st
      \TRIPLE |-
        {P(\beta) \land \bexp}
          \cmd
        {\exists \gamma.P(\gamma) \land \gamma < \beta}
  }{
    \TRIPLE |-
      {P(\beta_0)}
        {\acode{while(BEXP)\{CMD\}}}
      {\exists \gamma.
          P(\gamma)
          \land \neg\bexp
          \land \gamma \leq \beta_0}
  }
\end{inlineproofrule}
Here $\beta$ is an \emph{ordinal-valued variant}
which is shown to strictly decrease after each iteration.
By well-foundedness of ordinals, there can only be finitely many iterations,
and  hence the loop terminates.
However, this rule is completely inadequate for blocking code: in our example,
the loop of $\cmd_1$ admits no variant,
since the iterations do not achieve any sort of progress.
Indeed, none of the cited works can handle this simple example.
Reasoning about progress for blocking programs
requires a whole set of new reasoning principles,
and a genuine extension of rely/guarantee with liveness information.

In \tadalive, the while rule has  a more general form:\footnote{We simplify the rule for this introductory section, informally using  the
  standard LTL notation
  $\Always\,P$
  for \emph{always P}
  (i.e.~$P$ holds at every point of a trace)
    and
  $\Eventually\,P$ 
  for \emph{eventually P}
  (i.e.~$P$ holds at some point of a trace). The full rule is
given 
 in \cref{sec:while-rule}.}
\begin{inlineproofrule}
  \infer*[right={WhileB}]{\label{rule:while-blocking}
    \Always\,L \implies \Finally\,T
    \\\\\forall \beta \le \beta_0 \st\;
      \TRIPLE |-
        {P(\beta)\phantom{{} * T} \land \bexp}
        \cmd
        {\exists \gamma \st P(\gamma) \land \gamma \leq \beta}
    \\\\\forall \beta \le \beta_0 \st\;
      \TRIPLE |-
        {P(\beta) * T \land \bexp}
        \cmd
        {\exists \gamma \st P(\gamma) \land \gamma < \beta}
  }{\TRIPLE |-
      {P(\beta_0) * L}
        {\acode{while(BEXP)\{CMD\}}}
      {\exists \gamma \st P(\gamma) * L
        \land \neg \bexp
        \land \gamma \leq \beta_0}
  }
\end{inlineproofrule}
The crucial difference is that the rule uses  a set of
\emph{target} states~$T$:
when an  iteration starts  in a target state,
the variant must be shown to strictly decrease, $\gamma<\beta$
(i.e.~the iteration needs to produce measurable progress);
when an iteration starts from a non-target state, the variant is only required not to increase, $\gamma\leq\beta$
(i.e.~no progress is undone).
These two conditions alone do not prove the termination of the loop:
the execution may be constantly in a non-target state.
In our example, the $T$ is $\region{ex}{\rid}(\p{x},1)$.
To conclude that the loop terminates,
the first premise requires $\Always\,L \implies \Finally\,T$:
that is, 
in traces where $L$ holds constantly,
with the help of the environment,
we will be \emph{eventually always} in a target state.
The assertion~$L$ captures facts that hold at any point of the iterations
of the loop, as it is in the triple of the conclusion but 
framed off the triples in the premises. 
When~$T$ finally happens, by fairness of the scheduler the loop will
execute, and will do so from a state  where, by the third premise,
the iterations will make progress towards termination.

To make this reasoning work, the first problem we encounter is that none of the information in a standard rely/guarantee specification
supports proving $\Finally\,T$.
Indeed, nothing in the protocol defined by $\rt[ex]$ expresses the idea
that at some point the environment will help $\cmd_1$ by setting~\p{x} to~1.
A safety rely merely expresses  that an update is \emph{allowed},
not that it will be eventually executed.
In other words, a safety rely alone is too imprecise an abstraction:
it cannot distinguish between environments that make the local thread terminate
from the ones that do not.
The first question we have to answer is:
\emph{how can ``help'' from the environment be represented
  in a rely/guarantee proof?}

\subsection*{Innovation~1: Subjective Obligations For
  Liveness Invariants}
Safety arguments are centred around \emph{invariants}:
that is, facts of the form \emph{always P},
encoded using  regions in \tada.
\tadalive's basic observation is that to represent help from the environment,
all that is needed is \emph{liveness invariants}:
that is, facts of the form \emph{always eventually P}.
By combining liveness invariants and safety invariants one can encode
more complex progress conditions such as $\Finally\,T$.
To represent liveness invariants in a thread-local way,
\tadalive\ introduces a new kind of ghost state called \emph{obligations}.
Similarly to guards, they form a PCM.
The interference protocol is augmented by a component that explains
how an update affects the obligations.
In our example, we want to represent the liveness invariant
always eventually $\region{ex}{\rid}(\p{x},1)$
which,  together with the invariant that \p{x} can only be set to 1, 
implies $\Finally(\region{ex}{\rid}(\p{x},1))$.
We therefore introduce an obligation $\obl{u}$
(for~$\obl{u}$pdate-to-1),
where again $\obl{u} \oblOp \obl{u}$ undefined captures
exclusivity, and 
extend the protocol to link $\obl{u}$ to the update:
\begin{equation}
  \gEx : ((\p{x}, v), \obl{u}) \interfTo ((\p{x}, 1), \oblZero)
  \label{prot:ex-liveness}
\end{equation}
This transition to update the region  can be executed by a thread  with both  the $\gEx$
guard and the $\obl{u}$ obligation;
the effect of the update is to ``consume'' the~$\obl{u}$ resource, as the obligation resulting from the update is the unit $\oblZero$.
We say the update \emph{fulfils} the obligation~$\obl{u}$.

A safety rely, as expressed by specification~\eqref{prot:ex-safety}, says:
verify a thread under the assumption that
the environment steps will obey the protocol.
As a first approximation,
our liveness rely, as expressed by~\eqref{prot:ex-liveness}, additionally says:
verify a thread under the assumption that the environment will always eventually
fulfil the obligations it owns.
(We will refine this idea in the next section to avoid unsound circular reasoning).
We say an obligation~$O$ is \emph{assumed live}
if the environment always eventually fulfils~$O$.
In other words: if, at any time, the environment owns~$O$,
it eventually fulfils~$O$.

This idea introduces a complication; 
we need to locally keep track of which (relevant) obligations are owned
by the environment, in order to make use of the liveness rely assumption.
We solve this problem by taking inspiration from the concept
of subjective separation of~\cite{scsl}. We introduce subjective 
obligation assertions: 
  local obligations,~$\locObl{\obl{u}}{\rid}$,
    asserting local ownership of the obligation~$\obl{u}$ associated
    with region~$\rid$, and
  environmental obligations,~$\envObl{\obl{u}}{\rid}$,
    asserting environment ownership of the obligation $\obl{u}$.
What makes these assertions interesting  is the way they compose:
that is, 
$
  \locObl{\obl{u}}{\rid}
    \iff
  \locObl{\obl{u}}{\rid} * \envObl{\obl{u}}{\rid}
$.
If  we start with local obligation $\obl{u}$
and we want to fork into two threads,
we  use~$*$ to give responsibility of $\obl{u}$ to one thread
and knowledge that the environment has this responsibility to the other.

To complete the proof sketch for our example, we first need to extend the region interpretation
by adding the obligation protocol:\footnote{
  The assertion $\bexp \dotimplies Q$ stands for
  $ (\bexp \land Q) \lor (\neg \bexp \land \emp) $.
}
\[
  \rInt(\region{ex}{\rid}(\p{x},v)) \is
    x \mapsto v * ( v = 1 \dotimplies \locObl{\obl{u}}{\rid} )
\]
When the value at~\p{x} is 1, the obligation $\obl{u}$ is owned by the interpretation,
and hence owned by no thread.
A thread owning~$\obl{u}$ and setting~\p{x} to~1  fulfils the obligation
precisely by leaving it inside the interpretation.
There is no other way of losing ownership of an obligation
because we adopt a classical interpretation of separation: that is, 
$ P*\locObl{\obl{u}}{r} \not\implies P $.
For soundness, the interpretation of a region with id~$\rid$ is only allowed
to own obligations of~$\rid$.

The \tadalive\ proof  starts by using viewshift to
transform the resource in the precondition
into this new region that is  shared between the two threads:
\[
  \exists v\st x \mapsto v
    \quad\vshift\quad
      \exists \rid\st\bigl(
        \exists v\st \region{ex}{\rid}(\p{x},v) *
        \guardA{\gEx}{\rid} *
        (v\neq 1 \dotimplies \locObl{\obl{u}}{\rid})
      \bigr)
    \quad\equiv\quad
      \exists \rid\st (P_1 * P_2)
\]
where
$
  P_1 = 
        \exists v \st
          \region{ex}{\rid}(\p{x}, v) *
          v\neq 1 \dotimplies \envObl{\obl{u}}{\rid}
$
and
$
  P_2 =
        \exists v \st
          \region{ex}{\rid}(\p{x}, v) *
          \guardA{\gEx}{\rid} *
          v\neq 1 \dotimplies \locObl{\obl{u}}{\rid}   
$
are the preconditions of the proofs of~$\cmd_1$ and~$\cmd_2$ respectively.
To discharge $\Finally(\region{ex}{\rid}(\p{x},1))$ in
the proof of the while loop of~$\cmd_1$,
we can use
$ L =
    \exists v\st
      \region{ex}{\rid}(\p{x},v) * v\neq 1 \dotimplies \envObl{\obl{u}}{\rid}
$, which holds throughout the loop:
if we are in a state $\region{ex}{\rid}(\p{x},v)$ either~$v=1$,
in which case we are in a target state and the value of~$v$ will
remain~1 forever; or~$v\neq 1$ in which case we know
$\envObl{\obl{u}}{\rid}$.
By the liveness rely, when the environment owns~$\obl{u}$, it will eventually fulfil it, which by~\eqref{prot:ex-liveness} can only be done by setting~$v=1$.
\Cref{sec:rules} explains in detail how this argument
is carried out formally in \tadalive.

At this point, we are able to prove the \emph{total} triples
$ \TRIPLE |- {P_1} {\cmd_1} {\True} $
and
$ \TRIPLE |- {P_2} {\cmd_2} {\True} $.
However, the standard parallel rule is unsound in the sense that 
the two triples can be proven even with $\cmd_2 = \code{skip}$
but, in this case, the parallel composition would not terminate!
\tadalive's parallel rule can recover soundness by checking
that the postconditions of the two threads do not own pending obligations,
which we can show by proving the stronger triples
$ \TRIPLE |- {P_1} {\cmd_1} {\region{ex}{\rid}(\p{x},1)} $
and
$ \TRIPLE |- {P_2} {\cmd_2} {\region{ex}{\rid}(\p{x},1) * \guardA{\gEx}{\rid}} $.
This condition is  too restrictive in general,
and we will relax it appropriately in the next section.

\subsection*{Innovation~2: Obligation Layers To Avoid Circular Arguments}
Structuring liveness invariants through obligations, as sketched, 
presents a significant  problem for soundness due to the possibility
of making  unsound circular liveness assumptions.
Consider the following variant  of our  busy-waiting example:
\[
\color{gray}\cmd'_1\left\{\color{black}
\begin{parall}
  \begin{threadcode}[gobble=2]
  var v_1 = 0 in
  while(v_1!=1){
    v_1:=[x_1]
  }
  [x_2] := 1
  \end{threadcode}
  \PARALLEL
  \begin{threadcode}[gobble=2]
  var v_2 = 0 in
  while(v_2!=1){
    v_2:=[x_2]
  }
  [x_1] := 1
  \end{threadcode}
\end{parall}
\color{gray}
\right\}\;\cmd'_2
\]
There are two shared heap cells at \code{x_1} and \code{x_2} respectively.
The thread on the left ($\cmd'_1$) is busy-waiting on \code{x_1} which
is supposed to be set by the thread on the right ($\cmd'_2$),
and vice versa, causing a classic high-level deadlock\footnote{This liveness form of deadlock is also known as ``livelock''
  since every thread is always taking steps, although no global progress
  is made by any of those steps.
  This is not to be confused with the safety property of ``global'' deadlock
  as found in languages with blocking primitives.
  }
situation: the program does not terminate.

Let us try to replicate the argument we  used for the busy-waiting example.
We require  a region sharing   both cells, $ \region{dex}{\rid}(\p{x}_1,\p{x}_2,v_1,v_2) $,
where $v_i$ is the value stored at $\p{x}_i$.
We use two guards $\gEx_1$ and $\gEx_2$,
and two obligations, $\obl{u}_1$ and $\obl{u}_2$
linked to the update of \code{x_1} and \code{x_2} respectively:
\begin{align}
  \gEx_1 &:
    ((\p{x}_1,\p{x}_2, v_1, v_2), \obl{u}_1)
      \interfTo
    ((\p{x}_1,\p{x}_2,  1 , v_2), \oblZero)
  \label{prot:dex1}
  \\
  \gEx_2 &:
    ((\p{x}_1,\p{x}_2, v_1, v_2), \obl{u}_2)
      \interfTo
    ((\p{x}_1,\p{x}_2, v_1,  1 ), \oblZero)
  \label{prot:dex2}
\end{align}
Without additional precautions,
we would be able to derive the triples (for $i=1,2$)
\begin{equation}
  \TRIPLE |-
    {P_i}
    {\cmd'_i}
    {\region{dex}{\rid}(\p{x}_1,\p{x}_2,1,1) * \guardA{\gEx_i}{\rid}}
  \label{spec:dex-wrong}
\end{equation}
where
$
  P_i = \exists v_1,v_2 \st
          \region{dex}{\rid}(\p{x}_1,\p{x}_2, v_1,v_2) *
          \guardA{\gEx_i}{\rid} *
          \bigl( v_i    \neq 1 \dotimplies \envObl{\obl{u}_i}{\rid}     \bigr) *
          \bigl( v_{3-i}\neq 1 \dotimplies \locObl{\obl{u}_{3-i}}{\rid} \bigr)
$.
Given the interpretation we sketched earlier, these triples mean:
thread~$i$ terminates provided its environment (i.e.~thread~$3-i$)
always eventually fulfils obligation~$\obl{u}_{3-i}$.
This leads, in the application of the parallel rule,
to an unsound circular argument:
to show thread~$i$  fulfils obligation~$\obl{u}_{i}$,
thread~$i$ is relying on  the assumption
about the eventual fulfilment of~$\obl{u}_{3-i}$ by the environment,
which in turn relies on
the eventual fulfilment of~$\obl{u}_{i}$ by thread~$i$ itself.
The question is then:
\emph{how can we rule out circular arguments,
  while keeping the proof thread-local?}
In particular, we want a solution that
allows us to keep the abstraction of the environment as
local and abstract as possible,
without revealing unnecessary structure of the other threads.

Our solution is to specify dependencies between liveness invariants.
We do this by imposing a partial order on obligations:
each obligation~$O$ is associated with a \emph{layer}, denoted~$\lay(O)$,
which is an element of a user-defined well-founded partial order,~$\Layer$.
Using layers,  we can refine our reasoning principle and gain soundness:
to be allowed to assume~$O$ is live, one has to show all the locally owned obligations have layers greater than~$\lay(O)$.
The intuition is that
local fulfilment of~$O_2$ can depend
on the environment's fulfilment of~$O_1$
\emph{only if} $\lay(O_1) \laylt \lay(O_2)$.

In our deadlocking example, layers expose the circularity issue
and prevent the triples~\eqref{spec:dex-wrong} from being derivable.
Specifically, the proof of the loop of~$\cmd'_1$
requires us to prove $\Finally(\region{dex}{\rid}(\p{x}_1,\p{x}_2, 1, \wtv))$.
At this point we are continuously holding the obligation~$\obl{u}_2$
so, to be able to assume~$\obl{u}_1$ live, we require
${\lay(\obl{u}_1) < \lay(\obl{u}_2)}$.
However, the proof of the loop of~$\cmd'_2$
would require the symmetric constraint,
${\lay(\obl{u}_2) < \lay(\obl{u}_1)}$, leading to a contradiction.

If we replace $\cmd'_2$ with
$\cmd''_2 \is \bigl(\code{
  [x_1] := 1;
  var v_2 = 0 in
  while(v_2!=1)\{v_2:=[x_2]\}
}\bigr)$,
the program $\cmd'_1 \parallel \cmd''_2$  terminates
and indeed the proof  goes through with
${\lay(\obl{u}_1) < \lay(\obl{u}_2)}$.
This is because the first instruction of $\cmd''_2$ fulfils $\obl{u}_1$
so the loop no longer constantly owns it whilst  assuming $\obl{u}_2$ live.
The structure of $\cmd''_2$ does not impose any dependency on the two
liveness invariants.

The generalisation of the liveness rely to use  obligations
with layers, enables us to give  a general parallel rule:
instead of just forbidding pending obligations in the postconditions,
we require that the postcondition of each thread  only
owns 
obligations with  layers
greater than the layers of obligations assumed live in the  other thread's proof.

Let us contrast our layered obligations
with other solutions found in the literature.
The LiLi logic cannot verify the above examples
as it lacks support for parallel composition.\footnote{Indeed, LiLi's goal is limited to proving that
  a module's implementation refines its specification.
  The code of the module cannot fork threads
  but any multi-threaded client of the module
  is guaranteed not to be able to distinguish the implementation
  from the specification.
}
LiLi's while rule does share  the same high-level structure as
\ref{rule:while-blocking}, a structure that can be traced as far back as~\cite{OwickiL82}.
The main crucial difference is in how $\Finally\,T$ is proven.
LiLi proposes the idea of \emph{definite actions}, a reincarnation of
``leads-to'' assertions of~\cite{OwickiL82}, 
to build  a liveness rely.
Definite actions require the identification of a logical global ``queue''
of threads where the thread at the front is always able to execute its action
and that action implies global progress.
In LiLi, the target states are the ones where the local thread is at the head of this queue, and the $\Finally\,T$ condition is proven by showing that when the head of the queue executes an action, there is some local well-founded progress measure that decreases.
Definite actions have a number of drawbacks:
\begin{itemize}
  \item they require heavy introduction of ghost code for manipulating
        globally shared ghost state
        in order to construct the queue of threads; and 
  \item the progress reasoning on the queue requires analysing all possible
        ways the other threads may finally produce the target states.
\end{itemize}
Layered obligations are key to  resolving these problems:
\begin{itemize}
  \item they remove the need for ghost code altogether,
        and make liveness invariants local
        using the local/environmental obligation assertions;  and
  \item by only relying on the eventual fulfilment of layered  obligations,
        the proof of $\Finally\,T$ can ignore \emph{how} the environment
        is going to implement such fulfilment;
the only important fact to retain about the \emph{how}
        is which liveness invariants are  assumed to guarantee the fulfilment.
\end{itemize}

There has been work on proving various
  safety (e.g.~global dead\-lock-freedom)~\cite{Kobayashi06,Jacobs18}
  and progress (e.g.~dead\-lock-freedom, termination)~\cite{Kobayashi00,Leino10,BostromM15,Jacobs18toplas}
properties of concurrent programs,
which assume the only source of blocking behaviour
comes from the use of blocking primitives (e.g.~built-in locks or channels).
Although none of them can handle busy-waiting patterns
like our previous examples,
they typically detect deadlocks using ``tokens''
(often also called \emph{obligations})
which represent the responsibility to call a blocking primitive.
These tokens are arranged in an acyclic graph of dependencies.
Superficially, these tokens are related to our layered obligations
in that they both are devices to rule out cyclical dependencies.
There are, however, deep differences between the two.
Tokens are linked (ad hoc in the operational semantics and through ghost code)
to blocking primitive operations calls,
and dependencies between the tokens represent
causal dependencies between these primitive \emph{events}.
By contrast, our layers represent dependencies between
\emph{liveness assumptions},
and reflect a purely logical structure.
This makes our layered obligations particularly general and flexible:
they are able to express arbitrary high-level blocking patterns and not just primitive blocking operations,
enabling  truly abstract specifications.

\subsection*{Innovation~3: Abstract Atomic Specifications for Blocking Operations}

Understanding blocking behaviour as the need for an abstraction of the environment that includes liveness invariants unlocks a novel approach
in giving abstract, precise and reusable total specifications for abstractly atomic operations.
Building on Total~\tada,
we propose a new specification format that expresses
the atomic effect of a linearizable operation,
and succinctly states the liveness invariant
required for ensuring termination,
at the right level of abstraction.
To see the problem and our solution, let us consider the paradigmatic example
of two fine-grained implementations of a lock module.

\paragraph{Two Lock Implementations.}
Consider the \emph{spin lock} and the \emph{CLH lock} given in \cref{fig:locks}.
The implementations enable threads to compete
for the acquisition of a lock at address \code{x}
by running concurrent invocations of the \code{lock(x)} operation.
Only one thread will succeed, leaving the others to wait until
the \code{unlock(x)} operation is called by the winning thread.

\begin{figure}[tb]
  \centering\footnotesize \begin{minipage}{140pt}
    \begin{sourcecode}[style=framed,title={Spin Lock},gobble=3,xleftmargin=1.5em]
    def lock(x){
      var d = 0 in
      while(d = 0){
        d := CAS(x,0,1) @\label[line]{line:spin-lock-linpt}@
      }
    }

    def unlock(x) { [x] := 0 }

    def makeLock() { var x in
      x := alloc(1);
      [x] := 0;
      ret := x
    }
    \end{sourcecode}
  \end{minipage}
  \hfil
  \begin{minipage}{185pt}
    \begin{sourcecode}[style=framed,title={CLH Lock},gobble=3,xleftmargin=1.5em]
    def lock(x) { var c,p,v in
      c := alloc(1); [c] := 1;
      p := FAS(x+1, c);@\label[line]{line:clh-enq}@
      v := [p];
      while(v != 0) { v := [p] }@\label[line]{line:clh-wait}@
      [x] := c; @\label[line]{line:clh-lock-linpt}@
      dealloc(p)
    }
    def unlock(x) { var h in h := [x]; [h] := 0 }
    def makeLock() { var x,h in
      h := alloc(1); [h] := 0;
      x := alloc(2); [x] := h; [x+1] := h;
      ret := x
    }
    \end{sourcecode}
  \end{minipage}
  \caption{Two fine-grained lock implementations.}
  \label{fig:locks}
\end{figure}

The primitive commands, such as assignment, lookup and mutation,
are primitive atomic and non-blocking:
every primitive command, if given a CPU cycle, will terminate in one step.
Since reads and writes may race, the language is equipped with a
\emph{compare-and-swap} primitive command, \code{CAS(x,$v_1$,$v_2$)},
which checks if the value stored at \code{x} is $v_1$:
if so, it atomically stores $v_2$ at \code{x} and returns~1;
otherwise it just returns~0.
Similarly, the \emph{fetch-and-set} primitive command,
\code{FAS(x,$v$)},
stores $v$ at \code{x} returning the value that was stored at \code{x} just before overwriting it.

The spin lock in~\cref{fig:locks}  is  standard.
Its state comprises  a heap cell at \code{x}
which stores either 0 (unlocked) or 1 (locked).
The Craig-Landin-Hagersten (CLH) lock~\cite{ArtBook} in~\cref{fig:locks}
serves threads competing for the lock in a FIFO order.
It queues requests,
keeping a head and a tail pointer (at \p{x} and \p{x+1} respectively).
The predecessor pointers are stored in each thread's local state (in \p{p}).
The lock can be acquired by a thread once its predecessor signals
release of the lock by setting its queue node to~0.
Unlocking the lock corresponds to setting the queue's head node value to~0.

Let us focus on the lock operation of the CLH lock.
The interesting aspect is that \code{lock} displays blocking behaviour
that is observable by the client of the module
(it is indeed the quintessence of blocking).
We cannot just provide a total triple for it:
the operation does not always terminate.
The challenge is to design a specification format that
accurately captures the abstract functionality of the operation
and its subtle termination properties.

First off, one would like a specification that hides the implementation details
and only exposes the abstract state of the lock to the client:
a lock instance is represented by an abstract resource
$ \ap L(\p{x}, l) $\footnote{We omit the  region identifier to
  simplify the discussion.} where
  $l=1$ indicates the lock is locked, and
  $l=0$ means it is unlocked. It is worth noting that traditional 
Hoare triples are not able to represent the useful behaviour of
\code{lock(x)}. The triple 
$ \TRIPLE |- {\ap L(\p{x},0)} {\p{lock(x)}} {\ap L(\p{x},1)}
$
requires the client to establish that  the lock is unlocked \emph{before} calling
the operation, defying the very purpose of the operation's
functionality. The triple $ \TRIPLE |- {\ap L(\p{x},0) \lor \ap L(\p{x},1)} {\p{lock(x)}} {\ap
  L(\p{x},1)} $ allows the operation to be called in the locked state,
but is not precise enough since 
the same triple holds for a simple assignment $\p{[x]\:=1}$. It does not express the property that,
upon termination of the operation,
we can claim that we have acquired the lock.
A partial specification  of a lock  is already a challenge; a total
specification more difficult still. 

Proposed solutions in the literature can be divided into 
history-based, refinement-based  and abstract atomicity-based
approaches.
The history-based approach
(e.g.~\cite{SergeyHistories} for safety,
  \cite{Shao17,Shao18} for progress)
is expressive but at the price of
complex and indirect specifications;
the verification requires explicit manipulation of the histories,
complicating client reasoning.
The only progress-aware refinement-based  approach  that can
modularly verify the 
CLH lock is the LiLi logic~\cite{LiangF18}.
LiLi's refinement~$\refines$ is progress-preserving and contextual,
allowing the result to be reused in arbitrary client contexts.
For example, the LiLi proof for CLH lock (under weak fair scheduling) shows that
\[
  \code{lock(x)} \refines \code{spec_lock(x)}
\]
where \code{spec_lock(x)} is defined (in pseudocode) as\footnote{In~\cite{LiangF18}, this is the result of applying the appropriate
  wrapper to the lock specification:
  $\mathsf{wr}_{\mathsf{PSF}}^{\mathsf{wfair}}(\textbf{await}(\p{l=0})\{\p{l\:=cid}\})$.
}

\begin{sourcecode*}[morekeywords={await,self},xleftmargin=5em]
spec_lock(x) {
    enqueue(x.queue, self);
    await (head(x.queue) = self $\land$ x.state=0) {
      << x.state:=self; x.queue := tail(x.queue) >>
    }
}
\end{sourcecode*}The abstract state of the lock is represented by \code{x.state},
but to represent the fact that threads will not be starved,
an abstract FIFO queue at \code{x.queue} keeps track of the threads to be served;
\code[morekeywords={self}]{self} is the thread id of the caller.
The command \code[morekeywords={await}]{await($\bexp$)\{$\cmd$\}}
is a blocking \emph{primitive} introduced to express
the non-primitive blocking of the implementation.
The potential absence of progress of the implementation's busy-waiting steps
is represented by potential absence of a step ahead in the specification.

LiLi's specification style has three major drawbacks:
\begin{enumerate}
  \item the specification code is not much simpler
        than the original implementation,
        and is not able to hide the implementation detail of the thread queue;
  \item the specification code is \emph{not atomic}:
        it produces one step for entering the queue,
        and one step for acquiring the lock;
  \item since the termination properties are
        represented through the behaviour of code,
        a client proof that wants to make use of these properties
        must reprove them on the specification code
        before being able to use them in the argument.
\end{enumerate}
These problems limit the
  abstraction capabilities,
  proof reuse and
  scalability
of the approach.

The abstract atomicity approach has been pioneered by the \tada\ logic.
It directly influenced logical atomicity in Iris~\cite{iris}, 
and was extended to provide total specification
for non-blocking programs in Total~\tada~\cite{total-tada}.
The aim of the \tada\ approach  is to keep the Hoare-triple style of specification,
whilst being able to give precise and abstract specifications
to fine-grained code like CLH lock. The \tada\ solution
is to provide a Hoare triple for \code{lock} which embraces the fact
that, between the invocation of the operation
and the execution of the atomic update of the lock,
there is a phase of \emph{interference}
where the environment can change the value of the lock.
It is important to be able to distinguish
the imprecise precondition that holds during the interference phase, 
$ \ap L(\p{x},0) \lor \ap L(\p{x},1) $,
and the precise precondition,  $ \ap L(\p{x},0) $,  that holds 
\emph{just before} the atomic update performed by the lock operation
at its \emph{linearization point}~\cite{HerlihyW90}.

The \tada\ safety specification for \code{lock}
is the partial \emph{atomic triple}:
\begin{equation}
  \ATRIPLE |-
    \A l \in\set{0,1}.
      <\ap{L}(\p{x},l)>
        \code{lock(x)}
      <\ap{L}(\p{x},1) \land l=0>
  \label{spec:lock-tada}
\end{equation}
The \emph{interference precondition}
$\PQ{l \in\set{0,1}.} \color{atomic}\atombra{\ap{L}(\p{x},l)}$
describes the interference phase.
It states that the environment must preserve the existence of the lock at
\code{x} but may change the value of $l$, and the implementation of
the lock must tolerate these environmental changes.
The \emph{pseudo-quantifier} $\PQ{l \in \set{0,1}}$ is
unusual, behaving like an evolving universal quantifier in that the
environment is able to keep changing $l$ over time and behaving like
an existential quantifier in that the implementation can assume that
the lock always exists with $l \in \set{0,1}$.
The triple (\ref{spec:lock-tada}) states that, if the environment
satisfies the interference precondition and the operation terminates,
then the implementation guarantees that, just before
the linearization point, the lock must have been available  for
locking (${\color{atomic} l = 0}$) and, just afterwards, the lock has
been locked by the operation  (${\color{atomic} \ap{L}(\p{x}, 1)}$).
Exclusive ownership of the lock after the operation terminates
can be derived from the $l=0$ assertion in the postcondition:
just before we locked it, nobody else could claim that they owned the lock.
The \tada\ safety specification for \code{unlock} is the partial
atomic triple 
\[
  \ATRIPLE |- \aall l \in \set{1}.
    <\ap{L}(\p{x}, l)> \code{unlock(x)} <\ap{L}(\p{x}, 0)>
\]
This triple\footnote{We typically omit the pseudo-quantifier in
  the case where the set has just one element,
  e.g.~$
    \ATRIPLE |- <\ap{L}(\p{x}, 1)> \code{unlock(x)} <\ap{L}(\p{x}, 0)>
  $.}
states that,
to be used correctly, the unlock  operation requires the 
lock to be locked  and  not changed by the environment
during the interference phase; in return, the operation promises to atomically
set the lock to be unlocked.

\tadalive{} builds on the \tada\ specification format.  To turn the
\tada\ triple for \p{lock} into a total specification, the termination
guarantee must depend on the environment: if the environment decides
to hold the lock indefinitely, no lock implementation should allow the
\p{lock} operation to terminate.
Hence, we express blocking as a \emph{liveness condition}
on the \emph{environment} during the interference phase
of an abstractly atomic operation.
The CLH \p{lock} operation will terminate under weak fairness,
provided that, if the lock is locked by the environment
during the interference phase,
the environment will \emph{eventually} unlock it.
In general, a blocking operation will require
an environment that is \emph{live}:
it will always eventually bring the abstract state
to a \emph{good} (e.g.~unlocked)~state.

The TaDA Live total specification of the CLH \p{lock} operation is:
\begin{equation}
  \ATRIPLE |-
    \A l\in\set{0,1} \eventually \set{0}.
      <\ap{L}(\p{x},l)>\code{lock(x)}<\ap{L}(\p{x},1) \land l=0>
 \label{spec:lock-fair}
\end{equation}
The interference precondition is
$  \PQ{l\in\set{0,1} \eventually \set{0}.}
      {\color{atomic}\atombra{\ap{L}(\p{x},l)}} $
with the pseudo-quantifier now incorporating the environment liveness condition.
As well as stating that the environment can keep changing the lock,
the interference precondition also states that
if the lock is in a bad state ($l \in \set{0,1}\setminus\set{0}$)
then the environment must always
eventually change it to a good state ($l \in \set{0}$).
The implementation needs to ensure
termination under the assumption  that the lock  always eventually
returns to the unlocked state.
Note that  the environment is allowed to change~$l$ back to~1
arbitrarily many times, provided it always eventually sets it back to 0.
To see why this is enough to ensure termination,
consider \cref{fig:waves}\ref{fig:wave-fair}
where we chart the evolution
of the abstract state induced by a live environment
in the interference phase of \p{lock}.
Progress towards termination of \p{lock} is guaranteed
by the progress measure charted in \cref{fig:waves}\ref{fig:progress-fair}:
every time the environment unlocks, the value of $l$ decreases from 1
to 0; 
when the environment locks, although $l$ increases to 1,
the number $q$ of threads in front of us in the queue decreases.
One crucial aspect of our specification design 
is that we do not want to expose the progress argument to the client
\emph{unless part of the argument needs to be made by the client}.
With CLH, the part of the argument appealing to the queue of threads
is completely internal to the implementation of the operation,
while the argument for the environment's liveness must be provided
by the client (the implementation has no power over this).
We prove this formally in \cref{sec:evaluation}.

\begin{figure*}[tb]
  \adjustfigure \begin{tikzpicture}[wave,baseline=\waveoffset]
\setlabel{(a)}

\draw[main line]
  (4pt,0)
    foreach[count=\step] \time
      in {2,3,1,1,4,1,2,2} {
      \ifodd\step
        -- ++(\time,0)
        -- ++(\atomicsteplength/2,1)
      \else
        --coordinate[midway](good-\step) ++(\time,0)
        -- ++(\atomicsteplength/2,-1)
      \fi
    }
    coordinate (last)
;
\draw[main line]
  (last) -- ++(1,0) -- ++(.2*\atomicsteplength,.2) coordinate (last);
\draw[main line,dotted]
  (last) -- ++(.8*\atomicsteplength,.8) --coordinate[midway](good-last) ++(1,0) -- ++(.8*\atomicsteplength,-.8)
  ++(1,.2) coordinate (last);

\draw[axis] (0,-\waveoffset)--(last|-30,-\waveoffset)
  node[below left]{time};
\draw[axis,use as bounding box] (0,-\waveoffset)--(0,1.2+\waveoffset)
  node[above left](vaxis){abs.~state};
\draw[tick] (2pt,0) -- (-2pt,0) node[abs state](st1){$ \ap{L}(\p{x},1) $};
\draw[tick] (2pt,1) -- (-2pt,1) node[abs state](st0){$ \ap{L}(\p{x},0) $};
\node[left=-1ex] at (last|-vaxis) {\normalsize(a)};

\path (vaxis) -- node[annot,yshift=7pt,good](inf) {always eventually good state} (last|-vaxis);
\draw[annot,good] (inf)
  edge[in=90,out=200] (good-2)
  edge[in=80,out=220] (good-4)
  edge[in=110,out=270] (good-6)
  edge[in=90,out=330] (good-8)
  edge[in=90,out=345,densely dashed] (good-last)
;
\end{tikzpicture} \label{fig:wave-fair}
\begin{tikzpicture}[wave,baseline=-\waveoffset]
\setlabel{(c)}

\draw[main line]
  (4pt,0)
    foreach[count=\step] \time
      in {2,3,1,1,4,2,3} {
      \ifodd\step
        -- ++(\time,0)
        -- ++(\atomicsteplength/2,1)
      \else
        --coordinate[midway](good-\step) ++(\time,0)
        --coordinate[midway](bad-\step) ++(\atomicsteplength/2,-1)
      \fi
    }
    -- ++(2,0) coordinate (last)
;
\draw[main line,dotted]  (last) --coordinate (good-last) ++(2,0)
  ++(1,-\waveoffset)coordinate (last);

\draw[axis] (0,-\waveoffset)--(last|-30,-\waveoffset) node[below left]{time};
\draw[axis] (0,-\waveoffset)--(0,1.2+\waveoffset) node[above left](vaxis){abs.~state};
\draw[tick] (2pt,0) -- (-2pt,0) node[abs state](st1){$ \ap{L}(\p{x},1) $};
\draw[tick] (2pt,1) -- (-2pt,1) node[abs state](st0){$ \ap{L}(\p{x},0) $};
\node[left=-1ex] at (last|-vaxis) {\normalsize(c)};

\path[use as bounding box]
  (bad-2|-0,-\waveoffset) node(i1)[impedence] {}
  (bad-4|-0,-\waveoffset) node(i2)[impedence] {}
  (bad-6|-0,-\waveoffset) node(i3)[impedence] {}
;
\path[annot] (-2,-1*\waveoffset) node[below,bad,align=left,font=\scriptsize] {bounded\\[-.8ex]impedance}
  edge[red,in=-120,out=4] (i1)
  edge[red,in=-90,out=2] (i2)
  edge[red,in=-90,out=0,looseness=.8] (i3)
;
\path (vaxis) -- node[annot,yshift=7pt,good](inf) {always eventually good state} (last|-vaxis);
\draw[annot,good] (inf)
  edge[in=90,out=200] (good-2)
  edge[in=90,out=220] (good-4)
  edge[in=100,out=270] (good-6)
edge[in=90,out=345,densely dashed] (good-last)
;
\end{tikzpicture}
 \label{fig:wave-impeded}\par\bigskip
  \begin{tikzpicture}[wave,baseline=\waveoffset]
\setlabel{(b)}

\draw[main line]
  (4pt,2.3)
    foreach[count=\step] \time
      in {2,3,1,1,4,1,2,2} {
      \ifodd\step
        -- ++(\time,0)
        -- ++(\atomicsteplength/2,-.25)
      \else
        --coordinate[midway](good-\step) ++(\time,0)
        -- ++(\atomicsteplength/2,-.25)
      \fi
    }
    coordinate (last)
;
\draw[main line]
  (last) -- ++(1,0) -- ++(.2*\atomicsteplength,-.2) coordinate (last);
\draw[main line,dotted]
  (last) -- ++(.8*\atomicsteplength,-.2)
    --coordinate[midway](good-last) ++(1,0) -- ++(.8*\atomicsteplength,-.2)
  ++(1,-.2) coordinate (last);

\draw[axis] (0,-\waveoffset)--(last|-30,-\waveoffset)
  node[below left]{time};
\draw[axis,use as bounding box] (0,-\waveoffset)--
  node[left,align=right](vaxis)
  {variant\\[-.5ex]\phantom{abs. state}\llap{$ 2 q + l $}}
(0,2+\waveoffset);
\node[left=-1ex,yshift=1.5ex] at (last|-vaxis) {\normalsize(b)};

\end{tikzpicture} \label{fig:progress-fair}
  \begin{tikzpicture}[wave,baseline=-\waveoffset]
\setlabel{(d)}

\draw[main line]
  (4pt,2)
    foreach[count=\step] \time
      in {2,3,1,1,4,2,3} {
      \ifodd\step
        -- ++(\time,0)
        -- ++(\atomicsteplength/2,-.25)
      \else
        --coordinate[midway](good-\step) ++(\time,0)
        --coordinate[midway](bad-\step) ++(\atomicsteplength/2,-.25)
      \fi
    }
    -- ++(2,0) coordinate (last)
;
\draw[main line,dotted]  (last) --coordinate (good-last) ++(2,0)
  ++(1,-\waveoffset)coordinate (last);

\draw[axis] (0,-\waveoffset)--(last|-30,-\waveoffset) node[below left]{time};
\draw[axis] (0,-\waveoffset)--
  node[left,align=right](vaxis)
  {variant\\[-.5ex]\phantom{abs.~state}\llap{$ 2\budget + l$}}
(0,2+\waveoffset) node[yshift=.5ex] {};
\node[left=-1ex,yshift=1.5ex] at (last|-vaxis) {\normalsize(d)};

\end{tikzpicture}
 \label{fig:progress-impeded}
  \par
  \caption{Live environment~\ref{fig:wave-fair};
    measure of progress for CLH lock where
    $q$ is the number of threads ahead in the queue~\ref{fig:progress-fair};
    live environment with bounded impedance~\ref{fig:wave-impeded};
    measure of progress for spin lock~\ref{fig:progress-impeded}.
  }
  \label{fig:waves}
  \label{fig:lockspecs-tadalive}
\end{figure*}

Now let us consider the spin lock implementation.
The spin \p{lock} operation cannot promise to terminate
just by relying on a live environment.
The problem is that when the environment locks the lock,
there is no measure of progress that decreases:
we are genuinely delayed by this action.
We call this effect \emph{impedance}.
We conceptualise impedance as
a greater \emph{leaking} of the progress argument to the client.
In the spin lock example, the whole of the progress argument
needs to be provided by the client:
the client needs to ensure that the environment will always eventually unlock the lock, and that it will only impede the operation a bounded number of times.
To represent this extra \emph{bounded impedance} requirement (depicted
in \cref{fig:waves}\ref{fig:wave-impeded}),
we extend the abstract state of the lock with an ordinal~$\budget$,
an \emph{impedance budget} that strictly decreases when the lock
state is set to~1.
We arrive at the following \tadalive\ specification for spin \p{lock}:
\begin{small}\begin{equation}\ATRIPLE \forall\budgetfun\st |-
       \A l\in\set{0,1} \eventually \set{0}, \budget.
        <\ap{L}(\p{x},l, \budget) \land \budgetfun(\budget)<\budget>
          \code{lock(x)}
        <\ap{L}(\p{x},1,\budgetfun(\budget)) \land l=0>
   \label{spec:lock-impeded}
  \end{equation}\end{small}The lock is now represented by the predicate assertion
$\ap{L}(\p{x},l,\budget)$ with ordinal~$\budget$,
which can also be changed by the environment during the interference phase.
As well as expressing the dependency on a live environment on $l$,
this triple states that every
\code{lock} operation consumes the budget~$\budget$ by a non-trivial
amount, thus providing  a logical measure of progress from good to bad
states.
The initial value of the budget and the function $\ordfun$
from ordinals to ordinals is determined by the client, which must
demonstrate that the budget is enough to make all its calls.

The \tadalive\ total specification of \p{unlock}   for the CLH \p{lock}
is the same as the \tada\ partial specification. By contrast, the \tadalive\ specification 
of \p{unlock}    for the  spin lock needs to incorporate the ordinals:
$
  \ATRIPLE |-
    <\ap{L}(\pvar{x}, 1, \budget)>
      \code{unlock}(\pvar{x})
    <\ap{L}(\pvar{x}, 0, \budget)>.
$
The impedance budget~$\budget$ is preserved by unlock.
This encodes the fact that \p{unlock} does not impede the other operations,
but also that by unlocking we cannot increase the budget.
By combining these assumptions about the budget
(it decreases when locking, stays constant when unlocking), it is
possible to conclude that the implementation of the 
spin lock terminates 
using the progress measure in \cref{fig:waves}\ref{fig:progress-impeded}.
Crucially, for spin lock,  the \emph{whole} of the progress argument
is provided by (and thus visible to) the client.

The impedance budget technique
was first introduced to concurrent separation logics
for non-blocking operations in Total~\tada~\cite{total-tada}.
Here, we smoothly integrate ordinals into TaDA Live
that fully supports blocking.

\subsection{Abstraction and proof reuse}
\label{sec:proof-reuse}

The 
\tadalive\ program logic works with \emph{hybrid} triples of the
form:
\begin{equation*}
  \ATRIPLE |-
    \A x \in X \eventually X'.
    <\na{P}|\at{P}(x)>
    \cmd
<\na{Q}(x)|\at{Q}(x)>
    \label{spec:hybrid}
\end{equation*}
which generalises both Hoare triples and abstract atomic triples. This
triple comprises: a pseudo-quantifier with its environment liveness
condition; atomic pre/post-conditions $\at{P}(x)$ and $\at{Q}(x)$; and
Hoare pre/post-conditions, $\na{P}$ and $\na{Q}(x)$.  The Hoare
pre/post-conditions describe stable resources that are owned locally by $\cmd$ and can
be updated non-atomically.  Hoare triples correspond to the 
case where $X=X'=\set{1}$ and $\at{P} = \at{Q} = \emp$.
Abstract atomic triples correspond to the case when 
$\na{P}$ and $ \na{Q} (x) $ are empty.
We have omitted some details from the hybrid
triples, such as layers
  and levels, since they are not important for the ideas of this
  section; the full details are given in \cref{sec:specs}. 

The integration of the liveness annotations in triples
achieves the goal of keeping the specification abstract and atomic.
To obtain the goal of reuse of proofs,
there are two missing ingredients:
a mechanism to make use of the $X\eventually X'$ assumption in a proof
of an implementation of the specification;
and a way to reuse the specification in an arbitrary client context.

Imagine proving the CLH \code{lock} implementation correct with respect to
specification~\eqref{spec:lock-fair}.
The while loop needs to discharge that
  ``finally, the current thread is at the head of the queue,
             and the lock is unlocked''.
This can only be proven with the help of the
$l\in\set{0,1} \eventually \set{0}$ liveness assumption
coming from the lock specification.
To this end, in addition to liveness assumptions given by obligation assertions, 
\tadalive\ extends judgements to allow contexts with
$X\eventually X'$ liveness assumptions,
used to discharge the $\Finally\,T$ condition in the while
rule. The full details are given in \cref{sec:specs}.

Now consider proving a client of a lock
using the specifications of the lock
operations for the calls to these operations. 
This requires the \textsc{Live}ness \textsc{C}heck rule:
\begin{inlineproofrule}
  \infer*[right={LiveC'}]{
    \Always{L} \implies \AlwaysEv{T}
    \\
    \forall x \in X \st \VALID |= \at{P}(x) * T \implies x \in X'
    \\\\ \ATRIPLE |-
      \A x \in X \eventually X'.
        <\na{P}| \at{P}(x)> \cmd <\na{Q}(x)  |\at{Q}(x)>
  }{\phantom{{}\eventually X'}
    \ATRIPLE |-
      \A x \in X.
        <\na{P} * L | \at{P}(x)> \cmd <\na{Q}(x) * L | \at{Q}(x)>
  }
  \label{rule:livec-simpl}
\end{inlineproofrule}
The rule's crucial effect is to remove the liveness annotation
$X \eventually X'$,
which can only be done in a situation where the corresponding liveness assumption
$\Always{(x\in X)} \implies \AlwaysEv{(x\in X')}$ is satisfied.
Just like the \ref{rule:while-blocking} rule,
we frame an assertion~$L$ which is information that holds
for the duration of the call.
Typically,~$L$ asserts the existence of some shared region and that
the environment holds some obligations depending on the state of the region.
We also need to provide a set of target states~$T$
capturing when~$x\in X'$ (second premise).
The crucial check of the rule is the first premise,
which examines the traces where~$L$ holds everywhere,
and asks us to prove that in those traces we see~$T$
satisfied infinitely often (and thus $x\in X'$ infinitely often).
If that is true, we can conclude that the command terminates
in the current context without the extra assumption
in the pseudo-quantification.
The resulting triple can be then manipulated using standard \tada\ reasoning.

Take the typical use of (CLH) locks
  $\cmd = \code{lock(x);...;unlock(x)}$
in a client $\cmd \parallel \dots \parallel \cmd$.
To share the lock resource $\ap L(\p{x},l)$,
the client proof would specify some region $\region{client}{\rid}(\p{x},l)$
where $l$ is the abstract state of the lock. A typical client would include the abstract state of other shared resources too,
but for simplicity we focus here on the lock.
The client needs to specify in its protocol
that the lock will be always eventually unlocked by the threads sharing it.
We therefore introduce an exclusive obligation $\obl{k}$
(the $\obl{k}$ey of the lock)
which is obtained when locking the lock and fulfilled when unlocking it:
\begin{align*}
  & ((\p{x},0), \oblZero) \interfTo ((\p{x}, 1), \obl{k}) &
  & ((\p{x},1), \obl{k}) \interfTo ((\p{x}, 0), \oblZero)
\end{align*}
The protocol is mirrored in the region's interpretation
$
  \rInt(\region{client}{\rid}(x,l)) \is
    \ap L(x,l) * 
    \bigl(l = 0 \dotimplies \locObl{\obl{k}}{\rid}\bigr)
$.

With the application of standard \tada\ reasoning,
it is possible to derive
\[
  \infer{
    \ATRIPLE |-
      \A l\in\set{0,1} \eventually \set{0}.
        <\ap{L}(\p{x},l)>\code{lock(x)}<\ap{L}(\p{x},1) \land l=0>
  }{
    \ATRIPLE |-
      \A l\in\set{0,1} \eventually \set{0}.
        <\emp|\region{client}{\rid}(\p{x},l)>
          \code{lock(x)}
        <\locObl{\obl{k}}{\rid} | \region{client}{\rid}(\p{x},1) \land l=0 >
  }
\]
which amounts to saying that
if \code{lock(x)} atomically locks the lock region, then 
it also atomically updates the client region containing the lock.
Notice that the $\set{0,1} \eventually \set{0}$ annotation is propagated as is.
In other words,
the update on the lock is put in the context of the current client.
In such context,
we can set the frame~$L$ to be
$ \exists l \in \set{0,1} \st
    \region{client}{\rid}(\p{x},l) * l=1 \dotimplies \envObl{\obl{k}}{\rid} $:
according to the protocol of the current client,
the environment holds an obligation $\obl{k}$ when $l=1$.
Because of the liveness invariant encoded by~$\obl{k}$,
it is true that the environment will always eventually unlock the lock,
allowing us to discharge the side condition of~\ref{rule:livec-simpl}:
\[
  \Always\bigl(
    \exists l \in \set{0,1} \st
      \region{client}{\rid}(\p{x},l) * l=1 \dotimplies \envObl{\obl{k}}{\rid}
  \bigr)
  \implies
  \AlwaysEv(\region{client}{\rid}(\p{x},0))
\]
Indeed, if~$l=1$ the precondition gives us~$\envObl{\obl{k}}{\rid}$ which means
that the environment owns~$\obl{k}$ and will therefore eventually fulfil it,
which can only be done by setting~$l=0$.
The environment is allowed to then lock \p{x} again, but that is fine:
as we discussed, a CLH lock can promise termination under this milder condition.

Thanks to the smooth integration of liveness annotations in the specifications
and liveness invariants expressed as obligations,
\tadalive\ proofs can properly abstract and encapsulate behaviour.
Consider a module implementing a counter which can be safely used
concurrently thanks to the internal use of locks to protect access
to the shared cell holding the value of the counter.
For example, the increment operation can be implemented as
\begin{center}
\code{def incr(x)\{var v in lock(x); v:=[x+1]; [x+1]:=v+1; unlock(x)\}}
\end{center}
While the use of locks involves blocking behaviour,
the blocking is handled completely internally and a client of the
counter cannot observe it.
The \tadalive\ specification of the increment operation thus  does not leak this
implementation detail:
\[
  \ATRIPLE |-
    \A n\in\Nat.
    <\ap{C}(\p{x},n)>{\code{incr(x)}}<\ap{C}(\p{x},n+1)>
\]
A client of the counter does  not need to worry about the internal
blocking since 
the specification does not entail any liveness proof obligation.
The proof of \p{incr}  discharges the liveness assumption
of the specification of \p{lock} by using obligations analogous to~$\obl{k}$
above, specified in an internal protocol that is not exposed to the client proof.
In \cref{sec:spec-encapsulation},  we discuss
the encapsulation properties of \tadalive's specifications in more
detail.

Our approach contrasts significantly with
previous work~\cite{BostromM15,Jacobs18toplas}
where blocking is represented in specifications by the acquisition of
tokens acting as obligations. In this work, the 
specification style fixes an expected protocol to be followed by the client.
For example, the axiom for a built-in lock acquisition operation
returns a built-in token representing the need for calling a
lock release primitive.

In contrast, our lock specification does not
impose on the client any particular way in which
its environment liveness assumption should be enforced.
It is the job of the client to
devise a protocol that ensures
the environment liveness assumptions of the lock specifications will be provable.
For locks, this is indeed often achieved by making sure every thread that
locks a lock eventually unlocks it.
Such a protocol is encoded by the liveness invariants of the
client's region (e.g.~$\rt[client]$ in the example above)
and the \pre\obl{k}-obligation pattern.
The specification of the lock, however, does not transfer obligations to the client, leaving open the possibility for clients to use completely different protocols.
The following example client illustrates the added flexibility
of our approach:
\[
\begin{parall}
  \begin{threadcode}[gobble=2]
  lock(x);
  [y] := 1
  \end{threadcode}
  \PARALLEL
  \begin{threadcode}[gobble=2]
  var b = 0 in
  while(b != 1){ b:=[y] };
  unlock(x)
  \end{threadcode}
\end{parall}
\]
The code assumes a lock has been allocated at~\p{x},
and~\p{y} initially stores~0.
In the specification style where the expected (liveness) protocol is built-in,
the \p{lock} call in the left-hand thread would return a built-in token
which can only be consumed by calling \p{unlock}.
This in turn requires an extension of the logic
---as done e.g.~in~\cite{Hamin019}---
providing some mechanism for the sound delegation of tokens from one thread to the other.
In \tadalive, there is no need for such an extension.
The protocol of this client does not need to associate obligations
with the lock;
one can simply define an obligation
(owned initially by the left-hand thread)
that is fulfilled when~\p{y} is set to~1,
and use it to prove the appropriate environment liveness conditions for
the proof.

\medskip

In this informal overview, we used temporal logic formulas to
represent the key liveness conditions in the
\ref{rule:while-blocking} and \ref{rule:livec-simpl} rules.
The formal versions of these \tadalive\ rules, however,
implement those checks with what we call the 
\emph{environment liveness condition},
which reduces these liveness properties to safety checks
via a dedicated set of rules
(explained in \cref{sec:rules}).
Remarkably, the liveness checks of both rules can
be phrased in terms of the environment liveness condition,
which therefore provides
a uniform proof principle for blocking termination.

\subsection{A Guide For the Reader}

The rest of the paper proceeds by introducing the assertion language and the semantic model of \tadalive\ in full detail (\cref{sec:model}),
then presenting the proof rules through the proof of an example (\cref{sec:rules}),
then walking through the proofs of our case studies and commenting on limitations (\cref{sec:evaluation}),
and finally discussing related work (\cref{sec:relwork}).
A  reader interested in the proof rules can skim through
\cref{sec:assertions,sec:protocols,sec:interpr-reific,sec:specs}
and the beginning of \cref{sec:specs-semantics}
to familiarise with the basic definitions, and then move to
\cref{sec:rules} to understand how the rules themselves work,
and the typical proof patterns. \section{The \tadalive\ Semantic Model}
\label{sec:model}

We introduce the semantic model which justifies \tadalive, defining:
\begin{itemize}
  \item the operational semantics of commands and their fair traces;
  \item the assertion language, regions, guards, obligations and protocols;
  \item the semantics of assertions and viewshifts;
  \item the specification format; and
  \item the trace semantics of specifications. 
\end{itemize}

In \cref{sec:overview},
we introduced hybrid triples which generalise 
Hoare and atomic triples. For our formal semantics, we separate
triples into two components:
  the command~$\cmd$, and
  the specification~$\spec$
  comprising the pseudo-quantifier, the precondition and the
  postcondition. We
introduce the
\emph{semantic} judgement,
$\JUDGE |= \cmd : \spec$ with
$
  \spec = \SPEC |=
    \A x\in X\eventually X'.< \na{P} | \at{P}(x) > < \na{Q} (x) | \at{Q}(x)>
    $,
which captures the semantic properties
of a command that satisfies a specification:
i.e. safety and termination of its fair traces.
This required a complete reformulation of the model of \tada.
First,
  we give a trace semantics to specifications independently of commands.
  This enables us to define the semantic judgement to hold when
  $\sem{\cmd} \subseteq \sem{\spec}$:
  that is, when the concrete traces of a command
  are allowed by the specification traces.
  This approach is unusual for separation logics based on Hoare-style triples,
  and brings the semantics nearer to approaches based on refinement.
Second,
  the trace model is an ``open-world'' semantics
  where traces include both individual local steps made by the command
  and individual arbitrary environment steps.
  Other models typically model the environment interference  indirectly,
 representing a  sequence of environment steps as a single big jump.
  Our ``open-world'' approach is crucial to capture  the assumptions on the liveness of the environment stipulated by the specifications.
Third,
  the trace semantics of the specification is given in a style that
  is closely related to alternating automata~\cite{Vardi95}.
  The specification is seen as an automaton which traverses a concrete trace
  and only accepts those traces that satisfy the specification.
  This enables us  to cleanly separate the (alternating) safety constraints
  from the (linear time) liveness constraints, imposed by a specification.

\subsection{Notation}
We write $X \pto Y$ for the set of partial functions from~$X$ to~$Y$,
and  $X \finpto Y$ for the set of finite partial functions.
Given $f \from X \pto Y$, we write $f(x) = \bot$  if~$f$ is undefined
on~$x$,  and
$\dom(f) \is \set{x | f(x) \neq \bot}$.
We write  $\map{x_1 -> y_1;;x_n -> y_n}$
for the finite function that maps each of the~$x_i$ to~$y_i$
and is undefined on any other input.
We write $f\map{x -> y}$ for the partial function which coincides with~$f$
except on~$x$ where it returns~$y$, and write $f\map{x->\bot}$ analogously.
The disjoint union between partial functions $ f\dunion g$ is defined
if their domains are disjoint.
In contexts where the expected type is a function,
we write~$\emptyset$ for the empty function.

\subsection{Fair Trace Semantics of Commands}
\label{sec:cmd-semantics}

We present a standard first-order imperative language, called \While,
with shared-memory concurrency and fine-grained non-blocking primitives,
and define the fair concrete trace semantics of its commands.
Our \While\ language is parametrised by the following sets:
the \emph{Booleans}, ${\Bool \is \set{\p{true},\p{false}} \ni b}$;
the \emph{values}, ${\Val \is \Int \cup \Bool \ni v}$;
the \emph{program variables}, $\PVar \ni \pvar{x}, \pvar{y}, \dots$;
and the \emph{function names}, $\FName \ni \pvar{f}$.
The set~$\PVar$ contains a special element, $\pvar{ret}$,
the name of a local variable that holds a function's return value.

\begin{mathfig}
  $
    \begin{array}[t]{r@{\;}c@{\;}l@{\hspace{1em}}l}
      \cmd & ::= &
            \acode{skip}                       & (\text{skip})\\
      & | & \acode{x:=EXP}                     & (\text{assignment})\\
      & | & \acode{x:=[EXP]}                   & (\text{read})\\
      & | & \acode{[EXP]:=EXP}                 & (\text{write})\\
      & | & \acode{x:=CAS(EXP,EXP,EXP)}        & (\text{compare-and-swap})\\
      & | & \acode{x:=FAS(EXP,EXP,EXP)}        & (\text{fetch-and-set})\\
      & | & \acode{x:=alloc(EXP)}              & (\text{allocate})\\
      & | & \acode{dealloc(EXP)}               & (\text{deallocate})\\
      & | & \acode{CMD;CMD}                    & (\text{sequential composition})\\
      & | & \cmd \parallel \cmd                & (\text{parallel composition}) \\
      & | & \acode{let f(VARS)=CMD in CMD}     & (\text{function definition})\\
      & | & \acode{var x = EXP in CMD}         & (\text{local variable binding}) \\
      & | & \acode{if(BEXP)\{CMD\}else\{CMD\}} & (\text{if}) \\
      & | & \acode{while(BEXP)\{CMD\}}         & (\text{while loop})\\
      & | & \acode{x:=f($\vec{\vexp}$)}        & (\text{function call})\\
      & | & \acode{<<CMD>>}                    & (\text{primitive atomic block})
      \\
      \vexp & ::= &
                 v
        \;\mid\; \pvar{x}
        \;\mid\; \vexp \mathrel{\p{+}} \vexp
\;\mid\; \vexp \mathrel{\p{*}} \vexp
        \;\mid\; \cdots
        & (\text{numeric expressions})
      \\
      \bexp & ::= &
        \ghost{v}{b}
        \;\mid\; \pvar{x}
        \;\mid\; \neg \bexp
\;\mid\; \vexp \leq \vexp
        \;\mid\; \cdots
        & (\text{Boolean expressions})
    \end{array}
  $
  \hfil
  $
    \begin{array}[t]{r@{\;}l}
      \llap{Domains:}&\\[1ex]
      v      \in \Val   & \is \Int \union \Bool
        \supseteq\Addr    \is \Nat \\
      \store \in \Store & \is \PVar \pto \Val \\
      h      \in \Heap  & \is \Addr \finpto \Val \\
      \functxt \in \FunCtxt &\is
            \FName \pto (\PVar^{*}, \Cmd) \\
      \PState\ni\pstate &
        \begin{array}[t]{@{}c@{\;}l}
             ::= &  \checkmark
            \;\mid\;  \cmd
            \;\mid\;  \pstate \parallel \pstate
            \;\mid\;  \pstate; \cmd \\
            \;\mid\; & (\store, \pstate)
            \;\mid\;  \acode{let f(VARS) = CMD in\~$\pstate$}
        \end{array}
      \\
      \pconf\in \Type{PConf} & \is
        (\Store \times \Heap \times \PState) \dunion
        \set{\fault}
    \end{array}
  $
  \caption{Syntax of commands $\cmd\in\Cmd$ and basic semantic domains.}
  \label{fig:commands}
\end{mathfig}

\begin{definition}[Commands]
  The \emph{set of commands}, $\Cmd \ni \cmd$, is defined by the grammar
  in \cref{fig:commands} where
    $\pvar{x} \in \PVar$,
    $\pvars{x} \in \PVar^*$ is a list of pairwise distinct variables, and
    $\p{f} \in \FName$.
  The notation  \acode{var x1,x2...,xn in CMD} denotes
  $
    \acode{var x1=0 in var x2=0 in ... var xn=0 in CMD}
  $.
\end{definition}

We place some restrictions on these commands to simplify exposition.
We write $\progvars(\cmd)$ for the free program variables of a command.
The set $\mods(\cmd)$ is the set of free variables that are
potentially modified by a command,
i.e.~any free \p{x} of~$\cmd$ appearing in instructions
of the form \acode{x:=...}; in particular,
$
  \mods(\acode{var x = EXP in CMD}) =
    \mods(\cmd) \setminus \set{\pvar{x}}
$.
In a command $\cmd_1 \parallel \cmd_2$,
we apply the mild syntactic restriction that
${\mods(\cmd_{1}) = \mods(\cmd_{2}) = \emptyset}$.
Each individual thread is still able to modify variables
that are created locally and to modify shared heap cells,
but are not allowed to modify the free variables.\footnote{To lift this restriction,
  one can use the ``variables as resources'' technique~\cite{BornatCY06}.
  Our restriction simplifies the handling of the local state
  without sacrificing expressivity:
  any local variable in the scope common to both threads that needs to be modified
  can be instead implemented by using a shared memory cell.
}
In a function definition
$\acode{let f(x_1,...,x$_n$)=CMD_1 in CMD_2}$,
we use the natural restriction
$\progvars(\cmd_1) \subseteq \set{ \pvar{x}_1,\dots,\pvar{x}_n, \pvar{ret}}$.
Also for simplicity, we assume each function name is given at most one
definition.
The function $\funnames\from\Cmd \to \powerset(\FName)$
returns the function names occurring in~$\Cmd$
that are not bound by a \code{let}.
Although function definitions may be recursive,
we will disallow recursion in our logical rules to simplify the development.
In the programs we consider,
all potentially divergent behaviour stems from \code{while}.
It is straightforward to reformulate the \ref{rule:while} rule
into a \omittedrulelabel{Let}{rule:let} rule
that supports terminating recursion.

Commands manipulate heaps
  $h \in \Heap \is \Addr \finpto \Val$
(where $\Addr \is \Nat$ and~$\emptyset$ is the empty heap)
and \emph{local variable stores},
  $\store \in \Store \is \PVar \pto \Val$.
A command can contain free function names,
so we use a \emph{function implementation context}
$
  \functxt \in \FunCtxt \is
    \FName \pto (\PVar^{*}, \Cmd)
$,
to map function names to
pairs comprising a finite list of distinct variables (the formal arguments)
and a command (the body of the function).

A command induces transitions over \emph{program configurations}
$
  \pconf\in\PConf \is (\Store \times \Heap \times \PState) \dunion \set{\fault}
$
which keep track of
  the current variable store and global heap, and
  the program states $ C\in\PState $ (see~\cref{fig:commands})
    which represent the set of the active threads and their execution state.
The~$\fault$ program configuration represents a faulty configuration,
  e.g. the one reached after dereferencing an unallocated address.
For the details of program states we refer to \cref{app:command-semantics};
what is relevant is that~$\checkmark$ is the program state of a terminated thread,
and we can define a function
$\threads \from \PConf \to \powerset(\TId)$
that computes the set of thread identifiers ($t\in\TId$)
of the active threads of a program configuration
(details in Appendix).

To model fair traces of commands,
we use a small-step operational semantics, parametrised by a function
implementation context~$\functxt$
and 
defined by a relation
$
  {\redto} \subseteq \PConf \times \Type{Sched} \times \PConf
$.
In a transition $(\pconf[1], {\schpl}, \pconf[2])$,
the scheduling annotation, ${\schpl} \in \Type{Sched}$,
keeps track of who executed the step:
\[
  \Type{Sched} \is
    \set{\LocOf{t} | t \in \TId}
      \dunion
    \set{\env}
\]
that is, either a local active thread~$t$ or the environment.
Environment steps can have arbitrary effects on the heap
and can generate faults at any time:
\[
  \infer{
    h'\in \Heap
  }{
    \store, h, \pstate \envstep \store, h', \pstate
  }
\qquad
  \infer{
    \pconf \in \PConf
  }{
    \pconf \envstep \fault
  }
\]
The full definition of the transition semantics
is defined in \cref{fig:oper-semantics} and
\cref{fig:oper-semantics-fail} of the Appendix.

We call \emph{program traces} the infinite sequences of the form
$
  \pconf[0] \schpl[0] \pconf[1] \schpl[1] \cdots
  $
where,
for all $i\in\Nat$,
    $\pconf[i] \in \Type{PConf}$ and 
    ${\schpl[i]} \in \Type{Sched}$.
We use~$\ptrace$ to range  over infinite suffixes of program traces
and~$\PTrace$ for the set of all program traces.
We define the \emph{set of \pre\functxt-program traces}
\[
\PTrace_{\functxt} \is
\set{ \pconf[0]\schpl[0]\pconf[1]\schpl[1] \cdots
  | \forall i\in\Nat\st
  \pconf[i] \step{\schpl[i]} \pconf[i+1]
}.
\]

\begin{definition}[Fairness]
\label{def:fairness-main}
  A \pre\functxt-program trace
    $ (\pconf[0]\schpl[0]\pconf[1]\schpl[1]\cdots) \in \PTrace[\functxt] $
  is \emph{fair} if:
  \begin{gather}
    \forall i \in \Nat \st
      \forall t \in \threads(\pconf[i]) \st
        \exists j \ge i\st
          ({\schpl[j]} = {\LocOf{t}} \lor
          \pconf[j] = \fault)
    \\
    \forall i \in \Nat \st
      \exists j \ge i \st {\schpl[j]} = {\env}
  \end{gather}
  That is: a trace is fair if, at any point in time,
  every thread that can take a step (and the environment)
  will eventually be scheduled.
\end{definition}

The open-world program semantics defines the behaviour of a command
when run concurrently with an arbitrary environment.
It corresponds to the fair program traces of a command,
with the information about program states and thread identifiers removed.

\begin{definition}[Open World Semantics]
\label{def:prog-semantics-main}
  We call \emph{traces}
  the infinite sequences $
    {\conf[0] \pl[0] \conf[1] \pl[1] \cdots}
  $
  where,
for all $i\in\Nat$,
      $\conf[i] \in \Conf \is (\Store \times \Heap) \union
      \set{\fault}$ and 
      ${\pl[i]} \in \set{\loc,\env}$.
  We use~$\trace$ for ranging over infinite suffixes of traces
  and~$\Trace$ for the set of all traces.
  For a trace $ \trace = \conf[0] \pl[0] \conf[1] \pl[1] \cdots $,
  we define $\trAt{i} \is (\conf[i],\pl[i])$,
and $\trFrom{i} \is \conf[i] \pl[i] \conf[i+1] \pl[i+1] \cdots$.
The function $\stripTr{\hole} \from \PTrace \to \Trace$ is defined by
  $
    \stripTr{\pconf[0]\schpl[0]\pconf[1]\schpl[1]\cdots}
      \is
        \conf[0] \pl[0] \conf[1] \pl[1] \cdots
  $
  where
  \begin{align*}
    \conf[i] &\is
      \begin{cases}
        (\store, h) \CASE \pconf[i] = (\store,h,\wtv)\\
        \fault      \CASE \pconf[i] = \fault
      \end{cases}
    &
    \pl[i] &\is
      \begin{cases}
        \loc \CASE {\schpl[i]} \in \Type{Sched}\setminus \set{\env}\\
\env \CASE {\schpl[i]} = {\env} \\
      \end{cases}
  \end{align*}

  The \emph{open-world program semantics function},
  $
    \sem{\hole}_{\functxt} \from \Cmd \to \powerset(\Trace)
  $
  is defined by
  \[
    \sem{\cmd}_{\functxt}
      \is
      \Set{\strut
        \stripTr{\pconf[0] \ptrace} |
          (\pconf[0]\ptrace) \in \PTrace[\functxt],
          \fv(\cmd) \subseteq \dom(\store_0),
          \pconf[0] = (\store_0,\wtv,\cmd),
          \pconf[0]\ptrace \text{ is fair}
      }
  \]
  The notation $\sem{\cmd}$ is syntactic sugar for $\sem{\cmd}_{\emptyset}$.
\end{definition}

The goal of \tadalive\ is to prove termination of the local command.

\begin{definition}[Local termination]
\label{def:loc-termination}
  A trace $\trace \in \Trace$ is \emph{locally terminating},
  written $\locterm(\trace)$,
  if it contains finitely many occurrences of~$\loc$.
\end{definition}

It might seem odd that our program semantics only contains infinite traces,
since our goal is proving termination.
Traces that locally terminate simply have an infinite tail of environment steps.
To simulate a closed system one can select for the traces
where the environment steps are all identity steps.

\begin{remark}[On primitive blocking]
  It is important to remember  that the primitives of our programming language
  are non-blocking, in the sense that they can always take a step if scheduled:
  for all $h \in \Heap, C \in \PState $,
  for all~$\store$ with $\dom(\store) \supseteq \progvars(C)$, and
  every $t \in \threads(C)$,
  there is a $\pconf \in \PConf$ such that
  $(\store,h,C) \step{\LocOf{t}} \pconf$.
  Hence, a trace is locally terminating only if all the threads terminated.

  For languages which have blocking primitives
  (e.g. built-in locks/channels),
  traces may be locally terminating because a non-terminated thread
may not have a local successor (i.e.~it is not enabled)
  at any point in the future
  (e.g.~if a built-in lock remains locked forever,
  an acquire operation would not have local successors).
  With blocking primitives, fairness also comes in two variants: strong and weak.
  Strong fairness requires that if an operation is infinitely often enabled
  it is infinitely often executed.
  Strong and weak fairness coincide for languages
  like ours where every primitive is enabled at all times.
  
Notice that our lack of blocking primitives does not make our setting less general:
  blocking primitives can be implemented on top of non-blocking ones,
  both with weak and strong fairness assumptions for termination,
  as illustrated by our spin and ticket lock examples.
  In other words, blocking primitives can be given \tadalive\ specifications
  and be treated uniformly by the logic.
The addition of built-in blocking primitives to the language
  does not pose new challenges.
\end{remark}

\subsection{\tadalive\ Assertions and Worlds}
\label{sec:assertions}

We formally introduce the \tadalive\ assertion language,
and its semantics in terms of its models,
called \emph{worlds}.
The \tadalive\ assertions
are built from the standard \emph{classical}
connectives and quantifiers of separation logic,\!\footnote{\tada\ interprets the separating conjunction intuitionistically.
  With \tadalive, we interpret it classically in order to  not
  lose information about the  obligations.
}
\tada\ region and guard assertions,
and new \tadalive\ obligation and layer assertions.
To formalise the assertions, we assume a number of basic domains:
\begin{itemize}
  \item a set of \emph{logical variables}, denoted~$\LVar$,
        disjoint from~$\PVar$;
  \item an enumerable set of \emph{region types}, $\RType \ni \rt$;
  \item an enumerable set of \emph{region identifiers}, $\RId \ni \rid$;
  \item the set of \emph{levels}, $\Level \is \Nat \ni \lvl$,
        to stratify regions to avoid the problem of re-entrancy\footnote{In Iris, levels roughly correspond to masks.}
        (explained in \cref{rem:levels});
  \item a set of \emph{abstract states}, $\AState \ni a$,
        including sets and lists of values;
  \item a set of \emph{guards}, $\Guard \ni G$,
        which will offer the support for the \emph{guard algebras} 
        defined later;
  \item a well-founded partial order
          $(\Layer, \layleq, \layTop, \layBot)$
        of \emph{layers},
        which will be associated to special guards called \emph{obligations};
        and
  \item a set of ordinals,~$\Ord$.
\end{itemize}

For layers, we use the abbreviations
$k_1 \laylt k_2 \is (k_1 \layleq k_2 \land k_2 \not\layleq k_1)$
and
$k \laygeqq n \is (\forall k'\laygt k \st k' \laygeq n)$.
The \emph{set of abstract values} is
$\AVal \is \Val \union \AState \union \Guard \union \RId \union
\Layer$.

As is standard,
when used in assertions,
we extend numeric and Boolean expressions
to use logical variables and abstract values too.
A \emph{logical variable store},
$l \in \LStore \is \LVar \pto \AVal$,
assigns values to logical variables.
Given a logical and a program variable store $l,\store$,
the evaluation of expressions
  $\esem{\vexp}(l,\store) \in \AVal$
and of Boolean expressions
  $\bsem{\bexp}(l,\store) \in \Bool$
are standard.

Assertions and worlds are built using partial commutative monoids.

\begin{definition}[PCM]
\label{def:pcm}
  A (multi-unit) \emph{partial commutative monoid} (PCM) is a tuple
    $(X,\bullet, E)$
  comprising  a set~$X$, a binary \emph{partial} composition operation
  $\bullet\from X \times X \pto X$
  and a set of unit  elements~$E$,
  such that the following axioms are satisfied
  (where either both sides are defined and equal, or both sides are undefined):
  \begin{alignat*}{3}
  \forall x,y,z \in X. &\quad&
    (x \bullet y) \bullet z &= x \bullet (y \bullet z)
      &\quad& \text{(associativity)}
  \\
  \forall x,y \in X. &&
    x \bullet y &=  y \bullet x
      && \text{(commutativity)}
  \\
  \forall x \in X.\exists e \in E. &&
    \quad x \bullet e &= x
      && \text{(identity)}
  \end{alignat*}

  \noindent
  For $x,y \in X$,
    we write $ x \compat y $
    if $ x \bullet y \neq \bot $,
  and $x \resleq y$ if $ \exists x_1\st y = x \bullet x_1$.
A PCM is \emph{cancellative} when,
  for any $x,y_1,y_2 \in X$,
  if $x \guardOp y_1 = x \guardOp y_2$ then $y_1 = y_2$.
\end{definition}

The partial heaps form a PCM $(\Heap, \dunion,  \set{\emptyset})$, as
standard in separation logics.
We also use  \emph{guard algebras}  and \emph{obligation
  algebras} which are PCMs for describing auxilary  ghost state,  specified by the user of the logic.

\begin{definition}[Guard Algebras]
\label{def:guard-alg}
  A \emph{guard algebra} is a PCM
  $(\Type{Grd},\guardOp,\set{\guardZero})$ with  $\Type{Grd} \subseteq \Guard$.
  \tadalive\ is parametrised by a function $\GuardsOf(\hole)$
  mapping a region type~$\rt$ to a guard algebra
  $\GuardsOf(\rt) = (\GuardsOf[\rt],\guardOp[\rt],\set{\guardZero[\rt]})$.
  The~$\rt$ subscript is omitted from $\guardOp[\rt]$ and $\guardZero[\rt]$ when
  it is clear from the context.
\end{definition}

As discussed, the 
obligations represent ghost state
for describing liveness invariants.
They form an \emph{obligation algebra} which is 
little more complicated to define due to the  association of
obligations  with layers.

\begin{definition}[Obligation Algebras]
\label{def:guard-alg}
  \tadalive\ is parametrised by a
  set of \emph{atoms} $\AObl$ and a
  \emph{layered obligation structure}:
  that is, a pair $(\Oblig, \lay)$
  where
    $\Oblig = \powerset(\AObl) \subseteq \Guard$
    and
    $\lay\from\Oblig \to \Layer$
    such that
    $ \forall O \in \Oblig\st \layBot \laylt \lay(O) \layleq \layTop $.
We will implicitly coerce
  atoms~$a \in \AObl$ into obligations~$\set{a} \in \Oblig$.
  An \emph{obligation algebra} is a guard algebra
  $(\Type{Obl},\oblOp,\set{\oblZero})$
  where $\Type{Obl} \subseteq \Oblig$,
  $\oblZero = \emptyset$,
  $\oblOp$ is union of disjoint sets and
  $
  \forall O_1,O_2 \in \Type{Obl}\st
  O_1 \resleq O_2 \implies
  \lay(O_1) \geq \lay(O_2)
  $.

  \tadalive\ is parametrised by a function $\ObligsOf(\hole)$
  mapping a region type~$\rt$ to an obligation algebra
  $\ObligsOf(\rt) = (\ObligsOf[\rt],\oblOp[\rt],\set{\oblZero[\rt]})$.
  The~$\rt$ subscript is omitted from
    $\oblOp[\rt]$ and $\oblZero[\rt]$
  when its clear from the context.
\end{definition}

In \cref{sec:overview},  we have seen examples of obligation algebras.
For  instance, the $\cmd_1'\parallel\cmd_2''$ example  used two atoms
$\obl{u}_1$ and $\obl{u}_2$, giving rise to the obligation algebra
with elements
$ \set{\obl{u}_1} $, $\set{\obl{u}_2}$, and $\set{\obl{u}_1,\obl{u}_2}$.
As mentioned, we make no difference between an atom $\obl{u}_1$ and the obligation $\set{\obl{u}_1}$ using the symbol of the former for both.
For our examples, it is enough to assign layers to atoms,
e.g.~$\lay(\obl{u}_1) \laylt \lay(\obl{u}_2)$, 
and extend the layers to obligations
by taking the minimum layer of the composed atoms,
for example $\lay(\obl{u}_1\oblOp\obl{u}_2) = \lay(\obl{u}_1)$. 
Note that, by construction, each obligation is incompatible with itself:
$ \oblUndef{O\oblOp O} $.

\begin{definition}[\tadalive\ Assertions]
 The \emph{set of \tadalive\ assertions},
 $\Assrt \ni P, Q, \dots$, is defined by the grammar in \cref{fig:assrt-syntax}.
 The only binder is~$\exists$.
 The function $\freevars \from \Assrt \to (\PVar\dunion\LVar)$
 returns the free variables of an assertion and its definition is standard.
 We also define
   $\progvars(P) \is \freevars(P)\inters\PVar$ and
   $\logvars(P) \is \freevars(P)\inters\LVar$.
 We write $P(x_1,\dots,x_n)$ to indicate that
 $\logvars(P) \subseteq \set{x_1,\dots,x_n}$
 and,  for $v_1,\dots,v_n \in \AVal$, write 
 $P(v_1,\dots,v_n)$ for $P\subst{x_1->v_1,,x_n->v_n}$.
\end{definition}

\begin{mathfig}
\adjustfigure[\small]
  \begin{grammar}
    P \is
        \bexp
      | \exists x \st P
      | \vexp \in X
| P \implies Q
      | P \land Q
| \emp
      | P * Q
      | \vexp \mapsto \vexp
      | \region[\lvl]{\rt}{\rid}(\vexp)
      | \done{\rid}{d}
\breakhere
      | \guardA{\vexp}{\rid}
      | \locObl{\vexp}{\rid}
      | \envObl{\vexp}{\rid}
      | \empObl[R]
      | \empObl[\lvl]
      | \minLay{\rid}{m}
\\
    d \is \blacklozenge | \lozenge | (\vexp,\vexp)
    \qquad\qquad
\rt \in \RType,
      \lvl\in\Level,
      \rid \in \RId \union\LVar,
      R \subseteq \RId,
      m \in \Layer.
  \end{grammar}\caption{Syntax of Assertions.
    Logical expressions,~$\vexp$, and
    logical Boolean expressions,~$\bexp$,
    are standard.
  }
  \label{fig:assrt-syntax}
\end{mathfig}

We summarise the intuitive meaning of our assertions before giving their formal semantics.
\begin{itemize}

  \item \tada\ \emph{region assertion} $ \region[\lvl]{\rt}{\rid}(a) $
    asserts the existence of a shared region with type~$\rt$,  identity~$\rid$,
    level~$\lvl$ and abstract state~$a$.
    Region assertions represent shared resources and, hence, are duplicable.
    We have
    $ {\vdash \region[\lvl]{t}{\rid}(a) \iff
            \region[\lvl]{t}{\rid}(a) * \region[\lvl]{t}{\rid}(a) }$.

  \item \tada\ \emph{atomicity tracking assertion}
    $ \done{\rid}{\blacklozenge} $
    gives permission to perform a single atomic change of the state
    of region~$\rid$. Once the change is performed,
    the assertion becomes $ \done{\rid}{(a_1,a_2)} $
    recording the abstract states just before and after the
    change (the linearization point).
    The assertion $\done{\rid}{\lozenge}$ asserts that the
    environment has the permission to do the atomic update.
    We have
    $ \vdash
      \done{\rid}{\blacklozenge} * \done{\rid}{\blacklozenge}
        \implies \False
    $,
    and
    $ \vdash
      \done{\rid}{\blacklozenge}
        {\iff}
      (\done{\rid}{\blacklozenge} * \done{\rid}{\lozenge})
    $.

  \item \tada\ \emph{guard assertion}  $ \guardA{G}{\rid} $
    asserts that the guard~$G$ is held locally.
    Guard composition is reflected by separation:
    $ {\vdash \guardA{G_1 \guardOp  G_2}{\rid}
            {\iff} \guardA{G_1}{\rid}  * \guardA{G_2}{\rid}} $.

  \item \tadalive\ \emph{local obligation assertion} $ \locObl{O}{\rid} $
    asserts that obligation~$O$ is held locally. We have
    $ \vdash \locObl{O_1 \oblOp O_2}{r} \iff \locObl{O_1}{r} *
    \locObl{O_2}{r} $. Separating conjunction is interpreted
    classically precisely  so that we do
    not lose local obligation information: that is,
    $ \vdash \locObl{O}{r} \not\implies \emp $.
    It is often
    useful to use the same guard algebra for guards and obligations.
    We  write $\gOblA{O}{r} \is \guardA{O}{r} * \locObl{O}{r}$.

  \item \tadalive\ \emph{environment obligation assertion}
    $ \envObl{O}{\rid} $ asserts that~$O$ is held by the
    environment: $ \vdash \envObl{O_1 \oblOp O_2}{r} \iff \envObl{O_1}{r} *
    \envObl{O_2}{r} $. Unlike for
    local obligations, it is possible to lose this information,  $
    \vdash \envObl{O}{r} \implies \emp $,
    because we do not need to keep track of the full obligations
    held by the environment, just a lower bound.
    The composition of environment and local obligation assertions is
    subtle,
    inspired by the subjective separation of~\cite{scsl}.
    The existence of the local obligation can be recorded in a frame:
         $ \vdash \locObl{O}{r} \iff \locObl{O}{r} * \envObl{O}{r}$.
    We also have the derived law $
     \vdash  \locObl{O_1 \oblOp O_2}{\rid}
      \iff
      (\locObl{O_1}{\rid} * \envObl{O_2}{\rid})
      *
      (\envObl{O_1}{\rid} * \locObl{O_2}{\rid})
    $, giving knowledge to each thread
    of the obligations delegated to the other.

  \item \tadalive\ \emph{empty obligation assertion} $ \empObl[R] $
    (resp.~$ \empObl[\lvl] $)
    asserts that no obligation is locally held for regions with
    identifiers in~$R$ (resp.~regions of level~$<\lvl$).

  \item \tadalive\  \emph{layer assertion} $ \minLay{\rid}{m} $
    asserts that the layer of the obligations held locally for region
    with identifier~$\rid$ is greater or equal than~$m$. We often use
    notation such as $ \minLay{\rid}{m} \leq m'$ to denote
    $ \minLay{\rid}{m} \wedge m \leq m'$.

\end{itemize}

We introduce the \emph{worlds} of \tadalive,  which are instrumented
heaps  providing  the models
of the  assertions of \tadalive.
A world  is a \emph{local} model in the sense that it  reflects
the state as seen from the perspective of a single thread.
It is built from a local heap, 
and a set of shared regions with associated guards and obligations.
Worlds are parametrised by a set of region identifiers~$\R$ which,
intuitively, are the regions which the current operation is supposed
to update abstractly  exactly once.
We say the regions in~$\R$ are \emph{tracked} for proving atomicity,
using special ghost state  given by the \emph{atomicity tracking algebra}
that supports the semantics of the atomicity tracking assertions.

\begin{definition}[Atomicity Tracking Algebra]
\label{def:atrack-pcm}
  The \emph{atomicity tracking algebra} is a PCM defined by
  $
    \ATrack \is \bigl(
        (\AState \times \AState) \dunion \set{\blacklozenge, \lozenge},
        \atrackOp,
        \ATEmp
      \bigr)
  $,
  where the composition is
  $
   \blacklozenge \atrackOp \lozenge = \blacklozenge = \lozenge \atrackOp \blacklozenge
  $,
  $
   \lozenge \atrackOp \lozenge = \lozenge
  $ and
  $
    \forall a,b \in \AState\st
      (a, b) \atrackOp (a, b) = (a, b)
  $
  (undefined otherwise), and the set of unit elements is
  $
    \ATEmp \is (\AState \times \AState) \dunion \set{\lozenge}
  $.
  The expression evaluation function is extended to map
  expressions~$d$ in the atomicity tracking assertions  to the corresponding elements of~$\ATrack$:
  $\esem{\blacklozenge}(\varsigma) = \blacklozenge$,
  $\esem{\lozenge}(\varsigma) = \lozenge$,
  $\esem{(\vexp_1,\vexp_2)}(\varsigma) = (\esem{\vexp_1}(\varsigma),\esem{\vexp_2}(\varsigma))$.
\end{definition}

\begin{definition}[Worlds]
Let $\R \subseteq \RId$.
A \emph{world}, $w \in \World[\R]$, is a tuple
$
  w = ( h, \regMap, \guardMap, \atomMap, \oblMap, \envMap )
$
where
\begin{itemize}
  \item $h \in \Heap$ is the local heap, i.e.~the cells owned locally;
  \item $
      \regMap \in \RMap \is
        \RId \finpto (\RType \times \Level \times \AState)
    $
    describes the shared regions;
  \item $\guardMap \in \GMap \is \RId \finpto \Guard$
    describes the local guards;
\item $\atomMap \in \AMap{\R} \is
            \R \to \ATrack$
    describes the local atomicity tracking components;
\item $\oblMap \in \ObMap \is \RId \finpto \Oblig$
    describes the local obligations;
\item $\envMap \in \ObMap \is \RId \finpto \Oblig$
  describes the  environment  obligations,
  known to be held locally by the  environment;
\end{itemize}
satisfying the following well-formedness constraints:
\begin{itemize}
  \item
    $
      \dom(\regMap) = \dom(\guardMap) =
      \dom(\oblMap) = \dom(\envMap)
      \supseteq \R
    $,
  \item
    $\forall \rid \in \RId \st$ if $\regMap(\rid) = (\rt,\wtv[2])$ then
    $\guardMap(\rid) \in \GuardsOf[\rt]$,
    $\oblMap(\rid) \in \ObligsOf[\rt]$,
    $\envMap(\rid) \in \ObligsOf[\rt]$,
  \item
    $
      \forall \rid \in \dom(\oblMap) \st
        \oblMap(\rid) \compat \envMap(\rid)
    $.
\end{itemize}
\end{definition}

A shared region with identifier~$\rid$, given by
  $\regMap(\rid) = (\rt,\lambda, a)$,
has type~$\rt$  and abstract state~$a$.
For a world~$w$, we write~$h_w$ and~$\regMap[w]$ and so on,
for the corresponding components of~$w$.
We also define
  $\rtype[w](\rid) \is \rt$,
  $\level[w](\rid) \is \lvl$, and
  $\astate[w](\rid) \is a$,
if $\regMap[w](\rid) = (\rt,\lvl,a)$.

We define a PCM on worlds
(called \emph{world algebra}).
We define how worlds compose
by first definining composition on each component of a world.
Heap composition is disjoint union.
Region maps only compose if they are equal.
Given $\regMap \in \RMap$, the compositions
  $\gmapOp[\regMap] \from \GMap \times \GMap \pto \GMap$ and
  $ \amapOp[\R] \from \AMap{\R} \times \AMap{\R} \pto \AMap{\R} $
are:
\begin{alignat*}{2}
  \guardMap[1]\gmapOp[\regMap] \guardMap[1] &\is
\fun r  \in\dom(\regMap).
      \guardMap[1](r) \guardOp[\rt] \guardMap[2](r)
        &\quad&\text{if }\forall r \in \dom(\regMap)\st
            \regMap(r)=(\rt,\wtv[2])
            \land
            \guardMap[1](r) \guardOp[\rt] \guardMap[2](r) \neq \bot
\\
  \atomMap[1] \amapOp[\R] \atomMap[2] &\is
\fun r \in\R.
      \atomMap[1](r) \atrackOp \atomMap[2](r)
        &&\text{if }\forall r\in\R\st
            \atomMap[1](r) \atrackOp \atomMap[2](r) \neq \bot
\end{alignat*}
and undefined otherwise.
The composition $\obmapOp[\regMap]$ on~$\ObMap$ is defined analogously to
$\gmapOp[\regMap]$ on~$\GMap$.

The local and environment obligation maps compose in a subtle way
inspired by the subjective separation of~\cite{scsl}.
To express this interaction,
we define a composition on \emph{pairs} of local/environment
obligation maps.
Given $\oblMap[1],\oblMap[2],\envMap[1],\envMap[2] \in \ObMap$,
we define
\[
  (\oblMap[1],\envMap[1])\obssep[\regMap](\oblMap[2],\envMap[2]) \is
  \begin{cases}
    (\oblMap[1]\obmapOp[\regMap]\oblMap[2], \envMap)
      \CASE
\envMap = \min_{\resleq} \set{
          \envMap |
            \envMap[1] \resleq (\oblMap[2] \obmapOp[\regMap] \envMap)
            \land
            \envMap[2] \resleq (\oblMap[1] \obmapOp[\regMap] \envMap)
        }
        \AND
        (\oblMap[1]\obmapOp[\regMap]\oblMap[2]) \neq \bot
    \\
    \bot \OTHERWISE
  \end{cases}
\]
Note that,  for obligation algebras, the minimum taken by the definition
is always unique if it exists.
Indeed, in general one can set
$
  \envMap(r) =
    \envMap[1](r)\setminus \oblMap[2](r)
      \union
    \envMap[2](r)\setminus \oblMap[1](r)
$.
For example, assuming
$\obl{a}$, $\obl{b}$, $\obl{c}$, $\obl{d}$, $\obl{e}$, and $\obl{f}$
are distinct atoms,
we have
\[
  \pars[\big]{\map{r->\obl{a}\oblOp\obl{b}},\map{r->\obl{c}\oblOp\obl{e}}}
  \obssep[\regMap]
  \pars[\big]{\map{r->\obl{c}\oblOp\obl{d}},\map{r->\obl{a}\oblOp\obl{f}}}
  =
  \pars[\big]{\map{r->\obl{a}\oblOp\obl{b}\oblOp\obl{c}\oblOp\obl{d}},
    \map{r->\obl{e}\oblOp\obl{f}}
  }
\]
provided the composition $\obl{a}\oblOp\obl{b}\oblOp\obl{c}\oblOp\obl{d}$
is defined.
Furthermore, this definition supports the implication
$
  \locObl{O}{r} \implies \locObl{O}{r} * \envObl{O}{r}
$ since
$
  \pars[\big]{\map{r->O},\map{r->\oblZero}}
  \obssep[\regMap]
  \pars[\big]{\map{r->\oblZero},\map{r->O}}
  =
  \pars[\big]{\map{r->O},\map{r->\oblZero}}
$.

\begin{definition}[World Algebras]
The PCM of \emph{world algebras},
  $(\World[\R], \worldSsep, \wEmp[\R])$,
is defined by  the set of worlds $\World[\R]$,
\begin{itemize}[label=--]
\item the \emph{subjective world composition},~$\worldSsep$, given by:
  \[
    ( h_1, \regMap[1], \guardMap[1], \atomMap[1],
      \oblMap[1], \envMap[1] )
    \worldSsep
    ( h_2, \regMap[2], \guardMap[2], \atomMap[2],
      \oblMap[2], \envMap[2] )
    =
    ( h_1 \dunion h_2,
     \regMap,
     \guardMap[1] \gmapOp[\regMap] \guardMap[2],
     \atomMap[1] \amapOp[\R] \atomMap[2],
     \oblMap,
     \envMap
    )
  \]
  if
    $h_1 \compat h_2$,
    $\regMap = \regMap[1] = \regMap[2]$,
    $\guardMap[1] \gmapOp[\regMap] \guardMap[2] \neq \bot$,
    $\atomMap[1] \amapOp[\R] \atomMap[2] \neq \bot$, and
    $(\oblMap[1], \envMap[1]) \obssep[\regMap] (\oblMap[2],\envMap[2])
      = (\oblMap,\envMap)$,
    undefined otherwise; and

\item the set of unit elements given by:
  \[
    \wEmp[\R] \is
      \Set{(\emptyset, \regMap, \guardMap, \atomMap, \oblMap, \envMap)
        \in \World[\R]
        | \begin{array}{l}
            \forall \rid \st
              \regMap(\rid) = (\rt,\wtv[2])
              \implies
                \guardMap(\rid) = \guardZero[\rt] \land
                \oblMap(\rid) = \oblZero[\rt],
            \\
            \forall \rid \in \R \st
              \atomMap(\rid) \in \ATEmp
          \end{array}
        }
  \]
\end{itemize}
\end{definition}
Notice that the units are  worlds with arbitrary shared regions,
atomicity components from~$\ATEmp$, and arbitrary environment obligations.

Subjective composition of worlds~$(\worldSsep)$ is lifted to
composition of sets of worlds~$(*)$,
defined as
${p_1 * p_2 \is
  \set{ w_1 \worldSsep w_2 | w_1 \in p_1, w_2 \in p_2, w_1 \compat w_2 }}$.

We want the region and environment obligations assertions
to enjoy the elimination rule,
e.g.~$ {\region[\lvl]{t}{\rid}(a) * Q \implies Q} $.
Assertions therefore denote sets of worlds that are upward-closed
with respect to adding regions and adding environment obligations.
Formally, we define the \emph{world ordering} $ \wleq[\R] $ as the smallest
reflexive and transitive relation such that:
\begin{align*}
  (h,
    \regMap,
    \guardMap,
    \atomMap,
    \oblMap,
    \envMap)
  &\wleq[\R]
  (h,
    \regMap \map{ \rid -> (\rt, \lvl, a)},
    \guardMap \map{ \rid -> \guardZero[\rt]},
    \atomMap,
    \oblMap \map{ \rid -> \oblZero[\rt]},
    \envMap \map{ \rid -> \oblZero[\rt]})
  && \rid \not\in \dom(\regMap)
\\(h,
    \regMap,
    \guardMap,
    \atomMap,
    \oblMap,
    \envMap)
  &\wleq[\R]
  (h,
    \regMap,
    \guardMap,
    \atomMap,
    \oblMap,
    \envMap\map{r->\envMap(r)\oblOp O})
  && \oblMap(r) \compat O \compat \envMap(r)
\end{align*}
The upward-closed sets of worlds
$
  \UCWorld[\R] \is \Set{
    p \subseteq \World[\R] |
      \forall w,w'\st
        w \wleq[\R] w' \land w \in p
          \implies w' \in p
  }
$ are the semantic domain of our assertions.

\begin{mathfig}
  \adjustfigure[\small]
  \def\iff{\textit{\quad iff\quad }}
  \begin{align*}
    \SAT \varsigma,w |= \emp & \iff
      w \in \wEmp[\R]
    \\
    \SAT \varsigma, w |= \bexp & \iff
      \bsem{\bexp}(\varsigma)
    \\
    \SAT \varsigma, w |= \vexp_1 \mapsto \vexp_2 & \iff
      \heap[w] = \map{{\esem{\vexp_1}(\varsigma)} -> {\esem{\vexp_2}(\varsigma)}}
      \land
      (\emptyset, \regMap[w], \guardMap[w], \atomMap[w], \oblMap[w], \envMap[w]) \in \wEmp[\R]
\\
    \SAT \varsigma, w |= P \implies Q & \iff
    \forall w' \st w \wleq[\R] w' \land
      \SAT \varsigma, w' |= P \;\implies\; \SAT \varsigma, w' |= Q
    \\
    \SAT \varsigma, w |= \exists x \st P & \iff
      \exists v \in \AVal \st \SAT \varsigma[x \mapsto v], w |= P
    \\
    \SAT \varsigma, w |= \vexp \in X & \iff
      \esem{\vexp}(\varsigma) \in X
    \\
    \SAT \varsigma, w |= P_{1} \land P_{2} & \iff
      (\SAT \varsigma, w |= P_1) \land (\SAT \varsigma, w |= P_2)
    \\
    \SAT \varsigma, w |= P_{1} * P_{2} & \iff
      \exists w_{1}, w_{2}\st
        w = w_{1} \worldSep w_{2} \land
        (\SAT \varsigma, w_{1} |= P_1) \land (\SAT \varsigma, w_{2} |= P_2)
\\
    \SAT \varsigma, w |= \region[\lvl]{\rt}{\rid}(\vexp) & \iff
      \regMap[w](\esem{\rid}(\varsigma)) = (\rt, \lvl, \esem{\vexp}(\varsigma)) \land
      w \in \wEmp[\R]
    \\
    \SAT \varsigma, w |= \done{\rid}{d} & \iff
      \atomMap[w](\esem{\rid}(\varsigma)) = \esem{d}(\varsigma) \land
      (\heap[w], \regMap[w], \guardMap[w], \atomMap[w]\map{\esem{\rid}(\varsigma) -> \lozenge}, \oblMap[w], \envMap[w]) \in \wEmp[\R]
    \\
    \SAT \varsigma, w |= \guardA{\vexp}{\rid} & \iff
      \guardMap[w](\esem{\rid}(\varsigma)) = \esem{\vexp}(\varsigma) \land
      (\heap[w], \regMap[w], \guardMap[w]\map{\esem{\rid}(\varsigma) -> \guardZero}, \atomMap[w], \oblMap[w], \envMap[w]) \in \wEmp[\R]
    \\
    \SAT \varsigma, w |= \locObl{\vexp}{\rid} & \iff
      \oblMap[w](\esem{\rid}(\varsigma)) = \esem{\vexp}(\varsigma) \land
      (\heap[w], \regMap[w], \guardMap[w], \atomMap[w], \oblMap[w]\map{\esem{\rid}(\varsigma) -> \oblZero}, \envMap[w]) \in \wEmp[\R]
    \\
    \SAT \varsigma, w |= \envObl{\vexp}{\rid} & \iff
      \esem{\vexp}(\varsigma) \resleq \envMap[w](\esem{\rid}(\varsigma))
      \land w \in \wEmp[\R]
    \\
    \SAT \varsigma, w |= \empObl[R] & \iff
      \forall \rid\in R\st
        \regMap[w](r) = (\rt,\wtv[2]) \implies \oblMap[w](r) = \oblZero
    \\
    \SAT \varsigma,w |= \empObl[\lvl] & \iff
      \forall \rid,\lvl'<\lvl\st
        \regMap[w](r) = (\rt,\lvl',\wtv) \implies \oblMap[w](r) = \oblZero
    \\
    \SAT \varsigma, w |= \minLay{\rid}{m} & \iff
      \lay(\oblMap[w](\esem{\rid}(\varsigma))) \geq m
  \end{align*}
  \caption{Definition of assertion satisfaction.}
  \label{fig:assert-sat-def}
\end{mathfig}

\begin{definition}[Satisfaction Relation]
Let $\varsigma \from (\PVar \dunion \LVar) \pto \AVal$
be the union of a program and logic variable store.
For a world $w \in \World[\R]$ and an assertion~$P$,
the \emph{assertion satisfaction} relation,
  $\SAT \varsigma, w |= P$,
is defined in \cref{fig:assert-sat-def}.
\end{definition}
We write $ \VALID \R  |= P $ if,  for
$ \forall \varsigma \from (\PVar \dunion \LVar) \pto \AVal, w\in\World[R]$
we have
$
  \SAT[\R] \varsigma,  w |= P $, and write $
  \WorldSem{\R}{\varsigma}{P} \is
    \Set{ w | \SAT[\R] \varsigma , w |= P }
$ for any assertion~$P$.
It is easy to check that $\WorldSem{\R}{\varsigma}{P} \in \UCWorld[\R]$
for every~$P$ and~$\varsigma$.

\subsection{Protocols: Interference and World Rely}
\label{sec:protocols}

A world describes the state of the current thread, both the local
state owned by the thread (the heap, guards, local obligations and
atomity tracking components), the shared state (the regions) and the
environment obligations describing obligations owned locally by the
environment.  We define the \emph{world rely relation} which describes
how the world may change as a result of the ``well-behaved''
interference of the environment characterised by the region
interference relations, the atomicity tracking components and the
environment obligations.
To define the world rely,
we need to introduce two other components of \tadalive:
the region protocols, expressed by the \emph{region interference function}, and
atomicity contexts.

The type of each region is associated with a
\emph{region interference function}
which establishes which updates to a shared region
are allowed by the owner of which guards.

\begin{definition}[Region Interference]
  \label{def:interference}
  \tadalive\ is  parametrised by
  the \emph{region interference function},~$\regLTS$,
  which takes a region type $\rt \in \RType$ and returns a function
  $
    \regLTS[\rt] \from
      \GuardsOf[\rt] \to
      \powerset((\AState \times \ObligsOf[\rt]) \times
                (\AState \times \ObligsOf[\rt]))
  $.
  Every function $\regLTS[\rt]$ is required to satisfy three properties:
  \begin{itemize}
    \item reflexivity:
      $((a,\oblZero[\rt]),(a,\oblZero[\rt]))\in\regLTS[\rt](\guardZero[\rt])$,
      for all $a\in\AState$;
    \item monotonicity in the guards:
      $
        \forall G_1,G_2 \in \GuardsOf[\rt] \st
          G_1 \resleq G_2 \implies \regLTS[\rt](G_1) \subseteq \regLTS[\rt](G_2)
      $;
    \item closure under obligation frames:
      for all $O_1,O_2, O \in \ObligsOf[\rt]$,
        if $ ((a_1,O_1), (a_2,O_2)) \in \regLTS[\rt](G) $
           and $  O_1 \compat O $ and $  O_2 \compat O $,
        then $ ((a_1, O_1\oblOp[\rt]  O ), (a_2, O_2 \oblOp[\rt]  O)) \in \regLTS[\rt](G) $.
  \end{itemize}
  We write $ \regLTS[\rt](\wtv) $ for
  $ \Union_{G\in\GuardsOf[\rt]} \regLTS[\rt](G) $.
For any
    $T \subseteq (\AState \times \Oblig) \times (\AState \times \Oblig)$,
  we write
    $\iorel(T) \is \set{ (a,b) | ((a,\wtv),(b,\wtv)) \in T }$.
  \end{definition}

The final concept we need before  introducing  the  world rely relation  is the
\emph{atomicity context},~$\actxt$. 
In \tadalive\ proofs, we keep in the context of the judgment information
about which updates we are currently proving are abstractly atomic.
The rule driving this bookkeeping is the \ref{rule:make-atomic} rule.
Although we will properly explain the rule in \cref{sec:other-rules},
we sketch the main idea as a motivation for the atomicity context now.
The relevant ``skeleton'' of the rule is as follows:
\begin{inlineproofrule}
  \specsdelim{big}
  \infer*{
    \rid \notin \dom(\actxt)
    \\
    T \subseteq \regLTS_{\rt}( G )
    \\
    R =  \iorel(T)
    \\
    \dots
    \\\\
    \TRIPLE  m; \lvlp;
      {\actxt\map{r -> (X,k,X',T)}} |-
      {\exists x \in X \st
        \region[\lvl]{\rt}{\rid}(x)
        * \done{\rid}{\blacklozenge}}
      \cmd
      {\exists x, y \st
       R   (x,y) \land \done{\rid}{(x,y)}}
  }{
    \ATRIPLE m; \lvlp ;
    \actxt |-
      \A x \in X \eventually[k] X'.
        < \region[\lvl]{\rt}{\rid}(x) * \guardA{G}{\rid} >
          \cmd
        < \exists y\st \region[\lvl]{\rt}{\rid}(y) *
        \guardA{G}{\rid} \land R (x,y) >
  }
\end{inlineproofrule}
The judgments include the context information  such as the layer $m$,
the 
level $\lvlp$ and the atomicity context
    $\actxt $,
    and the pseudo-quantifier includes a layer~$k$. 
We formally introduce these details in \cref{sec:specs}.
Here, we focus on motivating  the  use of the atomicity context~$\actxt$.
This rule describes how an update to the state of a region~$\rid$
can be declared atomic even if it was realised through a series of steps.
It does this by converting a Hoare triple to an atomic triple,
provided the Hoare triple bears evidence
(through the atomicity tracking assertions of the premise)
that, although many steps might have been taken,
the abstract state was changed by the command exactly once.
The atomic triple may constrain the environment interference
with a non-trivial pseudo-quantifier.
The proof of the premise in general needs to make use of these assumptions
about the environment, but the conversion to a Hoare triple
means we cannot use pseudo-quantification to represent them.
These assumptions are instead made available
to the proof of the Hoare triple
using the atomicity context,
that records the $(X,k,X')$ information from the pseudo-quantifier
and the relation~$T$ which stores  the update that we are proving happens atomically.

\begin{definition}[Atomicity Context]
  An \emph{atomicity context}~$\actxt$ is a finite partial function
  from~$\RId$ to tuples of the form
  $ (X,k,X',T) $
  where
  $ X, X' \subseteq \AState $,
  $ k \in \Layer $, and
  $ T \subseteq (\AState \times \Oblig) \times (\AState \times \Oblig) $
is closed under obligation frames (as in Definition~\ref{def:interference}).
\end{definition}

\noindent Assuming $\actxt(\rid) = (X,k,X',T)$, we write
  $\safe(\actxt, \rid) \is X$,
  $\good(\actxt, \rid) \is X'$,
  $\live(\actxt, \rid) \is (X,k,X')$
    which we write ${X \eventually[k] X'}$, and
  ${\trrel(\actxt, \rid) \is T}$.
  For every $\rid\in\dom(\actxt)$, we require
  $ \set{ x | (x,\wtv) \in \iorel(T)} \subseteq \safe(\actxt, \rid)$.
The set $\dom(\actxt)$ declares  the
regions for which we are tracking atomicity:
for $\rid\in\dom(\actxt)$, the environment will only change the
abstract state within $\safe(\actxt,\rid)$ and will obey the liveness
condition given by $\live(\actxt, \rid)$ that the environment will
always eventually return to a good state in $\good(\actxt, \rid) \is X'$;
and the local thread will only change the abstract state
at most once according to the relation $\iorel(\trrel(\actxt,\rid))$.
We write
$\vDash_\actxt$ for $\vDash_{\dom(\actxt)}$, and similarly for
$ \VALID \actxt |= $,
$\WorldSem{\actxt}{\varsigma}{P}$,
$\World[\actxt]$
and $\wEmp[\actxt]$.

\begin{definition}[World Rely]
  The \emph{world rely relation},
  $
    {\rely[\actxt]} \subseteq \World[\actxt] \times \World[\actxt]
  $,
  is the smallest reflexive and transitive relation satisfying
  the rules in \cref{fig:world-rely}.
\end{definition}

\Cref{rule:rely-interf} describes  the case where the environment
can update the abstract state of a region according to the
interference relation $\glts_{\rt}$. Notice that, for this rule, when $  \atomMap(\rid)
\in \set{\blacklozenge, \lozenge}$, the environment can only change
the abstract state to something in $\safe(\actxt,\rid)$. When
$\atomMap(\rid)$ is undefined  or a pair of abstract states, then the environment does not have this
restriction and can do any update consistent with $\glts_{\rt}$. Also, notice how the environment obligations map~$\envMap$ is  affected by the
transition.
\Cref{rule:rely-linpt} describes  the case where
the atomic update given by~$\actxt$ has been delegated to the environment
($\atomMap \map{\rid -> \lozenge}$)
in which case the current thread can observe the abstract state change
corresponding to the update.

\begin{mathfig}
  \let\RightTirName\RuleName \begin{mathpar}
    \infer*[Right=wr$_1$]{
      \guardMap(\rid) \compat G
      \\
      ((a_1,O_1), (a_2,O_2)) \in \glts_{\rt}(G)
      \\
      \atomMap(\rid) \in \set{\blacklozenge, \lozenge}
        \implies a_2 \in \safe(\actxt,\rid)
      \\
      O_2 \compat \oblMap(r)
    }{
      (h,
        \regMap \map{ \rid -> (\rt, \lvl, a_1)},
        \guardMap,
        \atomMap,
        \oblMap,
        \envMap\map{ \rid -> O_1 })
      \rely[\actxt]
      (h,
        \regMap \map{ \rid -> (\rt, \lvl, a_2)},
        \guardMap,
        \atomMap,
        \oblMap,
        \envMap\map{ \rid -> O_2 })
    }
    \label{rule:rely-interf}
  \\\infer*[Right=wr$_2$]{
      ((a_1 ,O_1), (a_2,O_2)) \in \trrel(\actxt,\rid)
      \\
      O_2 \compat \oblMap(r)
    }{
      (h,
        \regMap \map{ \rid -> (\rt, \lvl, a_1)},
        \guardMap,
        \atomMap \map{ \rid -> \lozenge},
        \oblMap,
        \envMap \map{ \rid -> O_1 })
      \rely[\actxt]
      (h,
        \regMap \map{ \rid -> (\rt, \lvl, a_2)},
        \guardMap,
        \atomMap \map{ \rid -> (a_1, a_2)},
        \oblMap,
        \envMap \map{ \rid -> O_2 })
    }
    \label{rule:rely-linpt}
  \end{mathpar}
  \caption{World Rely rules}
  \label{fig:world-rely}
\end{mathfig}

\smallskip

So far, we have introduced assertions, and worlds as their models.
These structures express information mostly over \emph{ghost state},
that is, state that is purely logical and has no representation
in concrete executions.
For example, the notion that there is some shared region is purely fictional,
as in the concrete machine there is no special way to mark
a portion of the heap as shared.
We introduced interference protocols and the world rely,
as a way to specify the expected well-behaved transformations
shared resources may be subjected to.
Since well-behaved interference from the environment can change the state of
shared regions, a single world (describing a single state for each region)
cannot capture the logical state we may be in,
when interleaved with environment actions.
Views are the sets of worlds that can explain the logical state we may be in
after being suspended for an arbitrary number of environment steps.
Views represent information about the logical state,
that cannot be invalidated by a well-behaved environment.

\begin{definition}[Views, Stability]
\label{def:view}
\label{def:stable}
  An upward-closed set of worlds, $p \in \UCWorld[\actxt]$, 
  is an \emph{\pre\actxt-view}
  if it is closed under~$\rely[\actxt]$:
  that is, $
   \forall w \in p, w' \in \World_{\actxt} \st w \rely[\actxt]  w' \implies w' \in p
  $.
An assertion~$P$ is \emph{\pre\actxt-stable}, written
  $
    \STABLE \actxt |= {P}
  $,
  if and only if, for all $\varsigma \from (\PVar \dunion \LVar) \pto \AVal$,
  $ \WorldSem{\actxt}{\varsigma}{P} $ is an \pre\actxt-view.
\end{definition}

\noindent We write
  $\View[\actxt]$ for the set of all \pre\actxt-views and
  $\Stable[\actxt]$ for the set of all \pre\actxt-stable assertions.

\begin{definition}[View Algebra]
  The PCM of \emph{view algebras},
    $ (\View[\actxt],  *, \set{\wEmp[\actxt]}) $,
  is formed from the set $\View[\actxt]$,
  and the composition
  $p_1 * p_2 \is
    \set{ w_1 \worldSsep w_2 | w_1 \in p_1, w_2 \in p_2, w_1 \compat w_2 }$.
\end{definition}

Notice that the composition of views always gives rise to a view:
in the case where there are no compatible pairs of worlds in the views,
the result is the empty view (the denotation of~$\False$).

\paragraph{On checking stability.}
\tadalive\ proofs require checking
stability of assertions in some crucial steps.
The notion of stability of \cref{def:stable} is given in terms of the semantics of assertions, but it is possible, in principle,
to provide a set of lemmas to prove stability of common cases without reasoning at the level of the model.
For example, any traditional separation logic assertion
(such as~$\emp$, $x\mapsto v$, pure formulas)
is always stable; guard and local obligation assertions are also automatically stable; stability is preserved by~$*$,~$\land$,~$\lor$, and existential quantification.
The crucial sources of instability are
  region assertions, environment obligation assertions,
  and $\done{\rid}{\lozenge}$.
Stability of the first two can be established
by inspecting the protocol of regions.
A rule that would be expressive enough to prove most stability checks
for our examples is:
\begin{mathpar}
  \infer*{
    \forall x\in X, x',G',O'\!\st
      (G' \compat G(x)) \land
      ((x,O(x)),(x', O')) \in \regLTS[\rt](G')
        \implies
          x' \in X \land O' = O(x')
  }{
    \STABLE \actxt |= {
      \exists x\in X.
           \region[\lvl]{\rt}{\rid}(x)
         * \guardA{G(x)}{\rid}
         * \envObl{O(x)}{\rid}
    }
  }
\end{mathpar}
It is similarly easy to extract from \cref{fig:world-rely} rules
involving the atomicity context information:
\begin{mathpar}
  \infer*{
\safe(\actxt, \rid) = X
  }{
    \STABLE \actxt |= {
      \exists x\in X. \region[\lvl]{t}{\rid}(x)
        * \done{\rid}{\blacklozenge}
    }
  }
  \and
  \infer*{
    \rid\in\dom(\actxt)
  }{
    \STABLE \actxt |= {
      \done{\rid}{\lozenge} \lor
      \done{\rid}{(\wtv[2])}
    }
  }
\end{mathpar}

\subsection{Linking Levels of Abstraction: Interpretations and Reification}
\label{sec:interpr-reific}

As we mentioned, worlds and views represent ghost information about state.
Ultimately, however, we want to use this information to express properties
of concrete execution traces.
To do so, we need to formalise the link between
worlds with their logical instrumentation and  concrete states comprising variable stores and heaps.
The first component that contributes to this link is   a 
\emph{region interpretation},
which specifies  the implementation-dependent content of a shared
region: for example, for a shared spin lock, the interpretation of 
the abstract shared region $\region[\lvl]{spin}{\rid}(x,l)$ is the
view given by $x\mapsto l$, 
a single cell storing~$l$ at~$x$. 

\begin{definition}[Region Interpretation]
\label{def:interpretation}
  \tadalive\ is parametrised by
  a \emph{region interpretation function}
  $
    \rIntSem[\rt]{\hole} \from
      \RId \times \Level \times \AState \to \View_{\emptyset}
  $
  for each $\rt\in\RType$,
  such that, for every
    $\rid \in \RId$,
    $\lvl \in \Level$,
    $a \in \AState$,
  $
    \forall w \in \rIntSem[\rt]{\rid,\lvl,a}\st
      \forall \rid'\in\dom(\oblMap[w])\setminus\set{\rid}\st
        \oblMap[w](\rid') = \oblZero
  $.
  We also require the interpretation to be \pre\lvl-safe,
  a technical condition explained in \cref{sec:rules-atomic}
  that is usually immediate to check (see \cref{lemma:lsafe}).
  A region interpretation's companion is the \emph{syntactic region interpretation}
  $
    \rInt[\rt] = (r, l, a, P)
  $
  where $r,l,a \in \LVar$,
    $\freevars(P) \subseteq \set{r,l,a}$,
    $\STABLE \emptyset |= P $, and
    $ \VALID \emptyset |=
        P\subst{l->\lvl} \implies \empObl[\RId\setminus\set{r}] $.
  We write $ \rInt(\region[\lvl]{t}{\vexp_1}(\vexp_2)) $ for
    $ P\subst{\rid->\vexp_1,l->\lvl,a->\vexp_2} $.
  We require that
  $
    \rIntSem[\rt]{\rid,\lvl,a} =
      \WSem[\emptyset]{\rInt(\region[\lvl]{\rt}{\rid}(a))}
  $;
  in practice, we will define region interpretations
  by writing syntactic interpretations
  and using the previous equation
  as a definition for the corresponding region interpretation functions.
\end{definition}

It is important to understand that interpretations are not merely an indirect
way of writing assertions.
In our spin lock example, the crucial difference between the two assertions
$\region[\lvl]{spin}{\rid}(x,l,\alpha)$ and $x\mapsto l$
is that the first is subjected to interference,
while the latter expresses ownership of the cell at~$x$.
The requirement that the interpretation of some region with id~$r$
must imply $\empObl[\RId\setminus\set{r}]$ forbids an intepretation
to own local obligations of other regions.
This is necessary for soundness:
if we removed the restriction, we could fool ourselves into thinking that
we fulfilled an obligation $\locObl{O}{\rid}$ by
creating another region with the obligation in its interpretation.

\begin{remark}[On ``opening'' regions and levels]
\label{rem:levels}
As in \tada, the region interpretation is used to ``open'' a region:
that is,~import the region interpretation as local state
in order to do a single atomic update.
The idea is to obtain instantaneously the ownership
of the content of the region for the atomic update,
and to re-establish the region interpretation
for the updated abstract state,
before immediately relinquishing ownership by ``closing'' up the region.
As in \tada, this opening and closing mechanism depends on
the level of the region, which is a device to avoid inconsistencies.
With a specification at level~$\lvl$, the rules enable a region  to be
opened at level~$\lvl' < \lvl$ to obtain a resulting specification at
level~$\lvl'$.  This means that, although a region can be shared
($ \vdash \region[\lvl']{t}{r}(a) \Leftrightarrow
\region[\lvl']{t}{\rid}(a)*\region[\lvl']{t}{\rid}(a) $),
it cannot be \emph{opened} twice,
which would result in
$\rInt(\region[\lvl']{t}{\rid}(a))*\rInt(\region[\lvl']{t}{\rid}(a))$
with a potential contradictory duplication of non-duplicable resources.
\end{remark}

The second  component that expresses the link between the  instrumented 
worlds and   concrete states  is the \emph{reification} function.
Reification has two main purposes.
First, at level~$\lvl$, all the regions with level lower than~$\lvl$ are
closed, which means that the resources in their interpretation
do not exist as far as the world is concerned.
The concrete heap cells accounted for inside these interpretations
will however correspond to cells in the concrete heap.
To bridge this gap, the reification opens all closed regions
importing the resources in their interpretation as local resources,
obtaining a ``collapsed'' world.
Second, all the ``ghost'' components of collapsed worlds
(like regions, guards or obligations)
do not have any representation in memory so reification erases them.

\begin{definition}[Reification]
\label{def:reification}
  Let $\lvl\in\Level$ and let
  $
    \clreg(\lvl,w) \is
      \set{ \rid \in \RId | \level[w](\rid) < \lvl }
  $.
The \emph{region collapse function},
  $ \collapse{\lvl}{(\hole)}\from \World[\actxt] \to \powerset(\World[\actxt])$,
  is defined~by:
  \[
    \collapse{\lvl}{w_0} \is
      \Set{ w_{0} \worldSsep w_{1} \worldSsep \ldots \worldSsep w_{n} |
        \begin{array}{l}
          \clreg(\lvl,w_0) = \set{\rid_{1}, \dots, \rid_{n}},\\
\regMap[w_0](r_i) = (\rt_i, \lvl_i, a_i),
            w_i \in \rIntSem[\rt_i]{r_i,\lvl_i,a_i},\\
            \forall i\leq n \st
            \forall r\in\dom(\envMap[w_0\worldSsep\dots\worldSsep w_i]) \st \envMap[w_0\worldSsep\dots\worldSsep w_i](r) = \oblZero
        \end{array}
      }
  \]
  The function
  $
  \reify{w}{\lvl}{} \is
    \set{
      h \in \Heap |
        (h, \wtv[5]) \in \collapse{\lvl}{w}
    }
  $
  is called the \emph{world reification} of~$w$ at level~$\lvl$.
  For any ${p \in \UCWorld[\actxt]}$, the function
  $
    \sem{\,p\,}_{\lvl} \is
      \Union_{w\in p} \reify{w}{\lvl}{}
  $ is called \emph{reification} of~$p$ at level~$\lvl$.
\end{definition}

To understand if a world~$w_1$ can represent local resources
consistent with some global heap~$h$,
we need to identify if there is a world~$w_2$
representing the resources of the environment 
such that $h \in \reify{w_1 \worldSep w_2}{\lvl}{}$.
That would mean that it is possible to factor~$h$ as
$ h = h_{w_1} \dunion h' \dunion h_{w_2} $
where $h_{w_1}$ are the cells fully owned by the local thread,
$h_{w_2}$ are the ones fully owned by the environment,
and the cells in $h'$ are the ones that are shared
and come from opening the interpretations of closed regions
in the world collapse.
When collapsing, we are assuming, conceptually,
that we are collapsing a world that represents
\emph{every} resource in the system.
Correspondingly, the definitions that use reification
---crucially, \cref{def:frame-pres-upd,def:at-frame-pres-upd}---
always complete the local resources
with some ``global'' frame before applying reification.

In addition to opening shared regions,
the collapse function also checks that no environment obligations
are assumed.
To understand this, consider a world~$w_1$ representing local resources,
and a world~$w_2$ completing it to a world $w = w_1 \worldSep w_2$
representing the global resources.
The global world~$w$ cannot assert the existence
of obligations in the environment:
all those have been already accounted for in~$w_2$.
The definition of collapse explicitly enforces this constraint
by the condition on the environment obligation map.
We explain why this condition is important in \cref{sec:viewshift}.

\subsection{Frame preservation}
\label{sec:fpu}

Having established the link between worlds/views and concrete state,
we can move to establishing a link between concrete \emph{steps} in a trace
and their logical justification in terms of logical state.
The fundamental driver of this link is the notion of
  \emph{frame-preserving update},
inspired by the frame-preserving update from~\cite{iris},
which represents the essence of the Rely/Guarantee reasoning in \tadalive.
The frame-preserving update looks at a
specific concrete update from some~$h_1$ to~$h_2$,
and states under which conditions this logical update can be described
as an update from logical state~$p_1$ to logical state~$p_2$.
The~$p_1$ and~$p_2$ are sets of worlds describing local resource,
whereas the~$h_1$ and~$h_2$ are global concrete heaps.
We therefore need to complete~$p_1$ with some frame~$f$
and  use reification to relate this logical state to~$h_1$: that
is, $h_1 \in \sem{p_1 * f}_{\lvl}$.
There will usually be more than  one such~$f$. The frame-preserving update requires that any such~$f$ that is a view
should remain a valid frame after the update:
that is,~$h_2 \in \sem{p_2 * f}_{\lvl}$.

\begin{definition}[Frame-Preserving Update]
\label{def:frame-pres-upd}
  Given
    $h_1,h_2 \in \Heap$,
    $p_1,p_2 \in \UCWorld[\actxt]$ and
    $\lvl \in \Level$,
  we define
  $
    \update[\lvl;\actxt] h_1 -> h_2 |= p_1 ->* p_2
  $ to hold if and only if
  \[
    \forall f \in \View[\actxt]\st
      h_1 \in \sem{p_1 * f}_{\lvl}
      \implies
          h_2 \in \sem{p_2 * f}_{\lvl}.
  \]
\end{definition}

\tadalive\ implements the Rely/Guarantee proof principle by requiring
every update to be frame-preserving.
Views are resources that are preserved by protocol-compliant environment interference.
The idea of a Rely,
a set of allowed environment updates,
is represented by assuming environment steps are frame-preserving updates
on resources that are compatible with our current view.
By frame preservation, any such update would preserve our view.
Conversely, the idea of a Guarantee,
an over-approximation of the effects of local steps
under the assumption of Rely,
is encoded by requiring every local step to be a frame-preserving update,
and thus unable to disrupt any view held by the environment.

To see how this works more concretely,
let us consider an example.
We use the notation
$ \update*[\lvl] |= p ->* q $
to mean
$
  \forall h \in \sem{p * \True}\st
    \exists h'\st
      \update[\lvl] h -> h' |= p ->* q
$,
that is,
$ \update*[\lvl] |= p ->* q $
holds when~$p$ to~$q$ can be used to justify \emph{some} concrete update.

\begin{example}
  \label{ex:fpu}
  Assume we have a region type~$\rt$ with abstract states $a,b,c,d$,
  a single guard~$\gEx$ (with $\gEx \guardOp \gEx = \bot$)
  and interference protocol consisting of transitions
  $ \gEx : (a,\oblZero) \interfTo (b,\oblZero) $
  and
  $ \gEx : (b,\oblZero) \interfTo (c,\oblZero) $
  .
  We want to show that
  (for $\lvl < \lvl'$)
  $
    \update*[\lvl'] |=
      \region[\lvl]{t}{\rid}(a) * \guardA{\gEx}{\rid}
        ->* \region[\lvl]{t}{\rid}(c) * \guardA{\gEx}{\rid}
  $
  holds,
  but
  $
    \update*[\lvl'] |=
      \region[\lvl]{t}{\rid}(a) * \guardA{\gEx}{\rid}
        ->* \region[\lvl]{t}{\rid}(d) * \guardA{\gEx}{\rid}
  $
  and
  $
    \update*[\lvl'] |=
      \region[\lvl]{t}{\rid}(a)
        ->* \region[\lvl]{t}{\rid}(b)
  $
  do not.
  Consider any view~$f$ that is a frame of
  $\region[\lvl]{t}{\rid}(a) * \guardA{\gEx}{\rid}$. The~$f$ cannot hold $\guardA{\gEx}{\rid}$ because
 ~$\gEx$ is not compatible with itself.
  As a consequence, since~$f$ is a view,
  it needs to be closed under world rely,
  which means that it is closed under the interference,
  which can transform~$a$ into~$b$ and~$b$ into~$c$.
  For~$f$ to be compatible with $\region[\lvl]{t}{\rid}(a)$,
  it needs to contain some world associating~$a$ to~$\rid$;
  to be a view,~$f$ needs to contain some other world associating
 ~$c$ to~$\rid$, which makes it compatible with
  $\region[\lvl]{t}{\rid}(c) * \guardA{\gEx}{\rid}$.
  Therefore
  $
    \update*[\lvl'] |=
      \region[\lvl]{t}{\rid}(a) * \guardA{\gEx}{\rid}
        ->* \region[\lvl]{t}{\rid}(c) * \guardA{\gEx}{\rid}
  $
  holds.

  Now, the view~$f$ above is not required to contain any world
  associating~$d$ to~$\rid$.
  Such an~$f$ is a counterexample to
  $
    \update*[\lvl'] |=
      \region[\lvl]{t}{\rid}(a) * \guardA{\gEx}{\rid}
        ->* \region[\lvl]{t}{\rid}(d) * \guardA{\gEx}{\rid}
  $
  holding.

  Finally, consider $\region[\lvl]{t}{\rid}(a)$;
  we can construct a frame~$f_a$ in which all worlds
  associate~$a$ to~$\rid$ and own the guard~$\gEx$.
  Such set of worlds can be a view because owning~$\gEx$
  disables the transition from~$a$ to~$b$.
  However,~$f_a$ would be compatible with $\region[\lvl]{t}{\rid}(a)$ but
  not with $\region[\lvl]{t}{\rid}(b)$,
  which means
  $
    \update*[\lvl'] |=
      \region[\lvl]{t}{\rid}(a)
        ->* \region[\lvl]{t}{\rid}(b)
  $
  does not hold.
\end{example}

This definition of frame-preserving update simplifies drastically
the semantics of TaDA specifications.
For \tadalive, however,
we need to introduce the stronger notion of
  \emph{atomic} frame-preserving update.
To see the motivation behind the stronger condition,
consider the region interference relation
$
  \gEx : (a,\obl{k}) \interfTo (b,\oblZero)
$
and
$
  \gEx : (b,\oblZero) \interfTo (c,\obl{k})
$.
The update from~$a$ to~$c$ via~$b$ is very different from
a direct update  from~$a$ to~$c$.
The intermediate step to~$b$ fulfils the obligation $\obl{k}$,
which may be crucial information for the progress argument.
We therefore want to enforce that if we are justifying a step
as going from~$p$ to~$q$, all the allowed transitions between region states
need to match \emph{a single transition} in the interference protocol.

\begin{definition}[Atomic Frame-Preserving Update]
  \label{def:at-frame-pres-upd}
  Given
    $h_1,h_2 \in \Heap$,
    $p_1,p_2 \in \UCWorld[\actxt]$ and
    $\lvl \in \Level$,
  we define
  $
    \update[\lvl;\actxt] h_1 -> h_2 |= p_1 -> p_2
  $ to hold if and only if
  \[
  \forall f \in \UCWorld[\actxt] \st
    h_1 \in \sem{p_1 * f}_{\lvl}
    \implies
    h_2 \in \sem{p_2 * {\relyAt[\actxt]}(f)}_{\lvl}
  \]
  where the \emph{atomic world rely relation}, $\relyAt[\actxt]$,
  is defined to be the smallest reflexive relation closed under
  the rules of \cref{fig:world-rely},
  with the restriction that \cref{rule:rely-interf,rule:rely-linpt} can be
  applied at most once per region identifier. It is formally defined in
  Appendix~\ref{app:atomic-rely}.
\end{definition}

Intuitively, this says that if the environment
has some resource~$f$ compatible with~$p_1$,
it should expect that after a step,
the resource~$f$ might be transformed into ${\relyAt[\actxt]}(f)$.
When~$f$ is a view, one gets back \cref{def:frame-pres-upd},
as views are precisely the resources that cannot be invalidated by any
number of updates of the environment.
We will use atomic frame-preserving updates to check the safety of
logical traces with respect to some  specification
in \cref{def:spec-sem}.

\subsection{Viewshifts and ``classical'' resources}
\label{sec:viewshift}

Before moving to specifications,
we define \emph{viewshift},
a semantic generalisation of implication,
which is a prime example of application of frame-preserving update,
used in our \ref{rule:consequence} rule.
They correspond to ``purely logical'' updates
in that they update the ghost resources without affecting the concrete memory.

\begin{definition}[Viewshift]
\label{def:viewshift}
  Given $p_1,p_2 \in \UCWorld[\actxt]$, the judgement
  $
    \viewshift \lvl;\actxt |= p_1 => p_2
  $,
  holds if
  $
  \forall h\in \Heap\st
    \update[\lvl;\actxt] h -> h |= p_1 ->* p_2
  $.
  For two assertions $P, Q$, the assertion~$P$ \emph{viewshifts} to~$Q$, written
  $
    \viewshift \lvl;\actxt |= P => Q
  $,
  if and only if,
  $\forall \varsigma \from (\PVar \dunion \LVar) \pto \AVal$,
  $
    \viewshift \lvl;\actxt |=
      \WorldSem{\actxt}{\varsigma}{P} => \WorldSem{\actxt}{\varsigma}{Q}
  $.
\end{definition}

Viewshifts are typically employed to ``allocate'' a new region
by sharing some local resource (a form of weakening).
For example, assume
$
  \rInt(\region{t}{\rid}(x,v)) \is x \mapsto v * \locObl{\obl{a}}{\rid}
$.
We have that $P_0 = (x \mapsto 0)$ viewshifts to
$ \exists \rid\st \region{t}{\rid}(x,0) $:
the underlying reification does not change,
and any frame of~$P_0$ with non-empty reification,
must only have regions reifying to cells disjoint from~$x$;
moreover, such frame will only have finitely many regions allocated,
so it is always possible to draw a fresh~$\rid$ from the infinite set~$\RId$
to satisfy the existential quantification over~$\rid$.

Viewshifts also ensure that obligation information is not updated inconsistently.
For example, in the ``region allocation'' step above,
we cannot viewshift~$P_0$ to
$ P_1 = \exists \rid\st \region{t}{\rid}(x,0) * \envObl{\obl{k}}{\rid}$,
which would mean we are pretending
there is an obligation~$\obl{k}$ in the environment
without any evidence of that being true.
To show that the viewshift does not hold,
we can choose~$h=\map{x->v}$
and show that $ \update h -> h |= P_0 -> P_1 $ is false.
To see this, pick~$f=\emp$ as the global frame;
$h$ is in the reification of~$P_0 * \emp$
but the reification of
$ \exists \rid\st \region{t}{\rid}(x,0) * \envObl{\obl{k}}{\rid} * \emp $
is empty:
the frame~$\emp$ has empty local obligation map,
so every world~$w$ considered by
the region collapse of~$P_1 * \emp$
has $\envMap[w] \neq \oblZero$.
The idiomatic, and correct, pattern of creation of environment obligations
would viewshift~$P_0$ to, say,
$ \exists \rid\st \region{t}{\rid}(x,0) * \locObl{\obl{b}}{\rid} $
for some relevant obligation $\obl{b}$ compatible with $\obl{a}$,
and then with implication obtain
$ \exists \rid\st \region{t}{\rid}(x,0) * \locObl{\obl{b}}{\rid} * \envObl{\obl{b}}{\rid}$:
in this case the environment obligation has been created
from the evidence of the existence of a corresponding local obligation.

\smallskip

It is important to note that, in a logic with the ability to share assertions,
as regions in TaDA or invariants in Iris allow,
having classical resources does not have the expected effect.
By definition, a classical resource~$P$ cannot be ``forgotten'',
i.e.~$P * Q \not\implies Q$.
By using viewshift, however, it is possible to create a region
with interpretation defined so it contains~$P$,
and immediately discard it (regions are not classical resources),
obtaining $P * Q \vshift Q$.
This, for example, makes \tadalive\ incapable of proving absence of memory leaks
even if its heap assertions are classical.
We however manage to avoid this issue for
local obligation assertions~$\locObl{O}{\rid}$
because of their specific semantics.
First, $\locObl{O}{\rid}$ can only ever be part of the interpretation for the
region~$\rid$, as imposed by our restrictions on region interpretations.
Second, the very notion of fulfilling the obligation is defined as
transferring its ownership to the interpretation.
Moreover, the region protocol constrains the loss of an obligation
to happen only in correspondence with some region state change,
so the only way to get rid of a local obligation
is to induce the desired state change in the region
and transfer the obligation to the interpretation.
We need obligations to be classical resources
in order for this to be the \emph{only} way of losing them.
We made heaps and guards behave classically for the sake of uniformity,
but this is not essential.

The issue of having genuinely classical resources
in a logic with regions/invariants
has been tackled in~\cite{iron},
with the main use cases being proving absence of memory leaks.
The techniques presented there could provide the basis
for an alternative way of handling \tadalive-style obligations.

\subsection{Specification format}
\label{sec:specs}

With all these definitions in place, we can now proceed to
define \tadalive\ specifications 
and their trace semantics.
Most of the time, \tadalive\ proofs manipulate triples  of two forms:
\begin{subequations}
  \begin{gather}
    \ATRIPLE m;\lvl;\actxt |- \A x\in X\eventually[k] X'.<P(x)>\cmd<Q(x)>
    \label{specs:special-cases-atomic}
    \\
    \TRIPLE m;\lvl;\actxt |- {P}\cmd{Q}
    \label{specs:special-cases-hoare}
  \end{gather}
\end{subequations}
called \emph{atomic triples} and \emph{Hoare triples}, respectively.
It is however possible for 
a command to  manipulate some resources $\na{P}$ non-atomically,
and some other resources $\at{P}(x)$ atomically, at the same time.
In general, specifications use  \emph{hybrid triples}:
\begin{equation*}
  \ATRIPLE m;\lvl;\actxt |-
    \A x \in X \eventually[k] X'.
    <\na{P}|\at{P}(x)>
    \cmd
    \E y.
    <\na{Q}(x,y)|\at{Q}(x,y)>
\end{equation*}
a minor generalisation\footnote{The difference is the $\exists y$, which is  used  in the uncommon case
  when the linearization point is non-deterministic
  \emph{and} the Hoare postcondition depends on this
  non-deterministic choice.}
of hybrid triple discussed in~\cref{sec:proof-reuse}.
Intuitively, the Hoare precondition $\na{P}$ is a resource that is
owned by the command and, as such,
cannot be invalidated by actions of the environment.
The command is allowed to manipulate this owned resource non-atomically,
provided it satisfies the Hoare postcondition $\na{Q}$ upon termination.
The atomic precondition $\at{P}(x)$ represents the resource that can be
shared between the command and the environment.
The environment can update it, but only with the effect of going from
$\at{P}(x)$ for some $x \in X$ to $\at{P}(x')$ for some $x' \in X$.
The command is allowed to update it exactly once from $\at{P}(x)$
to perform its linearization point,
transforming it to a resource satisfying the atomic postcondition $\at{Q}(x)$.
The atomic postcondition only needs to be true
\emph{just after} the linearization point
as the environment is allowed to  update it immediately afterwards.
The pseudo-quantified variable~$x$ has two important uses:
it represents the ``surface'' of allowed interference by the environment;
it is bound in the postcondition to the value of the parameter
of the atomic precondition \emph{just before} the linearization point.

The atomic and Hoare triples in
\eqref{specs:special-cases-atomic} and \eqref{specs:special-cases-hoare}
are then special cases of the hybrid triple:\footnote{We use the standard notation $a\doteq b$ to mean $a=b \land \emp$.
}
\begin{subequations}
  \begin{gather}
    \forall \vec{v}_0\st
    \ATRIPLE m;\lvl;\actxt |-
      \A x \in X \eventually[k] X'.
      <\pvars{v}_0 \doteq \vec{v}_0 | P'(x)>
      \cmd
      \E \vec{v}_1.
      <\pvars{v}_0 \doteq \vec{v}_0 \land \pvars{v}_1 \doteq \vec{v}_1| Q'(x)>
    \\
    \ATRIPLE m;\lvl;\actxt |- <P|\emp> \cmd <Q|\emp>
  \end{gather}
\end{subequations}
resp.,
where
  $\pvars{v}_0 = \progvars(P(x))$,
  $\pvars{v}_1 = \progvars(Q(x)) \setminus \pvars{v}_0$,
$P'(x) = P(x)\subst{\pvars{v}_0->\vec{v}_0}$ and
$Q'(x) = Q(x)\subst{\pvars{v}_0->\vec{v}_0,\pvars{v}_1->\vec{v}_1}$
(for technical reasons the atomic pre/post-conditions in the general triples cannot contain program variables).
We omit the pseudo-quantifier from an atomic triple (as above)
when the pseudo-quantified variable does not occur in the triple,
and thus could be quantified as
$\PQ{x \in \set{1} \eventually[\layBot] \set{1}}$.
We also use the abbreviated form $\PQ{x\in X}$
when the liveness assumption is trivial, i.e.~$\PQ{x\in X\eventually[\layBot] X}$.

\begin{definition}[Specification]
\label{def:spec}
  \emph{Specifications},
    $\spec \in \Spec$,
  have the form:\begin{equation}
    \SPEC m;\lvl;\actxt |=
      \A x \in X \eventually[k] X'.<\na{P}|\at{P}(x)>
      \E y.<\na{Q}(x,y)|\at{Q}(x,y)>
    \tag{\maltese}
    \label{spec-format}
  \end{equation}
  where
  \begin{itemize}
  \item $m\in\Layer$,
  $\lvl\in\Level$ and
  $\actxt\in\AContext$;
  \item $x,y \in \LVar$;
  \item $X' \subseteq X \subseteq \AVal $ and $k \in \Layer$;
  \item $\na{P}, \na{Q}(v,v') \in \Stable[\actxt]$
    for all $v \in X$ and $v' \in \AVal$;
  \item $\at{P}(v), \at{Q}(v,v') \in \Assrt$
    for all $v \in X$ and $v' \in \AVal$, and
    $\progvars(\at{P})=\progvars(\at{Q})=\emptyset$.
  \item $ \forall x \in X\st
    \VALID \actxt |= \at{P}(x) \implies \empObl[\lvl] $.
  \item $ \forall x \in X, y\st
    \VALID \actxt |= \at{Q}(x,y) \implies \empObl[\lvl] $.
  \end{itemize}
\end{definition}

In addition to the atomicity context~$\actxt$,
the \emph{context of a specification}
$ \color{context} m,\lvl,\actxt $
consists also of
  a layer~$m$, and
  a level~$\lvl$.
These components record information about the proof context of the judgement.
The layer~$m$ indicates that we are in a context where we are forbidden
from assuming as live obligations with layers~$\geq m$ (or incomparable to~$m$).
The level~$\lvl$ indicates that the regions with level $\geq \lvl$
are open (and cannot be re-opened).

\subsection{Trace Semantics of Specifications}
\label{sec:specs-semantics}

Finally, we can define the semantics of a specification.
The idea of the semantics is to collect all traces that are deemed as acceptable
to a specification~$\spec$, so that we can later say a command satisfies~$\spec$
if its traces are all accepted by the semantics of~$\spec$.
The general principle in accepting a trace is the following:
the local steps are only expected to be correctly implementing the functionality
declared by~$\spec$ if the environment satisfies the assumptions
implied by the (safety and liveness) protocols and~$\spec$ itself.
If a trace has gone wrong as a consequence of the environment making moves
outside of the assumptions, that trace is accepted,
as the problem is not the responsibility of the local command itself.
If a trace has gone wrong as a consequence of local steps,
then the trace is rejected.

The semantics of a specification therefore traverses a trace
to decide whether to accept or reject it,
by determining who is to blame for failures.
We decouple the traversal needed for checking the safety constraints,
and the one checking the liveness ones.
In terms of safety, a specification like~\eqref{spec-format} (in \cref{def:spec})
expects that:
\begin{itemize}
  \item the precondition holds:
        the starting resource satisfies
        $\na{P} * \at{P}(x)$, for some $x\in X$;
  \item the interference precondition holds:
        every step of the environment, before the local linearization point
        takes place goes from a resource satisfying
        $\at{P}(x_1)$, for some $x_1\in X$,
        to a resource satisfying
        $\at{P}(x_2)$, for some $x_2\in X$.
\end{itemize}
If any of the above are violated, the blame is on the environment.
In return, the local steps are expected to:
\begin{itemize}
  \item respect atomicity:
        transform the resources of $\at{P}(x)$ exactly once
        to resources satisfying $\at{Q}(x,y)$, for any $x\in X$ and some~$y$;
  \item respect the pre-/postconditions:
        transform (in possibly many steps)
        the resources in $\na{P}$ to resources satisfying
        $\na{Q}(x,y)$ at the end of the execution.
\end{itemize}
In this sense, the resources in $\at{P}(x)$ should be understood as shared:
the environment can use them to change the value of~$x$,
and the local steps can use them atomically
to perform the linearization point.
Note that $\at{Q}(x,y)$ is only guaranteed to hold \emph{immediately after}
the linearization point.

\paragraph{Key idea of the liveness semantics.}
In terms of termination, the specification~\eqref{spec-format}
guarantees local termination
\emph{only if}
the environment is \emph{live},
i.e~it satisfies the layered liveness invariants
represented by the pseudo-quantifiers (of the specification and in~$\actxt$)
and the obligations.
The idea is again to identify when non termination is caused
by a bad environment or by bad local steps.
Consider the case of liveness invariants encoded by obligations.
Imagine we annotate each position of a trace
indicating which obligations are held at that point by the environment
and which are held locally.
Now suppose the environment always eventually fulfils \emph{every} obligation
(i.e.~for each obligation~$O$
  there are infinitely many positions where~$O$
  is not held by the environment).
This environment is certainly live, so it cannot be blamed for non termination.
The layer structure, however, allows the environment to fulfil obligations
of layer~$k$ by relying on eventual fulfilment of obligations at layer~$<k$.
Therefore, if there is an obligation~$O$ that is locally held forever,
the environment is still considered live if it never fulfills obligations
at layer $> \lay(O)$.
In this scenario, the local steps are blamed for non termination:
by holding~$O$ forever, the local steps are not allowed to rely on the environment
being live at higher layers than $\lay(O)$.

This scheme leads to the following semantic interpretation of layers.
The local steps can blame the environment for non termination,
by waiting for the fulfilment of some environment obligation~$O_1$ indefinitely;
in turn the environment can blame the local steps for the inability to fulfil~$O_1$
by claiming to be waiting for the fulfilment of some local obligation~$O_2$ with
$\lay(O_1)>\lay(O_2)$; the local step can justify the indefinite postponement of
the fulfilment of~$O_2$ by shifting blame on the environment again,
appealing to an environment obligation with even lower layer,
and so on.
This blame-shifting cannot be unbounded:
every time the blame is shifted,
the layers considered are strictly lower and,
by well-foundedness of layers,
this cannot happen ad libitum.
Ultimately, the blame is unambiguously placed on the environment or the local steps
and the trace is accepted or rejected accordingly.

This intuition about obligations extends to liveness assumptions
attached to pseudo-quantifications in the triple and in the atomicity context.
All these assumptions need to be layered to avoid unsound circularities,
which is why the pseudo-quantifier carries a layer~$k$.
The specifications mention another layer,~$m$,
which represents a (strict) upper bound on the layers
that we may consider live when proving some command satisfies the specification.
An environment is still considered live by the specification,
if it keeps an obligation of layer $\laygeq m$ forever unfulfilled.

\bigskip

We now define the formal semantics of specifications,
as set of concrete traces that satisfy the specification.
To check if a concrete trace~$\trace$ satisfies a specification,
the semantics first collects all the possible ``logical'' justifications
of the trace in a set~$\straces$.
To justify a trace means to instrument each step with sets of worlds that
show how the trace respects the
(safety) logical constraints of the specification.
The set~$\straces$ is then further analysed to check
that every instrumented trace
where the environment satisfies the liveness assumptions
is locally terminating.

We begin by defining the \emph{trace safety judgement}, of the form
  $\accept{\trace}{\na{p}}{\at{p}}{v} : \straces$,
the purpose of which is to check
the safety constraints implied by~$\spec$.
The judgement formalises the idea of a specification~$\spec$
as a trace acceptor, that is,
an automaton reading a trace step by step, and either accepting or rejecting it.
If we ignore~$\straces$ for a moment,
the trace safety judgement represents a snapshot of the state
of this imaginary automaton at a point when some prefix of the trace
has been already successfully processed, and~$\trace$ is
the suffix that remains to be processed.
The automaton traverses the trace producing a guess for an instrumentation,
i.e.~logical resources corresponding to the concrete memory contents
that explains why the trace is acceptable.
The instrumentation needs to describe, for instance,
when the linearization point is thought of taking place,
what portions of the state are  considered as  shared and which owned.
Let $(\sigma, h)$ be the current concrete state,
i.e.~$\trace = (\store,h)\;\trace'$.
The triple $(\na{p}, \at{p}, v)$ in the judgement
encode the automaton's current state,
representing the current guess for the instrumentation
of $(\sigma, h)$.
The resources currently considered as locally owned are represented
by the view $\na{p}$;
the $v$~can be either an abstract value, in which case the automaton
thinks that we are still in the interference phase,
or it can be a pair $\DONE(v_1,v_2)$ which means we are past the linearization point, which updated the abstract state from~$v_1$ to~$v_2$;
the $\at{p}(v)$ is a set of worlds parametric on~$v$ and corresponds to the shared atomic resources if we are before the linearization point,
or it is the empty resource if we are past it.
The judgement assumes that
$
  h \in \sem{\na{p} * \at{p}(v) * f}_{\lvl}
$
for some frame~$f$,
i.e.~the current concrete state is consistent
     with the current instrumentation guess.
The initial state of the automaton will be chosen so that this holds
at the beginning of the trace, and each transition of the automaton
will by construction preserve this correspondence.

As it walks down a trace,
the automaton updates $\na{p}$, $\at{p}$ and~$v$, 
trying to construct a consistent instrumentation for the whole trace.
Such a sequence of automaton states constitute its \emph{run} over the trace.
\emph{Specification traces} augment each state of a trace
with the instrumentation from a run of the automaton.

\begin{definition}[Specification Traces]
  \label{def:spec-trace}
  Define
  $
    \AVal' \is \AVal \dunion \set{\DONE(v_1,v_2) | v_1,v_2 \in \AVal}
  $,
  the set of \emph{specification states} to be
  $
    \SState[\actxt] \is
      \View[\actxt] \times
      (\AVal' \to \UCWorld[\actxt]) \times
      \AVal'
  $
  and the set of \emph{specification configurations} to be
  $
    \SConf[\actxt] \is
      \Store \times \Heap \times \SState[\actxt]
  $.
  The set of \emph{specification traces}, $\STrace[\actxt]$,
  is the set of infinite sequences of the form
  $
    \sconf_1 \pl[1] \sconf_2 \pl[2] \cdots
  $
  where
    $\sconf_i \in \SConf[\actxt]$ and
    $\pl[i] \in \set{\loc, \env}$.
Given a set of specification traces
    $ \straces \subseteq \STrace $,
  we write $ \sconf \pl \straces $
  for the set
  $ \set{ \sconf \pl \strace | \strace \in \straces} $.
\end{definition}

The trace safety judgement accumulates,
as it traverses a trace,
all the successful instrumentations of the trace in~$\straces$,
which we can later check against liveness properties.
Let us define the judgement formally, and then explain it in detail.

\begin{definition}[Trace Safety]
\label{def:trace-safety}
  Let $\spec \in \Spec$ with components named as~\eqref{spec-format},
    $\trace \in \Trace$,
    $ \straces \subseteq \STrace $, and
    $(\na{p},\at{p},v) \in \SState$ such that
  \[
    \at{p}(x) =
    \atWSem{\at{P}}(x) \is
      \begin{cases}
        \WSem[\actxt]{\at{P}(x) \land x \in X} \CASE x \in \AVal \\
        \wEmp[\actxt] \OTHERWISE
      \end{cases}
  \]
  The \emph{trace safety judgement} is the relation
    $\accept{\trace}{\na{p}}{\at{p}}{v} : \straces$
  defined coinductively in \cref{fig:spec-sem}.\footnote{Here~$\trace$ ranges over subsequences of traces.}
  We write $ \term(\trace) $, if the trace~$\trace$ contains no local steps.
  \begin{mathfig}[ht]
    \begin{proofrules}
  \small
  \infer*[right=Stutter]{
    \label{rule:stutter}
\update
    h_1 -> h_2 |= \na{p} \ssep \at{p}(v) -> \na{p'} \ssep \at{p}(v)
    \\
    \accept{(\store_2, h_2)\;\trace}{\na{p'}}{\at{p}}{v} : \straces
    \\
    \term(\trace)
    \implies
    v = \DONE(v_1,v_2) \land
    \na{p}' = \WorldSem{\actxt}{\store_2}{\na{Q}(v_1,v_2)}
  }{\accept{(\store_1,h_1)\loc(\store_2,h_2)\;\trace}{\na{p}}{\at{p}}{v}
    : ((\store_1,h_1, \na{p},\at{p},v)\loc\straces)
  }
  \and \infer*[right=LinPt]{
    \label{rule:linpt}
\update
    h_1 -> h_2 |=
    \na{p} \ssep \at{p}(v)
    ->
    \na{q'} \ssep \WorldSem{\actxt}{}{\at{Q}(v,v')}
    \\
    \term(\trace)
    \implies
    \na{q}' = \WorldSem{\actxt}{\store_2}{\na{Q}(v,v')}
    \\
    \accept{(\store_2, h_2)\;\trace}{\na{q'}}{\at{p}}{\DONE(v,v')} : \straces
  }{\accept{(\store_1,h_1)\loc(\store_2,h_2)\;\trace}{\na{p}}{\at{p}}{v}
    : ((\store_1,h_1, \na{p},\at{p},v)\loc\straces)
  }
  \and \infer*[right=Env]{
    \label{rule:env}
    \straces =
    \textstyle \Union \Set{
      (\store,h_1, \na{p},\at{p},v) \env \straces[v'] | v' \in X, E(v') }
    \\\\
    \forall v'\in X\st
    E(v')
    \implies
    \accept{(\store, h_2)\;\trace}{\na{p}}{\at{p}}{v'} : \straces[v']
    \\
v\in\AVal
    \\\\
    E(v') \is
    \left(
    \exists \pe,\pe' \st
    h_1 \in \sem{ \na{p} \ssep \at{p}(v) \ssep \pe }_\lvl
    \land
    \update
    h_1 -> h_2 |=
    \at{p}(v) \ssep \pe
    ->
    \at{p}(v') \ssep \pe'
    \right)
  }{\accept{(\store,h_1)\env(\store,h_2)\;\trace}{\na{p}}{\at{p}}{v}
    : \straces
  }
  \and \infer*[right=Env']{
    \label{rule:env2}
\textbf{if }
    \exists \en{p},\en{p}'\st
    h_1 \in \sem{\na{p} \ssep \en{p}}_\lvl \land
    \update h_1 -> h_2 |= \en{p} -> \en{p}'
    \textbf{ then }
\accept{(\store, h_2)\;\trace}{\na{p}}{\at{p}}{\DONE(v,v')} : \straces
    \textbf{ else }
    \straces = \emptyset
  }{\accept{(\store,h_1)\env(\store,h_2)\;\trace}{\na{p}}{\at{p}}{\DONE(v,v')}
    : ((\store_1,h_1, \na{p},\at{p},\DONE(v,v')) \env \straces)
  }
  \and \infer*[right=Env$_\fault$]{ }{\label{rule:env-fault}
    \accept{(\store,h)\env \fault\:\trace}{\na{p}}{\at{p}}{v}
    : \emptyset
  }
\end{proofrules}
     \caption{Safety Specification Semantics}
    \label{fig:spec-sem}
  \end{mathfig}

\end{definition}

The judgement $\accept{\trace}{\na{p}}{\at{p}}{v} : \straces$
assumes the initial configuration $(\store_0,h_0)$ of the trace~$\trace$
satisfies $ h_0 \in \sem{\na{p} * \at{p}(v) * \True}_{\lvl} $.
\Cref{rule:stutter}
  checks that any local step other than the linearization point
  updates the local Hoare view (to some $\na{p}'$) in a frame-preserving manner;
  this implies that, before the linearization point,
  the abstract state~$v$ needs to be preserved by such step.
\Cref{rule:linpt}
  checks that the linearization point is frame-preserving and consistent
  with the atomic postcondition $\at{Q}$.
Both \cref{rule:stutter,rule:linpt} check that the Hoare postcondition
is satisfied if we are considering the last local step of the trace
(i.e.~if $\term(\trace)$ holds).
\Cref{rule:env}
  checks whether the current environment step,
  assumed to happen before the linearization point,
  can be seen as a transition
  changing the abstract state from~$v$ to $v'$ in a way that does not
  disrupt any frame (including~$\na{p}$).
  If that is the case, the rest of the trace is checked for safety.
\Cref{rule:env2}
  performs the same check but after the linearization point.
  In both cases, if the environment step cannot be seen as frame-preserving,
  then the trace is accepted since the environment did not respect the assumptions.
  Similarly,
\Cref{rule:env-fault}
  accepts the trace after a fault caused by the environment.

As we briefly mentioned,
\cref{def:trace-safety} is inspired by alternating automata~\cite{Vardi95}.
The ``alternation'' aspect is necessary because of the
angelic/demonic duality between local and environment steps:
when processing an environment step,
we need to be prepared to handle \emph{every} possible interpretation
of the update that took place;
for local steps, we are allowed to pick \emph{any} interpretation of the update.
Note that these ambiguities arise purely from the fact that
we are instrumenting the trace with ``ghost'' logical state:
at each step there is no ambiguity in a trace
about how the \emph{concrete} state has been updated.
This dual interpretation gives rise to
the two kinds of transitions in an alternating automaton.
An automata-based presentation of the trace safety judgement
would use existentially branching transitions for local steps,
and universally branching transitions for environment steps.
We further mimic alternating automata in the way we factor
safety and liveness constraints.
Alternating automata impose safety constraints
by constructing sets of runs that linearise the choices
for the existential transitions and the branching due to universal transitions.
In our setting these sets of runs correspond to~$\straces$.
The liveness constraints can be then checked by, for example,
requiring each run in the set to visit final states infinitely often,
the usual B\"uchi-style acceptance condition.
Here we also examine the instrumented traces of~$\straces$ individually
and impose a liveness acceptance condition;
the condition in our case is more complex as it has to take into account
layers, pseudo-quantifiers and obligations.
One key simplification introduced by this approach is
that we can cleanly separate the branching (safety) aspect
---the quantifier alternation due to duality environment/local steps---
from the linear-time liveness aspect.

\medskip

Building on trace safety, we can now define the semantics of a specification
$\sem{\spec}$ as the set of traces that are safe and that additionally satisfy
the liveness constraints implied by the obligations and the liveness assumptions
of~$\spec$.
Conceptually, we want to require local termination,
if the environment satisfies the layered liveness invariants
represented by pseudo-quantifiers and obligations.
To harmonise the pseudo-quantification and obligation-related
liveness assumptions of a specification,~$\spec$,
we collect all of them in a set of so-called
\emph{pseudo-obligations}:
\[
  \PObl^{\spec} \is
    \Set{(\rid, O) | \rid \in \RId, O \in \AObl} \dunion
    \Set{(\rid,\live(\actxt,\rid)) | \rid \in \dom(\actxt)} \dunion
    \set{X \eventually[k] X'}
\]
where $\actxt, X, X'$ and~$k$ are taken from the specification.

We extend the layer function to $\lay\from\PObl^{\spec} \to \Layer$ by setting
$\lay(a) = k$ if
  $a = (r, O)$ and $\lay(O) = k$, or
  $a = (r, X\eventually[k] X')$, or
  $a =(X\eventually[k] X')$.
Furthermore, define
\begin{align*}
\POLay{k}^{\spec} &\is \set{\pobl \in \PObl^{\spec} | \lay(\pobl) \laylt k}
&
\ObLay{k} &\is \set{O \in \AObl | \lay(O) \laylt k}
\end{align*}

Now we want to understand, for each position of a specification trace,
which pseudo-obligations we are holding locally and which are held by the environment.
This information is contained in the single worlds so as a first step we
extract, from a specification trace, the set of traces of worlds that it represents.

\begin{definition}[World Traces]
\label{def:world-traces}
  Given an atomicity context,~$\actxt$,
  we call \emph{world traces}, $\WTrace_{\actxt}$,
  ranged over by $\wtrace, \wtrace', \ldots$,
  the infinite sequences of the form
  $
    (h_0, \na{w}^0, \at{w}^0, \en{w}^0, v^0)
    \pl[0]
    (h_1, \na{w}^1, \at{w}^1, \en{w}^1, v^1)
    \pl[1] \ldots
  $,
  where, for all $i \in \Nat$,
   $h_i \in \Heap$,
   $\na{w}^i, \at{w}^i, \en{w}^i \in \World[\actxt]$ and
   $v^i \in \AVal'$.
We define the function:
  \[
    \worldify{\lvl}(\store,h,\na{p},\at{p},v) \is
      \Set{ (h, \na{w}, \at{w}, \en{w}, v) |
        \na{w} \in \na{p}, \at{w} \in \at{p}(v),
        h \in \reify{\na{w} \worldSsep \at{w} \worldSsep \en{w}}{\lvl}{}{}
      }
  \]
  which we extend to specification traces by
  $
    \worldify{\lvl}(\sconf[0] \pl[0]\sconf[1] \pl[1] \dots) \is
      \set{ c_0\pl[0] c_1 \pl[1] \dots |
        \forall i\st c_i \in \worldify{\lvl}(\sconf[i])
      }
  $.
  A world trace $
    (h_0, \na{w}^0, \at{w}^0, \en{w}^0, v^0) \pl[0]
    (h_1, \na{w}^1, \at{w}^1, \en{w}^1, v^1) \pl[1] \ldots
  $ is \emph{\pre\relyAt[\actxt]-respecting}
  if for all $i\in\Nat$:
  \[
    {\pl[i]} = {\env} \implies \na{w}^i \relyAt[\actxt] \na{w}^{i+1}
    \quad\land\quad
    {\pl[i]} = {\loc} \implies \en{w}^i \relyAt[\actxt] \en{w}^{i+1}
  \]
  Given specification trace  $ \strace \in \STrace[\actxt] $,    the set $\wtr{\strace}{\lvl;\actxt}$
  is the \emph{set of world traces of $ \strace $},
  defined by
  \[
    \wtr{\strace}{\lvl;\actxt} \is
      \set{ \wtrace \in \worldify{\lvl}(\strace) |
        \wtrace \text{ is \pre\relyAt[\actxt]-respecting}
      }
  \]
  We lift $\wtr{\hole}{\lvl;\actxt}$ to apply to sets of specification traces
  in the obvious way.
\end{definition}

We can now define two predicates indicating when a pseudo-obligation
is considered to be held by the environment ($\envheld_\lvl$)
or locally ($\locheld_\lvl$) in a position of a world trace.
\begin{align*}
  \envheld_\lvl(\pobl,(\_,\na{w},\_,\en{w},v)) &\is
  \begin{cases}
    \oblMap[\en{w}](\rid) \resgeq \obl{O} \land
    \level[\na{w}](\rid) < \lvl
    \CASE \pobl = (\rid, \obl{O})
    \\
    \astate[\na{w}](\rid) \not\in X_2 \land
    \level[\na{w}](\rid) < \lvl
    \CASE \pobl = (\rid, X_1 \eventually[k] X_2)
    \\
    v \in X_1 \setminus X_2
    \CASE \pobl = (X_1 {\eventually[k]} X_2)
  \end{cases}
\\
  \locheld_\lvl((\rid,\obl{O}),(\_,\na{w},\wtv[3])) & \is
    \oblMap[\na{w}](\rid) \resgeq \obl{O} \land
    \level[\na{w}](\rid) < \lvl
\end{align*}

Equipped with these definitions, we can state the liveness constraints
associated with a specification.
The idea is that one can assign the ``blame'' for local non-termination either
to the environment or to the local behaviour.
If we deem the environment responsible for non-termination
then the specification will classify the trace as acceptable,
otherwise it will reject it. 
The idea behind this ``blame'' assignment
is to examine the world traces justifying the safety of a trace
and consider, for each position,  which obligations are held by the environment
and which are held locally.
To understand the intuition, consider the case of liveness invariants
encoded by obligations.
Suppose the environment always eventually fulfills \emph{every} obligation,
  i.e.~for each obligation~$O$
  there are infinitely many positions where~$O$
  is not held by the environment.
This environment is certainly \emph{live},
i.e.~it respects the liveness assumptions
and the local code is responsible for any non-terminating behaviour.
But what if the environment itself is blocking on some locally held obligation,
and as a consequence is not able to fulfill some~$O$?
Whether the environment or the local code is to blame depends on the layers.
The environment is to blame if, from some point in the trace,
it never fulfills some~$O$ but the local steps always eventually fulfill
every obligation of layer strictly lower than $\lay(O)$.
Conversely, an environment which keeps~$O$ unfulfilled because of some
forever-unfulfilled obligation $O'$ held locally with $\lay(O) > \lay(O')$
cannot be blamed for local non-termination.

This intuition about obligations extends to liveness assumptions
attached to pseudo-quantifications in the triple and in the atomicity context.
The~$\goodenv$ predicate given in \cref{def:spec-sem}  formalises the above blame-assigning mechanism.
A world trace which satisfies~$\goodenv$ is one where
the environment cannot be blamed for local non-termination.
The specification semantics then is the set of safe traces where,
if~$\goodenv$ is satisfied, then the trace is locally terminating.

\begin{definition}[Specification~Semantics]
\label{def:spec-sem}
Fix a specification $\spec \in \Spec$
with components named as in~\eqref{spec-format}.
The $\goodenv_\spec(\strace)$ predicate checks whether the environment
  is satisfying the liveness assumptions of the specification:
  \begin{align*}
  \goodenv_\spec(\wtrace) \is
    \forall \pobl \in \POLay{m}^{\spec}\st
      \;&\textbf{if }
        \forall \rid, O \in \AObl_{\le\lay(\pobl)}\st
        \forall i\in \Nat\st
          \exists j \geq i\st
            \neg \locheld_\lvl((\rid, O),\trAt[\wtrace]{j})
      \\
      &\textbf{then }
        \forall i\in \Nat\st
          \exists j \geq i\st
            \neg \envheld_\lvl(\smash{\pobl},\trAt[\wtrace]{j})
  \end{align*}

  Let
  $ \na{p} = \WorldSem{\actxt}{\store_0}{\na{P}} $, and
  $ \at{p} = \atWSem{\at{P}} $.
  We define the trace semantics $\sem{\spec} \subseteq \Trace$ of specification~$\spec$
  as the set:
  \begin{equation*}
  \sem{\spec} \is
    \Set{
      (\store_0, h_0)\;\trace |
        \begin{aligned}
        \forall v_0\in X\st\;
        &\textbf{if }
          h_0 \in \sem{\na{p} * \at{p}(v_0) * \True}_\lvl
        \\&\textbf{then }
              \exists \straces\st
              \accept{(\store_0, h_0)\;\trace}{\na{p}}{\at{p}}{v_0} : \straces
              \\&\qquad\qquad\land
              \forall \wtrace\in\wtr{\straces}{\lvl;\actxt}\st \goodenv_\spec(\wtrace) \implies
              \locterm((\store_0, h_0)\;\trace)
        \end{aligned}
    }
  \end{equation*}
  where~$\lvl$ and~$\actxt$ are the level and atomicity context
  from the specification~$\spec$.
\end{definition}

The more precise intuition behind the specification semantics is as follows.
Once it has been established that there is a way to instrument the trace
to justify why the local steps satisfy the safety constraints of~$\spec$,
we consider the set of valid instrumentations~$\straces$.
First,  we extract the set of world traces represented
by the traces of~$\straces$.
Each such world trace should either be locally terminating,
in which case the trace is accepted
or, if it is non-terminating, the non-termination
should be  due to the environment not satisfying the liveness assumptions of~$\spec$.
The predicate $\goodenv_\spec(\strace)$ holds
for a specification trace~$\strace$
if the environment always eventually fulfills any pseudo-obligation
with layer~$k$ and if 
\emph{no obligation of layer~$\laylt k$ is constantly held by the local thread}.
Blame for non-termination can be unambiguously assigned thanks to
well-foundedness of layers:
if there is a forever-unfulfilled local obligation~$O_0$, we can try to blame the environment by identifying a lower-layer obligation~$O_1$ that is forever-unfulfilled
by the environment; the environment can shift the blame back to the local steps
if one can find a lower-layer local obligation~$O_2$ that is forever-unfulfilled.
Well-foundedness implies this blame-shifting game must be bounded in length
and the ultimate culprit can always be identified.
This effectively encodes the acyclicity of the layered termination argument.

\subsection{The Semantic Judgement}
\label{sec:sem-judge}

We are now ready to define the semantic version of our judgements,
$\JUDGE[\funspec] |=  \cmd:\spec$, indexed by   a function
specification context,  $\funspec$, which, for each function,
provides the arguments of  the function and the specification
of the function body.

\begin{definition}[Function Specification Context]
  A \emph{function specification context},~$\funspec$, is a partial function
  $
    \funspec \in \FunSpec \is
      \FName \pto (\PVar^{*}, \Spec)
  $.
\end{definition}

\begin{definition}[Semantic Triple]
\label{def:judgement-sem}
  Given $\functxt \in \FunCtxt$ and $\funspec\in\FunSpec$, 
 a function implementation context~$\functxt$ is a \emph{correct implementation} of~$\funspec$,
  written $ \JUDGE |= \functxt : \funspec $,
  if and only if 
  $
    \forall \pvar{f},\pvars{x},\spec \st
      \funspec(\pvar{f}) = (\pvars{x},\spec)
        \implies
          \exists \cmd \st
            \functxt(\pvar{f}) = (\pvars{x}, \cmd) \land
              \sem{\cmd}_{\functxt} \subseteq \sem{\spec}.
  $
 The semantic triple
  $ \JUDGE[\funspec] |= \cmd : \spec $,
stating  that command~$\cmd$
  satisfies  specification~$\spec$
  under any correct implementation of the functions specified
  in~$\funspec$, is defined by:
  \[
    \JUDGE[\funspec] |= \cmd : \spec
      \quad\text{if and only if}\quad
      \forall \functxt \st \JUDGE |= \functxt : \funspec
      \implies
      \sem{\cmd}_{\functxt} \subseteq \Sem{\spec}
  \]
  Note that when~$\cmd$ has no free function names,
  the judgement
    $\JUDGE[\funspec] |= \cmd : \spec$
  is equivalent to
    $ {\sem{\cmd} \subseteq \Sem{\spec}} $.
\end{definition}

Since the semantics of our triples is a complex conditional termination statement, it is useful to show when it corresponds to unconditional termination.
Intuitively, to state facts about the behaviour of a command in an ``empty'' environment, we should be using a Hoare triple (no resource needs to be shared, no interference experienced) and there should be no assumption of obligations being owned by the environment.
We characterise the preconditions that ensure this as the ones that always admit $\empObl[\lvl]$ as a global frame.

\begin{definition}[Grounded view]
  \label{def:grounded}
  Fix an arbitrary level~$\lvl$.
  We say $p\in\View[\emptyset]$ is \emph{\pre\lvl-grounded} if
  \[
    \forall h \st
      h \in \sem{p * \True}_{\lvl}
        \implies h \in \sem{p * \empObl[\lvl]}_{\lvl}
  \]
  We say a stable assertion~$P$ is \pre\lvl-grounded if,
  for all $\store\in\Store$,
  the view $\WorldSem{\emptyset}{\store}{P}$ is \pre\lvl-grounded.
\end{definition}

Examples of grounded assertions are standard separation logic assertions like
$\emp$ or $x\mapsto v$.

Unconditional termination applies to programs running in isolation.
Note that, technically, we cannot consider traces without environment steps
as the fairness constraint requires infinitely many of those.
We therefore model the isolated executions of a command as the executions
where the environment steps do not modify the current state.
It is easy to check that,
  for each finite or infinite sequence of local steps of~$\cmd$,
  there is a corresponding \emph{fair} trace of~$\cmd$
  with only identity environment steps,
and vice versa.

\begin{definition}[Closed-World Semantics]
  \label{def:closed-world-sem}
  Given a command~$\cmd$, its \emph{closed-world semantics}
  $\Sem[C]{\cmd} \subseteq \Trace$
  is the subset of the open-world semantics $\sem{\cmd}$
  of the traces where every environment step is an identity step,
  i.e.~of the form $\conf\envstep[]\conf$.
  Additionally, for an assertion~$P$, we define
  $
    \Sem[C]{\cmd}(P)_\lvl \is
      \Set{ (\store,h)\trace \in \Sem[C]{\cmd} |
        h \in \sem{\WorldSem{\emptyset}{\store}{P} * \True}_\lvl
      }
  $, that is the closed-world traces of~$\cmd$ that
  start from a state satisfying the precondition~$P$.
\end{definition}

\begin{theorem}[Adequacy]
  \label{th:adequacy}
  For every \pre\lvl-grounded assertion~$P$,
  if $ \TRIPLE m;\lvl;\emptyset |= {P}\cmd{Q} $
  then all traces in $\Sem[C]{\cmd}(P)_\lvl$ are locally terminating.
\end{theorem}

\begin{proof}
  Take a trace $(\store,h)\trace \in \Sem[C]{\cmd}(P)_\lvl$ and
  let $p=\WorldSem{\emptyset}{\store}{P}$.
  From the semantic triple we know that
  $
    \Sem[C]{\cmd}(P)_\lvl \subseteq
    \sem{\cmd} \subseteq
    \sem{\HSPEC |= {P} {Q}}
  $.
  We have $ h\in \sem{p * \True}_\lvl $ and,
  since~$P$ is \pre\lvl-grounded,
  $ h\in \sem{p * \empObl[\lvl]}_\lvl $.
  By the definition of the specification trace semantics,
  we therefore know that
  $ \accept{(\store, h)\;\trace}{p}{\emp}{1} : \straces $
  for some~$\straces$.
  Note that a frame-preserving update on a grounded view keeps it grounded,
  and that identity environment steps can always be justified
  as a frame-preserving update that does not update the resources.
  In particular, from these facts we can deduce that there is some
  $\strace \in \straces$ such that at each point in time the global frame
  is $\empObl[\lvl]$.
  From this we can extract a world-trace $\wtrace\in\wtr{\straces}{\lvl;\emptyset}$ such that
  $
    \forall \pobl \in \POLay{m}\st
        \forall i\in \Nat\st
            \neg \envheld_\lvl(\smash{\pobl},\trAt[\wtrace]{i})
  $ which implies $\goodenv(\wtrace)$.
  By the definition of the specification's semantics
  this implies local termination of our concrete trace $(\store,h)\trace$.
\end{proof}

As a corollary, we have that if
$ \TRIPLE m;\lvl;\emptyset |= {\emp}\cmd{\True} $ holds,
then every isolated execution of~$\cmd$ from the empty heap and arbitrary store terminates.
 \section{\tadalive\ Rules}
\label{sec:rules}

We now introduce the rules of \tadalive,
summarised in~\cref{fig:proof-rules},
using a simple but tricky running example
to motivate and explain them.

\begin{example}[Distinguishing client]
\label{ex:distinguishing-client}
  Consider the following client of a lock module:
\begin{small}\begin{center}
\begin{tabular}{r@{\quad}||@{\quad}l}\begin{sourcecode*}[boxpos=c]
lock(x);
[done] := true;
unlock(x);
\end{sourcecode*}
&\begin{sourcecode*}[boxpos=c]
var d = false in
while(!d){
  lock(x); d := [done]; unlock(x);
}
\end{sourcecode*}
\end{tabular}
\end{center}
   \end{small}The code is interesting in that it can distinguish whether
  the lock implementation is a  spin or CLH lock.
  Under weak fairness, when \p{x} is a spin lock,
  this client program does not always terminate.
  It is possible for the lock invocation of the left thread
  to be scheduled infinitely often
  but always in a state in which  the lock is locked.
  As a result, \code{done} will never be set to \code{true},
  making the while loop spin forever.
  The spin lock has been \emph{starved} by the other thread.
In contrast, when \p{x} is a CLH lock,
  this client program is guaranteed to terminate:
  a fair scheduler will eventually allow the left thread to enqueue itself
  in the internal queue of the lock;
  from then on, the thread on the right can only acquire the lock at most once;
  after unlocking, the next \code{lock(x)} call of the right thread
  would enqueue it after the left thread, which is now the only
  unblocked thread. The CLH lock is \emph{starvation free}.

  It is worth noting that none of the proof principles of~\cite{HoffmannMS13,GotsmanCPV09,GotsmanY11,CookPR07,LiangFS14,total-tada}
  are powerful enough to handle this example due to the blocking behaviour
  it displays.
  Even replacing locks with primitive locks,
  due to the mix of busy-waiting blocking and locks,
  the example cannot be handled by any of the proof systems of~\cite{Kobayashi00,Leino10,BostromM15,Jacobs18toplas}.
  Since LiLi does not have a rule for parallel,
  this client cannot be proven within the LiLi logic.
\end{example}

We show that the distinguishing client terminates with the CLH lock,
by proving the Hoare triple
$
  \TRIPLE \layTop |-
    {\ap{L}(\p{x},0) * \pvar{done}\mapsto \p{false}}
      {\cmd_\ell \parallel \cmd_r}
    {\True}
$,
where~$\cmd_\ell$ and~$\cmd_r$
are the left and right threads of the example, respectively.
Since our triples are total,
this triple immediately guarantees termination of the program.
Our overall argument is as follows.
The CLH specification guarantees termination of a call to \code{lock(x)}
if the lock is always eventually unlocked by the environment.
This is intuitively true for both threads:
they always unlock the lock after having acquired it.
The call to \code{lock(x)} will therefore terminate in both threads.
The only other potentially non-terminating operation is
the while loop in the right thread.
The loop is implementing a busy-wait pattern on \code{done},
and needs the help of the left thread to terminate.
We will be able to prove that since \code{done} is going to be eventually
set to true (and never reset to false), the loop will terminate.

\subsection{The Basics: Regions}
Let us formalise the argument in \tadalive,
introducing the proof rules as they are needed.
Recall the specifications of CLH lock:
\begin{align*}
  &\ATRIPLE \topOf |-
    \A l \in \set{0, 1} \eventually[\botOf] \set{0}.
      <\ap{L}(\pvar{x}, l)>
        \code{lock(x)}
      <\ap{L}(\pvar{x}, 1) \land l = 0>
  \\
  &\ATRIPLE \botOf |-
    <\ap{L}(\pvar{x}, 1)>
      \code{unlock(x)}
    <\ap{L}(\pvar{x}, 0)>
\end{align*}where we make explicit the previously omitted layers $\topOf \laygt \botOf$
(we justify the choice of layers in the proof of CLH lock).
These specifications will be available in the proof as ``axioms''
stored in a function specification context $\funspec$ parametrising every triple
of the client proof; we omit the parametrisation to aid readability.
The predicate $ \ap{L}(\p{x},l) $ is given a definition in the proof of the lock module, and, in the spirit of CAP, the client proof should not be relying
on the \emph{definition} of the predicate but in its abstract properties.
Here we rely on the fact that
$ \ap{L}(\p{x},\wtv) * \ap{L}(\p{x},\wtv) $ is false,
expressing that a lock is an exclusively owned resource
---see \cref{sec:abs-pred} for the other properties of $\ap{L}$ exposed to the client.

The two threads of the distinguishing client both access
the lock \p{x} and the heap cell \p{done}.
Consider the precondition
$\ap{L}(\p{x},0) * \pvar{done}\mapsto \p{false}$.
Both resources in the precondition
are non-duplicable, so if we give them to $\cmd_\ell$,
the other thread would not be able to also have them.
As we anticipated, shared state in \tada\ is handled using regions.
We therefore introduce a new region type $\rt[dc]$
(for $\rt[d]$istinguishing $\rt[c]$lient)
which encapsulates the resources in the precondition:
$\region{dc}{\rid}(x,\lvar{done},l,d)$
is the shared resource encapsulating a lock at~$x$ with state~$l$
and a cell at $\lvar{done}$ storing the Boolean~$d$.
Although in this case the abstract state is not hiding any detail,
since both~$l$ and~$d$ are visible,
in general the abstraction of the contents is an essential mechanism
for reasoning about \emph{abstract} atomicity.
In the proof of CLH lock, for example, to be able to see the operations
as abstractly atomic, it is essential to hide the queue from the abstract state.

Assuming the lock region encapsulated by the $\ap L$ predicate has level $\lvl$,
the lock specifications will have level $\lvl+1$ in the context,
indicating they consider the lock region closed.
To allow $\rt[dc]$ to encapsulate the lock region
and use the lock specifications to derive updates to its own state,
we let it have level $\lvl+1$.
The top-level triples for the distinguishing client
have level $\lvl+2$ as a consequence. 
We will elide all details about levels as they can be mechanically inferred
from the applications of the
\ref{rule:lift-atomic} and \ref{rule:update-region} rules.

We now design the protocol of the region, with the intent of encoding
the following safety invariants:
\begin{enumerate}[label={(I\arabic*)}]
  \item
      the addresses of the lock and the flag never change;
      \label{inv:dc-fixed}
  \item
      only the thread which acquired the lock can unlock it;
      \label{inv:dc-safety-lock}
  \item
      only the left thread will ever modify \p{done},
      and at most once from \p{false} to \p{true};
      \label{inv:dc-safety-done}
\end{enumerate}
and the following liveness invariants:
\begin{enumerate}[resume*]
  \item
      the lock will always eventually be unlocked;
      \label{inv:dc-live-lock}
  \item
      the value at \p{done} will always eventually be \p{true}.
      \label{inv:dc-live-done}
\end{enumerate}
Note that, together,
invariants~\ref{inv:dc-safety-done} and~\ref{inv:dc-live-done}
imply that eventually the value at \p{done} will always be true.
To encode invariants~\ref{inv:dc-safety-lock} and~\ref{inv:dc-safety-done}
we introduce a guard algebra for~$\rt[dc]$ generated from two guards
$\guard{k}$ (for $\guard{k}$ey) and 
$\guard{d}$ (for $\guard{d}$one),
with
  $\guardUndef{\guard{k}\guardOp\guard{k}}$
and
  $\guardUndef{\guard{d}\guardOp\guard{d}}$
to represent exclusivity of the permissions they give
on the lock and flag respectively.
We can reuse the same guard algebra for the obligation algebra
associated with~$\rt[dc]$:
the atom obligations $\obl{k}$ and $\obl{d}$ will represent
liveness invariants~\ref{inv:dc-live-lock} and~\ref{inv:dc-live-done}
respectively.
The protocol $\regLTS[dc]$ formalises all the invariants:
\begin{align}
\guardZero &:
  ((x,\lvar{done},0, \ghost{\p{false}}{d}), \oblZero)
      \interfTo ((x,\lvar{done},1, \ghost{\p{true}}{d}), \obl{k})
  \label{eq:dist-client-interf-lock}
\\
\guard{k} &:
  ((x,\lvar{done},1, \ghost{\p{false}}{d}), \obl{k})
      \interfTo ((x,\lvar{done}, 0, \ghost{\p{true}}{d}), \oblZero)
  \label{eq:dist-client-interf-unlock}
\\
\guard{d} &:
  ((x,\lvar{done},\ghost{0}{l}, \p{false}), \obl{d})
      \interfTo ((x,\lvar{done},\ghost{0}{l}, \p{true}), \oblZero)
  \label{eq:dist-client-interf-d}
\end{align}
The fact that no transition can change $x$ and~$\lvar{done}$
encodes~\ref{inv:dc-fixed};
this is such a common pattern that we adopt the convention to declare
which are the \emph{fixed} components of the abstract state and
omit them from the protocol transitions completely.
The choice for the guards
of~\eqref{eq:dist-client-interf-unlock} and~\eqref{eq:dist-client-interf-d}
reflects~\ref{inv:dc-safety-lock} and~\ref{inv:dc-safety-done},
respectively;
we will give $\guard{d}$ to the left-hand thread,
and $\guard{k}$ will be obtained by locking the lock.
The obligation~$\obl{k}$ is obtained by locking and fulfilled by unlocking;
the obligation~$\obl{d}$ is fulfilled by setting \p{done} to \p{true};
these facts encode~\ref{inv:dc-live-lock} and~\ref{inv:dc-live-done}.

We assign layers to the obligations to reflect the intuitive dependency:
the lock needs to be assumed live in the process of fulfilling the obligation on the flag.
We therefore set
$
  \layBot \laylt
  \botOf = \lay(\obl{k}) \laylt
  \lay(\obl{d}) = \topOf \laylt
  \layTop
$.

To complete the definition of shared region~$\rt[dc]$,
we  link its abstract state
to the actual heap content that it encapsulates
using the \emph{region interpretation}:
\[
  \rInt(\region{dc}{\rid}(x, \lvar{done}, l, d)) \is
    \ap{L}(x, l) * \lvar{done} \mapsto d *
    \bigl( l=0 \dotimplies \gOblA{\obl{k}}{\rid} \bigr) *
    \bigl( d   \dotimplies \gOblA{\obl{d}}{\rid} \bigr)
\]
This  assertion describes a portion of the heap being shared
(the lock at~$x$ and the cell at $\lvar{done}$) and
the linking of the ghost state (the guards and obligations)
with the abstract state.
The assertion $\gOblA{\obl{k}}{\rid}$ is an abbreviation for
$ \guardA{\guard{k}}{\rid} * \locObl{\obl{k}}{\rid} $,
which indicates local ownership of
the guard~$\guard{k}$ and obligation~$\obl{k}$.
The interpretation of a region establishes the invariant that,
when~$l=0$, the guard and obligation~$\obl{k}$
will be ``owned'' by the region (and by no thread as a consequence).
Similar links are established between the value of~$d$ and~$\obl{d}$.

\begin{figure}[tb]
  \adjustfigure \resizebox{\textwidth}{!}{$\begin{proofoutline}
\CTXT \layTop |- \\
  \PREC{\ap{L}(\pvar{x}, 0) * \pvar{done} \mapsto \p{false}} \\
  \ASSR{
    \exists \rid, l \st
    \region{dc}{\rid}(\pvar{x}, \pvar{done}, l, \p{false}) *
    \gOblA{\obl{d}}{\rid} *
    \pars[\big]{l = 1 \dotimplies \envObl{\obl{k}}{\rid}}
  } \\
  \begin{proofjump}[rule:exists-elim]
\begin{proofparall}\PARPREC{
        (\exists l \st
          \region{dc}{\rid}(\pvar{x}, \pvar{done}, l, \p{false}) *
          \gOblA{\obl{d}}{\rid} * \pars[\big]{l = 1 \dotimplies \envObl{\obl{k}}{\rid}}
        ) 
      }{
        (\exists l, d \st
          \region{dc}{\rid}(\pvar{x}, \pvar{done}, l, d) *
          \pars[\big]{l = 1 \dotimplies \envObl{\obl{k}}{\rid}} *
          \pars[\big]{\neg d \dotimplies \envObl{\obl{d}}{\rid}}
        )
      }
      \begin{thread}
        \CTXT \layTop |- \\
        \PREC{\exists l \st
          \region{dc}{\rid}(\pvar{x}, \pvar{done}, l, \p{false}) *
          \gOblA{\obl{d}}{\rid} *
          \pars[\big]{l = 1 \dotimplies \envObl{\obl{k}}{\rid}}
        } \\
        \begin{proofjump*}[rule:atomicity-weak,rule:atomic-exists-elim,rule:frame-atomic]
          \label{dist-client:left-lock-1}
\A l \in \set{0,1}. \\
          \PREC<
            L |
            \exists d \st
            \region{dc}{\rid}(\pvar{x}, \pvar{done}, l, d)
          > \\
          \begin{proofjump}[rule:liveness-check]
\A l \in \set{0,1} \eventually \set{0}. \\
            \PREC<
              \emp |
              \exists d \st
              \region{dc}{\rid}(\pvar{x}, \pvar{done}, l, d)
            > \\
            \begin{proofjump*}[rule:lift-atomic,rule:frame]\label{dist-client:left-lock-2}
              \A l \in \set{0,1} \eventually \set{0}. \\
              \PREC<\ap{L}(\pvar{x}, l)> \\
              \CODE{lock(x);} \\
              \POST<\ap{L}(\pvar{x}, 1) \land l = 0> \\
            \end{proofjump*} \\
            \POST<
              \gOblA{\obl{k}}{\rid} |
              \exists d \st
              \region{dc}{\rid}(\pvar{x}, \pvar{done}, 1, d)
            >
          \end{proofjump} \\
          \POST<
            L * \gOblA{\obl{k}}{\rid} |
            \exists d \st
              \region{dc}{\rid}(\pvar{x}, \pvar{done}, 1, d)
          >
        \end{proofjump*} \\
        \ASSR{
          \region{dc}{\rid}(\pvar{x}, \pvar{done}, 1, \p{false}) *
          \gOblA{\obl{d}}{\rid} *
          \gOblA{\obl{k}}{\rid}
        } \\
        \CODE{[done] := true;} \\
        \ASSR{
          \region{dc}{\rid}(\pvar{x}, \pvar{done}, 1, \p{true}) *
          \gOblA{\obl{k}}{\rid}
        } \\
        \CODE{unlock(x);} \\
        \POST{
          \exists l \st
          \region{dc}{\rid}(\pvar{x}, \pvar{done}, l, \p{true})
        }
      \end{thread}
      \PARALLEL
      \begin{thread}
        \CTXT \layTop |- \\
        \PREC{
          \exists l, d \st
          \region{dc}{\rid}(\pvar{x}, \pvar{done}, l, d) *
          \pars[\big]{l = 1 \dotimplies \envObl{\obl{k}}{\rid}} *
          \pars[\big]{\neg d \dotimplies \envObl{\obl{d}}{\rid}}
        } \\
        \CODE{var d = false in} \\
\ASSR{
          \exists \beta_0, l, d \st
          \region{dc}{\rid}(\pvar{x}, \pvar{done}, l, d) *
          \pars[\big]{l = 1 \dotimplies \envObl{\obl{k}}{\rid}} \land {} \\
          \pvar{d} \implies d \land
          \beta_0 = \ifte{\pvar{d}}{0}{1} *
          \pars[\big]{\neg d \dotimplies \envObl{\obl{d}}{\rid}}
        } \\
        \begin{proofjump}[rule:exists-elim]
          \ASSR{
            \exists l, d \st
            \region{dc}{\rid}(\pvar{x}, \pvar{done}, l, d) *
            \pars[\big]{l = 1 \dotimplies \envObl{\obl{k}}{\rid}} \land {} \\
            \pvar{d} \implies d \land
            \beta_0 = \ifte{\pvar{d}}{0}{1} *
                 \pars[\big]{\neg d \dotimplies \envObl{\obl{d}}{\rid}}
          } \\
          \begin{proofjump}[rule:while]
            \CODE{while($\neg\pvar{d}$) \{} \META{\forall \beta, b \st} \\
            \PREC{
              \exists l, d \st
              \region{dc}{\rid}(\pvar{x}, \pvar{done}, l, d) *
              \pars[\big]{l = 1 \dotimplies \envObl{\obl{k}}{\rid}} \land {} \\
              \pvar{d} \implies d \land
              \beta = \ifte{\pvar{d}}{0}{1} \land
              b \implies d \land
              \neg \pvar{d}
            } \\
\CODE{lock(x);} \\
\CODE{d := [done];} \\
\CODE{unlock(x);} \\
            \POST{
              \exists \gamma, l, d \st
              \region{dc}{\rid}(\pvar{x}, \pvar{done}, l, d) *
              \pars[\big]{l = 1 \dotimplies \envObl{\obl{k}}{\rid}} \land {} \\
              \pvar{d} \implies d \land
              \gamma = \ifte{\pvar{d}}{0}{1} \land
              \gamma \le \beta \land
              b \implies \gamma < \beta
            } \\
            \CODE{\}} \\
          \end{proofjump} \\
          \POST{
            \exists \gamma, l, d \st
            \region{dc}{\rid}(\pvar{x}, \pvar{done}, l, d) *
            \pars[\big]{l = 1 \dotimplies \envObl{\obl{k}}{\rid}} \land {} \\
            \pvar{d} \implies d \land
            \gamma = \ifte{\pvar{d}}{0}{1} *
            \pars[\big]{\neg d \dotimplies \envObl{\obl{d}}{\rid}} \land
            \pvar{d} \land      
            \gamma \le \beta_0
          } \\
        \end{proofjump} \\
        \POST{
          \exists l \st
          \region{dc}{\rid}(\pvar{x}, \pvar{done}, l, \p{true})
        }
      \end{thread}
      \\
      \PARPOST{\exists l \st
        \region{dc}{\rid}(\pvar{x}, \pvar{done}, l, \p{true})}{\exists l \st
        \region{dc}{\rid}(\pvar{x}, \pvar{done}, l, \p{true})}
    \end{proofparall}
  \end{proofjump} \\
  \POST{\exists \rid, l \st \region{dc}{\rid}(\pvar{x}, \pvar{done}, l, \p{true})}
\end{proofoutline}
 $}
  \caption{Proof Sketch of the Distinguishing Client.
    Here $
      L = \bigl(\exists l, d \st
            \region{dc}{\rid}(\pvar{x}, \pvar{done}, l, d) *
            \pars[\big]{l = 1 \dotimplies \envObl{\obl{k}}{\rid}}\bigr)
    $.}
  \label{fig:dist-client-outline}
\end{figure}

Now that we set the scene, we can proceed with the proof,
outlined in \cref{fig:dist-client-outline}.
The first operation to do is to transform the precondition
$ \ap{L}(\p{x},0) * \pvar{done}\mapsto \p{false} $
to an assertion about the $\region{dc}{\rid}(\p{x},\p{done},l,d)$ region.
We can do that by using the consequence rule:
\begin{inlineproofrule}
  \infer*[right={ConsH}]{
  \STABLE \actxt |= P
  \\
  \STABLE \actxt |= Q
  \\\\
  \viewshift \levl; \actxt |= P => P'
  \\
  \TRIPLE[\funspec] m; \levl; \actxt |-
      {P'} \cmd {Q'}
  \\
  \viewshift \levl; \actxt |= Q' => Q
}{
   \TRIPLE[\funspec] m; \levl; \actxt |-
      {P} \cmd {Q}
}
 \end{inlineproofrule}
which allows the use of \emph{viewshift} to logically manipulate the assertions.
Since triples are only well-defined if the Hoare pre-/postconditions
are stable,
the rule asks to check stability of the assertions of the conclusion
as this does not follow from stability
of the assertions of the triple in the premise.
An analogous rule, called~\ref{rule:consequence},
holds for hybrid triples
---with no stability checks on the atomic pre-/postconditions---
so viewshifting is available at any point in a derivation.

In our example, we want to create the guards and obligations needed to
match the interpretation of $\region{dc}{\rid}(\p{x},\p{done},l,d)$ and 
create the region, replacing its interpretation.
Here we might be tempted to match the interpretation
with $l=0$ and $d=\p{false}$,
$
  \ap{L}(\p{x}, 0) * \p{done} \mapsto \p{false} *
  \gOblA{\obl{k}}{\rid} * \gOblA{\obl{d}}{\rid}
$
to viewshift to
  $\region{dc}{\rid}(\p{x},\p{done},0,\p{false}) * \locObl{\obl{d}}{\rid}$.
While this viewshift holds, there is an issue:
in \tadalive, all the assertions in Hoare triples
(or in Hoare position in hybrid triples)
need to be stable for the triple to have well-defined semantics.
The proof system enforces the stability of these assertions,
by inserting stability checks in crucial rules.
This means that if we viewshift now to a non-stable assertion,
then we would fail at some point in the derivation to satisfy a stability check.
While
$\ap{L}(\p{x},0) * \pvar{done}\mapsto \p{false}$
is stable, as we own these resources, the assertion
  $\region{dc}{\rid}(\p{x},\p{done},0,\p{false}) * \locObl{\obl{d}}{\rid}$
is not stable:
  a region is subjected to the transitions of the protocol.
Since we have the guard $\obl{d}$ (from $\gOblA{\obl{d}}{\rid}$)
the environment cannot own it, hence cannot fire the transitions
guarded by $\obl{d}$; this makes $d=\p{false}$ stable.
The transitions changing the state of the lock, however, can affect the region.
This leads us to\footnote{Recall that $\bexp \dotimplies Q$ stands for
  $ (\bexp \land Q) \lor (\neg \bexp \land \emp) $.}
$
  \exists \rid, l \st
  \region{dc}{\rid}(\pvar{x}, \pvar{done}, l, \p{false}) *
  \gOblA{\obl{d}}{\rid} *
  \pars[\big]{l = 1 \dotimplies \envObl{\obl{k}}{\rid}}
$
where we also add $\envObl{\obl{k}}{\rid}$ when $l=1$,
a stable fact.

\subsection{The Parallel Rule}

We now want to proceed with an application of the parallel rule:
\begin{inlineproofrule}
  \infer*[right={Par}]{
  \TRIPLE[\funspec] m_1; \levl; \actxt |- {P_1}{\cmd_1}{Q_1}
  \\
  \VALID \actxt |= \minLay{Q_1}{m_2} \layleq m
  \\\\
  \TRIPLE[\funspec] m_2; \levl; \actxt |- {P_2}{\cmd_2}{Q_2}
  \\
  \VALID \actxt |= \minLay{Q_2}{m_1} \layleq m
}{\TRIPLE[\funspec] m; \levl; \acontext |-
    {P_1 \ssep P_2}
      {\cmd_1 \parallel \cmd_2}
    {Q_1 \ssep Q_2}
}
 \end{inlineproofrule}
The abbreviation
$ \VALID \actxt |= \minLay{P}{k} $
means
$ \forall r\in\RId \st
    \VALID \actxt |= P \implies \minLay{r}{k} $,
that is, all the obligations owned by~$P$ have layer~$\laygeq k$.
$ \VALID \actxt |= \minLay{P}{k} \layleq k'$ means
$ \VALID \actxt |= \minLay{P}{k} $ and $ {k} \layleq k' $.
The intuition behind these constraints is as follows.
The layer in the context of the triples indicates a strict upper bound
on the layers that can be assumed live in the proofs of the triples.
If thread~2 needs layers lower than~$m_2$,
then if thread~1 has unfulfilled obligations by the end of its execution,
these cannot conflict with the assumptions made by the proof of thread~2.
It is not however sound to leave
an obligation~$O_2$ of layer~$\laylt m_2$ unfulfilled by thread~1:
if thread~1 terminates first, leaving~$O_2$ unfulfilled in its postcondition,
thread~2 may be assuming~$O_2$ live in a situation where it will never be fulfilled.

In our example we need to apply consequence again to massage the viewshifted precondition into an assertion of the form $P_1 * P_2$.
The region assertion is duplicable, as is $l=1 \dotimplies \envObl{\obl{k}}{\rid}$,
but we want to give the non-duplicable resource $\gOblA{\obl{d}}{\rid}$
to the thread on the left, as it is the one that will fulfill the $\obl{d}$
obligation.
This has a side-effect though:
since the $\guard{d}$ guard is given to the left thread,
the value of $d$ from the right thread's perspective is not stably \p{false}.
Moreover, we want to know that the left thread has the obligation $\obl{d}$.
So we use
$\gOblA{\obl{d}}{\rid} \iff \gOblA{\obl{d}}{\rid} * \envObl{\obl{d}}{\rid}$
to give $\envObl{\obl{d}}{\rid}$ to the thread on the right, and we stabilise
the assertion to $\neg d \dotimplies \envObl{\obl{d}}{\rid}$.
All in all we obtain:
\begin{gather*}
  \bigl(
    \exists \rid, l \st
    \region{dc}{\rid}(\pvar{x}, \pvar{done}, l, \p{false}) *
    \gOblA{\obl{d}}{\rid} *
    \pars[\big]{l = 1 \dotimplies \envObl{\obl{k}}{\rid}}
  \bigr)
  \;\implies\; \exists \rid\st P_1 * P_2
  \\
  \begin{aligned}
    P_1 &\is \exists l \st
      \region{dc}{\rid}(\pvar{x}, \pvar{done}, l, \p{false}) *
      \gOblA{\obl{d}}{\rid} * \pars[\big]{l = 1 \dotimplies \envObl{\obl{k}}{\rid}}
    \\
    P_2 &\is \exists l, d \st
      \region{dc}{\rid}(\pvar{x}, \pvar{done}, l, d) *
      \pars[\big]{l = 1 \dotimplies \envObl{\obl{k}}{\rid}} *
      \pars[\big]{\neg d \dotimplies \envObl{\obl{d}}{\rid}}
  \end{aligned}
\end{gather*}
Which allows us to apply consequence and the standard \ref{rule:exists-elim}
rule to obtain a precondition in the form required by the \ref{rule:parallel}
rule.
For both threads we are aiming at postcondition
$
  \exists l \st
  \region{dc}{\rid}(\pvar{x}, \pvar{done}, l, \p{true})
$
which has no obligation so it satisfies the layer conditions trivially.

To see why the layer conditions are important for soundness,
imagine we forgot to unlock \p{x} in the left thread,
obtaining a non-terminating program.
We would obtain~$\locObl{\obl{k}}{\rid}$ in the postcondition of~$\cmd_\ell$,
but the check would fail as $\botOf=\lay(\obl{k}) \not\laygeq \layTop$.
Choosing $\botOf$ for the context layer of the triple
for~$\cmd_r$ would not work:
in its proof we need to assume~$\obl{d}$ live, and $\lay(\obl{d}) = \topOf$.

\subsection{Handling a call to \code{lock}}
\label{sec:rules-atomic}

Let us focus on the proof of the left-hand thread first.
The difficult step is the execution of the first instruction, since 
this is the only potentially non-terminating instruction of the thread.
If we let $
  L = \exists l, d \st
        \region{dc}{\rid}(\pvar{x}, \pvar{done}, l, d) *
        \pars[\big]{l = 1 \dotimplies \envObl{\obl{k}}{\rid}}
$,
\Cref{dist-client:left-lock-1} can be derived as follows:
\begin{equation*}
\footnotesize \infer*[Right=\ref{rule:layer-weak},vcenter]{
    \infer*[Right=\ref{rule:frame-hoare}]{
      \infer*[Right=\ref{rule:consequence}]{
        \infer*[Right=\ref{rule:atomicity-weak}]{
          \infer*[Right=\ref{rule:atomic-exists-elim}]{
            \ATRIPLE \topOf |-
              \A l \in \set{0,1}.
              < L | \exists d \st
                  \region{dc}{\rid}(\pvar{x}, \pvar{done}, l, d) >
              \code{lock(x)}
              < L * \gOblA{\obl{k}}{\rid} |
                \exists d \st \region{dc}{\rid}(\pvar{x}, \pvar{done}, 1, d) >
          }{
            \ATRIPLE \topOf |-
              < L | \exists l, d\st
                  \region{dc}{\rid}(\pvar{x}, \pvar{done}, l, d) >
              {\code{lock(x)}}
              < L * \gOblA{\obl{k}}{\rid} |
                \exists d\st \region{dc}{\rid}(\pvar{x}, \pvar{done}, 1, d) >
          }
        }{
          \TRIPLE \topOf |-
            { L * 
              \exists l, d \st
                \region{dc}{\rid}(\pvar{x}, \pvar{done}, l, d) }
            {\code{lock(x)}}
            { L * \gOblA{\obl{k}}{\rid} *
              \exists d\st
                \region{dc}{\rid}(\pvar{x}, \pvar{done}, 1, d) }
        }
      }{
        \TRIPLE \topOf |-
          { \exists l, d \st
              \region{dc}{\rid}(\pvar{x}, \pvar{done}, l, d) 
              * \pars[\big]{l = 1 \dotimplies \envObl{\obl{k}}{\rid}} }
          {\code{lock(x)}}
          { \exists d\st \region{dc}{\rid}(\pvar{x}, \pvar{done}, 1, d) *
              \gOblA{\obl{k}}{\rid} }
      }
    }{
      \TRIPLE \ghost[c]{\layTop}{\topOf} |-
        { \exists l \st
            \region{dc}{\rid}(\pvar{x}, \pvar{done}, l, \p{false}) *
            \gOblA{\obl{d}}{\rid} *
            \pars[\big]{l = 1 \dotimplies \envObl{\obl{k}}{\rid}} }
        {\code{lock(x)}}
        { \region{dc}{\rid}(\pvar{x}, \pvar{done}, 1, \p{false}) *
            \gOblA{\obl{d}}{\rid} *
            \gOblA{\obl{k}}{\rid} }
    }
  }{
    \TRIPLE \layTop |-
      { \exists l \st
          \region{dc}{\rid}(\pvar{x}, \pvar{done}, l, \p{false}) *
          \gOblA{\obl{d}}{\rid} *
          \pars[\big]{l = 1 \dotimplies \envObl{\obl{k}}{\rid}} }
      {\code{lock(x)}}
      { \region{dc}{\rid}(\pvar{x}, \pvar{done}, 1, \p{false}) *
          \gOblA{\obl{d}}{\rid} *
          \gOblA{\obl{k}}{\rid} }
  }
\end{equation*}

Let us unpack the derivation.
As a first step, we would like to frame the irrelevant resources,
in this case $\locObl{\obl{d}}{\rid}$.
In \tadalive, this step is more subtle and interesting than usual,
because of the layer-related side-conditions of \cref{rule:frame-hoare}
(a special case of \cref{rule:frame}):
\begin{inlineproofrule}
  \infer*[right={FrameH}]{
  \;\freevars(R) \inters \modvars(\cmd) = \emptyset
  \\\\
  \VALID \actxt |= \minLay{R}{m}\quad
  \\
  \TRIPLE[\funspec] m; \lvl; \actxt |-
    {P\phantom{{} * R}}
    \cmd
    {Q\phantom{{} * R}}
  \\
  \STABLE \actxt |= { R }
}{
  \TRIPLE[\funspec] m; \lvl; \actxt |-
    {P * R}
    \cmd
    {Q * R}
}
   \label{rule:frame-hoare}
\end{inlineproofrule}
With this rule, one can only frame obligations if they are of layer
greater or equal the context layer.
Here we can use consequence (omitted) to obtain a stable frame
$
  R =
    \exists l\st
    \region{dc}{\rid}(\pvar{x}, \pvar{done}, l, \p{false}) *
    \gOblA{\obl{d}}{\rid}
$
of the pre- and postconditions.
We have
  $\VALID |= \minLay{R}{\topOf}$
but
  $\not\VALID |= \minLay{R}{\layTop}$;
since the layer in the context of the goal is~$\layTop$,
before we can apply \ref{rule:frame-hoare} we need to
artificially lower the layer using \cref{rule:layer-weak-hoare}
before applying frame:
\begin{inlineproofrule}
  \infer*[right={LayWH}]{
  m_1 \layleq m_2
  \\\\
  \TRIPLE[\funspec] m_1; \lvl ; \actxt |- {P} \cmd {Q}
}{
  \TRIPLE[\funspec] m_2; \lvl ; \actxt |- {P} \cmd {Q}
}
   \label{rule:layer-weak-hoare}
\end{inlineproofrule}
Notice that lowering the layer in the context is always sound
(even for hybrid triples):
if we can prove the triple assuming live only layers $\laylt k_1 \layleq k_2$,
then the proof is valid in contexts where layers up to~$k_2$
can be assumed live.

The layer constraint of \ref{rule:frame} is crucial for soundness.
Suppose we remove the constraint.
Then we would be able to frame a locally held obligation~$O$
with layer $k \laylt m$,
i.e.~one of the layers that might be assumed live by the proof.
This would allow the proof to assume live environment obligations 
that have layer~$\laygeq k$, the eventual fulfilment of which might depend
on the eventual fulfilment of~$O$.
But since~$O$ is in the frame, it is constantly held and not fulfilled
for the whole duration of the execution of the command we are proving.
The frame condition forces us to record the ``minimum'' layer of the frame
in the context, ruling out unsound circular reasoning.

After framing, we use the rule of consequence to massage the assertions
to prepare them to the form required for
the later application of \ref{rule:liveness-check}.

The rest of the derivation does not involve liveness reasoning,
and follows a standard \tada\ proof pattern.
We use \ref{rule:atomic-exists-elim} and \ref{rule:atomicity-weak}
to turn the Hoare triple into an atomic triple:
\begin{inlineproofrule}
\infer*[right={A$\exists$Elim}]{
  \ATRIPLE[\funspec] m; \lvl; \actxt |-
    \A x \in \pqsets{X}, z \in Z.
    <\na{P} | \at{P}(x,z)>
      \cmd
    <\na{Q} | \at{Q}(x,z)>
  \phantom{, z \in Z}}{
  \ATRIPLE[\funspec] m; \lvl; \actxt |-
    \A x \in \pqsets{X}.
    <\na{P} | \exists z \in Z \st \at{P}(x,z)>
      \cmd
    <\na{Q} | \exists z \in Z \st \at{Q}(x,z)>
} \end{inlineproofrule}
\begin{inlineproofrule}
\infer*[right={AtomW}]{
  \STABLE \actxt |= {P'}
  \\
  \ATRIPLE[\funspec] m; \levl; \actxt |-
    <P {\,}|{\,} P'> \cmd <Q {\,}|{\,} Q'>
  \\
  \STABLE \actxt |= {Q'}
}{
  \TRIPLE[\funspec] m; \levl; \actxt |-
    {P * P'} \cmd {Q * Q'}
}
 \end{inlineproofrule}
\Cref{rule:atomic-exists-elim} says that if one can prove the command
is resilient to interference on~$z$ and does not affect the resource on~$z$
until its atomic update, then we can relax the specification to state
that the command allows changes to~$z$ and might also affect~$z$ during
the interference phase.
\Cref{rule:atomicity-weak} says that if you prove a command is atomic,
you can relax the specification not to insist on atomicity;
this can be done provided the atomic pre- and postcondition are stable,
as required for the Hoare triple to be well-defined.

The combination of \ref{rule:atomic-exists-elim} and
\ref{rule:atomicity-weak} simply states that if we can prove a
command performs an update atomically,
and the pre- and postconditions are stable,
then we can prove that the command also performs the update non-atomically.

\medskip

\omittedrulelabel*{SubPq}{rule:subst}Now let us consider the derivation for \cref{dist-client:left-lock-2},
which lifts the specification of CLH lock to the context of the client:
\begin{equation*}
\footnotesize \infer*[Right=\ref{rule:atomic-exists-elim},vcenter]{
    \infer*[Right=\ref{rule:lift-atomic}]{
      \infer*[Right=\ref{rule:frame}]{
        \infer*[Right=\ref{rule:subst-atomic}]{
          \ATRIPLE \topOf |-
            \A l\in\set{0,1} \eventually \set{0}.
              <\ap{L}(\p{x},l)>\code{lock(x)}<\ap{L}(\p{x},1) \land l=0>
        }{
          \ATRIPLE \topOf |-
            \A l\in\set{0,1} \eventually \set{0}, d\in\Bool.
              <\ap{L}(\p{x},l)>\code{lock(x)}<\ap{L}(\p{x},1) \land l=0>
        }
      }{
        {\ATRIPLE \topOf |-
          \A l\in\set{0,1} \eventually \set{0}, d\in\Bool.
            <\rInt(\region{dc}{\rid}(\pvar{x}, \pvar{done}, l, d)) >
              \code{lock(x)}
            <\rInt(\region{dc}{\rid}(\pvar{x}, \pvar{done}, 1, d))
            * \gOblA{\obl{k}}{\rid}>}
      }
    }{
      {\ATRIPLE \topOf |-
        \A l\in\set{0,1} \eventually \set{0}, d\in\Bool.
          < \emp |
            \region{dc}{\rid}(\pvar{x}, \pvar{done}, l, d) >
            \code{lock(x)}
          < \gOblA{\obl{k}}{\rid} |
            \region{dc}{\rid}(\pvar{x}, \pvar{done}, 1, d) >}
    }
  }{
    {\ATRIPLE \topOf |-
      \A l\in\set{0,1} \eventually \set{0}.
        < \emp |
          \exists d \st
            \region{dc}{\rid}(\pvar{x}, \pvar{done}, l, d) >
          \code{lock(x)}
        < \gOblA{\obl{k}}{\rid} |
          \exists d \st
            \region{dc}{\rid}(\pvar{x}, \pvar{done}, 1, d) >}
  }
\end{equation*}\Cref{rule:subst-atomic} simply gives a way to
manipulate the pseudo-quantified variable and its domain:
\begin{inlineproofrule}
  \infer*[right={SubPqA}]{
  \label{rule:subst-atomic}
    f \from X \to Y
    \\
    Y' = f(X')
    \\\\
    \forall x\in X\st
      \VALID \actxt |= P'(x) {\iff} P(f(x))
    \\
    \forall x\in X\st
      \VALID \actxt |= Q(f(x)) {\iff} Q'(x)
    \\\\
    \ATRIPLE[\funspec] m; \levl ; \actxt |-
      \A y \in Y \eventually[k] Y'.
      < P(y) >
        \cmd
      < Q(y) >
  }{
    \ATRIPLE[\funspec] m; \levl ; \actxt |-
      \A x \in X \eventually[k] X'.
      < P'(x) >
        \cmd
      < Q'(x) >
  }
\end{inlineproofrule}
These are manipulations that would normally be carried out using consequence, but need to be done specially since the pseudo-quantifier is a component
of triples and not of assertions.
In our example we simply use it to remove the unused variable $d$,
choosing $f\from \set{0,1}\times\Bool \to \set{0,1}$
to be the first projection.

The interesting step of the derivation of \cref{dist-client:left-lock-2} is
the application of \cref{rule:lift-atomic}:
\begin{inlineproofrule}
\footnotesize
  \specsdelim{big}
\infer*[right={LiftA}]{
\rid \in \dom(\actxt) \implies R = \mathop{id}
  \\
  \lsafe \lvl;\actxt |= {P(x), Q(x,z)}
  \\\\
  \oblfree \lvl{+}1;\actxt |= {P(x), Q(x,z)}
  \\
  \Set{
    ((x, O_1) , (z, O_2)) |
      x\in X \land R(x,z)
  }
  \subseteq
    \regLTS[\rt]( G )
  \\\\
  \phantom{ {+} 1}
  \ATRIPLE[\funspec] m; \lvl; \actxt |-
    \A x \in \pqsets{X}.
      < \rInt(\region[\lvl]{\rt}{\rid}(x))
        \ssep \guardA{G}{\rid} * P(x) * \locObl{O_1}{\rid} >
      \cmd
      < \exists z\st
          \rInt(\region[\lvl]{\rt}{\rid}(z)) \ssep
          Q(x,z) \land R(x,z) * \locObl{O_2}{\rid} >
}{
  \ATRIPLE[\funspec] m; \lvl {+} 1; \actxt |-
    \A x \in \pqsets{X}.
      < \locObl{O_1}{\rid} |
\region[\lvl]{\rt}{\rid}(x)
* \guardA{G}{\rid} * P(x) >
      \cmd
      < \locObl{O_2}{\rid} |
        \exists z\st
\region[\lvl]{\rt}{\rid}(z)
* Q(x,z) >
      \phantom{{}\land R(x,z)}
}
 \end{inlineproofrule}
Let us unpack it.
The purpose of the rule is to take an atomic specification that
applies to some resource, and lift it to the effect the atomic update
has on some region that contains that resource in its interpretation.
In our example, it says:
you have proven that the command locks the lock;
the lock is part of the interpretation of
$\region{dc}{\rid}(s, \pvar{x}, \pvar{done}, l, d)$
and the update to the lock
amounts to going from the interpretation with $l=0$ to the interpretation with $l=1$.
The rule needs to make sure that the region update is among the ones permitted by the associated protocol.
It does so by checking:
\begin{enumerate}
\item that there is a transition in the protocol matching the update;
\item that such transition is guarded by a guard that is owned;
\item that the local obligations are updated as the protocol mandates.
\end{enumerate}
To check the first condition, the rule uses a relation~$R$ between abstract states of the region;
by the fourth premise,~$R$ can only include updates
that the owner of~$G$ is allowed to perform.
The second condition is enforced by requiring the precondition to own~$G$.
The third condition is ensured by going from owning~$O_1$ to owning~$O_2$,
which, according to the fourth premise, is the expected update of obligations.
In our example we have $O_1 = \oblZero$ and $O_2 = \obl{k}$
and the update matches transition~\eqref{eq:dist-client-interf-lock}.
The third premise uses the abbreviation 
`$\oblfree \lvl;\actxt |= P$' which stands for
$\VALID \actxt |= P \implies \empObl[\lvl]$.
This implies that~$P$ and~$Q$ cannot own obligations of $\rid$, and so
$O_1$ and $O_2$ capture the whole of the updated obligations.
Note that because of the well-formedness restrictions on triples,
in the conclusion of the rule the obligations are transferred to the Hoare
pre/post-conditions: there they belong to a closed region.
The first premise says: if the region we are updating is tracked by
the atomicity context, this needs to be a trivial update,
or else it would count as a linearisation point 
(which is instead handled using \cref{rule:update-region}).
In our example $\actxt=\emptyset$ as we are not proving atomicity of the client,
so we are allowed any protocol compliant update.

Finally, the second premise $\lsafe \lvl;\actxt |= {Q(x,z)}$ requires~$Q$
to preserve its meaning at level $\lvl{+}1$.
The formal definition of \pre\lvl-safety is given in~\cref{app:lsafe};
all the \pre\lvl-safety conditions in our proofs
can be immediately discharged by applications of the following lemma.

\begin{lemma}
\label{lemma:lsafe}
  The properties below hold, for arbitrary~$\lvl\in\Level$:
  \begin{enumerate}
    \item $\emp$, $\vexp_1\mapsto \vexp_2$ and $\bexp$ are \pre\lvl-safe.
    \item $\guardA{G}{\rid}$ and $\locObl{O}{\rid}$ are both \pre\lvl-safe.
    \item If\/ $\lvl' < \lvl$,
          then $
              \region[\lvl']{\rt}{\rid}(a) *
              \envObl{O}{\rid}
          $ is \pre\lvl-safe.
          \label{lm:lsafe-closed}
    \item If\/ $P,Q$ are both \pre\lvl-safe,
          then so are
            $P \land Q$, $P \lor Q$, and $P*Q$.
    \item If $P(v)$ is \pre\lvl-safe for all $v\in\AVal$,
          then $\exists x\st P(x)$ is \pre\lvl-safe.
  \end{enumerate}
\end{lemma}

\omittedrulelabel*{$\exists$Elim}{rule:exists-elim}
\omittedrulelabel*{Cons}{rule:consequence}

\begin{figure}[p]
\centering
  \expandpqsets \begin{proofrule*}
    \infer*[right={LiveC}]{
  \ENVLIVE n; \lvl; \actxt |- L -M->> T
  \\
  m \laygeq n \quad k \laygeqq n
  \\
  \forall x \in X \st \VALID \lvl; \actxt |= \at{P}(x) * T \implies x \in X'
  \\\\ \ATRIPLE[\funspec] m; \lvl; \actxt |-
    \A x \in X \eventually[k] X'.
      <\na{P}| \at{P}(x)> \cmd \E y. <\na{Q}(x, y)|\at{Q}(x)>
}{\phantom{{}\eventually[k]X'}
  \ATRIPLE[\funspec] m; \lvl; \actxt |-
    \A x \in X.
      <\na{P} * L | \at{P}(x)> \cmd \E y. <\na{Q}(x, y) * L | \at{Q}(x)>
}
     \label{rule:liveness-check}
  \end{proofrule*}
  \par\medskip
  \hspace{-1em}
  \begin{proofrule*}
    \infer*[right=While]{\forall \beta \le \beta_0 \st
    \ENVLIVE m(\beta); \lvl; \actxt |- L - M ->> T(\beta)
  \\\forall \beta \le \beta_0 \st
    \VALID \actxt |= \minLay{P(\beta)}{m(\beta)} \layleq m
  \\\\\forall \alpha\st
    \STABLE \actxt |=
      {\exists \alpha'\st L * M(\alpha') \land \alpha' \leq \alpha}
  \\\progvars(T,L,M) \cap \modvars(\cmd) = \emptyset
  \\\\\forall \beta \le \beta_0 \st
    \forall b \in \Bool \st
    \TRIPLE[\funspec] m; \lvl; \actxt |-
        {P(\beta) *  (b \dotimplies T(\beta)) \land \bexp}
        \cmd
        {\exists \gamma \st P(\gamma)
          \land \gamma \leq \beta
          * (b \dotimplies \gamma < \beta)}
}{\TRIPLE[\funspec] m; \lvl; \actxt |-
    {P(\beta_0) * L}
      {\acode{while(BEXP)\{CMD\}}}
    {\exists \gamma \st  P(\gamma) * L
      \land \neg \bexp
      \land \gamma \leq \beta_0}
}
     \label{rule:while}
  \end{proofrule*}
  \par\medskip
  \begin{proofrule*}[.6\textwidth]
    \infer*[right={Par}]{
  \TRIPLE[\funspec] m_1; \levl; \actxt |- {P_1}{\cmd_1}{Q_1}
  \\
  \VALID \actxt |= \minLay{Q_1}{m_2} \layleq m
  \\\\
  \TRIPLE[\funspec] m_2; \levl; \actxt |- {P_2}{\cmd_2}{Q_2}
  \\
  \VALID \actxt |= \minLay{Q_2}{m_1} \layleq m
}{\TRIPLE[\funspec] m; \levl; \acontext |-
    {P_1 \ssep P_2}
      {\cmd_1 \parallel \cmd_2}
    {Q_1 \ssep Q_2}
}
     \label{rule:parallel}
  \end{proofrule*}\begin{proofrule*}[.4\textwidth]
    \infer*[right={LayWH}]{
  m_1 \layleq m_2
  \\\\
  \TRIPLE[\funspec] m_1; \lvl ; \actxt |- {P} \cmd {Q}
}{
  \TRIPLE[\funspec] m_2; \lvl ; \actxt |- {P} \cmd {Q}
}
     \label{rule:layer-weak}
  \end{proofrule*}
  \par\medskip
  \begin{proofrule*}
    \infer*[right={Frame}]{
  \forall x \in X \st
    \VALID \actxt |= \minLay{\na{R} * \at{R}(x)}{m}
  \\
  \progvars(\na{R}) \inters \modvars(\cmd) = \emptyset
  ,
  \progvars(\at{R}(x)) = \emptyset
  \\
  \STABLE \actxt |= { \na{R} }
  \\
  \forall x \in X\st
    \STABLE \actxt |= { \at{R}(x) }
  \\
  \oblfree \lvl;\actxt |= {\at{R}(x)}
  \\\\
  \ATRIPLE[\funspec] m; \lvl; \actxt |-
    \A x \in \pqsets{X}.
    <\na{P}\phantom{{} * \na{R}} | \at{P}(x)\phantom{{} * \at{R}(x)}>
      \cmd
    \E y.<\na{Q}(x,y)\phantom{{} * \na{R}} | \at{Q}(x,y)\phantom{{} * \at{R}(x)}>
}{
  \ATRIPLE[\funspec] m; \lvl; \actxt |-
    \A x \in \pqsets{X}.
    <\na{P} * \na{R} | \at{P}(x) * \at{R}(x)>
    \cmd
    \E y.<\na{Q}(x,y) * \na{R} | \at{Q}(x,y) * \at{R}(x)>
}
     \label{rule:frame}
  \end{proofrule*}
\par\medskip
\begin{proofrule*}\infer*[right={AtomW}]{
  \STABLE \actxt |= {P'}
  \\
  \ATRIPLE[\funspec] m; \levl; \actxt |-
    <P {\,}|{\,} P'> \cmd <Q {\,}|{\,} Q'>
  \\
  \STABLE \actxt |= {Q'}
}{
  \TRIPLE[\funspec] m; \levl; \actxt |-
    {P * P'} \cmd {Q * Q'}
}
     \label{rule:atomicity-weak}
  \end{proofrule*}
  \par\medskip
  \begin{proofrule*}
    \begingroup
\specsdelim{big}
\infer*[right={MkAtom}]{
  \lvl < \lvlp
  \\
  \rid \notin \dom(\actxt)
  \\
  \actxt' = \actxt\map{r -> (X,k,X',T)}
  \\\\
  T \subseteq \regLTS_{\rt}( G )
  \\
  R = \iorel(T)
  \\
  \forall x\in X \st
  \STABLE \actxt |= {
    \region[\lvl]{\rt}{\rid}(x) * \guardA{G}{\rid}
  }
  \\\\
  \TRIPLE[\funspec] m; \lvlp; \actxt' |-
    {\exists x \in X \st
      \region[\lvl]{\rt}{\rid}(x)
      * \done{\rid}{\blacklozenge}}
    \cmd
    {\exists x, y \st
        R(x,y) \land \done{\rid}{(x,y)}}
}{
  \ATRIPLE[\funspec] m; \lvlp ; \actxt |-
    \A x \in X \eventually[k] X'.
      < \region[\lvl]{\rt}{\rid}(x) * \guardA{G}{\rid} >
        \cmd
      < \exists y\st \region[\lvl]{\rt}{\rid}(y) * \guardA{G}{\rid} \land R(x,y) >
}
\endgroup
     \label{rule:make-atomic}
  \end{proofrule*}
  \par\medskip
  \let\proofrulesize\footnotesize
\begin{proofrule*}
    \specsdelim{Big}
\infer*[right={UpdReg}]{
  r \in \dom(\actxt)
  \\
  \actxt' = \actxt\map{r->\bot}
  \\
  \lsafe \lvl;\actxt |= {P(x), Q_1(x,y), Q_2(x,y)}
  \\\\
  \oblfree \lvl{+}1;\actxt |= {P(x), Q_1(x,y), Q_2(x,y)}
  \\
  \Set{((x, O_1) , (z, O_2(x))) | x\in X }
    \subseteq \trrel(\actxt, r)
  \\\\
  \hspace{3em}{\ATRIPLE[\funspec] m; \levl; \actxt' |-
    \A x \in \pqsets{X}.
    < \rInt(\region[\levl]{\rt}{\rid}(x)) * P(x) * \locObl{O_1}{\rid} >
      \cmd
    < \exists z\st
      \rInt(\region[\levl]{\rt}{\rid}(z)) *
      \locObl{O_2(x)}{\rid} *
      \left(
        \begin{array}{@{}r@{}l@{}}
             & R(x,z) \land Q_1(x, z) \\
        \lor & \ghost{R(x,z)}{x = z} \land Q_2(x)
        \end{array}
      \right)
    >}
}{
{\ATRIPLE[\funspec] m; \levl {+} 1;\actxt |-
    \A x \in \pqsets{X}.
      < \locObl{O_1}{\rid} |
        \region[\levl]{\rt}{\rid}(x) * P(x) * \done{\rid}{\blacklozenge} >
      \cmd
      < \locObl{O_2(x)}{\rid} |
        \exists z\st
        \region[\levl]{\rt}{\rid}(z) *
        \left(
          \begin{array}{@{}r@{}l@{}}
               & Q_1(x, z) * \done{\rid}{(x,z)} \\
          \lor & \ghost{Q_1(x, z)}{Q_2(x)}
                  * \done{\rid}{\blacklozenge}
          \end{array}
        \right)
      >}
}
     \label{rule:update-region}
  \end{proofrule*}
  \let\proofrulesize\small
  \par\medskip
  \begin{proofrule*}
    \specsdelim{big}
\infer*[right={LiftA}]{
\rid \in \dom(\actxt) \implies R = \mathop{id}
  \\
  \lsafe \lvl;\actxt |= {P(x), Q(x,z)}
  \\\\
  \oblfree \lvl{+}1;\actxt |= {P(x), Q(x,z)}
  \\
  \Set{
    ((x, O_1) , (z, O_2)) |
      x\in X \land R(x,z)
  }
  \subseteq
    \regLTS[\rt]( G )
  \\\\
  \phantom{ {+} 1}
  \ATRIPLE[\funspec] m; \lvl; \actxt |-
    \A x \in \pqsets{X}.
      < \rInt(\region[\lvl]{\rt}{\rid}(x))
        \ssep \guardA{G}{\rid} * P(x) * \locObl{O_1}{\rid} >
      \cmd
      < \exists z\st
          \rInt(\region[\lvl]{\rt}{\rid}(z)) \ssep
          Q(x,z) \land R(x,z) * \locObl{O_2}{\rid} >
}{
  \ATRIPLE[\funspec] m; \lvl {+} 1; \actxt |-
    \A x \in \pqsets{X}.
      < \locObl{O_1}{\rid} |
\region[\lvl]{\rt}{\rid}(x)
* \guardA{G}{\rid} * P(x) >
      \cmd
      < \locObl{O_2}{\rid} |
        \exists z\st
\region[\lvl]{\rt}{\rid}(z)
* Q(x,z) >
      \phantom{{}\land R(x,z)}
}
     \label{rule:lift-atomic}
  \end{proofrule*}
  \par\medskip
  \begin{proofrule*}
    \infer*[right={A$\exists$Elim}]{
  \ATRIPLE[\funspec] m; \lvl; \actxt |-
    \A x \in \pqsets{X}, z \in Z.
    <\na{P} | \at{P}(x,z)>
      \cmd
    <\na{Q} | \at{Q}(x,z)>
  \phantom{, z \in Z}}{
  \ATRIPLE[\funspec] m; \lvl; \actxt |-
    \A x \in \pqsets{X}.
    <\na{P} | \exists z \in Z \st \at{P}(x,z)>
      \cmd
    <\na{Q} | \exists z \in Z \st \at{Q}(x,z)>
}     \label{rule:atomic-exists-elim}
  \end{proofrule*}
\caption{Key \tadalive\ rules.
    Abbreviations:\\
    $ \VALID \actxt |= \minLay{P}{k} $
    means
    $ \forall r\in\RId \st
        \VALID \actxt |= P \implies \minLay{r}{k} $;
    \\
    $\lsafe \lvl;\actxt |= P$ can be discharged using \cref{lemma:lsafe};
    \\
    $ \oblfree \lvl;\actxt |= P $
    means
    $\VALID \actxt |= P \implies \empObl[\lvl]$;\\
    $k \laygeqq n$ means $(\forall k'\laygt k \st k' \laygeq n)$.
}
  \label{fig:proof-rules}
\end{figure}

\subsection{The \ref{rule:liveness-check} rule}

In a \tada\ safety proof, the derivations of
\cref{dist-client:left-lock-1}
and
\cref{dist-client:left-lock-2}
could be plugged together: the safety specification of the \code{lock} operation
does not contain the $\set{0,1}\eventually \set{0}$ component,
and the premise of 
\cref{dist-client:left-lock-1}
matches exactly the conclusion of
\cref{dist-client:left-lock-2}
(modulo framing~$L$, which would anyway not be used in a safety proof).
In \tadalive, the discrepancy between the two steps expresses the need
for a termination argument for this call.
What needs to be proven is the fact that,
in the current context of the $\rt[dc]$ protocol,
during the interference phase of this call to \code{lock(x)},
the environment will always eventually unlock the lock.
The \ref{rule:liveness-check} rule allows to remove the liveness condition
of the specification, in a context where the corresponding liveness invariant
can be proven to hold:
\begin{inlineproofrule}
  \infer*[right={LiveC}]{
  \ENVLIVE n; \lvl; \actxt |- L -M->> T
  \\
  m \laygeq n \quad k \laygeqq n
  \\
  \forall x \in X \st \VALID \lvl; \actxt |= \at{P}(x) * T \implies x \in X'
  \\\\ \ATRIPLE[\funspec] m; \lvl; \actxt |-
    \A x \in X \eventually[k] X'.
      <\na{P}| \at{P}(x)> \cmd \E y. <\na{Q}(x, y)|\at{Q}(x)>
}{\phantom{{}\eventually[k]X'}
  \ATRIPLE[\funspec] m; \lvl; \actxt |-
    \A x \in X.
      <\na{P} * L | \at{P}(x)> \cmd \E y. <\na{Q}(x, y) * L | \at{Q}(x)>
}
 \end{inlineproofrule}

The first premise 
$
 \ENVLIVE n;\lvl,\actxt |- L -M->> T 
$
is called the \emph{environment liveness condition},
and it roughly corresponds to checking
$
  \Always\,L \implies \AlwaysEv\,T
$
(with $M$ acting as a certificate of the property holding, explained later).
Here, we pick:
\begin{align}
  L &\is \exists l\in\set{0,1}, d \st
            \region{dc}{\rid}(\pvar{x}, \pvar{done}, l, d)
          * \pars[\big]{l = 1 \dotimplies \envObl{\obl{k}}{\rid}}
  \label{dist-client:livec-L}
  \\
  T &\is \exists d \st
            \region{dc}{\rid}(\pvar{x}, \pvar{done}, 0, d)
  \label{dist-client:livec-T}
\end{align}
and we can conclude $ \Always\,L \implies \AlwaysEv\,T $ because
when~$T$ does not hold, i.e.~$l=1$,
we know, from~$L$, that $\envObl{\obl{k}}{\rid}$ holds;
if we can show $\obl{k}$ is live,
the protocol says that if the environment holds~$\obl{k}$,
it will eventually fulfil it;
under the protocol the transition fulfilling it is setting~$l=0$
which brings us to~$T$.
The environment can always set~$l=1$ again after that,
but the same argument then applies.

To show $\obl{k}$ is live we have to look at the layers.
Here we have $m=\topOf$ and $k=\botOf=\lay(\obl{k})$.
Recall that $k \laygeqq n$ holds if
$\forall k'\laygt k \st k' \laygeq n$.
We can therefore set $n=\topOf$:
$\botOf \laygeqq \topOf$ holds since $\forall k'\laygt \botOf \st k' \laygeq \topOf$.
Since we don't own any obligation locally ($\obl{d}$ has been framed, recording this fact in the context layer $m$) we can consider $\obl{k}$ live when proving the environment liveness condition.

The environment liveness condition is the central component of both
\ref{rule:liveness-check} and \ref{rule:while};
we explain it in depth now,
and then resume our proof of the distinguishing client.

\subsection{The Environment Liveness Condition}
\label{sec:env-live}

\begin{figure}[tbp]
\begin{proofrule*}
  \infer*[right={EnvLive}]{
  \STABLE \lvl; \actxt |= L
  \\
  \VALID \lvl; \actxt |=
    L \implies L * \exists \alpha\st M(\alpha)
  \\\\
  \ENVLIVE m; \lvl; \actxt |- L * M(\alpha) : L * M(\alpha) -->> T
}{
  \ENVLIVE m; \lvl; \actxt |- L - M ->> T
}
   \label{rule:envlive}
  \end{proofrule*}
  \begin{proofrules}
    \infer*[right=ECase,vcenter]{
  \ENVLIVE m; \levl; \actxt |-
    L(\alpha) : L_1(\alpha) -->> T
  \\\\
  \ENVLIVE m; \levl; \actxt |-
    L(\alpha) : L_2(\alpha) -->> T
}{
  \ENVLIVE m; \levl; \actxt |-
    L(\alpha) : L_1(\alpha) \lor L_2(\alpha) -->> T
}
     \label{rule:envlive-case}
  \and
    \infer*[right=EQuant,vcenter]{
  \forall x \in X \st
  \ENVLIVE m; \levl; \actxt |-
    L(\alpha) : L(x, \alpha) -->> T
}{
  \ENVLIVE m; \levl; \actxt |-
    L(\alpha) : \exists x \in X \st L(x, \alpha) -->> T
}
     \label{rule:envlive-quant}
  \end{proofrules}
  \begin{proofrule*}
    \infer*[Right=LiveT,vcenter]{
  \forall \alpha. \VALID \actxt |= T'(\alpha) \implies T
}{
  \ENVLIVE m; \lvl; \actxt |-
    L(\alpha) : T'(\alpha) -->> T
}
     \label{rule:envlive-target}
  \end{proofrule*}
  \begin{proofrule*}
  \infer*[right=LiveO]{
  \decr[\actxt](L', L, T)
  \\
  \forall \alpha\st
    \VALID \actxt |= \minLayStrict{L'(\alpha)}{\lay(O(x))}
  \\\\
  \lvl < \lvlp
  \\
  \forall \alpha \st
    \SAT[\actxt] |- L'(\alpha) \implies
    \exists x\st \region[\lvl]{\rt}{\rid}(x) * \envObl{O(x)}{\rid} * \True
    \land m \laygt \lay(O(x))
}{ 
\ENVLIVE m; \lvlp; \actxt |-
  L(\alpha) :
  L'(\alpha)
  -->> T
}
   \label{rule:envlive-obl}
  \end{proofrule*}
  \begin{proofrule*}
  \infer*[right=LiveA]{
  \decr[\actxt](L', L, T)
  \\
  m \laygt k
  \\
  \forall \alpha\st
    \VALID \actxt |= \minLayStrict{L'(\alpha)}{k}
  \\\\
  (X \eventually[k] X') = \live(\actxt,r)
\\
  \lvl < \lvlp \\
  \SAT[\actxt] |- L'(\alpha) \implies
    \exists x \in X \setminus X' \st
    \region[\lvl]{\rt}{\rid}(x)
    * \done{r}{\lozenge}
    * \True
}{
  \ENVLIVE m; \lvlp; \actxt |-
    L(\alpha) :
    L'(\alpha)
    -->> T
}
   \label{rule:envlive-pq}
  \end{proofrule*}
  \caption{Environment Liveness Condition Rules}
  \label{fig:envlive}
\end{figure}

The essence of the termination argument is captured
in \labelcref{rule:liveness-check,,rule:while}
by the conditions of the form
$ \ENVLIVE m; \lvl; \actxt |- L - M ->> T $.
They establish ``always eventually~$T$ holds'' facts.
The condition is parametrised by~$L$,
an assertion that holds at any point in the traces
we are considering, an assertion $T$, characterising the so-called \emph{target}
states, and an assertion~$M(\alpha)$ parametric on some ordinal $\alpha$,
which represents the environment progress measure.
Intuitively, the condition states that,
from any state satisfying $L * M(\alpha)$, for some $\alpha$,
we can find an environment transition that \emph{must eventually} happen
that would take us either to $T$,
or to some state satisfying $L * M(\alpha')$ with $\alpha' < \alpha$.
Additionally, any transition from $L$~to~$L$ that \emph{may} happen
does not strictly increase the progress measure,
unless they end in a target state.
The transitions that must happen are characterised by being those
that either:
\begin{enumerate*}[label=(\arabic*)]
  \item fulfil some obligation known to be in the environment and with layer 
    lower than the ones we may hold locally, or
  \item fulfil some environment liveness assumption stored in $\actxt$
    with layer lower than the ones we may hold locally.
\end{enumerate*}
This entails that, under an environment that always eventually fulfils the obligations we are assuming live, $\Always\,L \implies \AlwaysEv\,T$ holds,
as desired.

In the \ref{rule:while} rule, an environment liveness condition
is combined with the condition
\[
  \forall \alpha\st
    \STABLE \actxt |=
      {\exists \alpha'\st L * M(\alpha') \land \alpha' \leq \alpha}
\]
which requires us to prove that any protocol compliant step from a state
satisfying $L * M(\alpha_0)$ for some $\alpha_0$ will take us to a state
satisfying $L * M(\alpha_1)$ for some $\alpha_1 \leq \alpha_0$.
In other words, in traces satisfying $\Always\,L$ the progress measure
never increases.
This, in conjunction with 
$ \ENVLIVE m; \lvl; \actxt |- L - M ->> T $,
entails $\Always\,L \implies \Finally\,T$,
as needed for soundness of \cref{rule:while}.

Take the environment liveness condition required by the application
of \ref{rule:liveness-check} in the proof of the distinguishing client.
Given $m = \topOf$ and $M(\alpha) = (\alpha = 0)$ we have to prove:
\[
  \ENVLIVE m; \lvl; \actxt |- 
    \exists l\in \set{0,1} \st
      \region{dc}{\rid}(\pvar{x}, \pvar{done}, l, \wtv) *
      \pars[\big]{l = 1 \dotimplies \envObl{\obl{k}}{\rid}}
  - M ->>
      \region{dc}{\rid}(\pvar{x}, \pvar{done}, 0, \wtv)
\]
That is, during the interference phase, we know that at any point in time
the lock will be in some state $l \in \set{0,1}$;
we want to prove that the environment will always eventually set~$l$~to~$0$.
Here this is particularly easy to show:
$L$ states that when $l = 1$ the obligation $\obl{k}$
is held by the environment;
since $\lay(\obl{k}) = \botOf \laylt \topOf = m$
(and $L$ does not hold obligations)
we can assume the obligation will be eventually fulfilled;
the only transition that can fulfil it, is the one that sets $l=0$,
so in exactly one such step we reach $T$.
This justifies the trivial definition of $M$:
we do not need to keep track of progress towards~$T$ as we reach it
in exactly one of the steps that \emph{must eventually} happen.

\begin{figure}[tb]
  \centering \begin{tikzpicture}[
  invariant/.style = {
    draw,thick,rounded corners=2pt,
  },
  target/.style = {
    invariant,
    fill = green!10
  },
  state/.style = {
    circle, fill=black,
    minimum size = 3pt,
    inner sep=0pt,
    outer sep=1pt,
  },
  alpha/.style args = {#1 #2}{
label={[font=\small,budget,inner sep=0pt,outer sep=0pt]#2:{\strut$\alpha_#1$}}
  },
  potential/.style = {
    dotted,oblig
  },
  obl/.style = {
    fill=white,circle,
    inner sep=.5pt,
  },
  pq/.style = {
    obl,inner sep=0pt,pos=.4
  },
  over/.style={
    preaction={draw=white,opacity=.8,line width=3pt,-, shorten <=3pt},
  },
  every edge/.style={
    over, draw
  },
]

\node[invariant,thick,fit={(0,0.5) (7.5,6)}] (L) {};
\useasboundingbox;
\node[below left] at (L.north west) {$L$};
\node[target,circle,minimum size=8em] (T) at (5,3) {};
\node[above right=.7] at (T) {$T$};

\draw[invariant]
  (L.182) -- (T.182) node[pos=0,above right]{$L_1$}
  (L.-70) -- (T.270) node[at=(L.south west),above right]{$L_2$}
  (L.20) -- (T.30) node[at=(L.south east),above left]{$L_3$}
;

\path (L.north west)
  ++ (-50:2.3) node[state,label=left:$w_0$,alpha=1 below] (x1) {}
++ (10:1)    node[state,alpha=1 below] (x2) {}
  +  (40:1)    node[state,alpha=1' above right] (x2p) {}
  ++ (-15:1)   node[state,alpha=1 182] (x3) {}
  +  (-30:1)   node[state] (x3p) {}
  ++ (230:1)   node[state,alpha=1 175] (x4) {}
  ++ (-100:1)  node[state,alpha=2 190] (x5) {}
  +  (130:1.1)  node[state,alpha=2'\; left] (x5p) {}
  ++ (-40:1)   node[state,alpha=2 below] (x6) {}
  ++ (20:1)    node[state,alpha=3 above] (x7) {}
  ++ (-80:1)  node[state,label=left:$w_1$] (x8) {}
  ++ (-10:1)  coordinate (x9)
;

\draw[->,bend left,thick]
  (x1) edge (x2)
  (x2) edge[potential] (x2p)
edge (x3)
  (x3) edge[potential] node[obl]{$\obl{a}$} (x3p)
       edge (x4)
  (x4) edge[oblig,bend right] node[obl]{$\obl{a}$} (x5)
  (x5) edge[potential,bend left] node[obl]{$\obl{b}$} (x5p)
       edge[bend right] (x6)
  (x6) edge[oblig,bend right] node[obl]{$\obl{b}$} (x7)
  (x7) edge (x8)
;
\draw[dashed,over,thick,->] (x9)
  (x8) to[bend left] (x9)
  to[out=-40,in=-30,looseness=3] ([xshift=.5cm,yshift=-.3cm]T)
;

\node[left=.5em,align=right,budget] at (L.30) {
$\alpha_1 > \alpha_2, \alpha_1'$\\
  $\alpha_3 > \alpha_2 > \alpha_2'$
};

\end{tikzpicture}   \caption{Illustration of \cref{rule:envlive} and the $\decr$ condition.}
  \label{fig:envlive-pic}
\end{figure}

The environment liveness condition can be proven
using the rules in \cref{fig:envlive}.
The only rule that applies directly is \ref{rule:envlive},
which checks that in a state satisfying~$L$ one can always measure progress
(second premise),
and then asks to discharge an auxiliary judgement of the form
$ {\ENVLIVE m; \lvl; \actxt |- L(\alpha):L(\alpha) - M ->> T} $
which is best explained with the help
of the illustration in \cref{fig:envlive-pic}.
The condition works under the hypothesis that the assertion~$L$ holds
at any point of the traces under consideration,
so in the picture we are considering infinite sequences of steps
within the outer rectangle.
The target states~$T$ describe some subset of~$L$,
which we want to show is visited infinitely often\footnote{Notice that ``$T$ is visited infinitely often''
  is equivalent to~$\AlwaysEv{T}$.}
by any infinite trace that
complies with the liveness rely as
specified by the region protocols and pseudo-quantifiers.
\Cref{rule:envlive-case} allows the splitting of the invariant~$L$ into
a disjunction of, say, $L_1, L_2,L_3$ and~$T$ as in the picture.
We need to prove there is going to be
eventual progress towards reaching~$T$
from each of these cases.
If we start from~$T$ we already reached the target,
and this case can be discharged by \cref{rule:envlive-target}.
The other cases are covered by
\Cref{rule:envlive-obl} which justifies progress
by appealing to an environment-owned atomic obligation~$O$
which is live~(premises two and three);
and
\Cref{rule:envlive-pq} which justifies progress
by appealing to a liveness assumption stored in the atomicity context.
The \ref{rule:envlive-quant} rule
is a generalisation of \cref{rule:envlive-case}.

To see how progress is justified,
consider the trace of \cref{fig:envlive-pic} starting from~$w_0$.
Assume the progress measure at~$w_0$ is~$\alpha_1$
(i.e.~$L*M(\alpha_1)$ holds in~$w_0$).
Each case $L_i$ can be discharged with either
\cref{rule:envlive-obl} or \cref{rule:envlive-pq}.
Imagine $L_1$ is discharged using \ref{rule:envlive-obl}:
the rule requires us to find an obligation $\obl{a}$ which,
in every state of $L_1$,
is necessarily owned by the environment ($\envObl{\obl{a}}{\rid}$)
for some region $\region{t}{\rid}(\wtv)$.
Then the $\decr$ condition checks that
the progress measure will \emph{improve}
when the environment will fulfil~$\obl{a}$; formally:

\begin{definition}[\/$\decr$]
\label{def:impr}\label{def:decr}
  Given assertions $L(\alpha)$, $L'(\alpha)$ and $T$,
the condition
  $\decr[\actxt](L', L, T)$
holds if and only if, for arbitrary $\store \in \Store$, letting
\begin{align*}
        l(\alpha) &= \WorldSem{\actxt}{\store}{L(\alpha)} &
        l'(\alpha) &= \WorldSem{\actxt}{\store}{L'(\alpha)} &
        t &= \WorldSem{\actxt}{\store}{T * \True}
\end{align*}
the following holds:
\[
\forall \alpha_1, \alpha_2 \ge \alpha_1 \st
{\relyAt[\actxt]}(l'(\alpha_1)) \inters l(\alpha_2) \subseteq l'(\alpha_1) \union t
\]
 \end{definition}

Intuitively, the $\decr$ condition
considers an arbitrary transition $(w_1, w_2)$
from the current case~$L'$ to~$L$, obeying the atomic rely
(thus allowed by the safety constraints of the protocols).
It then compares the progress measure~$\alpha_1$ and~$\alpha_2$,
taken before and after the transition,
checking that:
\begin{enumerate}[itemsep=0pt,parsep=0pt]
  \item the measure strictly improved ($\alpha_1 > \alpha_2$); or
  \item the measure stalled ($\alpha_1 = \alpha_2$)
        but we remained within case $L'$,
        and thus the pending obligation~$O$/liveness assumption
        are still pending; or
  \item we reached~$T$ (allowing the measure to vary arbitrarily)
\end{enumerate}

Examine the trace from~$w_0$ in \cref{fig:envlive-pic}.
While the trace stays within~$L_1$
the environment obligation~$\obl{a}$ stays unfulfilled,
(steps are labelled with the obligation they fulfil, if any)
and $\decr$ requires the measure~$\alpha_1$ to decrease,
or in the worst case stay constant.
Since~$\obl{a}$ is live, the environment will eventually fulfill it,
thus taking us outside of~$L_1$.
If such transition takes us to another case,
$L_2$~in the illustration,
$\decr$ requires the measure to strictly decrease to some $\alpha_2 < \alpha_1$.
This process cannot repeat \emph{ad libitum}:
the progress measure is an ordinal and hence well-founded.
The effect is that eventually, the only option is to reach the target.
Note that transitions that end in the target are allowed by $\decr$ to
increase the progress measure:
in the picture the transition reaching~$T$ increases the measure
from~$\alpha_2$ to~$\alpha_3$.
This allows the ``reset'' of the measure so that
the trace can go outside of~$T$ and
the whole process of reaching~$T$ again can be repeated
an unbounded number of times.

The idea behind \Cref{rule:envlive-pq} is analogous to the above description,
but progress is justified by appealing to
an environment liveness assumption stored in~$\actxt$.
The layer of the assumption needs to be lower than any layer we may be holding.
Since the environment liveness assumptions only hold
in the interference phase of an update,
the rule needs evidence that the linearisation point on~$r$
has not occurred yet, which is provided by $\done{r}{\lozenge}$.

In the proof of the distinguishing client,
the environment liveness condition
for the application of \cref{rule:liveness-check} between \cref{dist-client:left-lock-1} and \cref{dist-client:left-lock-2},
is proved by:\begin{small}\begin{equation*}
  \infer*[right={\ref{rule:envlive}},vcenter]{
    \infer*[right={\ref{rule:envlive-case}}]{
      \infer*[right={\ref{rule:envlive-target}}]{
        \forall \alpha\st\VALID \emptyset |= L_0(\alpha) \implies T
      }{
        \ENVLIVE \topOf; \emptyset |- L(\alpha) : L_0(\alpha) -->> T
      }
      \and
      \infer*[right={\ref{rule:envlive-obl}}]{
        \decr[\emptyset](L_1, L, T)
      }{
        \ENVLIVE \topOf; \emptyset |-
          L(\alpha) : L_1(\alpha) -->> T
      }
    }{
      \ENVLIVE \topOf; \emptyset |-
        L(\alpha) : L_0(\alpha) \lor L_1(\alpha) -->> T
    }
  }{
    \ENVLIVE \topOf; \emptyset |- L - M ->> T
  }
  \label{deriv:dist-client-envlive}
\end{equation*}\end{small}where $L$ and $T$ are defined in~\eqref{dist-client:livec-L} and~\eqref{dist-client:livec-T}, and~$M(\alpha)=(\alpha=0)$.
Since~$L$ trivially implies $L * \exists \alpha\st M(\alpha)$,
we can apply \ref{rule:envlive},
setting
$
  L(\alpha) = (L \land \alpha = 0)
$.
Then we apply \ref{rule:envlive-case} to
split on the value of~$l$:
$L(\alpha) = L_0(\alpha) \lor L_1(\alpha)$ where
$
  L_0(\alpha) = \region{dc}{\rid}(\pvar{x}, \pvar{done}, 0, \wtv) \land \alpha=0
$ and $
  L_1(\alpha) =
    \region{dc}{\rid}(\pvar{x}, \pvar{done}, 1, \wtv) *
    \envObl{\obl{k}}{\rid} \land \alpha=0
$.
If $l=0$ we can apply \ref{rule:envlive-target} as we are already in~$T$;
if $l=1$, $L_1$ entails~$\envObl{\obl{k}}{\rid}$
so we can apply \ref{rule:envlive-obl} with
$\region{t}{\rid} = \region{dc}{\rid}$ and $O = \obl{k}$.
The atomic obligation $\obl{k}$ is live:
$\topOf \laygt \lay(\obl{k}) = \botOf$, and~$L_1$ holds no obligations.
To check that the $\decr$ condition is satisfied,
we need to consider the transitions allowed by the protocol $\rt[dc]$:
\begin{itemize}
  \item $l=1$ to $l=1$ keeps the measure constant but keeps us in $L_1$,
  \item $l=1$ to $l=0$ brings us directly in~$T$.
\end{itemize}
Although in this case the progress measure is trivial
and the proof of the environment liveness condition simple,
the generality provided by non-trivial progress measures
is needed for more interesting examples.
For instance,
  our proofs of spin lock (\cref{examples:spin_lock})
  and CLH lock (\cref{examples:clh_lock})
do make use of this added generality.

We chose to state the $\decr$ condition as a semantic check;
while this achieves full generality,
in typical proofs the environment liveness condition
only involves a single region,
and $\decr$ can be checked by examining the region's protocol.

\subsection{The while rule}
\label{sec:while-rule}

By using the rules we described so far,
one can justify most of the proof outline of the distinguishing client
in \cref{fig:dist-client-outline}.
For instance, the proof of \code{lock(x)} for the left thread
can be reused as is to prove the \code{lock(x)} call
in the body of the loop of the right-hand thread.

The main missing step is the application of the \ref{rule:while} rule:
\begin{inlineproofrule}
  \infer*[right=While]{\forall \beta \le \beta_0 \st
    \ENVLIVE m(\beta); \lvl; \actxt |- L - M ->> T(\beta)
  \\\forall \beta \le \beta_0 \st
    \VALID \actxt |= \minLay{P(\beta)}{m(\beta)} \layleq m
  \\\\\forall \alpha\st
    \STABLE \actxt |=
      {\exists \alpha'\st L * M(\alpha') \land \alpha' \leq \alpha}
  \\\progvars(T,L,M) \cap \modvars(\cmd) = \emptyset
  \\\\\forall \beta \le \beta_0 \st
    \forall b \in \Bool \st
    \TRIPLE[\funspec] m; \lvl; \actxt |-
        {P(\beta) *  (b \dotimplies T(\beta)) \land \bexp}
        \cmd
        {\exists \gamma \st P(\gamma)
          \land \gamma \leq \beta
          * (b \dotimplies \gamma < \beta)}
}{\TRIPLE[\funspec] m; \lvl; \actxt |-
    {P(\beta_0) * L}
      {\acode{while(BEXP)\{CMD\}}}
    {\exists \gamma \st  P(\gamma) * L
      \land \neg \bexp
      \land \gamma \leq \beta_0}
}
 \end{inlineproofrule}
Let us review the main differences with the simplified
\ref{rule:while-blocking} rule presented in \cref{sec:overview}.
First, the two triples in the premises of \ref{rule:while-blocking}
(corresponding to the blocked and unblocked case)
are compressed here in a single triple:
this is convenient in proofs as the proof of the two triples
only differs on the treatment of the variant.
When $b=\p{false}$, $(b\dotimplies T) = \emp = (b\dotimplies \gamma < \beta)$,
obtaining the first triple of \ref{rule:while-blocking},
when $b=\p{true}$ we obtain the other triple.
Second, the target states~$T$ are parametrised over the variant~$\beta$:
each value of the variant may represent a different ``phase''
of the local progress of the while loop;
in each of these phases the loop may be blocked waiting for a different
set of target states to be reached.
Third, as anticipated,
the $\Always\,L \implies \Finally\,T$ condition is expressed as
the conjunction of the first and third premise.

There are two additional side-conditions.
Since $T,L$ and $M$ assert facts about arbitrary intermediate states
of an iteration, they cannot refer to any local variable that may be modified
by the body of the loop, hence the fourth premise.

The most important addition is the layer condition of the second premise.
The idea is that we should be forbidden from constantly owning
obligations of layers that we might assume live in the proof of the
environment liveness condition.
By requiring $ \minLay{P(\beta)}{m(\beta)} $ we make sure that
the loop invariant only owns obligations of layer higher than $m(\beta)$,
and the $m(\beta)$ in the context of the environment liveness condition
indicates that only layers lower than that may be assumed live.
The layer~$m$ in the context of the triple in the conclusion
is an upper bound for any layer that may be assumed live in the proof
of the loop.

Consider the application of \ref{rule:while} in the proof
of the distinguishing client.
The while loop of the right-hand thread is busy-waiting until \p{done}
is set to true.
The target states are therefore 
$
  T \is \region{dc}{\rid}(\pvar{x}, \pvar{done}, \wtv, \p{true})
$.
In this example, the target states do not depend on the variant $\beta$,
which itself is quite trivial:
when the loop is finally unblocked, it needs at most one iteration to
terminate.
The local variant can simply be $\beta = \ifte{\pvar{d}}{0}{1}$,
i.e.~when \p{d} is \p{false}
  there needs to be one unblocked iteration to terminate,
and when~\p{d} is finally \p{true}
  the loop will take no more iterations.
The loop invariant is
\[
  P(\beta) \is
    \exists l, d \st
      \region{dc}{\rid}(\pvar{x}, \pvar{done}, l, d) *
      \pars[\big]{l = 1 \dotimplies \envObl{\obl{k}}{\rid}} \land
      \pvar{d} \implies d \land
      \beta = \ifte{\pvar{d}}{0}{1}
\]
on which we can frame the stable assertion
\[
  L \is
    T \lor
    \region{dc}{\rid}(\pvar{x}, \pvar{done}, \wtv, \p{false}) 
    * \envObl{\obl{d}}{\rid}
\]
Since the loop invariant owns no obligations,
we can set $m(\beta)=\layTop=m$,
and we need to prove the environment liveness condition
$
  \ENVLIVE \layTop; \emptyset |- L - M ->> T
$;
here, as for the application of \ref{rule:liveness-check},
with the fulfilment of the environment obligation $\obl{d}$
we immediately reach the target, so $M$ can be trivial ($M(\alpha)=(\alpha=0)$).
The derivation is as follows:
\[
\small
\infer*[right={\ref{rule:envlive}}]{
  \infer*[right={\ref{rule:envlive-case}}]{
    \infer*[right={\ref{rule:envlive-target}}]{
    }{
      \ENVLIVE \layTop; \emptyset |- L(\alpha) : T -->> T
    }
    \quad
    \infer*[right={\ref{rule:envlive-obl}}]{
      \decr[\actxt](\region{dc}{\rid}(\pvar{x}, \pvar{done}, \wtv, \p{false}) * \envObl{\obl{d}}{\rid} , L, T)
    }{
      \ENVLIVE \layTop; \emptyset |- L(\alpha) : \region{dc}{\rid}(\pvar{x}, \pvar{done}, \wtv, \p{false}) * \envObl{\obl{d}}{\rid} -->> T }
  }{
  \ENVLIVE \layTop; \emptyset |- L(\alpha) : L(\alpha) -->> T }
}{
  \ENVLIVE \layTop; \emptyset |- L - M ->> T
}
\]
where
$L(\alpha) = L \land \alpha=0$.
We split~$L$ into two cases using \ref{rule:envlive-case}.
In the first case~$T$ holds, so \ref{rule:envlive-target} applies.
In the second, $d = \p{false}$ and since $\lay(\obl{d}) = \topOf < \top$,
$\obl{d}$ is a live obligation.
The $\decr$ condition is satisfied:
the allowed transitions either keep $d$ constant or set it to $\p{false}$,
taking us directly to $T$.

The stability of
$ \exists \alpha'\st L * M(\alpha') \land \alpha' \leq \alpha $
holds trivially as $\alpha$ is constantly~0.
The condition is this trivial in this case because
it checks that transitions to and from~$T$
are not resetting the progress measure;
here, once \p{done} is set to true,
it will not be set to \p{false} any more,
so once~$T$ is reached there is no way to leave it.

\paragraph{Non termination of distinguishing client with spin lock.}
If the lock at \p{x} is implemented as a spin lock,
the distinguishing client may not terminate.
Indeed, there is no \tadalive\ proof for the distinguishing client
if one assumes the spin lock specifications:
in the precondition we need to specify an impedance budget $\alpha$
for the lock $\ap{L}(\p{x}, 0, \alpha)$;
whatever ordinal we may choose for $\alpha$,
there is no way to consume some budget at every potential iteration
of the loop of $\cmd_r$ and never exhaust the budget,
as the number of iterations is effectively unbounded.

\subsection{Other rules}
\label{sec:other-rules}

The rules in \cref{fig:proof-rules} are the most important
\tadalive-specific rules.
We have omitted standard rules like the
axioms for primitive atomic commands, the
rules handling sequencing, function calls
(recall that for simplicity we restrict to non-recursive function definitions)
and structural manipulations.
They are reproduced in full in the Appendix.

Let us conclude with an explanation of the two \tadalive-specific rules
of \cref{fig:proof-rules} which are not illustrated by the proof
of distinguishing client:
\cref{rule:update-region,rule:make-atomic}.
While the goal of \ref{rule:lift-atomic} is to
lift an atomic update on a resource inside the interpretation of a region
to the corresponding update on the region itself,
\ref{rule:update-region} obtains the same effect but on a region $r$ that
is supposed to be updated once atomically (i.e.~$r\in \dom(\actxt)$).
While \ref{rule:lift-atomic} applies to regions with $r\in \dom(\actxt)$,
the update allowed in that case needs to be an identity step
from the point of view of the abstract state of the region.
A genuine update to the region needs to be recorded as the unique linearization
point on that region; this is precisely the purpose of \ref{rule:update-region}.
Most of the premises of \ref{rule:update-region} have the same function as
in \ref{rule:lift-atomic}: checking that the update of the abstract state
\emph{and} the obligations comply with the protocol.
The difference is that here
the update expected to take place in the linearization point
is recorded in $\trrel(\actxt, r)$
(i.e.~the components of $\actxt(r)$ recording the
expected update to abstract state and obligations of $r$).
To be able to claim the linearization point took place exactly once,
the precondition of the triple requires the $\notdone{r}$ resource
which represents the unique permission to perform the linearization point.
The postcondition allows for two cases:
either the update was successful,
in which case the atomicity tracking component is recording the update $(x,z)$;
or the update was not performed ($x=z$) and the $\notdone{r}$ resource
is still available for future updates.

\Cref{rule:make-atomic} is another crucial rule for proving abstract atomicity:
it states that a Hoare triple can be promoted to an atomic triple
if it contains a ``certificate'',
in the form of  $\notdone{r}$ being updated to $\done{r}{(x,y)}$,
that the region in question was updated atomically exactly once,
with the expected update.
The expected update, and the additional interference assumptions
given by the pseudo-quantifiers need to be stored in the atomicity context
so that the triple in the premise can make use of the interference precondition
assumptions, and certificate the right update took place.
Any expected update must be protocol compliant ($T\subseteq \regLTS[\rt](G)$).
Notice how the atomicity context records the liveness assumption expressed
by the pseudo-quantifier, so that it is available for termination proofs
in the proof of the triple in the premise;
in particular they can be used by applications of \ref{rule:envlive-pq}.
The proof of spin lock and CLH lock in \cref{sec:evaluation} illustrate applications of \ref{rule:make-atomic} and \ref{rule:update-region}.

\subsection{Abstract Predicates}
\label{sec:abs-pred}

In the spirit of CAP, abstract resources provided by a library should
be presented to clients by only exposing their abstract properties,
and not their definition.

In our example, the $\ap{L}(x,l)$ predicate is defined internally to the 
proof of the lock module, say using internal regions
(of some maximum level~$\lvl$) expressing the internal protocols of the module.

The proof of our distinguishing client only relies on the following
abstract properties:
\begin{enumerate}[label=(L\arabic*)]
  \item $ \ap{L}(\p{x},\wtv) * \ap{L}(\p{x},\wtv) $ is false,
        expressing that a lock is an exclusively owned resource;
        \label{ax:lock-excl}
  \item $\ap{L}(\p{x},l)$ is stable for all~$l$;
        \label{ax:lock-stable}
  \item $\ap{L}(\p{x},l)$ is \pre\lvl-safe for all~$l$;
        \label{ax:lock-safe}
  \item $\ap{L}(\p{x},l)$ is obligation-free,
    i.e.~$ \ap{L}(\p{x},l) \implies \empObl[\RId] $\\
        \label{ax:lock-obl-free}
        (which also entails~$ \ap{L}(\p{x},l) \implies \minLay{\rid}{m} $
        for all~$r\in\RId$ and~$m\in\Layer$).
\end{enumerate}
For instance,
the interpretation $ \rInt(\region{dc}{\rid}(x, \lvar{done}, l, d)) $
is well-formed thanks to properties~\ref{ax:lock-stable}
and~\ref{ax:lock-obl-free}.
The proof also involves side conditions on layers,
stability and \pre\lvl'-safety
which can be discharged by appealing to \ref{ax:lock-stable}
\ref{ax:lock-safe} and
\ref{ax:lock-obl-free}.

More generally, a module would typically expose to clients viewshifts representing separation properties of the abstract predicates (e.g.~duplicability),
stability properties,
\pre\lvl-safety, obligation freedom and relevant $ \minLay{P}{m} $ facts.

\subsection{What is leaked by \tadalive\ specifications?}
\label{sec:spec-encapsulation}

\tadalive's triples are rather expressive:
they support strong specifications of updates via logical atomicity,
and conditional termination properties via liveness assumptions.
It is natural to ask whether our triples force the leak of
any unnecessary detail about the implementation.
In particular, there are three components of the proof system that have a ``global'' flavour:
the level and layer in the context of the judgment,
and the layer decorating the liveness assumption of the pseudo-quantification.
Although necessary for soundness,
the management of levels is tedious but relatively straightforward.
Iris introduced namespaces for invariants to ease the management of so-called masks, which serve essentially the same function as levels in \tada.
A similar construction could be used to ease management of layers.
Here we keep it simple and require proofs of clients to use layers high enough to be able to reuse the libraries specifications.

The layers decorating a triple, on the other hand, are a more delicate matter.
The main complication arises from the choice of parametrising \tadalive\ with
a global layer structure.
If a specification insists on the use of a specific subset of layers,
that could seem like an unnecessary leak of implementation details.
For example, there could be two valid implementations of a module
that use wildly different internal layer structures to justify their internal blocking behaviour.
Should the abstract specification of the module insist on a specific layer structure for the internal layers, that would rule out valid implementations for no good reason.
In \tadalive\ modularity of the layers can be achieved
by exploiting a crucial property of derivations:
their validity is invariant under a strict-ordering-preserving remapping of layers.
This allows a style of specification which generalises the one we have seen in our examples until now, where the layer structure relevant for the proof of an implementation is parametrised over a client-provided remapping of layers.
To avoid cluttering the proofs we do not explicitly parametrise the proofs in
\cref{sec:evaluation}.
In \cref{sec:lock-coupling-set},
where the construction becomes relevant and used in a non-trivial way,
we explain how to convert a proof so that it is parametric on the layer remapping.

In terms of behaviour,
\tadalive's specifications are able to hide internal blocking,
as showed by the blocking counter example
of \cref{sec:proof-reuse} (formalised in \cref{app:ex-blocking-counter}).
There is, in fact, one progress property leaked by the specification layers
that is currently not exploited by \tadalive.
In the special case when the layer in the context
is the \emph{globally} smallest layer $\layBot$,
the proof of the triple cannot rely on any liveness assumption at all.
This can be used to differentiate, for example,
a wait-free counter implemented as an hardware-atomic fetch-and-add
(which admits a proof with~$\layBot$ in the context)
and a blocking counter (which only admits proofs with layer~$> \layBot$).
This is a useful distinction:
wait-freedom is an important progress property,
asserting termination without assumptions on liveness of other threads and
\emph{without fairness assumptions on the scheduler}~\cite{Herlihy11}.
Currently, however, \tadalive's semantics does not support
deriving wait-freedom as a consequence of $\layBot$ as the context layer:
the current triple semantics only implies
termination of the \emph{fair} traces.
Extending \tadalive's semantics to encompass wait-freedom is left as future work.

\subsection{Soundness}
\label{sec:soundness}

We have proven soundness of \tadalive\ rules
against the semantic judgement of \cref{def:judgement-sem}.

\begin{theorem}[Soundness]
\label{th:soundness}
  If
  $\JUDGE[\funspec] |- \cmd : \spec$
  then
  $\JUDGE[\funspec] |= \cmd : \spec$.
\end{theorem}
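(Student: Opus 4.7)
The plan is to proceed by induction on the derivation $\JUDGE[\funspec] |- \cmd : \spec$, showing that each proof rule is semantically valid: if all its premises denote true semantic judgements, so does its conclusion. Because the semantics $\sem{\cmd}$ and $\sem{\spec}$ are defined over infinite traces instrumented with logical state, obligations and atomicity tracking, the crucial preliminary step is to establish a library of semantic lemmas about how traces are composed, extended by environment transitions, and restricted by stability. In particular, I would first prove that the stability and validity judgements ($\STABLE \actxt |= P$, $\VALID \actxt |= P$) used as side-conditions correspond to the expected closure properties on the set of models, and that the environment-liveness judgement $\ENVLIVE m;\lvl;\actxt |- L -M->> T$ semantically entails the ``always eventually $T$'' property on any trace satisfying the assumptions encoded by $\actxt$ and $m$.

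For the structural and safety fragment of the system (the consequence rule, the existential elimination, the atomicity-weakening rule \ref{rule:atomicity-weak}, the frame rules \ref{rule:frame-atomic} and \ref{rule:frame-hoare}, and the layer-weakening rule \ref{rule:layer-weak}), the arguments are routine trace-level adaptations of standard concurrent separation logic soundness, with two twists: the classical reading of $*$ forces every semantic decomposition of a state to account for the locally held obligations, and the layer side-conditions must be shown to imply that framed resources never hide obligations whose liveness is being assumed. The \ref{rule:parallel} rule follows from the usual trace interleaving argument, but with the added obligation-accounting step: its layer premises ensure that neither postcondition hoards obligations on whose liveness the other thread's termination depended, so joining does not retroactively invalidate either thread's environment-liveness hypotheses.

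The truly novel obligations fall on the liveness rules. For \ref{rule:envlive} and its companions \ref{rule:envlive-case}, \ref{rule:envlive-target}, \ref{rule:envlive-obl}, \ref{rule:envlive-pq}, I would prove by structural induction on the derivation that the environment-liveness judgement implies that, on any trace compatible with the context, either $T$ has been reached or the progress measure $\alpha$ must eventually strictly decrease; combined with well-foundedness of ordinals, this yields the always-eventually-$T$ property. The \ref{rule:liveness-check} rule is then justified by composing this always-eventually-$T$ property with the semantics of the pseudo-quantifier $\eventually[k]$, which requires that every admissible environment trace keeps returning to $X'$. The \ref{rule:while} rule is the main obstacle: one must show that any infinite execution of the loop body, under the hypotheses, is impossible. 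The argument is a transfinite induction on $\beta$ using the two body-triples: the non-increase condition on $M$ combined with the environment-liveness condition ensures that once $\beta$ has decreased, the next visit to a target state will again decrease it; the subtle point is that between two consecutive target-state iterations, the variant may stall, and one must formalise, using the $\texttt{non-incr}$ stability premise, that the measure $M$ cannot be reset, so target states keep recurring with bounded delay until $\beta$ has been decreased.

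Finally, the atomicity rules \ref{rule:make-atomic}, \ref{rule:lift-atomic} and \ref{rule:update-region} require the most intricate manipulation of the trace semantics but least new liveness content: the key lemma is that the atomicity-tracking assertion $\done{\rid}{d}$ semantically pins down a unique linearisation position in each trace, and that moving pseudo-quantifiers into the atomicity context $\actxt$ preserves exactly the right liveness assumptions for \ref{rule:envlive-pq} to later exploit. With this library of rule-wise soundness lemmas, the overall induction closes: each derivable $\JUDGE[\funspec] |- \cmd : \spec$ yields $\sem{\cmd} \subseteq \sem{\spec}$. The principal difficulty, as anticipated, concentrates in the interaction between \ref{rule:while}, \ref{rule:liveness-check} and the layered obligation mechanism, where an informal ``no circular dependency'' intuition must be turned into a formal well-founded recursion on layers ordered lexicographically with the ordinal variants.
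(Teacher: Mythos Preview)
Your proposal is correct and takes essentially the same approach as the paper: rule-by-rule soundness by induction on the derivation, isolating the liveness-relevant rules (\textsc{LiveC}, \textsc{While}, \textsc{Par}, \textsc{Frame}, and the environment-liveness rules) as the non-trivial cases while treating the remaining structural and atomicity rules as routine TaDA-style safety arguments with unchanged liveness constraints. The well-founded descent on layers you flag in your final paragraph is exactly the crux of the paper's \textsc{Par} argument in the case where both threads diverge: each thread's failure of $\goodenv$ forces the other to be holding an obligation at strictly lower layer forever, yielding an infinite descending chain of layers and hence a contradiction.
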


The detailed proofs of the liveness-related rules
are produced in \cref{app:soundness}.
The soundness of most rules is an adaptation
of the soundness arguments of the corresponding \tada\ rules.
The rules that drive the liveness argument are
\cref{rule:parallel}, \cref{rule:liveness-check} and \cref{rule:while}.

The soundness of the parallel rule follows from 
the layered liveness invariants semantics
explained in \cref{sec:specs-semantics}.
The argument is roughly as follows.
There are two possible ways the parallel composition
$\cmd_1 \parallel \cmd_2$
may fail to terminate:
either one thread terminates and the other does not,
or they both do not terminate.
In the first case, when the terminating thread, say $\cmd_1$, terminated,
we are in a state where thread 1 does not own any obligation of layers that
may be assumed live by $\cmd_2$ (this is from the conditions on the layers of the postcondition of $\cmd_1$).
By the triple about $\cmd_2$ in the premises,
$\cmd_2$ is only allowed not to terminate if the environment is constantly owning an obligation $O$ of layer lower than $m_2$.
Since $\cmd_1$ cannot do that, we obtain that said $O$ must be owned by the overall environment of the parallel composition. In such case the triple
of the conclusion allows the program to diverge.

In the second case, both threads are not terminating.
Each of the threads, say~1,
is allowed to keep an obligation constantly unfulfilled,
as long as it can blame thread~2 by showing
an obligation of strictly lower layer
that is kept constantly unfulfilled by~2.
Since layers are well-founded there needs to be some thread
that will not be justified in not fulfilling some of its obligations.
This cannot be as we were able to prove the two triples in the premises.

The soundness of \cref{rule:while} considers the worst-case scenario for progress: an infinite sequence of iterations, all of which do not start from a target state in~$T(\beta)$, and therefore do not decrease the variant~$\beta$.
In such case we know that the assertion~$L$ holds at every point of the trace:
it has been framed so the local steps and the environment steps must preserve it, and it is stable (as checked by \ref{rule:envlive}).
We are thus within the hypothesis of the environment liveness condition,
which proves, together with the premise asking the progress measure~$\alpha$ to never increase, that eventually the target states will be reached.
Although they may be reached in the middle of an iteration, instead of at the beginning as it would be required to invoke the triple that decreases the variant~$\beta$,
in the worst case this can happen boundedly many times (the progress measure is well-founded and must always decrease).
Therefore we eventually reach $T(\beta)$ and not leave it until the next iteration
starts from a state satisfying $P(\beta) * T(\beta) \land \bexp$,
which matches the premise that ensures the variant decreases.
This can only happen boundedly many times as the variant is well-founded.

\Cref{rule:liveness-check}'s soundness argument is a variation of the one for
\ref{rule:while}.

\medskip

As a simple corollary of soundness and \cref{th:adequacy},
if we can prove
${\TRIPLE m;\lvl;\emptyset |- {\emp}\cmd{\True}}$
then $\cmd$ run in isolation terminates from the empty heap.
For our distinguishing client (\cref{ex:distinguishing-client})
for instance, we can wrap up the proof by initialising the state
and prove
\[
  \TRIPLE \layTop |-
    {\emp}
    {
      \code{var done=false, x in x:=makeLock();}
      (\cmd_\ell \parallel \cmd_r)
    }
    {\True}
\]
which implies termination of the program.
 \section{Evaluation}
\label{sec:evaluation}

In the previous section, we introduced the \tadalive\ proof system,
explaining the rules on the distinguishing client,
which showcases in a simple setting the proof mechanics of the logic.

In this section we consider more challenging case studies,
to demonstrate how \tadalive\ achieves proof scalability and reuse in practice.

We start by proving correctness of the spin and CLH lock implementations
against the specifications we discussed in \cref{sec:overview}.
The proof of spin lock highlights the use of the liveness assumption
of a pseudo-quantifier in a proof,
and the handling of impedance through the impedance budget.
The proof of CLH has a number of interesting features.
The CLH code exhibits both \emph{internal blocking},
i.e.~blocking that is resolved internally and does not leak to the client,
and \emph{external blocking},
i.e.~blocking that has to be resolved by the client
and thus leaks in the liveness assumption of the pseudo-quantifier.
As a consequence,
the termination argument requires using a combination of
obligations (for internal blocking) and
the liveness assumption of the pseudo-quantifier (for external blocking).
Moreover, the obligations (and their layers)
are not simple tokens like the ones for the simple
examples of \cref{sec:overview,sec:rules},
but form an infinite set.
This reflects the unboundedness of the internal queue of threads.

The two lock examples demonstrate \tadalive's ability to
abstract from implementation details,
and only leak to the client the parts of the termination argument
which depend on the choices of the clients.
In the same vein, we will follow this with
a counter module using a spin lock
to protect access to a cell holding the value of the counter.
Interestingly, since the blocking due to the use of a lock is internal,
the specification of the counter will not be blocking.
The impedance suffered by the internal spin lock does however leak
to the interface for the counter:
the counter will have its own impedance budget which will be internally
spent to call operations on the lock.

To exhibit \tadalive's ability to reason about liveness locally,
we will verify a double blocking counter, 
showing that for simple common programming patterns, 
the layer system leads to natural and modular client proofs.

Finally, we comment on a proof of a lock-coupling set,
produced in full in \cref{appendix:lock-coupling-set}.
The example considers a data structure implemented as a linked list
with CLH locks guarding the single cells.
The example is challenging for the presence of a dynamic number of locks.
At first sight it might seem it is impossible to represent this using
the static association of layers to obligations of \tadalive.

Obligations however, as demonstrated in this case study,
are a very general form of ghost state and can easily represent dynamic
properties of state.

\paragraph{Other case studies.}
Ticket lock and MCS lock~\cite{ArtBook} are
alternative implementations of starvation-free locks;
they can be given the same specification as the CLH lock,
and their liveness argument can be carried out in the same way
as the one we present for CLH.\footnote{The proof of ticket lock requires some minor ghost code to side-step the lack of support for helping.
}
A paradigmatic example of fine-grained data structure
is the Treiber stack~\cite{ArtBook} which, in its standard form, is non-blocking
and has been proven in Total~\tada\ already.
It is easy to adapt the code to have a \code{pop} operation which blocks
on an empty stack.
Such operation would be blocking and suffers impedance.
Its specification and proof mirrors closely the proof of the spin lock.
Challenging variants of the lock-coupling set are
the ``optimistic'' and ``lazy'' sets.
The proof of optimistic set uses a combination of
the proof of the lock-coupling set and the impedance budget technique
(optimisic set operations impede each other).

\medskip

These case studies cover all the proof patterns needed
to prove all the examples of the LiLi papers~\mbox{\cite{LiangF16,LiangF18}}.
Notably, proofs in LiLi involving modules that use locks,
require in-lining some non-atomic implementation of the lock operations
in the client, resulting in non-modular proofs and
unnecessarily intertwined termination arguments.

\subsection{Spin lock}
\label{examples:spin_lock}

\paragraph{Code}
The spin lock module implements a lock by storing a single bit in
a heap cell; locking is implemented by trying to
CAS the heap cell from~0 to~1 until the CAS succeeded;
unlocking simply sets the cell back to~0.
In \cref{fig:spin-code} we give all the operations of a spin lock module.

\begin{figure}
  \centering\small \begin{tabular}{c@{\hspace{3em}}c@{\hspace{3em}}c}
    \codefromfile{examples/spin-lock/code-make} &
    \codefromfile{examples/spin-lock/code-lock} &
    \codefromfile{examples/spin-lock/code-unlock}
  \end{tabular}
  \caption{Code of spin lock operations.}
  \label{fig:spin-code}
\end{figure}

\paragraph{Specifications}
We will prove the module satisfies the following specifications:
\begin{align*}
  \forall \alpha \st\;
  &\TRIPLE \botOf |-
    {\emp}
      {\code{makeLock()}}
    {\exists \rid\st \ap{L}(\rid, \pvar{ret}, 0, \alpha)}
  \\
  \forall \ordfun \st\;
  &\ATRIPLE \topOf |-
    \A l \in \set{0, 1} \eventually[\botOf] \set{0}, \alpha.
      <\ap{L}(\rid, \pvar{x}, l, \alpha) \land {\alpha > \ordfun(\alpha)}>
        \code{lock(x)}
      <\ap{L}(\rid, \pvar{x}, 1, \ordfun(\alpha)) \land l = 0>
  \\
  &\ATRIPLE \botOf |-
    <\ap{L}(\rid, \pvar{x}, 1, \alpha)>
      \code{unlock(x)}
    <\ap{L}(\rid, \pvar{x}, 0, \alpha)>
\end{align*}
where $\ap{L}(\rid, \pvar{x}, l, \alpha)$ abstractly represents
the lock resource
  at abstract location~$\rid$
  (omitted for readability in \cref{sec:overview}) and
  concrete address~$\pvar{x}$,
  with abstract state~$l \in \set{0,1}$ and
  impedance budget~$\alpha$ (an ordinal).
The purpose of the impedance budget, as described in \cref{sec:overview},
is to prevent the environment from taking possession of the lock
an unbounded number of times. Without this bound, the \p{CAS} operation
in the implementation of \p{lock} could be indefinitely preempted
by the environment locking the lock, preventing it from ever 
taking its possession and terminating, even if the environment
always unlocks the lock when it is locked.
This is enforced by requiring the lock operation to strictly decrease
the impedance budget using
$\phi \from \Ord \to \Ord$, a function that can be freely instantiated
by the client upon usage of the specification,
which indicates precisely how much the budget will decrease after this call
(which is client dependent information).
The specification of \code{makeLock} then allows the client
to pick an arbitrary ordinal as the initial budget.

\paragraph{Shared Region}
The abstract shared lock resource will be represented by a region
$ \region{spin}{\rid}(x,l,\alpha) $
where
  $ x \in \Addr $,
  $ l \in \set{0,1} $,
  $ \alpha \in \Ord $.
Here $x$ is a fixed parameter of the region.

\begin{convention}
  \label{conv:exc-guard}
  An exclusive guard, $\gEx$, is very commonly used
  to express some exclusive permission on some shared resource,
  which cannot be composed with itself:
  i.e.~$\guardUndef{\gEx\guardOp\gEx}$.
  Local ownership of $\gEx$ is exclusive in that
  no other thread can at the same time assert ownership of $\gEx$.
  A ubiquitous use of this guard is in representing the resource offered by a module.

  Take for example the current spin lock module.
  Since this is a concurrent module it uses internally shared resources.
  We therefore have a region
  $ \region{spin}{\rid}(x,l,\alpha) $
  encapsulating the shared internal resources of the counter.
  From the perspective of the client, however,
  at the moment of creation of a lock by, say,
  a \code{makeLock()} operation,
  the lock is exclusively owned by the client.
  This, for example, is reflected in the fact that, until the client
  shares the lock or invokes operations on it,
  it remains unlocked.
  To represent this fact, one typically defines an exclusive guard $\gEx$
  guarding each transition of the region interference:
  e.g.~$ \gEx: (0,O_1) \interfTo (1,O_2),  \gEx: (1,O_1) \interfTo (0,O_2)$.
  Then the \code{makeLock()} operation can be given the specification above,
  which gives to the client the stable assertion
  $ \region{spin}{\rid}(\ret,0,\alpha) * \guardA{\gEx}{\rid} $, wrapped in the
  predicate $\ap{L}(\rid,\ret,0,\alpha)$.
  (Note how $\region{spin}{\rid}(\ret,0,\alpha)$ is not stable on its own.)
  To re-share the lock, the client will create its own region
  encoding the invariants governing the interaction over the lock
  (and the other resources of the client)
  the interpretation of which will contain
  the guard $\guardA{\gEx}{\rid}$.

  Note that assertions
  have very different meanings if occurring in the \emph{atomic} precondition
  of a triple, as opposed to the \emph{Hoare} precondition:
  the resources in the atomic precondition are not owned by the local thread,
  but only acquired instantaneously at the linearization point.
  For example, in the triple
  \[\small
  \forall \ordfun \st \ATRIPLE |-
    \A l \in \set{0,1} \eventually[\botOf] \set{0}, \alpha.
      <\region{spin}{\rid}(\pvar{x},l,\alpha)
      * \guardA{\gEx}{\rid} \land {\alpha > \ordfun(\alpha)}>
      {\code{lock(x)}}
      <\region{spin}{\rid}(\pvar{x},1,\ordfun(\alpha))
      * \guardA{\gEx}{\rid} \land l {=} 0>
  \]
  the exclusivity of $\gEx$ is only granted \emph{instantaneously} to the thread
  acting on it atomically, i.e.~either the environment during the interference phase
  as allowed by the pseudo-quantifier, or the local thread at the linearization point.

  Since this pattern is ubiquitous,
  we reserve the $\gEx$ guard constructor for this use, and
  will omit the $\guardUndef{\gEx\guardOp\gEx}$
  axiom when specifying guard algebras.
\end{convention}

\paragraph{Guards and Obligations}
For the $\rt[spin]$ region we only have the exclusive guard $\gEx$,
and no obligation constructors, as the implementation has no internal blocking.
All the blocking behaviour is represented by the liveness assumption
in the pseudo-quantifier of the specification of \code{lock}.
Note that without the exclusive guard, the specification of
\p{makeLock} would not hold as the lock would not be stably
unlocked.

\paragraph{Region protocol}
The interference protocol for $\rt[spin]$ is very simple:
\begin{align*}\gEx & \from
    ((0, \alpha), \oblZero)
      \interfTo
    ((1, \beta), \oblZero)
  \text{ only if }\beta < \alpha
\\
\gEx &\from
    ((1, \alpha), \oblZero)
      \interfTo
    ((0, \alpha), \oblZero)
\end{align*}
It states that whoever owns $\gEx$ can freely acquire or release the lock,
provided that at each acquisition, some budget is spent ($\beta < \alpha$),
preventing the lock from being locked an unbounded number of times.

\paragraph{Region interpretation}
The implementation uses a single cell stored in the heap, and we have no non-trivial guards/obligations; the interpretation is thus straightforward:
\[
  \rInt(\region{spin}{\rid}(x, l, \alpha)) \is x \mapsto l
\]
Note how $\alpha$ is pure ghost state in that it is not linked
to any information in the concrete memory.

\begin{mathfig}
  \begin{proofoutline}\llap{\META{\forall \phi\st\,}}\CTXT \topOf; \emptyset |-
  \A l \in \set{0, 1} \eventually[\botOf] \set{0}, \alpha. \\
  \PREC<\ap{L}(\rid, \pvar{x}, l, \alpha) \land {\alpha > \ordfun(\alpha)}>
  \\
  \begin{proofjump}
    [rule:consequence]
    \PREC<\region{spin}{\rid}(\pvar{x}, l, \alpha) * \guardA{\gEx}{\rid}
          \land {\alpha > \ordfun(\alpha)}
         >
    \\
    \begin{proofjump}
      [rule:make-atomic]
      \CTXT \topOf;
        \map{r -> \bigl(
          \set{0,1} \times \Ord, \botOf, \set{0} \times \Ord,
((0,\alpha), \oblZero) \interfTo ((1,\ordfun(\alpha)), \oblZero)
        \bigr)}
      |- \\
      \PREC{\exists l, \alpha \st
              \region{spin}{\rid}(\pvar{x}, l, \alpha)
              \land \alpha > \ordfun(\alpha)
              * \notdone{\rid}
           } \\
      \CODE{var d = 0 in}\\
      \ASSR{\exists l, \alpha \st
              \region{spin}{\rid}(\pvar{x}, l, \alpha)
              * (\pvar{d}=0 \land
                 \notdone{\rid}
                 \land \alpha > \ordfun(\alpha)
                )
           } \\
      \begin{proofjump}
        [rule:consequence,rule:exists-elim]
        \PREC{P(\beta_0)} \\
        \begin{proofjump}[rule:while]
\CODE{while(d=0)\{}\\
          \begin{proofindent}
            \META{\forall b \in \Bool, \beta\st}\\
            \PREC{P(\beta) *
              b \dotimplies T(\beta)
              \land \pvar{d} = 0
            }\\
            \CODE{d := CAS(x,0,1);}\\
            \POST{\exists \gamma \st P(\gamma) \land
              \beta \ge \gamma \land
              b \implies \gamma < \beta
            }
          \end{proofindent} \\
          \CODE{\}}
        \end{proofjump} \\
        \POST{\exists \gamma \st P(\gamma) \land
          \beta_0 \ge \gamma \land \pvar{d} \neq 0
        }
      \end{proofjump}
      \\
      \POST{\exists l,\alpha\st
              \done{\rid}{((l, \alpha), (1, \ordfun(\alpha)))} \land l = 0
           }
    \end{proofjump}
    \\
    \POST<\region{spin}{\rid}(\pvar{x}, 1, \ordfun(\alpha)) *
          \guardA{\gEx}{\rid} \land l = 0>
  \end{proofjump}
  \\
  \POST<\ap{L}(\rid, \pvar{x}, 1, \ordfun(\alpha)) \land l = 0>
\end{proofoutline}
   \caption{Spin lock: proof of \protect{\code{lock}}.}
  \label{fig:spin-lock-outline}
\end{mathfig}

\paragraph{Predicates}
The lock resource is abstractly represented by the predicate
\[
  \ap{L}(\rid, x, l, \alpha) \is
    \region{spin}{\rid}(x, l, \alpha) * \guardA{\gEx}{\rid}
\]

\paragraph{Verification of \code{lock}}
\Cref{fig:spin-lock-outline} is the proof of the
\code{lock} operation.
The only step that involves reasoning about liveness is
the application of the \ref{rule:while} rule.
To apply this rule, we must first define the loop invariant,
$P(\beta)$, the target states, $T(\beta)$, the persistent loop invariant, $L$,
$m(\beta)$, and the environmental progress measure, $M(\alpha)$.

\noindent
The loop invariant is
\begin{equation*}
  P(\beta) \is
    \exists l, \alpha \st
      \region{spin}{\rid}(\pvar{x}, l, \alpha) \land
      \beta \ge \alpha
* \left(
        \begin{array}{@{}r@{\;}l@{}}
          &\bigl(
            \pvar{d} = 0 \land
            \notdone{\rid} \land
            \alpha > \ordfun(\alpha)
          \bigr)
          \\\lor&
          \bigl(\exists l', \alpha' \st
            \pvar{d} = 1 \land
            \done{\rid}{((l', \alpha'), (1, \ordfun(\alpha')))}  \land
            l' = 0
          \bigr)
        \end{array}
        \right)
\end{equation*}
which contains:
\begin{itemize}
  \item the safety information to prove the uniqueness of the linearization point, namely that
    if the \code{CAS} failed, i.e. $\pvar{d}=0$, then we have not touched the resource yet and we still have permission to perform the linearization point ($\notdone{\rid}$);
    whereas if the \code{CAS} succeeded, i.e. $\pvar{d}=1$, then
    we did perform the linearization point with the expected effect.
  \item the definition of the local variant $\beta$
    as an upper bound on the impedance budget $\alpha$.
\end{itemize}
Indeed, whenever some budget is spent,
the loop approaches termination
as eventually, the exhaustion of the budget prevents further interference,
allowing the \p{CAS} operation to succeed and the loop to terminate.
Therefore decreasing the upper bound to the interference budget
corresponds to progress for the \p{while} operation.
Without additional information however, we cannot show the local variant
must eventually strictly decrease, indeed,
in the case $l=1$ we cannot exit the loop and the environment is not forced
to spend budget.
Therefore, the termination argument will need the 
assumption that the environment always eventually unlocks the lock
to allow the termination of the \p{while} loop or further
decrease of the variant due to the environment locking the lock.
This guarantee is given by the atomicity context
$
  \actxt =
    \map{r -> (
      \set{0,1} \times \Ord, \botOf, \set{0} \times \Ord,
      R
)}
$
with $R = ((0,\alpha), \oblZero) \interfTo ((1,\ordfun(\alpha)), \oblZero)$.
  
The target states, $T$, must clearly include unlocked states, where $l = 0$,
but, as it must eventually be stable, this is insufficient, since once the
lock is unlocked, the environment can lock it again.
However, when the lock is unlocked, if the environment takes possession
of it, the environment must also simultaneously decrease the impedance budget,
i.e. $\beta > \alpha$.

The argument that~$T$ is always eventually true, relies on the assumption
from the atomicity context that the environment will always eventually unlock
the lock. However, this assumption only holds before the linearization point.
In particular, as the loop variant must contain $\notdone{\rid}$, since the loop
body may perform the linearization point, the persistent loop invariant cannot,
and therefore~$T$ must also contain a disjunct where the linearization point
has occurred and~$T$ holds a witness $\done{\rid}{(\wtv[2])}$.

We therefore declare the target states as the ones where,
either the linearization point has been performed,
or the lock is unlocked, or some budget was spent:
\[
T(\beta) \is
\exists l,\alpha\st
\region{spin}{\rid}(\pvar{x}, l, \alpha) \land
(\done{\rid}{(\wtv[2])} \lor l=0 \lor \beta > \alpha)
\]

The persistent loop invariant here is simply
$
  L = \region{spin}{\rid}(\pvar{x}, \wtv[2])
$, which is a valid stable frame of the loop.

To apply \ref{rule:while} we also need to specify $m(\beta)$,
which in this case is simply $\topOf$,
which satisfies the layer constraints of the rule;
and the environment progress measure $M$:\[
  M(\en{\alpha}) \is
    \exists l, \alpha \st
      \region{spin}{\rid}(\pvar{x}, l, \alpha) \land \en{\alpha} = 2\alpha + l
\]
(here we use the variable $\en{\alpha}$ for the environment progress measure variable, to avoid clashes with the impedance budget $\alpha$.)
This environmental progress measure is decreased by both the environment locking and unlocking the lock:
\begin{itemize}
\item Unlocking the lock decreases $l$ from $1$ to $0$, so as $2\alpha + 1 > 2\alpha + 0$,
  the environmental progress measure decreases. \\
\item Locking the lock decreases the impedance budget from $\alpha$ to $\alpha' < \alpha$,
  while also increasing $l$ from $0$ to $1$. Since $\alpha' < \alpha$ implies $\alpha' + 1 \le \alpha$,
  $2\alpha + 0 \ge 2\alpha' + 2 > 2\alpha' + 1$, the environmental progress measure decreases.
\end{itemize}

Given these parameters, the proof first establishes
the loop invariant holds at the beginning for some $\beta_0$,
by applying \ref{rule:consequence}:
\begin{gather*}
  \exists l, \alpha \st
    \region{spin}{\rid}(\pvar{x}, l, \alpha)
    \land \alpha > \ordfun(\alpha)
    * \notdone{\rid}
    \land \pvar{d}=0
  \;\Longrightarrow\;
  \exists \beta_0\st P(\beta_0) * L
  \\
  \exists \beta_0,\beta\st
    P(\beta) * L \land \pvar{d} \neq 0 \land \beta_0 \geq \beta
  \;\Longrightarrow\;
  \exists \alpha\st
    \region{spin}{\rid}(\pvar{x}, \wtv[2]) *
    \done{\rid}{((0, \alpha), (1, \ordfun(\alpha)))} \land l = 0
\end{gather*}
Note that we will often implicitly apply the \ref{rule:consequence} rule in proofs,
only detailing the application when emphasis is desired.
Next, \ref{rule:exists-elim} on $\beta_0$ gets rid of the existential quantification, so we are ready to apply \ref{rule:while}.

To complete the application of the rule we need to show
\begin{gather}
  \ENVLIVE \topOf; \actxt |- L - M ->> T(\beta)
  \label{cond:spin-lock-envlive} \\
\forall \alpha\st
  \STABLE \actxt |= {\exists \alpha'\st L * M(\alpha') \land \alpha' \leq \alpha}
          \label{cond:spin-lock-M} \\
  \progvars(T,L,M) \cap \modvars(\cmd) = \emptyset
  \label{cond:spin-lock-mods}
\end{gather}

Condition~\eqref{cond:spin-lock-M} is easily seen to hold,
as we showed above, all possible environmental interference
on the region decreases the environmental progress metric,
which is sufficient for this to hold.

Condition~\eqref{cond:spin-lock-mods} is also easily seen to hold as
the only program variable predicated over in $T$, $L$ and $M$
is $\p{x}$, which is not modified by the body of the loop.

Finally, condition~\eqref{cond:spin-lock-envlive} is proven as follows.
We observe that:
\begin{align}
  L(\en{\alpha}) = L * M(\en{\alpha}) &\equiv
    \bigl(
      \exists l,\alpha\st
      \region{spin}{\rid}(\pvar{x}, \ghost{1}{l}, \alpha) *
      (\done{\rid}{(\wtv[2])} \lor l \doteq 0)
      \land \en{\alpha} = 2\alpha + \ghost{1}{l}
    \bigr)
    \tag{$L_1(\en{\alpha})$}
    \\
    &\,\lor
    (
      \ghost{\exists l,\alpha\st{}}{\exists \alpha\st{}}
      \region{spin}{\rid}(\pvar{x}, 1, \alpha) *
      \notdone*{\rid}
      \land \en{\alpha} = 2\alpha + 1
    )
    \tag{$L_2(\en{\alpha})$}
\end{align}
We can then derive the environment liveness condition:
\[
\infer*[right={\ref{rule:envlive}}]{
  \infer*[right={\ref{rule:envlive-case}}]{
    \infer*[right={\ref{rule:envlive-target}}]{
      \forall \en{\alpha}\st
        \VALID \actxt |= L_1(\en{\alpha}) \implies T(\beta)
    }{
      \forall \en{\alpha} \st \ENVLIVE \topOf; \actxt |- L(\en{\alpha}) : L_1(\en{\alpha}) -->> T(\beta)
    }
  \and
    \infer*[right={\ref{rule:envlive-pq}}]{
\decr[\actxt](L_2, L, T(\beta))
    }{
      \forall \en{\alpha} \st \ENVLIVE \topOf; \actxt |-
        L(\en{\alpha}) : L_2(\en{\alpha}) -->> T(\beta) }
  }{
    \forall \en{\alpha} \st \ENVLIVE \topOf; \actxt |- L(\en{\alpha}) : L(\en{\alpha}) -->> T(\beta)
  }
}{
  \ENVLIVE \topOf; \actxt |- L - M ->> T(\beta)
}
\]

Formally, the application of \ref{rule:envlive} requires
us to prove
$
\VALID \actxt |=
L \implies L * \exists \en{\alpha}\st M(\en{\alpha})
$
which  is trivial.
An application of the \ref{rule:envlive-case} rule then splits
between the cases where $L_1$ and $L_2$ hold.
Intuitively,
  $L_1$ represents the case where
    we performed the linearization point or the lock is unlocked, while
  $L_2$ the case where
    we still have not performed the linearization point
    and the lock is locked.
If $L_1$ holds, then $T$ holds, so no progress of the environment is required,
therefore, this case can be discharged via an application of \cref{rule:envlive-target}.
In the case where $L_2$ holds we can apply \cref{rule:envlive-pq} to invoke the
liveness assumption stored in $\actxt$: if the lock is unlocked, the metric strictly decreases.

To show that the liveness assumption encoded in the atomicity context
for the region $\region{spin}{\rid}$,
$\live(\actxt,r) = \set{0,1} \times \Ord \eventually[k] \set{0} \times \Ord$
is active, the \ref{rule:envlive-pq} rule requires that in the current case:
\begin{itemize}
\item The abstractly atomic update being tracked on $\rid$ has yet to occur:
\[
\forall \alpha_e \st \SAT[\actxt] |- L_2(\alpha_e) \implies
\exists (l, \alpha) \in (\set{0,1} \times \Ord) \setminus (\set{0} \times \Ord) \st
\region{spin}{\rid}(\pvar{x}, l, \alpha) * \done{r}{\lozenge} * \True
\]
\item No obligations of layer less than or equal to $k$
  is continuously held locally:
  \begin{align*}
    & m \laygt k \\
    & \forall \alpha\st
    \VALID \actxt |= \minLayStrict{L'(\alpha)}{k}
  \end{align*}
\end{itemize}
If these hold, then the $\decr[\actxt](L_2, L, T(\beta))$
predicate shows that discharging the liveness
invariant will strictly decrease $\alpha_e$.
To show this holds, taking $\store \in \Store$ arbitrary and letting 
\begin{align*}
  l(\alpha) &= \WorldSem{\actxt}{\store}{L(\alpha)} &
  l'(\alpha) &= \WorldSem{\actxt}{\store}{L_2(\alpha)} &
  t &= \WorldSem{\actxt}{\store}{T(\beta) * \True}
\end{align*}
we need to show
\[
\forall \alpha_1, \alpha_2 \ge \alpha_1 \st
{\relyAt[\actxt]}(l'(\alpha_1)) \inters l(\alpha_2) \subseteq l'(\alpha_1) \union t
\]
This holds, as, given an arbitrary $\alpha_1 \in \Ord$,
any step taken from $l'(\alpha_1)$ by the atomic world rely relation
either leaves the state of the region $\region{spin}{\rid}$
unchanged, preserving the state $l'(\alpha_1)$,
or releases the lock, decreasing the metric.
Therefore, for any $\alpha_2 \ge \alpha_1$,
${\relyAt[\actxt]}(l'(\alpha_1)) \inters l(\alpha_2) \subseteq l'(\alpha_1)$
holds, which implies the goal.

To conclude the argument,
we briefly comment on the proof of the body of the \p{while} loop.
The full proof of the body can be found in figure \ref{fig:spin-lock-body-outline}.
The applications of \cref{rule:update-region,rule:frame}
lift the concrete atomic \code{CAS} to a (potential) update
to the $\rt[spin]_{\rid}$ region.
An application of \ref{rule:consequence} allows us to
introduce $\gamma$ as an upper bound to the impedance budget, initially
$\delta$ after the linearization point.

Then, we apply \cref{rule:atomic-exists-elim} to remove
the pseudo-quantification on $l$ and $\alpha$.
At this point, the abstract state $l,\alpha$ of the region
$\region{spin}{\rid}$ in the postcondition
is weakened to any state that might be reached before
or after the linearization point.
However, we keep record of what happened exactly at the linearization point
because of the $\done{\rid}{\wtv}$ assertions.
The later application of \ref{rule:make-atomic} will be able to fetch
the atomic update witness $\done{\rid}{((l, \alpha), (1, \ordfun(\alpha)))}$
and declare the appropriate atomic update in the overall specification.
Note that the overall Hoare postcondition after the application
of \ref{rule:atomicity-weak} is stable.

\begin{mathfig}
  \begin{proofoutline}\META{\forall b \in \set{0,1}, \beta\st}\\
  \PREC{\exists l, \alpha \st
    \region{spin}{\rid}(\pvar{x}, l, \alpha)
    * \notdone{\rid}
    \land {\alpha > \ordfun(\alpha)}
    \land \beta \ge \alpha
    \\\quad{}
    \land b \implies (l = 0 \lor \beta {{} > {}} \alpha)
    \land \pvar{d} = 0
  }\\
  \begin{proofjump}[rule:atomicity-weak,rule:atomic-exists-elim]
\label{step:spin-lock-linpt}
    \A l \in \set{0,1}, \alpha. \\
    \PREC<
    \region{spin}{\rid}(\pvar{x}, l, \alpha)
    * \notdone{\rid}
    \land {\alpha > \ordfun(\alpha)}
    \land \beta \ge \alpha \land {} \\
    b \implies (l = 0 \lor \beta {{} > {}} \alpha)
    \land \pvar{d} = 0
    >\\
    \begin{proofjump}[rule:consequence]
      \PREC<
      \region{spin}{\rid}(\pvar{x}, l, \alpha)
      * \notdone{\rid}
      \land {\alpha > \ordfun(\alpha)}
      \land \beta \ge \alpha \land {} \\
      b \implies (l = 0 \lor \beta {{} > {}} \alpha)
      \land \pvar{d} = 0
      >\\
      \begin{proofjump}[rule:update-region]
        \PREC<
        \pvar{x} \mapsto l \land
             {\alpha > \ordfun(\alpha)} \land
             \beta \ge \alpha \land {} \\
             b \implies (l = 0 \lor \beta {{} > {}} \alpha)
             >\\
             \begin{proofjump}[rule:frame]
               \PREC<\pvar{x} \mapsto l>\\
               \CODE{d := CAS(x,0,1);}\\
               \POST<\pvar{x} \mapsto 1 \land
               ((\pvar{d} = 0 \land l = 1) \lor (\pvar{d} = 1 \land l = 0))
               >
             \end{proofjump} \\
             \POST< \exists \delta \st
             \pvar{x} \mapsto 1 \land
\left(
                  {\begin{array}{r@{\;}l}
                      &(
                      \pvar{d} = 1 \land
                      l = 0 \land
                      \delta = \ordfun(\alpha) \land
                             {\beta > \ordfun(\alpha)}
                             ) \\ \lor &
                             (
                             \pvar{d} = 0 \land
                             l = 1 \land
                             \delta = \alpha \land
                                    {\alpha > \ordfun(\alpha)} \land
                                    b \implies {\beta > \alpha}
                                    )
                  \end{array}}
                  \right) \land \beta \ge \alpha
                  >\\
      \end{proofjump} \\
      \POST< \exists \delta \st
      \region{spin}{\rid}(\pvar{x}, 1, \delta)
      * \left(
      {\begin{array}{r@{\;}l}
          &(
          \pvar{d} = 1 \land
          l = 0 \land
          {\beta > \delta}
          \land
          \done{\rid}{((l,\alpha),(1,\ordfun(\alpha)))}
          ) \\ \lor &
          (
          \pvar{d} = 0 \land
          l = 1 \land
          {\delta > \ordfun(\delta)} \land
          b \implies {\beta > \delta}
          \land
          \notdone{\rid}
          )
      \end{array}}
      \right)
      \land \beta \ge \delta
      >\\
    \end{proofjump} \\
    \POST<\exists \gamma, \delta \st
    \region{spin}{\rid}(\pvar{x}, 1, \delta) \land
    \beta \ge \gamma \ge \delta \land
    b \implies \beta {{} > {}} \gamma
    \\ \quad{}
    * \left(
    {\begin{array}{r@{\;}l}
        & (
        \pvar{d} = 0 \land
             {\delta >  \ordfun(\delta)} \land
             \notdone{\rid}
             )
             \\ \lor &
             (
             \pvar{d} = 1 \land l = 0 \land
             \done{\rid}{((l, \alpha), (1, \ordfun(\alpha)))}
             )
    \end{array}}
    \right)
    >
  \end{proofjump} \\
  \POST{\exists l, \alpha, \gamma \st
    \region{spin}{\rid}(\pvar{x}, l, \alpha) \land
    \beta \ge \gamma \ge \alpha \land
    b \implies \beta > \gamma
    \\ \quad{}
    * \left(
    \begin{array}{r@{\;}l}
      & 
      \bigl(
      \pvar{d} = 0 \land
      \alpha > \ordfun(\alpha) \land
      \notdone{\rid}
      \bigr)
      \\ \lor &
      \bigl(\exists l', \alpha' \st
      \pvar{d} = 1 \land
      \done{\rid}{((l', \alpha'), (1, \ordfun(\alpha')))}  \land
      l' = 0
      \bigr)
    \end{array}
    \right)
  }
\end{proofoutline}
   \caption{Spin lock: Proof of \protect{\code{while}} loop body.}
  \label{fig:spin-lock-body-outline} 
\end{mathfig}

Finally, \cref{fig:spin-unlock-outline} shows the proof outlines for the
\code{makeLock} and \code{unlock} operations.
The only notable step of the proof of \code{makeLock} is the last application
of \ref{rule:consequence} to viewshift the postcondition
from $ \pvar{ret} \mapsto 0$ to
$\exists \rid\st \region{spin}{\rid}(x, 0, \alpha) * \guardA{\gEx}{\rid}$,
which is possible because the interpretation of the region matches with this
resource, so the reifications of the two assertions coincide.

The proof of \code{unlock} is a straightforward lifting of the atomic
reset of the cell at \code{x} to the region $\rt[spin]_{\rid}$.
Neither proof involves a liveness argument.

\begin{mathfig}
  \begin{proofoutline}
  \TITLE{Proof of \code{makeLock()}:}
  \CTXT \botOf; \emptyset |- \\
  \PREC{\emp} \\
  \begin{proofjump}[rule:consequence]
    \PREC{\emp} \\
    \CODE{ret := alloc(1);} \\
    \CODE{[ret] := 0;} \\
    \POST{\ret \mapsto 0} \\
  \end{proofjump} \\
  \POST{\exists \rid\st \ap{L}(\rid, \ret, 0, \alpha)}
\end{proofoutline}
   \hspace{6em}
  \begin{proofoutline}
  \TITLE{Proof of \code{unlock(x)}:}
  \CTXT \botOf; \emptyset |- \\
  \PREC<\ap{L}(\rid, \pvar{x}, 1, \alpha)> \\
  \begin{proofjump}[rule:consequence]
    \PREC<\region{spin}{\rid}(\pvar{x}, 1, \alpha) * \guardA{\gEx}{\rid}> \\
    \begin{proofjump*}
      [rule:lift-atomic,rule:frame,rule:subst]\label{step:spin-unlock}\PREC<\pvar{x} \mapsto 1> \\
      \CODE{[x] := 0;} \\
      \POST<\pvar{x} \mapsto 0> \\
    \end{proofjump*} \\
    \POST<\region{spin}{\rid}(\pvar{x}, 0, \alpha) * \guardA{\gEx}{\rid}>
  \end{proofjump} \\
  \POST<\ap{L}(\rid, \pvar{x}, 0, \alpha)>
\end{proofoutline}
   \caption{Spin lock: proof of \protect{\code{makeLock}} and \protect{\code{unlock}}. 
    Here \cref{step:spin-unlock} is \explainproofjump{step:spin-unlock}.
  }
  \label{fig:spin-unlock-outline}
  \label{fig:spin-make-outline}
\end{mathfig}
 
\subsection{CLH lock}
\label{examples:clh_lock}

\begin{figure}
  \centering\small \begin{tikzpicture}[thick]

  \node (x) at (-2,0) {\p{x}};
  \node (head) {head};
  \node (tail) at (1.1,0) {tail};
  \node[draw=gray, rounded corners, fit=(head)] (headbox) {} ;
  \node[draw=gray, rounded corners, fit=(tail)] (tailbox) {} ;
  \draw[->] (x) edge (headbox);

  \node[minimum size=1.5em] (cell0) at ($ (head) + (-5.5,2) $) {$l$};
  \node[draw=gray, rounded corners, inner sep=0, fit=(cell0), label={[label distance=0cm]-50:$cell_0$}] (cell0box) {};
  \draw[->] (headbox) edge[in=-90,out=140] (cell0box);

  \node[minimum size=1.5em] (cell1) at ($ (cell0) + (2,0) $) {$1$};
  \node[draw=gray, rounded corners, inner sep=0, fit=(cell1), label={[label distance=0cm]-50:$cell_1$}] (cell1box) {};

  \node[minimum size=1.5em] (cell2) at ($ (cell1) + (2,0) $) {$1$};
  \node[draw=gray, rounded corners, inner sep=0, fit=(cell2), label={[label distance=0cm]-50:$cell_2$}] (cell2box) {};

  \node (dots) at ($ (cell2) + (2,0) $) {$\cdots$};

  \node[minimum size=1.5em] (cellnm1) at ($ (dots) + (2,0) $) {$1$};
  \node[draw=gray, rounded corners, inner sep=0, fit=(cellnm1), label={[label distance=0cm]-50:$cell_{n-1}$}] (cellnm1box) {};

  \node[minimum size=1.5em] (celln) at ($ (cellnm1) + (2,0) $) {$1$};
  \node[draw=gray, rounded corners, inner sep=0, fit=(celln), label={[label distance=0cm]-50:$cell_n$}] (cellnbox) {};
  \draw[->] (tailbox) edge[in=-90,out=45] (cellnbox);

  \node (p0) at ($(cell0) + (0.7,1)$) {\strut\p{p}};
  \node (c0) at ($(p0) + (0.5,0)$) {\strut\p{c}};
  \node[draw=gray, rounded corners, dotted, fit=(p0)(c0), label={[label distance=0cm]130:$t_1$}] (t1box) {};
  \draw[->] (p0) edge (cell0);
  \draw[->] (c0) edge (cell1);

  \node (p1) at ($(cell1) + (0.7,1)$) {\strut\p{p}};
  \node (c1) at ($(p1) + (0.5,0)$) {\strut\p{c}};
  \node[draw=gray, rounded corners, dotted, fit=(p1)(c1), label={[label distance=0cm]130:$t_2$}] (t2box) {};
  \draw[->] (p1) edge (cell1);
  \draw[->] (c1) edge (cell2);

  \node (adots) at ($(dots) + (0,1)$) {$\cdots$};
  \draw[->] (adots) edge (cell2box);
  \draw[->] (adots) edge (cellnm1box);
  
  \node (pn) at ($(cellnm1) + (0.7,1)$) {\strut\p{p}};
  \node (cn) at ($(pn) + (0.5,0)$) {\strut\p{c}};
  \node[draw=gray, rounded corners, dotted, fit=(pn)(cn), label={[label distance=0cm]130:$t_n$}] (tnbox) {};
  \draw[->] (pn) edge (cellnm1);
  \draw[->] (cn) edge (celln);

\end{tikzpicture}   \caption{Illustration of the memory layout of CLH lock.}
  \label{fig:clh-functioning}
\end{figure}

\paragraph{Code} A CLH lock is an implementation of a fair lock module that
guarantees fairness by queuing the threads that are waiting to take its possession.
Its implementation is shown in \cref{fig:clh-code}.
\begin{figure}
\begin{tabular}{c@{\hspace{3em}}c@{\hspace{3em}}c}
  {\codefromfile[numbers=left]{examples/clh-lock/code-make}} &
  {\codefromfile[numbers=left]{examples/clh-lock/code-lock}} &
  {\codefromfile[numbers=left]{examples/clh-lock/code-unlock}}
\end{tabular}
\caption{Code of CLH lock operation.}
\label{fig:clh-code}
\end{figure}

The diagram in \cref{fig:clh-functioning} describes the state of the queued threads,
$t_1, t_2, \ldots, t_n$, waiting to take possession of the lock, as well as the
module's head and tail pointers into the queue.

As described in \cref{sec:overview}, this queue is represented by associating
each of the $n$ threads queuing on the lock
with the heap cells $cell_1, cell_2, \ldots, cell_{n-1}, cell_n$ in memory.
Each thread executing the \p{lock} operation to take possession of the lock then
holds in its local state the address of its cell and that of its predecessor's cell.
These are held in the program variables \p{c} and \p{p} respectively in
the implementation of \p{lock}.
The local instance of these program variables for each queued threads
and the cells they are pointing to can be seen in \cref{fig:clh-functioning}.

The thread associated with the cell
at the head of the queue is said to hold the lock,
and the value stored in its cell determines the state of the lock, $l$.
When a thread first takes possession of the lock, the lock will be locked.
Therefore, the initial value in these cells, when the associated
threads join the queue, is $1$.
This can be seen in the implementation of the \p{lock} operation which
allocates and sets its associated cell to value $1$ on line \ref{line:clh-lock-alloc}
before enqueuing itself.
Once the thread holding the lock wishes
to release it, it can do so by setting the value of its cell to $0$,
unlocking the lock and signalling to the next thread in the queue that
it can now take possession of the lock. This can be
seen in the implementation of the \p{unlock} operation
which fetches the address of the cell associated with
the lock's owner from the queue's head pointer
and then sets its value to $0$.

In \cref{fig:clh-functioning} the thread $t_1$ is at the head of the queue,
waiting for the lock to be released. If the lock is released by its owner,
$t_1$ then gains the exclusive permission to take possession of the lock by setting
the value of the module's head pointer to the address of its associated cell.
$t_1$ detects the lock has been
released by repeatedly reading the value of its predecessor's cell in the \p{while}
loop on line \ref{line:clh-lock-wait} and then sets the head pointer to the address
of its cell, \p{c}, on line \ref{line:clh-lock-lin}.

Once the lock is released, only the thread at the head of the queue (if any) has
the permission to take possession of the lock next. Due to this, if the owners
of the lock continuously eventually release it, the threads waiting on the lock
take possession of it in the order they are enqueued.

To enqueue itself, the \p{lock} operation performs a \p{FAS} operation on the tail pointer,
placing the cell it has allocated with value $1$ at the tail of the queue, and writting the
address of its predecessor to the \p{p} program variable. The order in which the \p{lock}
operations are enqueued is then the order in which they executed line \ref{line:clh-lock-enqueue}.
Any weakly fair scheduler will eventually give each thread executing the \p{lock} operation
the opportunity to execute this \p{FAS} operation, allowing it to enqueue itself.

As long as the client then guarantees that every thread holding the lock eventually releases it,
the thread will eventually take possession of the lock once it reaches the front of the queue and
the \p{lock} operation will terminate, guaranteeing fairness.

To be able to provide the same guarantee,
that every thread requesting the lock will eventually be able to take its possession
as long as the lock is always eventually released, the spin lock requires that its client
only call the \p{lock} operation concurrently a finite number of times.
This is exposed in the spin lock specification via ordinals bounding the impedance on the lock.

An interesting aspect of this example is that it features a combination
of internal and external blocking:
the client needs to always eventually unlock the lock
---external blocking, requiring the client to provide a guarantee---
and the \code{lock} operation needs to eventually take possession of the lock
once the previous thread signals its release ---internal blocking,
guaranteed by the implementation.
This second guarantee will be enforced using obligations not exposed in the specification.
The proof will therefore involve an environment liveness condition
discharged using both \ref{rule:envlive-obl} and \ref{rule:envlive-pq}.

\paragraph{Specifications}
We will prove the following fair lock module specifications:
\begin{align*}
  &\ATRIPLE \topOf |-
    \A l \in \set{0, 1} \eventually[{\botOf}] \set{0}.
      <\ap{L}(s, \pvar{x}, l)>
        \code{lock(x)}
      <\ap{L}(s, \pvar{x}, 1) \land l = 0>
  \\
  &\ATRIPLE \botOf |-
    <\ap{L}(s, \pvar{x}, 1)>
      \code{unlock(x)}
    <\ap{L}(s, \pvar{x}, 0)>
\end{align*}where $\ap{L}(s, \pvar{x}, l)$ abstractly represents the lock resource
at abstract location~$s$
  (omitted for readability in \cref{sec:overview}) and
concrete address~$\pvar{x}$,
with abstract state~$l \in \set{0,1}$.

To abstract the representation of a thread's position in the queue,
we will associate, through ghost state, to each thread requesting the lock,
a \emph{ticket number}~$t \in \Nat$ which corresponds to the order of arrival
of the \p{lock} implementation at line \ref{line:clh-lock-enqueue}.
Every time a thread joins the queue, it gets assigned the next available ticket.

This example shows a common proof pattern of \tadalive:
there is an inner region that exposes all the information needed for the
termination argument (here the value of the next ticket to
be handed out,~$t$, so that individual threads can reason
about the threads queuing on the lock)
and an outer one that hides enough information to make the operation
abstractly atomic.
This pattern nicely separates the concerns in the proof:
proving atomicity is done via the outer region, termination via the inner one.
Because of this, the abstract location of the lock~$s$ will consist of the
pair of inner and outer region identifiers.
This is not a concern for modularity however:
the type of~$s$ can be made opaque to the client,
which just threads it through the proof unmodified.

\paragraph{Shared Regions}
The abstract shared lock resource will be represented by a region
$ \region{clh}{\rid}(\rid', x, h, l, o) $
where
  $\rid' \in \RId$,
  $ x, h \in \Addr $,
  $ l \in \set{0,1} $,
  $ o \in \Nat $.
Here $\rid'$, the region identifier of the inner region and
$x$, the address of the lock, are the fixed parameters of the region.
The abstract state of the region includes~$l$, which represents the lock's state,
$o$, which is the ticket number of the thread holding the lock,
and~$h$ is the address of the cell associated with the owner.

Once a \p{lock} operation has enqueued itself, the difference between the
ticket of the lock's owner,~$o$ and the operation's ticket, $t$, $t - o$,
corresponds to the thread's current position in the queue.

The internal region $ \region{lclh}{\rid'}(x, h, l, o, t) $ also
exposes the next ticket to be handed to the next thread
queuing on the lock, $t \in \Nat$.

\paragraph{Notation}
Lists will frequently be used in the ghost state for the proof
of the CLH lock. We introduce notation to manipulate lists so
as to simplify the exposition of the reasoning.
Given $n\in X$ and $ ns,ns' \in X^* $ lists of elements of $X$,
we write
  $n \lstPlus ns$,
  $ns \lstPlus n$, and
  $ns \lstPlus ns'$
for prepend, append, and concatenation, respectively;
$\lstLen{ns}$ is the length of~$ns$,
and $ \lstAt{ns}{i} = n $ states
that the \pre i-th element (from~0) in~$ns$ is~$n$ and $i<\lstLen{ns}$;
$\lfun{fst}(ns)$ and $\lfun{last}(ns)$
are the first and the last element of~$ns$, respectively
and $\lfun{tail}(ns)$ represents the list~$ns$ without the first element
when~$ns$ is non empty.

\paragraph{Guard algebra:}
Take $p, c \in \Addr, ns \in \Addr^{*}, o, t \in \Nat$ arbitrary.
For this proof, two guards will be necessary. First $\guard{t}(p, c, t)$,
which encodes the current thread's ticket, $t$, once it has joined the queue,
as well as $p, c \in \Addr$, pointers to the thread's predecessor's cell
in the queue and its own respectively.
The second guard we require is $\guard{q}(ns, o)$,
which is used to track the overall queue, by tracking
the cells associated with enqueued threads, $ns \in \Addr^{*}$,
and the ticket number of the current owner, $o \in \Nat$.

To use this as intended, a few axioms on the guard algebra will be required.
First, an axiom to create new tickets, adding a new cell to the queue and associating
a new, unique ticket number to the thread:
\[
  \guard{q}(ns \lstPlus \lst{p}, o) =
    \guard{q}(ns \lstPlus \lst{p, c}, o)
      \guardOp
    \guard{t}(p, c, o + \lstLen{ns} + 1)
\]
This will be used to create the relevant guard resources $\guard{t}$, when a lock
operation enqueues itself on line \ref{line:clh-lock-enqueue}.
Similarly, an axiom to remove a thread's predecessor from the queue once it can take possession of the lock:
\[
  \guard{q}(\lst{p,c} \lstPlus ns, o)
    \guardOp
  \guard{t}(p, c, o + 1)
  =
  \guard{q}(\lst{c} \lstPlus ns, o + 1)
\]
This will be used to update the relevant guard resources $\guard{q}$ with the relevant $\guard{t}$,
when a lock operation takes possession of the lock on line \ref{line:clh-lock-lin},
placing its associated cell, $c$, at the head of the queue.
Finally, an axiom to guarantee that a ticket guard, $\guard{t}$ is well-formed
with respect to the queue in a guard $\guard{q}$:
\[
  \guardDef{\guard{q}(ns, o) \guardOp \guard{t}(p, c, t)}
  \iff
  \lstAt{ns}{t - o - 1} = p \land \lstAt{ns}{t - o} = c
\]

\paragraph{Obligation algebra:}
Take $o, o', t, t' \in \mathbb{N}$ arbitrary.
As mentioned above, to verify the totality of the CLH lock operation, once a thread is enqueued,
if its predecessor relinquishes possession of the lock, it must eventually take its possession.
Otherwise, although the lock will be permanently unlocked, no other thread waiting for the lock
can take its possession, as they are not at the head of the queue.

To encode this liveness invariant which must be fulfilled, we associate an atom obligation $\obl{p}(t)$
with the ownership of the ticket $t \in \Nat$.
The CLH lock's transition system will then require that this obligation be discharged by
taking possession of the lock once it is unlocked by the thread with ticket $t - 1$.

The layer associated with $\obl{p}(t)$ is then $t$, so that these obligations are resolved
in the order the associated threads are enqueued. Finally, as with the guard algebra, we have
an obligation $\obl{o}(o, t)$, which will remain in the shared region's state and
track the owner's ticket, $o$, and the next ticket to be handed out, $t$,
associated with the obligation $\obl{p}$ via the obvious axioms.
\begin{gather*}
  \obl{o}(o, t)     = \obl{o}(o, t+1) \oblOp \obl{p}(t) 
  \qquad\qquad
  \obl{o}(o + 1, t) = \obl{o}(o, t) \oblOp \obl{p}(o + 1) 
  \\
  \oblDef{\obl{o}(o, t) \oblOp \obl{p}(t')}
    \iff
    o \le t' < t
  \\
  \Layer \is \Nat\union\set{\topOf,\botOf}
  \quad
  \forall i \in \Nat \st
    \topOf \laygt i \laygt \botOf
  \qquad
  \lay(\obl{o}(o,t)) = 0
  \quad
  \lay(\obl{p}(t)) = t
\end{gather*}

\paragraph{Region protocols}
The interference protocol for the $\region{lclh}{}$ region is as follows:
\begin{align*}
  \guard{e} &: ((h, l, o, t), \oblZero) \interfTo ((h, l, o, t + 1), \obl{p}(t)) \\
  \guard{e} &: ((h, 0, o, t), \obl{p}(o+1)) \interfTo ((h', 1, o + 1, t), \oblZero) \\
  \guard{e} &: ((h, 1, o, t), \oblZero) \interfTo ((h, 0, o, t), \oblZero)
\end{align*}

The first transition allows a thread to place itself in the queue waiting to
obtain the CLH lock, updating the next ticket to be handed out from $t$ to
$t + 1$.
While doing so, the threads acquires an obligation,
$\obl{p}(t)$, requiring it to eventually take possession of the lock
once it is at the head of the queue.
The second, allows the thread at the head of the queue to take possession of the lock,
by changing the state, $l$, incrementing the owner ticket, $o$, to its own
(tracked by the thread's obligation) and changing the owner pointer of the lock to
that of its own associated cell.
This discharges the obligation $\obl{p}(o + 1)$, as the thread then leaves the queue,
to take possession of the lock.
Finally, the third transition allows the lock to be unlocked.

The interference protocol for the $\region{clh}{}$ region is then:
\begin{align*}
  \guard{e} &: ((h, l, o), \oblZero) \interfTo ((h, l, o), \oblZero) \\
  \guard{e} &: ((h, 0, o), \oblZero) \interfTo ((h', 1, o + 1), \oblZero) \\
  \guard{e} &: ((h, 1, o), \oblZero) \interfTo ((h, 0, o), \oblZero)
\end{align*}

This hides the enqueuing step of the \p{lock} operation, allowing the operation
to appear atomic.

\paragraph{Region interpretation}
As explained above, the CLH lock associates a cell with each thread
queuing on it, as well as its owner.
The list of each of these cells in the order in which
the associated threads are queued, with the owner's cell as the head,
will be denoted $\lvar{ns}$. $\lfun{tail}(ns)$ is then the list of cells
queueing on the lock.
While threads are queuing, the associated cells must have value~$1$;
this is represented using the predicate $\lfun{ones}$:
\[
\lfun{ones}(ns) \is \lstAt{ns}{1} \mapsto 1 * \dots * \lstAt{ns}{\lstLen{ns}-1} \mapsto 1
\]
The inner shared region, $\region{lclh}{}$, holds the cells associated with each queued thread, this is represented by
the resource $\lfun{ones}(\lvar{ns})$ in the region interpretatio. 

The shared region also holds a pointer to the tail of the queue,~$\lvar{ns}$,
as well as a pointer to its owner's cell,
whose value is the state of the lock,~$l$, as described above.
This is represented by the resource:
\[
x \mapsto h, \lfun{last}(ns) * h \mapsto l
\]
The shared region's ghost state is then comprised of:
\begin{itemize}
\item $\guardA{\guard{q}(ns, o)}{\rid'}$ the guard keeping track of the list of cells,
  $ns \in \Addr^{*}$ and the current owner of the lock, $o \in \Nat$.
\item $\locObl{\obl{o}(o, t)}{\rid'}$ the obligation keeping track of the next ticket
  to hand out, $t \in \Nat$, and the current owner's ticket, $o \in \Nat$.
\end{itemize}
Finally, the invariant $t - o = \lstLen{\lvar{ns}}$ is used to guarantee that
each thread that holds a ticket is associated with a cell in the queue $\lvar{ns}$ and
$h = ns(0)$, associates the head of $ns$ and the address of the owner's cell.
All of this ties together to give the following region interpretation:
\begin{multline*}
  \rInt(\region{lclh}{\rid'}(x, h, l, o, t)) \is
  \exists ns \in \Addr^{*} \st \;
  x \mapsto h, \lfun{last}(ns) * {} \\
  h \mapsto l * \lfun{ones}(ns) *
  \guardA{\guard{q}(ns, o)}{\rid'} *
  \locObl{\obl{o}(o, t)}{\rid'} \land
  t - o = \lstLen{ns} \land
  ns(0) = h
\end{multline*}
The outer shared region then holds full permission to update the inner
region, $\guardA{\guard{e}}{\rid'}$, and asserts that each
thread queuing on the lock, with tickets $o + 1$ to $t - 1$,
holds an obligation to take possession of the lock once
their predecessor releases it,
$\smash{\Sep*^{t - 1}_{i=o+1} \envObl{\obl{p}(i)}{\rid'}}$,
where $\rid'$ is the identifier of the inner region:
\[
  \rInt(\region{clh}{\rid}(\rid', x, h, l, o)) \is
  \exists t \in \Nat \st
  \region{lclh}{\rid'}(x, h, l, o, t) *
  \guardA{\guard{e}}{\rid'} *
  \smash{\Sep*^{t - 1}_{i=o+1} \envObl{\obl{p}(i)}{\rid'}}
\]

\paragraph{Predicates}
The lock resource is then abstractly represented by the predicate:
\[
\ap{L}(s, x, l) \is
  \exists \rid,\rid'\st s=(\rid,\rid') \land
  \exists o \in \Nat \st
  \exists h \in \Addr \st
  \region{clh}{\rid}(\rid', x, h, l, o) * \guardA{\gEx}{\rid'}
\]
which abstracts away the CLH lock's implementation
details: the ticket and cell address
associated with the lock's current owner.

\paragraph{Proof of \code{lock}}
\begin{mathfig}[tb]
\resizebox{.6\textwidth}{!}{
  \begin{proofoutline}
    \CTXT \topOf; \emptyset |-
    \A l \in \set{0, 1} \eventually[{\botOf}] \set{0}. \\
    \PREC<\ap{L}(s, \pvar{x}, l)> \quad\META{s = (\rid,\rid')}\\
    \begin{proofjump}[rule:atomic-exists-elim]
      \CTXT \topOf; \emptyset |- \\
      \A l \in \set{0, 1} \eventually[{\botOf}] \set{0}, o, h. \\
      \PREC<\region{clh}{\rid}(\rid', x, h, l, o) * \guardA{\guard{e}}{\rid'}> \\
      \begin{proofjump}[rule:make-atomic]
        \CTXT \topOf; \map{\rid -> (X_1, {\botOf}, X_2, R)} |- \\
        \PREC{\exists l, o, h \st \region{clh}{\rid}(\rid', \pvar{x}, h, l, o) *
          \done{\rid}{\blacklozenge}} \\
        \CODE{c := alloc(1); [c] := 1;} \\
        \PREC{\exists l, o, h \st \region{clh}{\rid}(\rid', \pvar{x}, h, l, o) *
          \done{\rid}{\blacklozenge} * \pvar{c} \mapsto 1} \\
        \CODE{p := FAS(x + 1, c);} \\
\CODE{v := [p];} \\
        \ASSR{\exists \beta, t' \st P(\beta) * L} \\
        \CODE{while(v != 0) \{ v := [p]; \}}\\
        \ASSR{\exists t' \st P(0) * L \land \p{v}=0}\\
        \CODE{[x] := c;} \\
        \ASSR{\exists o \in \Nat, h, h'\st \done{\rid}{((h, 0, o), (h', 1, o + 1))} * \pvar{p} \mapsto 0 \land l = 0} \\
        \CODE{dealloc(p)} \\
        \POST{\exists o \st \done{\rid}{((\wtv, 0, o), (\wtv, 1, o + 1))} \land l = 0} \\
      \end{proofjump} \\
      \POST<\exists h' \st\region{clh}{\rid}(\rid', \pvar{x}, h', 1, o + 1) * \guardA{\guard{e}}{\rid'} \land l = 0> \\
    \end{proofjump} \\
    \POST<\ap{L}(s, \pvar{x}, 1) \land l = 0>
  \end{proofoutline}
}\hspace*{-3.5em}\begin{minipage}[b]{.4\textwidth}
\begin{align*}
X_1 &\is \set{ (h,l,o) | h \in \Addr, l \in \set{0,1}, o \in \Nat }
  \\
  X_2 &\is \set{ (h,0,o) | h \in \Addr, o \in \Nat }
  \\
  R &\is \set{((h,0,o),(h',1,o)) | h,h' \in \Addr, o\in \Nat}
  \\[1ex]
  P(\beta) &\is
    \exists l, o, h \st
      \region{clh}{\rid}(\rid', \pvar{x}, h, l, o) * \guardA{\guard{t}(\pvar{p}, \pvar{c}, t')}{\rid'} \land {} \\ & \qquad 
        (\pvar{v} = 0 \Rightarrow (t' = o + 1 \land l = 0 \land h = \pvar{p})) \land {}
      \\ & \qquad 
        o < t'\land \beta = \pvar{v} \\[1ex]
  L &=
    \exists l \in \set{0,1}, o, h \st
      \region{clh}{\rid}(\rid', \pvar{x}, h, l, o) * \locObl{\obl{p}(t')}{\rid'} * {} \\ & \qquad
      \Sep*^{t' - 1}_{i=o+1} \envObl{\obl{p}(i)}{\rid'} * \done{\rid}{\blacklozenge} \land o < t'
\end{align*}
\end{minipage}
\caption{Outline of CLH \texttt{lock} proof.}
\label{fig:clh-outline}
\end{mathfig}
Figure \ref{fig:clh-outline} gives an outline of the proof of the clh $\code{lock}$
operation implementation, the definition of the loop invariant $P(\beta)$ will be given later.
The steps involving liveness are the $\code{FAS}$ operation,
which enqueues the thread, hence obtaining the obligation to take possesion of the lock
once the previous thread relinquishes possession of it,
the while loop which waits for the previous thread to release the lock,
whose liveness depends on the previous threads in the queue taking possession
and then releasing the lock in turn and the write operation at line 
\ref{line:clh-lock-lin} which takes possession of the lock.
We begin with the details of the $\code{FAS}$ operation's proof,
shown in \cref{fig:clh-fas}.

\begin{mathfig}[tb]
\begin{proofoutline}
  \CTXT \topOf; \map{\rid -> (X_1, {\botOf}, X_2, R)} |- \\
  \ASSR{
         \exists l \in \set{0,1}, o \in \Nat, h \in \Addr \st
         \region{clh}{\rid}(\rid', \pvar{x}, h, l, o) *
         \done{\rid}{\blacklozenge} *
         \pvar{c} \mapsto 1
  } \\
  \begin{proofjump*}[rule:frame-hoare,rule:atomicity-weak,rule:atomic-exists-elim,rule:lift-atomic,rule:atomic-exists-elim,rule:lift-atomic,rule:atomic-exists-elim]
    \label{proof:clh-lock-enqueue}
    \CTXT \topOf; \map{\rid -> (X_1, {\botOf}, X_2, R)} |- \\
    \A l \in \set{0,1}, o, t \in \Nat, h \in \Addr, ns \in \Addr^{*}. \\
    \PREC<
    \pvar{x} \mapsto h, \lfun{last}(ns) *
    h \mapsto l * \lfun{ones}(ns) *
    \guardA{\guard{q}(ns, o)}{\rid'} *
    \locObl{\obl{o}(o, t)}{\rid'} * {}\\
    \Sep*^{t - 1}_{i=o+1} \envObl{\obl{p}(i)}{\rid'} \land
    t - o = \lstLen{ns} \land
    ns(0) = h \ssep
    \pvar{c} \mapsto 1
    > \\
    \begin{proofjump}[rule:consequence]
      \PREC<
      \pvar{x} \mapsto h, \lfun{last}(ns) *
      h \mapsto l * \lfun{ones}(ns) *
      \guardA{\guard{q}(ns, o)}{\rid'} *
      \locObl{\obl{o}(o, t)}{\rid'} * {}\\
      \Sep*^{t - 1}_{i=o+1} \envObl{\obl{p}(i)}{\rid'} \land
      t - o = \lstLen{ns} \land
      ns(0) = h \ssep
      \pvar{c} \mapsto 1
      > \\
      \begin{proofjump}[rule:layer-weak,rule:frame]
        \CTXT \botOf; \map{\rid -> (X_1, {\botOf}, X_2, R)} |- \\
        \PREC<\pvar{x} + 1 \mapsto \lfun{last}(ns)> \\
        \CODE{p := FAS(x + 1, c);} \\
        \POST<\pvar{x} + 1 \mapsto \pvar{c} \land \pvar{p} = \lfun{last}(ns)>
      \end{proofjump} \\
       \ASSR<
       \pvar{x} \mapsto h, \pvar{c} *
       h \mapsto l * \lfun{ones}(ns) *
       \pvar{c} \mapsto 1 * \guardA{\guard{q}(ns, o)}{\rid'} *
       \locObl{\obl{o}(o, t)}{\rid'} * {}\\
       \Sep*^{t - 1}_{i=o+1} \envObl{\obl{p}(i)}{\rid'} \land
       t - o = \lstLen{ns} \land
       ns(0) = h
       > \\
    \end{proofjump} \\
    \ASSR<
    \exists ns' \in \Addr^{*} \st
    \pvar{x} \mapsto h, \lfun{last}(ns') *
    h \mapsto l * \lfun{ones}(ns') *
    \guardA{\guard{q}(ns', o)}{\rid'} *
    \locObl{\obl{o}(o, t + 1)}{\rid'} * {} \\
    \Sep*^{t}_{i=o+1} \envObl{\obl{p}(i)}{\rid'} \land
    (t + 1) - o = \lstLen{ns'}  \land
    ns(0) = h \land
    ns'=ns \oplus \pvar{c}
    \ssep (\guardA{\guard{t}(\pvar{p}, \pvar{c}, t)}{\rid'} * {}\\
    \locObl{\obl{p}(t)}{\rid'} * 
    \Sep*^{t - 1}_{i=o+1} \envObl{\obl{p}(i)}{\rid'} \land o < t)
    > \\
  \end{proofjump*} \\
  \ASSR{
         \exists l \in \set{0,1}, o, t' \in \Nat, h \in \Addr \st
         \region{clh}{\rid}(\rid', \pvar{x}, h, l, o) *
         \done{\rid}{\blacklozenge} *
         \guardA{\guard{t}(\pvar{p}, \pvar{c}, t')}{\rid'} *
         \locObl{\obl{p}(t')}{\rid'} *
         \Sep*^{t' - 1}_{i=o+1} \envObl{\obl{p}(i)}{\rid'} \land
         o < t'
  }
\end{proofoutline}
\caption{Proof outline of the \texttt{FAS} call of CLH \texttt{lock}.}
\label{fig:clh-fas}
\end{mathfig}

There, \cref{proof:clh-lock-enqueue} is composed of the rules:
\explainproofjump{proof:clh-lock-enqueue}.
The application of the \ref{rule:frame-hoare} rule frames off the view $\done{\rid}{\blacklozenge}$,
the \ref{rule:atomicity-weak} rule transfers all the remaining resources to the atomic
precondition and postcondition, the \ref{rule:atomic-exists-elim} rule pseudo-quantifies~$l$, $o$ and $h$,
\ref{rule:lift-atomic} then opens up the region $\region{clh}{\rid}$,
the applications of \ref{rule:atomic-exists-elim} and \ref{rule:lift-atomic} then pseudo-quantify $t$ and open the
region $\region{lclh}{}$ and the final application of 
\ref{rule:atomic-exists-elim} rule pseudo-quantifies~$\lvar{ns}$.

After using \ref{rule:layer-weak} to decrease the level of the assertion to $\botOf$
and \ref{rule:frame} to frame off everything except the region interpretation's
tail pointer, the \code{FAS} operation atomically updates it.
After everything is framed back on,
the consequence rule is then applied to the postcondition
so as to re-establish the invariant.
The axioms
\begin{align*}
  \guard{q}(ns \lstPlus \lst{p}, o) & = \guard{q}(ns \lstPlus \lst{p, c}, o)
        \guardOp \guard{t}(p, c, o + \lstLen{ns} + 1) \\
        \obl{o}(o, t) &= \obl{o}(o, t+1) \oblOp \obl{p}(t)
\end{align*}
are used to update the queue $ns$, by enqueuing \p{c}
---the local thread's cell---
at its tail, and updating the next ticket to $t' + 1$.
While doing so,
the thread acquires the guard $\guard{t}(\pvar{p}, \pvar{c}, t')$,
the obligation, $\obl{p}(t')$, which represent the thread's position in the
queue and its obligation to take possession of the lock once its predecessor
reliquishes it respectively.

As environmental obligations can always be duplicated, 
the thread also obtains
$\smash{\Sep*^{t' - 1}_{i=o+1} \envObl{\obl{p}(i)}{\rid'}}$ locally.
These environmental assertions will be necessary for the application of the \ref{rule:while} rule.
To finish reestablishing the invariant, as the thread is retaining $\oblA{\obl{p}(t')}{\rid}$ locally,
it can leave $\envObl{\obl{p}(t')}{\rid'}$ in the region invariant.
Finally, using the axiom
\[
\oblDef{\obl{o}(o, t) \oblOp \obl{p}(t')} \iff o \le t' < t
\]
as we hold $\obl{p}(t')$ locally, the assertion $o < t'$ holds stably.

Next, consider the proof of the \code{while} loop. The loop invariant is:
\begin{align*}
P(\beta) \is
  \exists l, o, t', h \st \; &
    \region{clh}{\rid}(\rid', \pvar{x}, h, l, o) *
    \guardA{\guard{t}(\pvar{p}, \pvar{c}, t')}{\rid'}
    \land o < t' \\ {} \land {} &
    (\pvar{v} = 0 \implies (t' = o + 1 \land l = 0 \land h = \pvar{p})) \land
    \beta = \pvar{v}
\end{align*}
which asserts that:
\begin{itemize}
\item $\guardA{\guard{t}(\pvar{p}, \pvar{c}, t')}{\rid'}$, the local thread is queueing for the lock
  with ticket $t'$ and with the address of the predecessor's cell and the current thread's cell in
  $\pvar{p}$ and $\pvar{c}$ respectively.
\item $o < t'$, the current owner must come before the local thread with ticket $t'$.
  This is stable due to the $\guard{t}$ guard. 
\item $\pvar{v} = 0 \implies (t' = o + 1 \land l = 0 \land h = \pvar{p})$,
  if $\pvar{v}$, the last read of the value of the predecessor cell, is $0$,
  then the owner is the predecessor of the current thread has unlocked
  the lock, as only then can it set its cell to $0$.
  Therefore $t' = o + 1$, and, consequently the lock is unlocked, $l = 0$.
  The owner's cell, $h$, will also take the value of that of the predecessor. 
\item $\beta = \pvar{v}$ which asserts that $\beta = 0$ once the thread has observed that its predecessor has
  taken possession of and then unlocked the lock (by reading the cell at address $\pvar{p}$ into $\pvar{v}$).
  $\beta$ will have value $1$ otherwise.
\end{itemize}
A thread with ticket $t'$ can take possession of a CLH lock
once its predecessor has taken possession of and relinquished the lock.
Once the lock reaches this state, $o = t' - 1$ and $l = 0$ hold stabily as
all transitions from this state would set $o \ge t'$, however we know that, $o < t'$.

The intent of this loop is to wait till this occurs,
allowing the thread to safely take possession of the
lock once the loop terminates. Hence, the goal state is:
\[
T =
\exists l \in \set{0,1}, o \in \Nat, h \in \Addr \st
\region{clh}{\rid}(\rid', \pvar{x}, h, l, o) \land
t' = o + 1 \land l = 0 \land h = \pvar{p}
\]
Once the lock reaches this state, a subsequent iteration of this \code{while} loop will terminate with
$\pvar{v} = 0$, breaking the loop. To reach the goal state, threads that come before the current thread
must both take possession and then unlock the lock. The first is guaranteed due to obligations $\obl{p}(t')$
for $t' < t$ and the second due to the pseudo-quantifier, guaranteeing that the lock must always eventually
be released. The progress measure
\[
M(\alpha) =
\exists l \in \set{0,1}, o \in \Nat, h \in \Addr \st
\region{clh}{\rid}(\rid', \pvar{x}, h, l, o) \land
\alpha = 2 (t' - o - 1) + l
\]
is decreased by both of these actions, and as $t' > o$ implies $2 (t' - o - 1) + l \ge 0$, the progress measure, $\alpha$, is a natural number, and therefore well-founded.

The use of the difference between $t'$, the local thread's ticket
and the owner's ticket, $o$, to bound the number of threads that
can take possession of the lock before the local thread
removes the necessity for the impedance bound, $\alpha$,
required in the proof of the spin lock module, and
that must leak in the associated specification
(as it imposses a restriction on any client).

To support this argument, the persistent loop invariant, $L$, must contain the resource $\done{\rid}{\blacklozenge}$
to make use of the liveness assumptions of the pseudo-quantifier,
guaranteeing that the lock is always eventually unlocked, and
the relevant environmental liveness assertions guaranteeing the threads queued
before the current thread will take possession of it once their predecessor
relinquishes it:
\[
L = \exists l \in \set{0,1}, o, t' \in \Nat, h \in \Addr \st
\region{clh}{\rid}(\rid', \pvar{x}, h, l, o) *
\locObl{\obl{p}(t')}{\rid'} * \Sep*^{t' - 1}_{i=o+1} \envObl{\obl{p}(i)}{\rid'} * \done{\rid}{\blacklozenge} \land o < t'
\]

\begin{mathfig}[htb]
\begin{minipage}[t]{.7\textwidth}
\[
\begin{proofoutline}
  \CTXT \topOf; \map{\rid -> (X_1, {\botOf}, X_2, R)} |- \\
  \PREC{\exists t' \in \Nat \st \exists \beta_0 \st
    P(\beta_0) * L
  } \\
  \begin{proofjump}[rule:consequence,rule:exists-elim]
    \META{\forall \beta_0, t \in \Nat \st } \\
    \CTXT \topOf; \map{\rid -> (X_1, {\botOf}, X_2, R)} |- \\
    \PREC{P(\beta_0) * L} \\
    \begin{proofjump}[rule:while]
      \CODE{while(v != 0) \{} \\
      \begin{proofindent}
        \META{\forall \beta \le \beta_0, b \in \mathbb{B} \st } \\
        \PREC{P(\beta) * b \dotimplies T(\beta) \land \pvar{v} \neq 0} \\
\CODE{v := [p];}\\
        \PREC{\exists \gamma \st P(\gamma) \land \gamma \le \beta \land b \dotimplies \gamma < \beta}
\end{proofindent} \\
      \CODE{\}}
    \end{proofjump} \\
    \PREC{\exists \gamma \st P(\gamma) * L \land \gamma \le \beta_0 \land \pvar{v} = 0}
  \end{proofjump} \\
  \POST{
    \exists o \in \Nat \st
    \region{clh}{\rid}(\rid', \pvar{x}, \pvar{p}, 0, o) *
    \done{\rid}{\blacklozenge} *
    \guardA{\guard{t}(\pvar{p}, \pvar{c}, o + 1)}{\rid'} *
    \locObl{\obl{p}(o + 1)}{\rid'}
  }
\end{proofoutline}\]
\end{minipage}\hspace{-.3\textwidth}
\begin{minipage}[t]{.5\textwidth}
\begin{align*}
  X_1 &\is \set{ (h,l,o) | h \in \Addr, l \in \set{0,1}, o \in \Nat }
  \\
  X_2 &\is \set{ (h,0,o) | h \in \Addr, o \in \Nat }
  \\
  R &\is \set{((h,0,o),(h',1,o)) | h,h' \in \Addr, o\in \Nat}
\end{align*}
\end{minipage}
\caption{Application of \ref{rule:while} in the CLH \texttt{lock} proof.}
\label{fig:clh-while}
\end{mathfig}

The \ref{rule:while} rule is applied as in \cref{fig:clh-while}.
The rule \ref{rule:exists-elim} is applied to quantify $t$ and $\beta_0$ over the antecedent.
To complete the application of the rule we need to show
\begin{gather}
  \ENVLIVE \topOf; \actxt |- L - M ->> T(\beta)
  \label{cond:clh-lock-envlive} \\
  \forall \alpha\st
    \STABLE \actxt |=
      {\exists \alpha'\st L * M(\alpha') \land \alpha' \leq \alpha}
  \label{cond:clh-lock-M}
\end{gather}
Condition~\eqref{cond:clh-lock-M} holds trivially,
as seen above all the possible operations on the module
decrease the environmental metric.

\begin{figure}
\adjustfigure[\small]
\begin{proofoutline}
  \META{\forall \beta_0, t \in \Nat, \beta, b \in \mathbb{B} \st } \\
  \CTXT \topOf; \map{\rid -> (X_1, {\botOf}, X_2, R)} |- \\
  \PREC{(\exists l \in \set{0,1}, o \in \Nat, h \in \Addr \st
    \region{clh}{\rid}(\rid', \pvar{x}, h, l, o) * \guardA{\guard{t}(\pvar{p}, \pvar{c}, t)}{\rid'} \land {} \\
    o < t \land (\pvar{v} = 0 \implies (t = o + 1 \land l = 0 \land h = \pvar{p})) \land
    \beta = \pvar{v} \land b \implies
    (t = o + 1 \land l = 0 \land h = \pvar{p}) \land (\pvar{v} \neq 0)} \\
  \begin{proofjump}[rule:atomicity-weak,rule:lift-atomic,rule:atomic-exists-elim,rule:lift-atomic,rule:atomic-exists-elim,rule:consequence]
    \CTXT \topOf; \map{\rid -> (X_1, {\botOf}, X_2, R)} |- \\
    \A l \in \set{0,1}, h \in \Addr, ns \in \Addr^{*}, o, nt \in \Nat. \\
    \PREC<\pvar{x} \mapsto h, \lfun{last}(ns) *  h \mapsto l *
    \lfun{ones}(ns) *
    \guardA{\guard{q}(ns, o)}{\rid} * \locObl{\obl{o}(o, nt)}{\rid'} * \Sep*^{nt - 1}_{i=o+1} \envObl{\obl{p}(i)}{\rid'}
    \land nt - o = \lstLen{ns} \land {} \\
    ns(0) = h \ssep
    (\guardA{\guard{t}(\pvar{p}, \pvar{c}, t)}{\rid'} \land
    o < t \land \beta \ge 1 \land b \implies (t = o + 1 \land l = 0 \land h = \pvar{p}) \land \pvar{p} \in ns)
    > \\
    \begin{proofjump}[rule:layer-weak,rule:frame]
      \CTXT \botOf; \map{\rid -> (X_1, {\botOf}, X_2, R)} |- \\
      \A v \in \set{0,1}. \\
      \PREC<\pvar{p} \mapsto v> \\
      \CODE{v := [p];}\\
      \POST<\pvar{p} \mapsto v \land \pvar{v} = v> 
    \end{proofjump} \\
    \POST<
    \pvar{x} \mapsto h, \lfun{last}(ns) *  h \mapsto l *
    \lfun{ones}(ns) *
    \guardA{\guard{q}(ns, o)}{\rid} * \locObl{\obl{o}(o, nt)}{\rid'} * \Sep*^{nt - 1}_{i=o+1} \envObl{\obl{p}(i)}{\rid'} \land nt - o = \lstLen{ns} \land {} \\
    ns(0) = h \ssep
    (\guardA{\guard{t}(\pvar{p}, \pvar{c}, t)}{\rid'} \land
    \beta = 1 \land {} \\
    \exists v \in \set{0,1} \st \pvar{v} = v \land
    b \implies \pvar{v} = 0 \land
    (\pvar{v} = 0 \implies (t = o + 1 \land l = 0 \land h = \pvar{p})))
    >
  \end{proofjump} \\
  \POST{\exists l \in \set{0,1}, o \in \Nat, h \in \Addr, \gamma \st
    \region{clh}{\rid}(\rid', \pvar{x}, h, l, o) * \guardA{\guard{t}(\pvar{p}, \pvar{c}, t)}{\rid'} \land {} \\
    o < t \land (\pvar{v} = 0 \implies (t = o + 1 \land l = 0 \land h = \pvar{p})) \land
    \gamma = \pvar{v} \land
    \gamma \le \beta \land b \implies \gamma = 0}
\end{proofoutline}
\caption{Proof outline of the CLH \code{lock}'s loop body.}
\label{fig:clh-loop-body}
\end{figure}

To prove~\eqref{cond:clh-lock-envlive}, take
\begin{align*}
L''_o(\alpha) &=
      \left(
      \begin{array}{c}
        \exists l \in \set{0,1}, h \st
          \region{clh}{\rid}(\rid',\pvar{x}, h, l, o) * \locObl{\obl{p}(t')}{\rid'} * {} \\
        \Sep*^{t' - 1}_{i=o+2} \envObl{\obl{p}(i)}{\rid'} * \done{\rid}{\blacklozenge} \land l = 0 \land o + 1 < t'
      \end{array}
      \right) * M(\alpha) \\
  L'_0(\alpha) &=
      \left(
      \begin{array}{c}
        \exists l \in \set{0,1}, o, h \st
          \region{clh}{\rid}(\rid',\pvar{x}, h, l, o) * \locObl{\obl{p}(t')}{\rid'} * {} \\
        \Sep*^{t' - 1}_{i=o+2} \envObl{\obl{p}(i)}{\rid'} * \done{\rid}{\blacklozenge} \land l = 0 \land o + 1 < t'
      \end{array}
      \right) * M(\alpha) \\
  L'_1(\alpha) &=
      \left(
      \begin{array}{c}
        \exists l \in \set{0,1}, o, h \st
          \region{clh}{\rid}(\rid',\pvar{x}, h, l, o) * \locObl{\obl{p}(t')}{\rid'} * {} \\
        \Sep*^{t' - 1}_{i=o+1} \envObl{\obl{p}(i)}{\rid'} * \done{\rid}{\blacklozenge} \land l = 1
      \end{array}
      \right) * M(\alpha) \\
  L(\alpha) &= L * M(\alpha) \\
\end{align*}
First split on $\alpha = 0 \lor \alpha > 0$:
\begin{mathpar}
\small
\infer*[right={\ref{rule:envlive}}]{
  \infer*[right={\ref{rule:envlive-case}}]{
    \infer*[right={\ref{rule:envlive-target}}]{
      \forall \alpha. \VALID \actxt |= L(\alpha) \land \alpha = 0 \implies T
    }{
      \ENVLIVE \topOf; \actxt |- L(\alpha) : L(\alpha) \land \alpha = 0 -->> T
    }
    \and
    \infer*[right={\ref{rule:envlive-case}}]{
      (\ref{clh:live-o}) \and (\ref{clh:live-a})
    }{
      \ENVLIVE \topOf; \actxt |-
      L(\alpha) : (L'_0(\alpha) \lor L'_1(\alpha) ) \land \alpha > 0 -->> T
    }
  }{
    \ENVLIVE \topOf; \actxt |-
      L(\alpha) : L(\alpha) -->> T
  }
}{
\ENVLIVE \topOf; \actxt |- L - M ->> T
}
\end{mathpar}
In the case $\alpha = 0$, the rule \ref{rule:envlive-target} applies directly. To show
$\ENVLIVE \topOf; \actxt |- L(\alpha) : L'(\alpha) \land \alpha > 0 -M->> T$ holds,
split on the state of the lock, $l = 0 \lor l = 1$.

In the case $l = 0$, for each $o \in \Nat$, the ticket of the current owner of the lock,
the environment is guaranteed to eventually take possession of the lock due to the environmental obligation
assertion $\envObl{\obl{p}(o+1)}{\rid'}$. To consider each case for $o \in \Nat$, we first apply the
rule \ref{rule:envlive-quant} and then the \ref{rule:envlive-obl} rule:
\begin{equation}
\infer*[right={\ref{rule:envlive-quant}}]{
  \infer*[right={\ref{rule:envlive-obl}}]{
    \decr[\actxt](L''_o, L, T) \\
         {\forall \alpha \st
           \VALID \actxt |=
           L''_o(\alpha) \implies
           \region{clh}{\rid}(\wtv[3], o) * \envObl{\obl{p}(o+1)}{\rid'} * \True
         } \\
  \forall \alpha\st
  \VALID \actxt |= \minLayStrict{L''_o(\alpha)}{\lay(\obl{p}(o+1))}\\
  \topOf \laygt \lay(\obl{p}(o+1))
  }{
    \forall o \in \Nat \st \ENVLIVE \topOf; \actxt |- L(\alpha) : L''_o(\alpha) -->> T
  }
}{
  \ENVLIVE \topOf; \actxt |- L(\alpha) : \exists o \in \Nat\st L''_o(\alpha) -->> T
} \label{clh:live-o}
\end{equation}

With the exception of $\decr[\actxt](L'_0(o), L, T)$, all of these conditions hold trivially.
This last condition holds as, given $\alpha_0 \in \Ord$,
all possible transitions either preserve $L'_0(\alpha)$
or decrease the metric.

In the case $l = 1$, progress is guaranteed due to the assumptions in the atomicity context, $\actxt$,
that eventually, the lock must be released, so the \ref{rule:envlive-pq} rule is applied:
\begin{align}
\infer*[right={\ref{rule:envlive-pq}}]{
  \decr[\actxt](L'_1, L, T) \\
  \topOf \laygt \botOf
  \\
  \forall \alpha\st
    \VALID \actxt |= \minLayStrict{L'_1(\alpha)}{k}
  \\
  (X_1 \eventually[\botOf] X_2) = \live(\actxt,r) \\
  \SAT[\actxt] |- L'_1(\alpha) \implies
    \exists x \in X_1 \setminus X_2 \st
    \region{clh}{\rid}(\rid', \pvar{x},x)
    * \done{r}{\lozenge}
    * \True
}{
  \ENVLIVE \topOf; \actxt |- L(\alpha) : L'_1(\alpha) -->> T
}
\label{clh:live-a}
\end{align}
Once again, with the exception of $\decr[\actxt](L'_1, L, T)$, all of these conditions hold trivially.
This last condition holds as, given $\alpha_0 \in \Ord$,
all possible transitions either preserve $L'_1(\alpha)$ or decrease the metric.

To conclude the proof of \p{lock}, the argument for the body of the \code{while} loop's proof is purely a safety argument, the full proof is in \cref{fig:clh-loop-body}.

The key step uses the axiom
\[
\guardDef{\guard{q}(ns, o) \guardOp \guard{t}(p, c, t)}
\iff
\lstAt{ns}{t - o - 1} = p \land \lstAt{ns}{t - o} = c
\]
Since we hold the guard $\guard{t}(\pvar{p}, \pvar{c}, t)$,
we can infer $\pvar{p} \in ns$. Then, after the value of the cell
at \p{p} has been read, if the value, \p{v}, is $0$, then,
since only the thread holding the lock can change the value of
their associated cell to $0$, then, $t = o + 1 \land l = 0 \land h = \pvar{p}$.
As a consequence, if $b$ holds initially, then $\pvar{v} = 0$ after the body of the loop
is executed, therefore the loop variant in the postcondition, $\gamma = 0$.
As initially, we know $\pvar{v} \neq 0$ from the loop condition, $\beta = 1$,
therefore $\gamma < \beta$.

\begin{figure}
\centering\small \begin{proofoutline}
  \CTXT \topOf; \map{\rid -> (X_1, {\botOf}, X_2, R)} |- \\
  \PREC{\exists o \in \Nat \st \region{clh}{\rid}(\rid', \pvar{x}, \pvar{p}, 0, o) * \done{\rid}{\blacklozenge} * \guardA{\guard{t}(\pvar{p}, \pvar{c}, o + 1)}{\rid} * \locObl{\obl{p}(o + 1)}{\rid'}} \\
  \begin{proofjump*}[rule:consequence,rule:exists-elim,rule:atomicity-weak,rule:update-region,rule:atomic-exists-elim,rule:lift-atomic,rule:atomic-exists-elim,rule:consequence]
    \label{proof:clh-lock-wait}
    \CTXT \topOf; \emptyset |- \\
    \A t \in \Nat, ns \in \Addr^{*}. \\
    \PREC<
    \pvar{x} \mapsto \pvar{p}, \lfun{last}(ns) *
    \pvar{p} \mapsto 0 *
\lfun{ones}(ns) *
    \guardA{\guard{q}(ns, o)}{\rid'} *
    \locObl{\obl{o}(o, t)}{\rid'} *
    \Sep*^{t - 1}_{i=o+1} \envObl{\obl{p}(i)}{\rid'}
    \land t - o = \lstLen{ns} \land {} \\
    ns(0) = \pvar{p} \ssep
    (\guardA{\guard{t}(\pvar{p}, \pvar{c}, o + 1)}{\rid'} *
    \locObl{\obl{p}(o + 1)}{\rid'} \land
    \lstAt{ns}{1} = \pvar{c})
    > \\
    \begin{proofjump}[rule:consequence]
      \PREC<
      \pvar{x} \mapsto \pvar{p}, \lfun{last}(ns) *
      \pvar{p} \mapsto 0 *
\lfun{ones}(ns) *
      \guardA{\guard{q}(ns, o)}{\rid'} *
      \locObl{\obl{o}(o, t)}{\rid'} *
      \Sep*^{t - 1}_{i=o+1} \envObl{\obl{p}(i)}{\rid'}
      \land t - o = \lstLen{ns} \land {} \\
      ns(0) = \pvar{p} \ssep
      (\guardA{\guard{t}(\pvar{p}, \pvar{c}, o + 1)}{\rid'} *
      \locObl{\obl{p}(o + 1)}{\rid'} \land
      \lstAt{ns}{1} = \pvar{c})
      > \\
      \begin{proofjump}[rule:layer-weak,rule:frame]
        \CTXT \botOf; \emptyset |- \\
        \PREC<\pvar{x} \mapsto \pvar{p}> \\
        \CODE{[x] := c;} \\
        \POST<\pvar{x} \mapsto \pvar{c}>
      \end{proofjump} \\
      \POST<
      x \mapsto \pvar{c}, \lfun{last}(ns) *
      \pvar{p} \mapsto 0 *
\lfun{ones}(ns) *
      \guardA{\guard{q}(ns, o)}{\rid'} *
      \locObl{\obl{o}(o, t)}{\rid'} * {} \\
      \Sep*^{t - 1}_{i=o+1} \envObl{\obl{p}(i)}{\rid'} \land
      t - o = \lstLen{ns} \land ns(0) = \pvar{p} \ssep
      (\guardA{\guard{t}(\pvar{p}, \pvar{c}, o + 1)}{\rid'} *
      \locObl{\obl{p}(o + 1)}{\rid'} \land
      \lstAt{ns}{1} = \pvar{c})
      >
    \end{proofjump} \\
    \POST<
    \exists ns' \in \Addr^{*} \st
    x \mapsto \pvar{c}, \lfun{last}(ns') *
    \pvar{c} \mapsto 1 *
\lfun{ones}(ns') *
    \guardA{\guard{q}(ns', o + 1)}{\rid'} *
    \locObl{\obl{o}(o + 1, t)}{\rid'} * {} \\
    \Sep*^{t - 1}_{i=o+2} \envObl{\obl{p}(i)}{\rid'} \land
    t - (o + 1) = \lstLen{ns'} \land ns'(0) = \pvar{c} \ssep
    (\pvar{p} \mapsto 0 * ns = \pvar{p} \lstPlus ns')
    >
  \end{proofjump*} \\
  \POST{
    \exists o \in \Nat, h, h' \in \Addr \st
    \done{\rid}{((h, 0, o), (h', 1, o + 1))} *
    \pvar{p} \mapsto 0 \land
    l = 0
  }
\end{proofoutline}
\caption{Proof outline for the linearization point of CLH \code{lock}.\\
  \cref{proof:clh-lock-wait} is \explainproofjump{proof:clh-lock-wait}
}
\label{fig:clh-lock-linpt-proof}
\end{figure}

Finally, in \cref{fig:clh-lock-linpt-proof}
we consider the details of the linearization point,
when the \code{lock} operation takes possession of the lock.
First \ref{rule:exists-elim} rule is applied to quantify the ticket of the current owner, $o$, (the predecessor of the current thread) over the antecedent.
Then the \ref{rule:atomicity-weak} and \ref{rule:update-region} rules are applied to atomically update the region state by acting on its interpretation.
The rules \ref{rule:atomic-exists-elim}, \ref{rule:lift-atomic} and \ref{rule:atomic-exists-elim} are then applied to pseudo-quantify $t$ and $ns$, the two variables
that are existentially quantified within the region invariants and open the region $\region{lclh}{}$.
Finally the \ref{rule:consequence} rule is applied to re-establish the invariant in the postcondition by adjusting the ghost state.
Specifically, the guard $\guard{T}$ and the obligation $\obl{P}$ are reabsorbed into $\guard{Q}$ and~$\obl{O}$ respectively, to update the list of threads waiting on the lock and increment the owner.
This is done using the axioms:
\begin{align*}
  \guard{q}(\lst{p,c} \lstPlus ns, o)
    \guardOp
  \guard{t}(p, c, o + 1)
  &=
  \guard{q}(c \lstPlus ns, o + 1) \\
  \obl{o}(o, t) \oblOp \obl{p}(o + 1) &= \obl{o}(o + 1, t)
\end{align*}

The inner part of the proof then decreases the layer and frames off unecessary resources to apply the update.
Note that this step of the proof discharges the obligation $\obl{p}(t')$.
This concludes the verification of the \p{lock} operation.

The CLH lock proof is able to internally encode the impedance bound enforced by
thread queueing using ghost state: the local ticket numbers of each thread
queueing for the lock and the owner's ticket number which is visible in the
abstract state of the region $\region{clh}{}$, but hidden from the client.

\paragraph{Proof of\/\code{unlock}}
Let
  $X = \set{ (h,1,o) | h \in \Addr, o \in \Nat } $ and
  $R = \set{ ((h,1,o), (h,0,o)) | h \in \Addr, o \in \Nat }$.
The proof of the \code{unlock} operation is as follows:
\[
  \begin{proofoutline}
  \CTXT \botOf; \emptyset |- \\
  \PREC<\ap{L}(r, \pvar{x}, 1)> \\
  \begin{proofjump}[rule:consequence,rule:atomic-exists-elim,rule:make-atomic]
    \label{proof:clh-unlock}
    \CTXT \botOf; \map{\rid -> (X, {\botOf}, X, R)} |- \\
    \PREC{\exists o \in \mathbb{N}, h \in \Addr \ldotp \region{clh}{\rid}(x, h, 1, o) *\done{\rid}{\blacklozenge}} \\
    \CODE{h := [x];} \\
    \ASSR{\exists o \in \mathbb{N} \ldotp \region{clh}{\rid}(x, \pvar{h}, 1, o) *\done{\rid}{\blacklozenge}} \\
    \CODE{[h] := 0;} \\
    \POST{\exists o \in \mathbb{N} \ldotp \done{\rid}{((\pvar{h}, 1, o),(\pvar{h}, 0, o))}}
  \end{proofjump} \\
  \POST<\ap{L}(r, \pvar{x}, 0)>
\end{proofoutline}
 \]
 
\subsection{Blocking Counter}
\label{app:ex-blocking-counter}
We sketch the proof of a blocking counter module:
a single cell storing a natural number that can be incremented,
guarded by a non-fair lock for concurrent access.
The example illustrates how the \tadalive\ specifications and proofs
neatly support hiding blocking when it is unobservable by the client,
while still leaking the requirement of bounded impedance from the lock.
This requires any client to only call operations making use of the lock
(in this case the \p{incr} operation) a bounded number of times.

\paragraph{Code}
The implementation of the module's operations is:
\begin{center}\small
\begin{tabular}{c@{\hspace{5em}}c@{\hspace{5em}}c}
  {\codefromfile[numbers=left]{examples/blocking-counter/code-make}} &
  {\codefromfile[numbers=left]{examples/blocking-counter/code-incr}} &
  {\codefromfile[numbers=left]{examples/blocking-counter/code-read}}
\end{tabular}
\end{center}

\paragraph{Specifications}
The abstract predicate $\ap{C}(s, x, n, \alpha)$ represents a blocking counter at address $x$ with value $n$ and
impedance bound $\alpha$. 

\begin{align*}
  &\forall \alpha \st \TRIPLE \topOf |- {\emp} {\code{makeCounter()}} {\exists s \st \ap{C}(s, \ret, 0, \alpha)} \\
  &\forall \ordfun \st \ATRIPLE \topOf |-
  \A n \in \Nat, \alpha.
  <\emp | \ap{C}(s, \pvar{x}, n, \alpha) \land {\alpha > \ordfun(\alpha)}>
  \code{incr(x)}
  <\ret = n | \ap{C}(s, \pvar{x}, n + 1, \ordfun(\alpha))> \\
  &\ATRIPLE \topOf |-
  \A n \in \Nat, \alpha.
  <\emp | \ap{C}(s, \pvar{x}, n, \alpha)>
  \code{read(x)}
  <\ret = n | \ap{C}(s, \pvar{x}, n, \alpha)> \\
\end{align*}

\paragraph{Shared Regions}
This proof will use two region types: $\region{cnt}{\rid}(\rid', x, s, la, n, \alpha)$
and $\region{lcnt}{\rid'}(x, s, la, l, n, \alpha)$ where
$\rid, \rid' \in \RId$,
$x, la \in \Addr$,
$l \in \set{0,1}$,
$n \in \Nat$,
$\alpha \in \Ord$
and~$s$ is the abstract location
of the lock guarding the counter resource.
Here $\rid'$, $x$, $s$ and $la$ are the fixed parameters of the regions,
representing respectively the region identifier of the inner region,
the address of the blocking counter and the abstract location and address
of the associated lock.

As in the CLH lock example, we will use two nested regions.
The region type $\region{lcnt}{}$ will be used as an inner
region revealing sufficient information to prove desired liveness properties,
in particular, exposing the state of the lock,~$l$.
The region type $\region{cnt}{}$ will be used to prove linearizability of our operations; to this end, it only exposes the value
of the blocking counter~$n$, and the lock's impedence bound~$\alpha$.

\paragraph{Guards and Obligations}

We associate the exclusive guard $\gEx$ with both $\region{cnt}{}$ and $\region{lcnt}{}$.
Besides this, this proof will also require the guards $\guard{u}$, $\guard{l}(n, n')$
and $\guard{k}(n, n')$, where $n, n' \in \Nat$, for the latter region.
These guards will be used to record the update to the value of the counter that will
occur at the moment the module's lock is locked in the proof of \p{incr}.
Since other threads cannot observe the value of the counter without first
holding the lock, performing this abstract update on the state of the outer region,
$\region{cnt}{}$, and then updating the concrete state of the counter before
releasing the lock results in a linearizable implementation.

To allow this, once the lock is locked, the concrete value of the counter, $n' \in \Nat$, and the
updated value of the counter, $n \in \Nat$, are stored in the guard $\guard{l}(n, n')$
within the region $\region{cnt}{}$. The thread holding the lock then holds the guard $\guard{k}(n, n')$,
which keeps a local record of the concrete and updated counter values; the values are required to match
with those stored in $\guard{l}(n, n')$ within the region by the axiom:
\[
\guard{l}(n, n') \guardOp \guard{k}(m, m') \text{ is defined} \iff n = m \land n' = m'
\]
When the lock is unlocked, the guard $\guard{u}$ is stored within the region $\region{cnt}{}$.
When a thread takes possession of the lock, it can be split into the guards
$\guard{l}(n, n')$ and $\guard{k}(n, n')$ using the axiom:
\[
\guard{u} = \guard{l}(n, n') \guardOp \guard{k}(n, n')
\]
Finally, if a thread holds the guard $\guard{k}(n, n')$,
it holds the lock, which can be inferred from the axiom:
\[
\guard{u} \guardOp \guard{k}(n, n') \text{ is undefined.}
\]

This pattern of three guards is often used as a \tada\ pattern
to encode mutual exclusion on some resource when a thread has
possession of a shared lock.

We also associate a single atom obligation~$\obl{k}$ with the region type $\region{lcnt}{}$.
This obligation encodes ownership of the blocking counter's lock, as well
as the obligation to unlock it. We set $\lay(\obl{k}) = \botOf$.

\paragraph{Region Protocols}
The guard-labelled transition system of the region $\region{cnt}{}$ is:
\begin{align*}
  \gEx &: ((n, \alpha), \oblZero) \interfTo ((n + 1, \beta), \oblZero) & \alpha > \beta
\end{align*}
and the guard-labelled transition system of the region $\region{lcnt}{}$ is:
\begin{align*}
  \gEx &: ((0, n, \alpha), \oblZero) \interfTo ((1, n, \beta), \obl{k}) & \alpha > \beta \\
  \gEx &: ((1, n, \alpha), \oblZero) \interfTo ((1, n + 1, \alpha), \oblZero) \\
  \gEx &: ((1, n, \alpha), \obl{k}) \interfTo ((0, n, \alpha), \oblZero)
\end{align*}

\paragraph{Region Intepretations}
The interpretation of the locked counter region $\rt[lcnt]$
links the state of the lock and counter to the abstract state of the region
and the ownership of $\obl{k}$.

The region $\rt[cnt]$ is a wrapper around the $\rt[lcnt]$ region
that hides the state of the lock and allows the counter value of
the region $\region{lcnt}{}$ to be disconnected from that of the outer
region when the lock is locked.
\begin{align*}
\rInt(\region{lcnt}{\rid}(x, s, la, l, n, \alpha)) &\is
  x \mapsto la, n * \ap{L}(s, la, l, \alpha) * (l = 0 \dotimplies \oblA{\obl{k}}{\rid})
\\
\rInt(\region{cnt}{\rid}(\rid', x, s, la, n, \alpha)) &\is
\exists n' \in \Nat, l \in \set{0,1} \st \region{lcnt}{\rid'}(x, s, la, l, n', \alpha) * \guardA{\gEx}{\rid'} \\ &\qquad {} *
\bigl(
(l = 0 \land n = n' \land \guardA{\guard{u}}{\rid}) \lor (l = 1 \land \guardA{\guard{l}(n,n')}{\rid} * \envObl{\obl{k}}{\rid'})
\bigr)
\end{align*}

\paragraph{Predicates}
The counter resource is abstractly represented by the predicate
\[
\ap{C}((\rid, \rid', s, la), x, n, \alpha) \is
  \region{cnt}{\rid}(\rid', x, s, la, n, \alpha) * \guardA{\gEx}{\rid}
\]

\paragraph{Verification of \code{incr}}
The proof of \code{incr} can be found in \cref{fig:bcounter-incr-outline}.
The only step requiring liveness reasoning is the call \code{lock(x)},
which is handled very similarly to the same call in
the left thread of the distinguishing client
where the environment liveness condition of the \ref{rule:liveness-check} rule
application is discharged using the fact that when $l=1$ holds, then $\envObl{\obl{k}}{\rid}$,
which, in this case, is obtained from the interpretation of the outer region, $\region{cnt}{}$.
The details of the proof of the \p{lock} operation cab be found in
\cref{fig:bcounter-incr-lock-details}.

\paragraph{Verification of the \code{makeCounter} and \code{read} operations}
The proof of \code{makeCounter} proceeds,
using standard steps on Hoare triples,
by establishing the postcondition
$ \exists x,la,lr,\alpha \st x \mapsto la, 0 * \ap{L}(lr, la, 0, \alpha) $
which can be viewshifted to
$ \exists x,la,\rid,\rid',lr,\alpha \st \ap{C}((\rid, \rid', lr, la), x, 0, \alpha) $.

The proof of \code{read} is almost identical to the proof in \cref{fig:bcounter-incr-outline}.
The reader might wonder if the lock acquisition in the code is strictly necessary.
Indeed, it is not given the current set of operations available to the client.
To prove the version where \code{read} does not acquire the lock,
however, we would need to change the region's protocol to encode the fact that
while holding a lock a single write to it is possible.
Since one would conceivably want to extend the module with other operations that write to the counter multiple times while holding the lock,
we formalised the more general protocol.
In the presence of such additional operations, \code{read}
would need to acquire the lock to be correct.

\begin{mathfig}
  \begin{proofoutline}
    \PREC{\exists n, \alpha, l \st \region{cnt}{\rid}(\rid', \pvar{x}, s', \pvar{l}, n, \alpha) *
      \done{\rid}{\blacklozenge} * \region{lcnt}{\rid'}(\pvar{x}, s', \pvar{l}, l, \wtv[2])
      * l = 1 \dotimplies \envObl{\obl{K}}{\rid'}
      \land {\alpha > \ordfun(\alpha)}
    } \\
    \begin{proofjump}[rule:atomicity-weak,rule:atomic-exists-elim]
      \A n \in \Nat, \alpha. \\
      \PREC<\exists l \st \region{lcnt}{\rid'}(\pvar{x}, s', \pvar{l}, l, \wtv[2])
      * l = 1 \dotimplies \envObl{\obl{K}}{\rid'}
      | \region{cnt}{\rid}(\rid', \pvar{x}, s', \pvar{l}, n, \alpha) *
      \done{\rid}{\blacklozenge} \land
           {\alpha > \ordfun(\alpha)}
           > \\
           \begin{proofjump}[rule:update-region]
             \PREC<\exists l \st \region{lcnt}{\rid'}(\pvar{x}, s', \pvar{l}, l, \wtv[2])
             * l = 1 \dotimplies \envObl{\obl{K}}{\rid'}
             | \exists n', l \st
             \region{lcnt}{\rid'}(\pvar{x}, s', \pvar{l}, l, n', \alpha) * \guardA{\gEx}{\rid'} * {} \\
             \left(
             \begin{array}{c}
               (l = 0 \land n = n' \land \guardA{\guard{u}}{\rid}) \lor {} \\
               (l = 1 \land \guardA{\guard{l}(n,n')}{\rid} * \envObl{\obl{k}}{\rid'})
             \end{array}
             \right) \land {\alpha > \ordfun(\alpha)}
             > \\
             \begin{proofjump}[rule:atomic-exists-elim]
               \A n, n' \in \Nat, l \in \set{0,1}, \alpha. \\
               \PREC<\exists l \st \region{lcnt}{\rid'}(\pvar{x}, s', \pvar{l}, l, \wtv[2])
               * l = 1 \dotimplies \envObl{\obl{K}}{\rid'}
               | \region{lcnt}{\rid'}(\pvar{x}, s', \pvar{l}, l, n', \alpha) * \guardA{\gEx}{\rid'} * {} \\
               \left(
               \begin{array}{c}
                 (l = 0 \land n = n' \land \guardA{\guard{u}}{\rid}) \lor {} \\
                 (l = 1 \land \guardA{\guard{l}(n,n')}{\rid} * \envObl{\obl{k}}{\rid'})
               \end{array}
               \right) \land {\alpha > \ordfun(\alpha)}
               > \\
               \begin{proofjump}[rule:liveness-check]
                 \A n, n' \in \Nat, l \in \set{0,1} \eventually[\botOf] \set{0}, \alpha. \\
                 \PREC<
                 \region{lcnt}{\rid'}(\pvar{x}, s', \pvar{l}, l, n', \alpha) * \guardA{\gEx}{\rid'} *
                 \left(
                 \begin{array}{c}
                   (l = 0 \land n = n' \land \guardA{\guard{u}}{\rid}) \lor {} \\
                   (l = 1 \land \guardA{\guard{l}(n,n')}{\rid} \land \envObl{\obl{k}}{\rid'})
                 \end{array}
                 \right) \land {\alpha > \ordfun(\alpha)}
                 > \\
                 \begin{proofjump*}[rule:lift-atomic,rule:frame]
                   \label{step:incr-lock-inner}
                   \A l \in \set{0,1} \eventually[\botOf] \set{0}, \alpha. \\
                   \PREC<\ap{L}(s', \pvar{l}, l, \alpha) \land {\alpha > \ordfun(\alpha)}> \\
                   \CODE{lock(l);} \\
                   \POST<\ap{L}(s', \pvar{l}, 1, \alpha) \land l = 0>
                 \end{proofjump*} \\
                 \POST< \locObl{\obl{k}}{\rid'} |
                 \region{lcnt}{\rid'}(\pvar{x}, s', \pvar{l}, 1, n', \alpha) * \guardA{\gEx}{\rid'} *
                 \left(
                 \begin{array}{c}
                   (l = 0 \land n = n' \land \guardA{\guard{u}}{\rid}) \lor {} \\
                   (l = 1 \land \guardA{\guard{l}(n,n')}{\rid} \land \envObl{\obl{k}}{\rid'})
                 \end{array}
                 \right) \land
                 l = 0
                 > \\
               \end{proofjump} \\
               \POST< \locObl{\obl{k}}{\rid'} |
               \region{lcnt}{\rid'}(\pvar{x}, s', \pvar{l}, 1, n', \alpha) * \guardA{\gEx}{\rid'} *
               \guardA{\guard{u}}{\rid} \land n = n'
               > \\
             \end{proofjump} \\
             \POST< \locObl{\obl{k}}{\rid'} |
             \exists n', l \st
             \region{lcnt}{\rid'}(\pvar{x}, s', \pvar{l}, 1, n', \alpha) * \guardA{\gEx}{\rid'} *
             \left(
             \begin{array}{c}
               (l = 0 \land n + 1 = n' \land \guardA{\guard{u}}{\rid}) \lor {} \\
               (l = 1 \land \guardA{\guard{l}(n + 1,n')}{\rid} \land \envObl{\obl{k}}{\rid'})
             \end{array}
             \right) *
             \guardA{\guard{k}(n + 1, n)}{\rid}
             > \\
           \end{proofjump} \\
           \ASSR<\locObl{\obl{k}}{\rid'} | \region{cnt}{\rid}(\rid', \pvar{x}, s', \pvar{l}, n + 1, \alpha) *
           \done{\rid}{\DONE({(n, \alpha), (n + 1, \ordfun(\alpha))})} *
           \guardA{\guard{k}(n + 1, n)}{\rid}
           >
    \end{proofjump} \\
    \ASSR{\exists n, \alpha \st \region{cnt}{\rid}(\rid', \pvar{x}, s', \pvar{l}, n + 1, \ordfun(\alpha)) *
      \done{\rid}{\DONE({(n, \alpha), (n + 1, \ordfun(\alpha))})} * \oblA{\obl{k}}{\rid'} *
      \guardA{\guard{k}(n + 1, n)}{\rid}
    }
  \end{proofoutline}
  \caption{Details of the proof of the \code{lock(l)} call of \code{incr}.
    \Cref{step:incr-lock-inner} is \explainproofjump{step:incr-lock-inner}.}
  \label{fig:bcounter-incr-lock-details}
\end{mathfig}

\begin{mathfig}
  \begin{proofoutline}
    \TITLE{Proof of \code{incr(x)}:}
    \llap{\META{\forall \ordfun\st\,}}\CTXT \topOf; \emptyset |-
    \A n \in \Nat, \alpha. \\
    \PREC<\emp | \ap{C}(s, \pvar{x}, n, \alpha) \land {\alpha > \ordfun(\alpha)}> \\
    \begin{proofjump}[rule:consequence,"$s=(\rid\text{,}\rid'\text{,}s'\text{,}la)$"]
      \PREC<\emp | \region{cnt}{\rid}(\rid', \pvar{x}, s', la, n, \alpha) *
      \guardA{\gEx}{\rid} \land {\alpha > \ordfun(\alpha)}> \\
      \begin{proofjump}[rule:make-atomic]
        \CTXT \topOf; \map{\rid ->
          (\Nat \times \Ord,
          {\botOf},
          \Nat \times \Ord,
          \set{((n, \alpha), (n + 1, \beta))| n \in \Nat, \alpha, \beta \in \Ord, \alpha > \beta})
        } |- \\
        \PREC{\exists n, \alpha \st \region{cnt}{\rid}(\rid', \pvar{x}, s', la, n, \alpha) *
          \done{\rid}{\blacklozenge} \land {\alpha > \ordfun(\alpha)}} \\
        \CODE{l := [x];} \\
        \ASSR{\exists n, \alpha, l \st \region{cnt}{\rid}(\rid', \pvar{x}, s', \pvar{l}, n, \alpha) *
          \done{\rid}{\blacklozenge} * \region{lcnt}{\rid'}(\pvar{x}, s', \pvar{l}, l, \wtv[2])
          * l = 1 \dotimplies \envObl{\obl{K}}{\rid'}
          \land {\alpha > \ordfun(\alpha)}
        } \\
        \CODE{lock(l);} \\
        \ASSR{\exists n, \alpha \st \region{cnt}{\rid}(\rid', \pvar{x}, s', \pvar{l}, n + 1, \ordfun(\alpha)) *
          \done{\rid}{\DONE({(n, \alpha), (n + 1, \ordfun(\alpha))})} * \oblA{\obl{k}}{\rid'} *
          \guardA{\guard{k}(n + 1, n)}{\rid}
        } \\
        \CODE{v := [x + 1];} \\
        \ASSR{\exists n, \alpha \st \region{cnt}{\rid}(\rid', \pvar{x}, s', \pvar{l}, n + 1, \ordfun(\alpha)) *
          \done{\rid}{\DONE({(n, \alpha), (n + 1, \ordfun(\alpha))})} * \oblA{\obl{k}}{\rid'} *
          \guardA{\guard{k}(n + 1, n)}{\rid} \land
          \pvar{v} = n} \\
        \CODE{[x + 1] := v + 1;} \\
        \ASSR{\exists n, \alpha \st \region{cnt}{\rid}(\rid', \pvar{x}, s', \pvar{l}, n + 1, \ordfun(\alpha)) *
          \done{\rid}{\DONE({(n, \alpha), (n + 1, \ordfun(\alpha))})} *
          \oblA{\obl{k}}{\rid'} *
          \guardA{\guard{k}(n + 1, n + 1)}{\rid} \land
          \pvar{v} = n} \\
        \CODE{unlock(l);} \\
        \ASSR{\exists n, \alpha \st \done{\rid}{\DONE({(n, \alpha), (n+1, \ordfun(\alpha))})} \land \pvar{v} = n} \\
        \CODE{ret := v;} \\
        \POST{\exists n, \alpha \st \done{\rid}{\DONE({(n, \alpha), (n+1, \ordfun(\alpha))})} \land \ret = n}
      \end{proofjump} \\
      \POST<\ret = n | \region{cnt}{\rid}(\rid', \pvar{x}, s', la, n + 1, \ordfun(\alpha)) * \guardA{\gEx}{\rid}>
    \end{proofjump} \\
    \POST<\ret = n | \ap{C}(s, \pvar{x}, n + 1, \ordfun(\alpha))>
  \end{proofoutline}
   \caption{Blocking counter: proof of \protect{\code{incr}}.}
  \label{fig:bcounter-incr-outline}
\end{mathfig}
 
\subsection{Double Blocking Counter}

We now develop the proof of a double blocking counter module,
that is, a module encapsulating two integers each protected by a fair lock.
The module offers linearizable operations to
increment/read each counter in isolation and
an \code{incrBoth} operation to atomically increment both.
The implementation of \code{incrBoth} needs to deal with the
ubiquitous pattern of locking multiple locks in a nested fashion,
which is one of the most common sources of deadlocks in coarse-grained
concurrent programs.
The example illustrates how the specification format and layer system of \tadalive\ allow for modular proofs of deadlock-freedom.
In particular, verifying the example in LiLi would require:
\begin{enumerate*}[label=(\roman*)]
  \item replacing the calls to the lock operations with some non-atomic abstract code
  \item building a termination argument that talks about the queues of the two fair locks; in particular the variant argument would need to consider both queues at the same time and argue about all the possible ways the threads in the environment may enter and exit both queues.
\end{enumerate*}
We avoid these complications by:
\begin{enumerate*}[label=(\roman*)]
  \item reusing the (fair) lock specifications which are truly atomic and properly hide the queues
  \item arguing about termination by means of two obligations with layers
  the order of which reflect the order of acquisition of locks.
  These obligations only represent the liveness invariant that each lock is always eventually released, the layers represent the dependency between the two locks. The proof requires no detail about why, thanks to the internal queues, this is sufficient to ensure global progress: that part of the argument has already been made in proving the lock specifications!
\end{enumerate*}

\paragraph{Code}
The implementation of the module's operations is in \cref{fig:dbcnt-code}
using the following abbreviations for readability:
\begin{align*}
  \p{x.lock1} &\is \p{[x]}   &
  \p{x.lock2} &\is \p{[x+1]} &
  \p{x.cnt1}  &\is \p{[x+2]} &
  \p{x.cnt2}  &\is \p{[x+3]}
\end{align*}

\begin{figure}
\centering\small \begin{tabular}{l@{\hspace{3em}}l@{\hspace{3em}}l@{\hspace{3em}}l}
  {\codefromfile[numbers=left]{examples/double-blocking-counter/code-make}} &
  {\codefromfile[numbers=left]{examples/double-blocking-counter/code-incr1}} &
  {\codefromfile[numbers=left]{examples/double-blocking-counter/code-incr2}} &
  {\codefromfile[numbers=left]{examples/double-blocking-counter/code-incrboth}}
\end{tabular}
\caption{Code of the double blocking counter operations.}
\label{fig:dbcnt-code}
\end{figure}

\paragraph{Specifications}
The fair lock module specifications assumed in this example are
\begin{align*}
  &\ATRIPLE \topOf[\rid] |-
    \A l \in \set{0, 1} \eventually[{\botOf[\rid]}] \set{0}.
      <\ap{L}(\rid, \pvar{x}, l)>
        \code{lock(x)}
      <\ap{L}(\rid, \pvar{x}, 1) \land l = 0>
  \\
  &\ATRIPLE \botOf[\rid] |-
    <\ap{L}(\rid, \pvar{x}, 1)>
      \code{unlock(x)}
    <\ap{L}(\rid, \pvar{x}, 0)>
\end{align*}where $\topOf[\rid]$ and $\botOf[\rid]$ are layers parametrised on the region
identifier $\rid$ of the shared lock.
It is a common \tadalive\ pattern to parametrise the layers of specifications
so that they can be instantiated with differently for each instance of the module.
In~\cref{sec:lock-coupling-set} we explain this parametrisation in general,
and how to parametrise the implementation proof accordingly.

The abstract predicate $\ap{DC}(t, x, n, m)$ represents a
double counter at address~$x$ with abstract location~$t$ and
values~$n$ and~$m$ respectively.
We wish to the show the implementations of the module's
operations satisfy the following specifications:
\begin{align*}
  &\TRIPLE \topOf |-
    {\emp}
    {\code{makeDCounter()}}
    {\exists t \st \ap{DC}(t, \ret, 0, 0)}
  \\
  &\ATRIPLE \topOf |-
      \A n, m \in \Nat.
      <\ap{DC}(t, \pvar{x}, n, m)>
      \code{incrBoth(x)}
      <\ap{DC}(t, \pvar{x}, n + 1, m + 1)>
  \\
  &\ATRIPLE \topOf |-
      \A n, m \in \Nat.
      <\emp | \ap{DC}(t, \pvar{x}, n, m)>
      \code{incr1(x)}
      <\ret = n | \ap{DC}(t, \pvar{x}, n + 1, m)>
  \\
  &\ATRIPLE \topOf |-
      \A n, m \in \Nat.
      <\emp | \ap{DC}(t, \pvar{x}, n, m)>
      \code{incr2(x)}
      <\ret = m | \ap{DC}(t, \pvar{x}, n, m + 1)>
\end{align*}

It is important to note here that we are making explicit the parametrisation
of the layers in the region identifiers~$s$,
because we will need to associate
different layers with the two instances of the lock.
As we will see later,
we will have two region identifiers~$s_1$ and~$s_2$, one per lock,
with associated layers $\topOf[s_1], \botOf[s_1], \topOf[s_2], \botOf[s_2]$.
The lock specifications themselves only require
$\topOf[s_1] \laygt \botOf[s_1]$
and
$\topOf[s_2] \laygt \botOf[s_2]$
but we will additionally impose, for this client proof,
$\botOf[s_1] \laygt \topOf[s_2]$.
This represents the fact that, in this client,
the release of lock~1 will depend on the acquisition of lock~2.

\paragraph{Shared Regions}
Like for the single counter example, we need two nested regions,
one to prove the atomicity of the operation ($\rt[dcnt]$)
and an inner one to prove termination ($\rt[ldcnt]$).
They differ in that $\rt[dcnt]$ only records the abstract states of the counters, while $\rt[ldcnt]$ includes the abstract states of the locks.
Formally:
$\region{dcnt}{\rid_1}((\rid_0, t_0), x, n, m)$ and
$\region{ldcnt}{\rid_0}(t_0, x, l_1, l_2, n, m)$ where
$\rid_0, \rid_1 \in \RId$,
$x \in \Addr$,
$l_1, l_2 \in \set{0,1}$ and
$n, m \in \Nat$,
and $t_0$ is a tuple $(la_1,la_2,s_1,s_2)$
with $la_1,la_2 \in \Addr$ and $s_1,s_2 \in \RId$.
Here $(\rid_0, t_0), x$, and $ t_0, x $ are the fixed parameters of the
two regions respectively.
The double blocking counter resource is abstractly represented by the predicate
$
  \ap{DC}((\rid_1, t_1), x, n, m) \is
    \region{dcnt}{\rid_1}(t_1, x, n, m) * \guardA{\guard{e}}{\rid_1}
$.

\paragraph{Guards and Obligations}
We introduce the guard constructors
$\guard{b}_i$,
$\guard{c}_i$, and
$\guard{w}_i$, for $i\in\set{1,2}$,
for bookkeeping of the value of the counters.
We need this ghost state because in \code{incrBoth} there is an intermediate state where one counter has been updated but the other hasn't;
we cannot update the abstract state in two steps because we are proving atomicity of the operation, so we need to update both counter values
in the abstract state in one go.
We record the intermediate concrete state in these guards so the information is there locally without affecting the shared abstract state prematurely.
The guard composition satisfies the axioms
\begin{align*}
  \guard{b}_1 &= \guard{c}_1(n, n') \guardOp \guard{w}_1(n, n') &
  \guard{b}_2 &= \guard{c}_2(n, n') \guardOp \guard{w}_2(n, n')
\end{align*}
Here $\guard{c}_i(n,n')$
  tracks the reference value (left in the region interpretation)
  for the \pre i-th counter's abstract ($n$) and concrete ($n'$) value and
$\guard{w}_i$ is a local ``witness'' for the same information
  about the \pre i-th counter,
  which can only be obtained when locking the \pre i-th lock
  (otherwise it would not be stable information).
This is enforced by the interpretation given later.

We associate two atom obligations $\obl{k}_1$ and $\obl{k}_2$
with the region type $\rt[ldcnt]$,
encoding ownership of the double counter's locks respectively, 
as well as the obligation to unlock them.

As anticipated, we choose the layers of the lock specifications
in a way that represents the dependency between the two locks.
We have a (double-counter-local) top ($\topOf$) and a bottom ($\botOf$) layer,
and intermediate layers for the locks:\footnote{The proof works with $ \topOf[s_2] = \botOf[s_1] $ too,
  but the ordered version better emphasizes the dependency between the locks.
}
\[
  \botOf =
  \botOf[s_2] = \lay(\obl{k}_2) \laylt
  \topOf[s_2] \laylt
  \botOf[s_1] =
  \lay(\obl{k}_1) \laylt
  \topOf[s_1] =
  \topOf
\]

\paragraph{Region Protocols}
The interference protocol of the region $\rt[dcnt]$ trivially allows for any change to the counter values:
\begin{align*}
  \guard{e} &: ((n, m), \oblZero) \interfTo ((n', m'), \oblZero)
\end{align*}
The interference protocol of the region $\rt[ldcnt]$ encodes the constraint
that we can update a counter only by holding the corresponding lock:
\begin{align*}
  \guard{e} &: ((0, l, n, m), \oblZero) \interfTo ((1, l, n, m), \obl{k}_1) &
  \guard{e} &: ((l, 0, n, m), \oblZero) \interfTo ((l, 1, n, m), \obl{k}_2) \\
  \guard{e} &: ((1, l, n, m), \obl{k}_1) \interfTo ((0, l, n, m), \oblZero) &
  \guard{e} &: ((l, 1, n, m), \obl{k}_2) \interfTo ((l, 0, n, m), \oblZero) \\
  \guard{e} &: ((1, l, n, m), \obl{k}_1) \interfTo ((1, l, n', m), \obl{k}_1) &
  \guard{e} &: ((l, 1, n, m), \obl{k}_2) \interfTo ((l, 1, n, m'), \obl{k}_2) 
\end{align*}\paragraph{Region Intepretations}
The interpretation of $\rt[dcnt]$ formalises the fact that the outer region
simply hides the state of the locks for the atomicity argument,
while the actual internal protocol of the module is encoded in the
interpretation of the inner region $\rt[ldcnt]$:
\begin{align*}
  \rInt(\region{dcnt}{\rid_1}((\rid_0,t_0), x, n, m)) & \is
    \exists l_1, l_2 \in \set{0,1} \st
      \region{ldcnt}{\rid_0}(t_0,x, l_1, l_2, n, m) *
      \guardA{\guard{e}}{\rid_0} * {} \\ & \quad
      l_1 = 1 \dotimplies \envObl{\obl{k}_1}{\rid_0} *
      l_2 = 1 \dotimplies \envObl{\obl{k}_2}{\rid_0}
\\
  \rInt(\region{ldcnt}{\rid_0}((la_1, la_2, s_1, s_2), x, l_1, l_2, n, m)) & \is
    \exists n', m' \in \Nat \st
    \\&\qquad
      x \mapsto la_1, la_2, n', m' *
      \ap{L}(s_1, la_1, l_1) *
      \ap{L}(s_2, la_2, l_2) 
    \\&\quad*
      \left(
      \begin{array}{@{\;}r@{\;}l}
        &( l_1 = 0 \land \oblA{\obl{k}_1}{\rid_0}
        * \guardA{\guard{b}_1}{\rid_0} \land n = n' )
        \\[3pt]\lor&
        ( l_1 = 1 \land \guardA{\guard{c}_1(n, n')}{\rid_0})
      \end{array}
      \right)
    \\&\quad*
      \left(
      \begin{array}{@{\;}r@{\;}l}
        &( l2 = 0 \land \oblA{\obl{k}_2}{\rid_0}
        * \guardA{\guard{b}_2}{\rid_0} \land m = m')
        \\[3pt]\lor&
        ( l2 = 1 \land \guardA{\guard{c}_2(m, m')}{\rid_0})
      \end{array}
      \right)
\end{align*}

\paragraph{Proof of \code{incrBoth}}
The proof outline of \code{incrBoth} is reproduced in
\cref{fig:dbcounter-dincr-outline}.
Most of the proof is routine;
the derivation for the acquisition of the first lock
follows closely the pattern
we already explained in \cref{sec:rules,app:ex-blocking-counter}.
We show the proof of the acquisition of the second lock in more detail,
to show the interplay between the layers.
At that point we are continuously holding the obligation of the first lock,
with layer greater than~$\topOf[s_2]$, so apply \ref{rule:layer-weak}
to lower the layer to $\topOf[s_2]$
enabling the application of \ref{rule:frame} to frame $
\notdone{\rid_1} *
\oblA{\obl{k}_1}{\rid_0} *
\guardA{\guard{w}_1(n,n)}{\rid_0}
$.
The obligation $\obl{k}_2$ has layer lower than $\topOf[s_2]$ so we are allowed
to invoke it to discharge the environment liveness condition
of the \ref{rule:liveness-check} application,
in a way that is analogous to the derivations
of the distinguishing client and \cref{app:ex-blocking-counter}.

\begin{mathfig}[p]
  \begin{proofoutline}
  \TITLE{Proof of \code{incrBoth(x)}:}
  \CTXT \topOf; \emptyset |-
  \A n, m \in \Nat. \\
  \PREC<\emp | \ap{DC}(t, \pvar{x}, n, m)> \\
  \begin{proofjump}[rule:consequence,"Sub ${t=(\rid_1,t_1), t_1=(r_0, t_0), t_0=(la_1, la_2, s_1, s_2)}$"]
    \PREC<\region{dcnt}{\rid_1}(t_1, \pvar{x}, n, m) * \guardA{\guard{e}}{\rid_1}> \\
    \begin{proofjump}[rule:make-atomic]
      \CTXT \topOf; \actxt \is \map{\rid_1 -> (\Nat^2, {\botOf}, \Nat^2, \set{((n, m), (n+1,m+1))| n, m \in \Nat})} |- \\
      \PREC{\exists n, m \st \region{dcnt}{\rid_1}(t_1, \pvar{x}, n, m) * \notdone{\rid_1}} \\
      \CODE{l1 := [x];} \\
\CODE{l2 := [x + 1];} \\
      \CODE{//\ }$\color{codecomment}
        t_1' \is (r_0,t_0'), \;
        t_0' \is (\pvar{l1}, \pvar{l2}, s_1, s_2) $\\
      \ASSR{\exists n, m \st \region{dcnt}{\rid_1}(t_1', \pvar{x}, n, m) *
        \notdone{\rid_1} * {} \\
        \exists l_1, l_2 \st
        \region{ldcnt}{\rid_0}(t_0', \pvar{x}, l_1, l_2, \_, \_) *
        l_1 = 1 \dotimplies \envObl{\obl{k}_1}{\rid_0} *
        l_2 = 1 \dotimplies \envObl{\obl{k}_2}{\rid_0}} \\
      \CODE{lock(l1);} \\
      \ASSR{\exists n, m \st \region{dcnt}{\rid_1}(t_1', \pvar{x}, n, m) *
        \notdone{\rid_1} *
        \oblA{\obl{k}_1}{\rid_0} *
        \guardA{\guard{w}_1(n,n)}{\rid_0} * {} \\
        \exists l_2 \st
        \region{ldcnt}{\rid_0}(t_0', \pvar{x}, \_, l_2, \_, \_) *
        l_2 = 1 \dotimplies \envObl{\obl{k}_2}{\rid_0}} \\
      \begin{proofjump}[rule:layer-weak,rule:frame,rule:atomicity-weak,rule:atomic-exists-elim]
        \CTXT \topOf[s_2]; \actxt |- \\
        \A n, m \in \Nat. \\
        \ASSR<\exists l_2 \st
        \region{ldcnt}{\rid_0}(t_0', \pvar{x}, \_, l_2, \_, \_) *
        l_2 = 1 \dotimplies \envObl{\obl{k}_2}{\rid_0}
        | \region{dcnt}{\rid_1}(t_1', \pvar{x}, n, m)> \\
        \begin{proofjump}[rule:lift-atomic,rule:atomic-exists-elim,rule:frame]
          \A l_1, l_2 \in \set{0,1}. \\
          \ASSR<\exists l_2 \st
          \region{ldcnt}{\rid_0}(t_0', \pvar{x}, \_, l_2, \_, \_) *
          l_2 = 1 \dotimplies \envObl{\obl{k}_2}{\rid_0}
          |
          \region{ldcnt}{\rid_0}(t_0', \pvar{x}, l_1, l_2, n, m) *
          \guardA{\guard{e}}{\rid_0}
          > \\
          \begin{proofjump}[rule:liveness-check]
            \CTXT \topOf[s_2]; \actxt |- \\
            \A n, m \in \Nat, l_1 \in \set{0,1}, l_2 \in \set{0,1} \eventually[{\botOf[s_2]}] \set{0}. \\
            \ASSR<
              \region{ldcnt}{\rid_0}(t_0', \pvar{x}, l_1, l_2, n, m) *
              \guardA{\guard{e}}{\rid_0} *
              l_2 = 1 \dotimplies \envObl{\obl{k}_2}{\rid_0}
            > \\
            \begin{proofjump*}[rule:lift-atomic,rule:frame]
              \label{step:dc-lock2}
              \CTXT \topOf[s_2]; \actxt |- \\
              \A l_2 \in \set{0,1} \eventually[{\botOf[s_2]}] \set{0}. \\
              \PREC<\ap{L}(s_2, \pvar{l2}, l_2)> \\
              \CODE{lock(l2);} \\
              \POST<\ap{L}(s_2, \pvar{l2}, 1) \land l_2 = 0>
            \end{proofjump*} \\
            \ASSR< \oblA{\obl{k}_2}{\rid_0} |
              \region{ldcnt}{\rid_0}(t_0', \pvar{x}, l_1, 1, n, m) *
              \guardA{\guard{e}}{\rid_0} *
              \guardA{\guard{w}_2(m,m)}{\rid_0}
            >
          \end{proofjump}\\
          \ASSR< \oblA{\obl{k}_2}{\rid_0} |
            \region{ldcnt}{\rid_0}(t_0', \pvar{x}, l_1, 1, n, m) *
            \guardA{\guard{e}}{\rid_0} *
            \guardA{\guard{w}_2(m,m)}{\rid_0}
          >
        \end{proofjump} \\
        \ASSR<\oblA{\obl{k}_2}{\rid_0} |
          \region{dcnt}{\rid_1}(t_1', \pvar{x}, n, m) *
          \guardA{\guard{w}_2(m,m)}{\rid_0}
        >
      \end{proofjump} \\
      \ASSR{\exists n, m \st \region{dcnt}{\rid_1}(t_1', \pvar{x}, n, m) *
        \notdone{\rid_1} * {} \\
        \oblA{\obl{k}_1}{\rid_0} *
        \guardA{\guard{w}_1(n,n)}{\rid_0} *
        \oblA{\obl{k}_2}{\rid_0} *
        \guardA{\guard{w}_2(m,m)}{\rid_0}} \\
      \CODE{v := x.cnt1;} \CODE{x.cnt1 := v + 1;} \CODE{v := x.cnt2;} \CODE{x.cnt2 := v + 1;} \\
      \ASSR{\exists n, m \st \region{dcnt}{\rid_1}(t_1', \pvar{x}, n, m) *
        \notdone{\rid_1} * {} \\
        \oblA{\obl{k}_1}{\rid_0} *
        \guardA{\guard{w}_1(n,n + 1)}{\rid_0} *
        \oblA{\obl{k}_2}{\rid_0} *
        \guardA{\guard{w}_2(m,m + 1)}{\rid_0}} \\
      \CODE{unlock(l2);} \\
      \ASSR{\exists n, m \st \region{dcnt}{\rid_1}(t_1', \pvar{x}, n, m) *
        \done{\rid_1}{((n, m), (n + 1, m + 1))} * {} \\
        \oblA{\obl{k}_1}{\rid_0} *
        \guardA{\guard{w}_1(n + 1,n + 1)}{\rid_0}} \\
      \CODE{unlock(l1);} \\
      \POST{\exists n, m \st \done{\rid_1}{\DONE((n, m), (n+1, m+1))}}
    \end{proofjump} \\
    \POST<\region{dcnt}{\rid_1}(t_1, \pvar{x}, n + 1, m + 1) * \guardA{\guard{e}}{\rid_1}>
  \end{proofjump} \\
  \POST<\ap{DC}(t, \pvar{x}, n + 1, m + 1)>
\end{proofoutline}
   \caption{Double blocking counter: proof of \code{incrBoth}.\\
\cref{step:dc-lock2} is \explainproofjump{step:dc-lock2}.
  }
  \label{fig:dbcounter-dincr-outline}
\end{mathfig}

\paragraph{A comparison with LiLi}
\label{sec:compare-lili-proof}
As we have seen in \cref{sec:overview} (Innovation~3),
the call of a CLH lock in LiLi involves two distinct atomic actions:
requesting the lock, and acquiring it.
Requesting a lock \p{x} is a non-blocking action as it just enqueues the current thread in the (concrete) queue for \p{x},
but the acquisition is represented with a (primitive) blocking operation
that waits until the current thread is at the head of the lock's queue,
and the lock is unlocked.
When proving the call to \code{lock(l1)} in \code{incrBoth},
the LiLi proof would require arguing about termination of acquisition
by appealing to progress of the threads in the environment.

To do so, in the LiLi methodology,
one has to identify the threads in the environment that will be able to make progress, and show how this progress is bringing us closer to acquiring lock~\p{l1}.
Consider the case when there are~$n_1>0$ threads ahead of us
in the queue for~\p{l1}.
Assume thread~$t_1$ is the head of the queue for~\p{l1}.
It can make progress in three ways:
\begin{itemize}
  \item if~\p{l1} is unlocked it can acquire it;
  \item if~\p{l1} is locked it can unlock it;
  \item if~\p{l1} is locked it can request~\p{l2}.
\end{itemize}
How do these actions represent progress for us?
The first case makes progress by moving to the second or third case.
The second case removes~$t_1$ from the queue of~\p{l1}
bringing us closer to the front of the queue.
The third case complicates matters:
in this case~$t_1$ is enqueued in the queue of~\p{l2}
with a non-deterministic number~$n_2$ of threads ahead of it.
The thread~$t_1$ is now blocked, and to track progress we need to consider
the head of the queue for~\p{l2}, which can only make progress by
acquiring the lock when unlocked, or releasing the lock when locked.
What progress had been made towards us acquiring~\p{l1}?
The measure of progress needs to consider the contents of the queues for both threads: the measure before~$t_1$ requests~\p{l2} needs to be $(n_1, \omega)$
(ordered lexicographically)
so that we can lower the measure to $(n_1,n_2)$ once~$t_1$ joined the queue of~\p{l2}.
Whenever~$t_1$ reaches, finally, the head of the queue of~\p{l2},
the measure of progress would become $(n_1,0)$,
and the only option for~$t_1$ is to release~\p{l2}.
Now thread~$t_1$ is back to the three options as above.
This is a problem because nothing would prevent~$t_1$ from requesting~\p{l2}
again. This could repeat ad libitum, leaving us to starve on~\p{l1}.
To rule this out, the argument needs to place a bound $b$ on the number of times
\p{l2} can be acquired while holding~\p{l1}; in our example this bound can be 1.
By mixing this bound in the measure $(n_1,b,n_2)$, the action of~$t_1$ releasing~\p{l2} brings real progress by taking $b$ from 1 to 0.
When that happens, the only option for~$t_1$ is to release the lock.
This brings down~$n_1$, the number of threads ahead of us;
at the same time we want to reset~$n_2$ to~$\omega$ and~$b$ to~$1$
to allow the new head of the queue of~\p{l1} to request~\p{l2}.

This substantiates our claim that LiLi's rely/guarantee reasoning
lacks in scalability; the key reason for this is that the progress argument
is forced to walk through all the possible ways the environment could be
implementing progress. This in turn requires to expose the internal state of both locks (their queues) to be used in the client's proof. In other words, the abstraction of the environment is not abstract enough.
By comparison, \tadalive's atomic specifications allow for the termination
of the~\p{lock} calls in the double blocking counter to be reasoned about individually,
without direct reference to the termination of the other, nor to internal state, using layers to prevent circular reasoning.
The appeal to obligation $\obl{k}_1$ being live to justify why the call to~\p{lock(l1)} terminates, abstracts away \emph{how} the environment may be keeping it live. The layers capture the essential information: the only thing that is important is that to keep $\obl{k}_2$ live, the environment does not assume $\obl{k}_1$ live.
 
\subsection{Lock-Coupling Set}
\label{sec:lock-coupling-set}

To conclude this series of examples,
we present a challenging fine-grained lock-based
implementation of a linearizable finite set.
A lock-coupling set implements a set by
maintaining an ordered linked list of the elements
with fair locks (here CLH locks) guarding each individual element.
The module exposes an \p{add} and \p{remove}
operation to add and remove elements from the
abstract set it represents.
To make modifications to the nodes of the linked list,
the operations traverse the list using a lock-coupling pattern.
In this pattern, all threads start the traversal at the head of the list.
To be at position~$i$ a thread must acquire the lock at that position.
To move to position~$i+1$ the thread would first acquire the lock at~$i+1$,
and then release the lock at position~$i$.
This way, the threads cannot overtake each other, and owning a lock
allows the owner to safely perform modifications at that position.
We sketch here the main points of interest of our proof,
the full details can be found in~\cref{appendix:lock-coupling-set}.

This example is challenging because it makes use of a dynamically changing
list of locks with non-trivial liveness dependencies between them.
In particular, the termination of the acquisition of each lock
depends on the usage of the locks further down the list.
Although these dependencies are acyclic,
they change over time as the list grows or shrinks.
At first sight, it is unclear how the seemingly static layer structure 
of \tadalive, and the fixed layers decorating the specifications
of lock operations can cope with this complexity,
without breaking modularity.

The \tadalive\ proof of this example relies on solving two key challenges:
\begin{itemize}
  \item How can we modularly coordinate the choice of layers needed for the proof of a module and the ones needed for the proofs of its clients?
  \item How can we dynamically reassign layers to resources?
\end{itemize}

We solve the first challenge by introducing a style of specification that allows
the client to ``remap'' the layers of the implementation into a larger layer structure, and the implementation to prove correctness with respect to a ``local'' layer structure which is opaque to the client.
The key observation is that a \tadalive\ derivation's validity is preserved by transformations of the layer structure that preserve the strict order between layers.
This leads to the following proof style.
Given two partial orders
  $(\Layer_1, \layleq_1, \layTop_1, \layBot_1)$
and
  $(\Layer_2, \layleq_2, \layTop_2, \layBot_2)$,
a function $ \laymap \from \Layer_1 \to \Layer_2  $
is \emph{strictly monotone} if
$
  \forall m,n\in \Layer_1 \st
    m \laylt_1 n \implies \laymap(m) \laylt_2 \laymap(n)
$.
A \emph{layer map} ${\laymap \from \Layer_1 \layto \Layer_2}$ is a strictly monotone function between the two partial orders.
Using this notion, we generalise the client-facing
CLH~lock specifications as follows:
\begin{align*}
  \exists (\LayerOf{clh}&, \layleqOf{clh}, \layTopOf{clh}, \layBotOf{clh}) \st
    \forall \laymap \from \LayerOf{clh} \layto \Layer \st
    \\
      &\ATRIPLE \laymap(\layTopOf{clh}) |-
        \A l \in \set{0, 1} \eventually[{\laymap(\layBotOf{clh})}] \set{0}.
          <\ap{L}_{\laymap}(s, \pvar{x}, l)>
            \code{lock(x)}
          <\ap{L}_{\laymap}(s, \pvar{x}, 1) \land l = 0>
      \\
      &\ATRIPLE \laymap(\layBotOf{clh}) |-
        <\ap{L}_{\laymap}(s, \pvar{x}, 1)>
          \code{unlock(x)}
        <\ap{L}_{\laymap}(s, \pvar{x}, 0)>
\end{align*}

From the perspective of the implementation,
a proof of correctness would start by defining the partial order of the
``internal'' layers.
In the case of CLH, as we have seen in \cref{examples:clh_lock},
we would let $\LayerOf{clh} = \Nat \union \set{\topOf,\botOf}$
with $\layTopOf{clh} = \topOf$
and $\layBotOf{clh} = \botOf$.
Then, to be able to prove the triples with the layers remapped by the arbitrary
layer map~$\laymap$ we would reproduce
the derivation presented in \cref{examples:clh_lock}
but with~$\laymap$ applied to every occurrence of an internal layer.
For example, the $\rt[lclh]$ region type
would also be parametrised by the layer map,~$\region{lclh}{\rid'}(\laymap,x, h, l, o, t)$,
so that its associated obligations and their layers can depend on~$\laymap$,
e.g.~$\lay(\obl{p}_{\laymap}(t)) = \laymap(t)$.
Since the map preserves the strict order of~$\LayerOf{clh}$,
the proof goes through exactly as in the un-parametrised case.

From the perspective of the client,
to use these specifications one would first obtain the arbitrary~$\LayerOf{clh}$
from the existential quantification.
Then the client would be able to choose a layer map from $\LayerOf{clh}$ to $\Layer$. Here $\Layer$ could be the global layer structure, in the case of a closed proof, or itself being the internal layer structure of a module using the lock module internally.
Note that the client needs to define~$\laymap$ parametrically on~$\LayerOf{clh}$,
since it has no control on the inner structure of $\LayerOf{clh}$.
For example, in the case of a client with a static list of locks,
one would use as~$\Layer$ the lexicographically ordered set of pairs from
$(\Nat \union \set{\top,\bot}) \times \LayerOf{clh}$
where the first component corresponds to the position of the lock from the end of the list.
Then, for the lock at position $i\in\Nat$, the client would instantiate
the specifications choosing $\laymap[i](k) \is (i,k)$.

The second challenge is also solved by a slight generalisation of
the lock specifications,
following a proof pattern that, if adopted, always increases the generality of module specifications: adding some fractional permissions to control the update of ghost parameters of the resource.
The idea is that the layer map is ghost state, and as such we should be able to update it using a viewshift.
To do this without invalidating the other thread's information about the region we are updating, we add standard fractional permissions to the lock specifications.
We introduce the abstract predicate $\ap{P}(s,\pi)$ representing ownership of
the fraction $0 \leq \pi \leq 1 $ of permissions for a lock at abstract location~$s$.
To split permissions, the predicate satisfies,
for $0 \leq \pi_1+\pi_2,\pi_1,\pi_2 \leq 1$,
$ \ap{P}(s, \pi_1+\pi_2) \iff \ap{P}(s, \pi_1) * \ap{P}(s, \pi_2) $.
The generalised lock specifications would then be:
\begin{align*}
  \exists (\LayerOf{clh}&, \layleqOf{clh}, \layTopOf{clh}, \layBotOf{clh}) \st
    \forall \laymap \from \LayerOf{clh} \layto \Layer \st
    \\
  & \TRIPLE \laymap(\layBotOf{clh}) |-
      {\emp}
        {\p{makeLock()}}
      {\exists s \st \ap{L}_{\laymap}(s, \pvar{x}, 0) * \ap{P}(s, 1)}
  \\
  \forall \pi {>} 0\st
  &
  \ATRIPLE \laymap(\layTopOf{clh}) |-
    \A l \in \set{0, 1} \eventually[{\laymap(\layBotOf{clh})}] \set{0}.
      <{\ap{P}(s, \pi)} | \ap{L}_{\laymap}(s, \pvar{x}, l)>
        \code{lock(x)}
      <\ap{P}(s, \pi) | \ap{L}_{\laymap}(s, \pvar{x}, 1) \land l = 0>
  \\
  &\ATRIPLE \laymap(\layBotOf{clh}) |-
    <\ap{L}_{\laymap}(s, \pvar{x}, 1)>
      \code{unlock(x)}
    <\ap{L}_{\laymap}(s, \pvar{x}, 0)>
\end{align*}
When creating a new lock, one gets a local resource representing an unlocked lock and full permissions.
Typically then permissions are distributed to the threads by splitting the full permission into smaller fractions.
A non-trivial fraction of permission is now needed to perform the \p{lock}
operation.
We can then provide the viewshift
$
  \ap{L}_{\laymap}(s, l) * \ap{P}(s,1)
  \vshift 
  \ap{L}_{\laymap'}(s, l) * \ap{P}(s,1)
$
which allows to change the layer map without invalidating
the knowledge about it in any other thread:
if we own~$\ap{P}(s,1)$ then no other thread can race on the lock.
Adapting the proof of CLH to support permissions and the viewshift above
follows standard (safety) proof patterns which we explain in \cref{appendix:lock-coupling-set}.

Let us briefly explain how we can use this viewshift in the lock-coupling set
example.
Conceptually, we want to organise the layers of the lock-coupling set module
as for a static list of locks: they go in decreasing order from the head of the list to the tail.
A thread holding a lock at position~$i$ will be able to eventually acquire the lock at position~$i+1$ because the release of such lock is associated with an obligation of strictly lower layer than the one associated with the lock at~$i$.
Each operation of the module inserts at most one element to the set per traversal of the list.
We therefore arrange the proof invariants so that each thread traversing the list will shift up the layer of the lock at the thread's current position by one.
This way, when the thread finally finds the position where the new element has to be inserted, there is already a gap of 1 between the layers associated with the positions being altered by the thread.
The layer sitting at the gap will be the one we associate with the lock of the new element.
The layer-map-altering viewshift we explained above is used at each step of the
traversal, to shift up the layer of the current lock.
This is possible without breaking the information owned by other threads
because when the current thread holds the lock at position~$i$
and the lock at~$i+1$ finally becomes available,
the current thread is the only thread with access to the reference (and the associated resources) of the lock at~$i+1$.
Formally, this means that when we obtain the lock at~$i+1$
we are able to obtain full permissions for it until we unlock the lock at~$i$.
With the full permissions we can apply the viewshift and effectively shift up
the layers associated with the lock at~$i+1$.

The only exception to this scheme is the lock at the head of the queue:
this is the only lock which does not need a remapping of layers as its associated layer can be $(\top, \layBotOf{clh})$ which is always bigger
than any layer ever associated with the locks at the other positions.

It is worth noting that the LiLi proof of the same example does not use the specifications of the fair locks modularly, but instead inlines the code of the lock operations, allowing for a non-modular handling of the internal state.

Interestingly, the same lock-coupling set specifications can be implemented
by using spin locks instead of CLH locks,
for each element except the one at the head.
In fact, the locks in the tail of the list do not experience any impedance.
At first sight, it seems impossible to represent this fact using our specifications for spin lock: the \code{lock} operation needs to consume non-trivial budget, but there is no bound on the number of calls to it.
The \tadalive\ way of expressing the absence of impedance in this example
uses a viewshift similar to the one we introduced above,
which allows us to reset the budget (and the layer map) when we own full permissions.
The proof in LiLi of this variant of the lock-coupling set again inlines
the lock code, with the effect of being able to redefine which internal steps are susceptible of impedance and which do not, breaking modularity.

\subsection{Limitations}
\label{sec:limitations}

\paragraph{Non-local linearization points.}
As  with other total program logics, \tadalive\ 
does not support helping/speculation.
Such patterns are challenging
for the identification of the linearization point,
which is entirely a safety property.
Extensions to TaDA that could support such patterns
are discussed in~\cite{PedroThesis}.
Such extensions are orthogonal to the termination argument.
We therefore choose, in line with the related literature,
to explore termination in a simpler logic.

\paragraph{Non-structural thread creation.}
\tadalive\ currently supports only structural parallel composition.
We believe the support of non-structural fork/join
would not require substantial new ideas.
For comparison, LiLi does not support parallel nor fork/join.

\paragraph{Scheduling non-determinism.}
A more interesting limitation comes from our approach to specifying impedance.
For non-blocking programs, the ordinal-based approach is complete.
It is not complete for blocking programs.
Consider $\cmd_2 \is (\cmd_1\parallel \code{[done]:=true})$
where $\cmd_1$ is the distinguishing client 
\emph{with a spin lock}.
Scheduler fairness guarantees the right-hand thread of $\cmd_2$ will be eventually executed.
The specification of spin lock, however,
states that every call to \code{lock} needs to consume  budget,
forcing the client to provide an  upper bound
for the total number of calls  to initialise the budget.
Unfortunately, $\cmd_2$ will call \code{lock} an  arbitrary unbounded number
of times, determined only by the choices of the scheduler.
It is, thus, not possible to  provide the initial budget,
and \tadalive\ cannot prove that the program terminates.
The impedance on the lock is only relevant when the client is unblocked
(i.e.~\p{done} is true)
but the specifications do not allow for the distinction.
To accommodate this behaviour,
we could introduce $ \alpha(\obl{d}) $
to represent a prophecy of the number of steps it will take
to fulfil \emph{live} obligation~$\obl{d}$.
This would solve the problem for $\cmd_2$,
because $\alpha(\obl{d})+1$
(where $\obl{d}$ is fulfilled by setting \p{done} to true)
would be the required budget.
How to introduce this extension soundly is future work.
To the best of our knowledge,
none of the approaches in the literature can handle this example.

\paragraph{Loop body specifications.}
Consider a loop invariant asserting
the possession of  obligation $\obl{k}$.
We cannot distinguish, by only looking at the specification of the loop body,
the case where  $\obl{k}$ is continuously held
throughout the execution of the body,
from the case where $\obl{k}$ is fulfilled and then reacquired before
the end of an iteration.
The current \ref{rule:while} rule conservatively rules out the use of
assumptions with layer higher than or equal to $\lay(\obl{k})$;
doing otherwise would be unsound in the case when $\obl{k}$ is held continuously.
A solution would be to introduce an assertion $\cond{live}(\obl{k})$,
certifying  that an obligation is fulfilled
at some point in a block of code.
It would allow the \ref{rule:while} rule
to only forbid layers which may depend on obligations
one holds in the loop invariant and
for which it was not possible to prove $\cond{live}(\obl{k})$.

\paragraph{More Expressive Layers.}
Advanced examples like the lock-coupling set of \cref{sec:lock-coupling-set}
need powerful parametric specifications in order to work around
the fact that the $\lay$ function is statically specified.
We are not aware of any example that cannot be proved using static layers
and critically requires more expressive layers.
Even for current proofs, however, being able to constrain layers
through assertions and allowing them to change as result of interference
would allow for more concise and intuitive proofs.
The $\lay$ function could in principle be encoded as ``regular'' ghost state
and the crucial relative order between layers be enforced through invariants.
It is however not clear how to ensure soundness if interference on layers is allowed.
We leave this exploration as future work.
 \section{Related Work}
\label{sec:relwork}

\paragraph{Primitive Blocking.}
There has been work on termination and deadlock-freedom
of concurrent programs with primitive blocking constructs.
Starting from the seminal work of~\cite{Kobayashi06},
the idea of tracking dependencies between blocking actions and ensuring
their acyclicity has been used to prove dead\-lock-freedom
of shared-memory concurrent programs
using primitive locks and (synchronous) channels~\cite{Leino10,BostromM15}.
Similar techniques have been used in~\cite{Jacobs18}
to prove global deadlock-freedom
(a \emph{safety} property requiring that at least some thread can take a step),
and~\cite{Jacobs18toplas} to prove termination.
This entire line of work assumes the invocation
of lock/channel \emph{primitives} as the only source of blocking.
As a consequence, this methodology provides no insight
on the issue of understanding abstract blocking patterns
arising from busy waiting and shared memory interference.
Moreover, the specifications for blocking built-ins
(hardcoded in the logic as ad hoc axioms)
impose a usage protocol in the client, instead of just capturing the abstract
effect of the operation: for instance, a call to \code{lock(x)} always
entails an obligation to unlock the lock, regardless of how the client
intends to use the lock.
This has had the side effect of requiring ad hoc extensions of the reasoning principles to increase the expressivity of this hard-coded protocol,
to allow, for example, for delegation of obligations~\cite{Hamin019}.
Our solution uniformly handles programs
that mix blocking primitives and ad-hoc synchronisation patterns,
and is not imposing any specific protocol on the client.

The notion of ``obligations'' found in~\cite{Leino10,BostromM15,Jacobs18,Jacobs18toplas}
is only superficially related to our obligations.
First, obligations found in the literature represent primitive
blocking events (like the acquisition of a lock).
They are also typically equipped with a structure to represent causal dependencies between these events, to detect deadlocks.
Our layered obligations are associated with arbitrary \emph{abstract}
state changes, removing the need for ad-hoc treatment of primitives,
and supporting abstraction and abstract atomicity.
Moreover, our layers do not represent causal dependencies between events,
but rather dependencies between liveness assumptions in a termination argument.
This reflects in our specifications, e.g.~a lock operation does not return an obligation in its post-condition.
Whether there is a need for an obligation linked to that lock is entirely
dependent on how the client will decide to use the lock.
Nevertheless, the specification precisely captures the termination guarantees
of lock operations.
Finally, obligations in the literature have a purely safety semantics,
from which one can only derive safety properties as non-blocking or deadlock-freedom.
Our obligations explain how to express proper liveness invariants,
how to blend them with the layers, and how to use them for proving termination.

\paragraph{Temporal Logics.}
There is substantial literature on using temporal logics to prove liveness
and termination of concurrent programs, e.g.~\cite{OwickiL82}.
By working directly at the level of traces with  liveness properties stated as
temporal logic formulas, this approach is very general.
It does however provide less guidance on how to prove programs,
and does not tackle the problem of abstract interfaces and proof reuse.
Our adoption of concurrent separation logic as the basis of our
reasoning achieves superior compositionality of the
reasoning including proof reuse.

\paragraph{History-based methods.}
The CertiKOS project~\cite{Shao17,Shao18} developed mechanised techniques for
the specification and verification of fine-grained low-level code
with explicit support for abstract atomicity and progress verification.
The approach is based on \emph{histories}:
the abstract state is a log of the abstract events of a trace; and
the specification of an atomic operation inserts exactly one event in the log.
Local reasoning is achieved by rely/guarantee through complex automata product constructions.
The framework is very expressive,

with the downside that specifications are more complex and difficult to read,
and verification requires manipulation of abstract traces/interleavings.
Our work is similar in aim and scope, but our strategy is different.
We try to specify/verify programs using the minimal machinery possible,
keeping the specifications as close to the developer's intuition as we can.
As a result, our specifications are more readable
(compare our fair-lock specification with the corresponding 30-line
specification from Fig.~7 in~\cite{Shao17}),
and our reasoning is simpler
(the layered obligation system leads to a more intuitive proof
compared to the proof of MCS locks in~\cite{Shao17}).\footnote{The proof is a variation of the one for CLH.}

\paragraph{Contextual refinement.}
Another approach to specify and prove progress
of concurrent systems is to prove refinement
between the implementation and simpler, abstract code acting as a specification~\cite{Tassarotti17,LiangF16,LiangF18}.
By making sure the refinement preserves progress properties,
one can represent the salient termination properties of the implementation
by the termination properties of the specification code.
The Iris implementation of this idea~\cite{Tassarotti17}
uses a non-contextual refinement,
which means that the refinement is proven
between the closed-world behaviour of implementation and specification code,
and does not necessarily carry over contexts.
This severely hinders proof reuse.
The only refinement-based work that is able to modularly verify
blocking code is the LiLi logic discussed below.

There has been work on extending linearizability,
characterised as a contextual refinement,
to support reasoning about progress properties,
e.g.~\cite{GotsmanY11}.
This work only supports non-blocking operations.
\citet{LiangHFS13} studies  the exact relationship between common
progress properties of fine-grained operations and contextual refinement.
The study of the contextual refinement induced by
our triple semantics is future work.

\paragraph{LiLi.}
The work closest to ours is LiLi~\cite{LiangF16,LiangF18}.
LiLi was the first concurrent separation logic to prove
progress specifications for linearizable concurrent objects
with internal blocking~\cite{LiangF16},
and it was then extended to handle external blocking~\cite{LiangF18}.
Although we share most of our goals with LiLi,
our approach differs in two important ways.

First, LiLi's goal is to prove a progress-preserving contextual refinement
between the implementation of a module and its specification.
Termination properties of implementation code are not represented directly,
but in terms of the termination properties of the specification code.
Although proof of clients of the module
have to be done outside of the LiLi logic
(there is no rule for parallel, nor for calling a module's operation)
such proofs would need to reprove the relevant termination properties of
the specification code so that the properties themselves become available in the proof.
Moreover, as we outlined in \cref{sec:overview} for CLH lock,
the specification code for blocking operations
may be non-atomic even in the case of linearizable operations.
Instead we aim at specifications that directly represent termination properties
as a logical statement that can be readily used in a client proof,
and in the proof of the implementation.
Our specification format obtains a crucial advantage:
it achieves abstraction and can represent atomicity for blocking operations,
enabling more scalable and reusable reasoning.

Second, LiLi's rely/guarantee incorporates a form of liveness invariants
through so-called \emph{definite actions}.
Definite actions require the identification of a logical global ``queue''
of threads where the thread at the front is always able to execute its action
and that action implies global progress.
This queue is maintained as shared auxiliary state manipulated through ghost code.
It is due to this global view that definite actions
can side-step the issue of circular reasoning.
Our layered subjective obligations push the idea much further,
obtaining sound liveness invariants that can be represented thread-locally
and without the need for ghost code, improving proof scalability.
The design choice of making both rely/guarantee and specification
represent blocking via liveness assumptions is the key to
making the blocking specifications directly usable in the proof system.
 \section{Conclusions and Future Work}
\label{sec:concl}
\label{sec:future}

We have introduced TaDA Live,
a concurrent separation logic for reasoning
compositionally about the termination
of fine-grained blocking concurrent programs, and proved a substantial
soundness result.
Our key contribution is our approach to abstract atomic blocking as
the reliance of termination on the liveness properties of the environment.
By wholly embracing this point of view,
we have designed a rely/guarantee principle that incorporates liveness invariants
using layered subjective obligations, a new form of local ghost
state, and have extended \tada's abstract atomic specifications to
provide total specifications for blocking programs using environment
liveness assumptions.  Through several case studies, we have illustrated
how our formalisation of abstract blocking
allows for the right level of abstraction in specification,
and strong thread-locality of the proofs.
The result is a verification system with scalable and reusable proofs.

The work presented in this paper opens a number of immediate
directions for future work
on concurrent separation logics.
A first direction is to extend \tadalive\ to prove general liveness properties
beyond termination.
A possible way to achieve this is to wrap a refinement calculus around
\tadalive's atomic specifications,
as was done in the safety case in TaDA Refine~\cite{tadarefine}.
Specifications would be able to sequentially compose atomic triples
and take fixpoints, thus being able to specify linear-time temporal properties
of infinite traces.
A second direction is to study general fork/join concurrency and provide
a generalisation of the liveness rely/guarantee necessary to accommodate
patterns typical of distributed/reactive systems,
where long-lived maintenance threads interact with an environment
to realise an operation's effect. 
A third direction is to 
transfer ideas from \tadalive\ to  the Iris framework~\cite{iris3},
to provide a Coq-mechanised environment for reasoning about the termination
of concurrent programs. More widely, we hope  that our emphasis on 
environment liveness invariants for proving termination will transfer 
to other forms of 
reasoning about blocking concurrent programs.

\begin{acks}
  We would like to thank
    Hongjin Liang,
    Xinyu Feng,
    Martin Bodin,
    Shale Xiong and
    Petar Maksimovic,
  for the helpful discussions and comments.
  We also thank the anonymous reviewers for their thorough critical reading of the paper and insightful feedback.
  This research was supported
by: 
     the EPSRC Programme Grant
    ``REMS: Rigorous Engineering for Mainstream Systems''
    (EP/K008528/1);
    by the European Union's Horizon 2020 research and innovation programme
    under the Marie Skłodowska-Curie project ``VeSPA'',
    grant agreement no.~795218;
    by a Department of Computing PhD Scholarship from Imperial;
by the UKRI Established Fellowship
    ``VeTSpec: Verified Trustworthy Software Specification''
    (EP/R034567/1);
and in the final stages by the ERC Consolidator Grant for the project ``RustBelt'',
    also funded under EU Horizon 2020, grant agreement no.~683289. 
\end{acks}

\label{paper-last-page}

\appendix
\makeatletter\def\@adddotafter#1{\llap{\normalfont\bfseries---~}#1\@addpunct{.}}\makeatother

We present here omitted definitions and details of proofs.
An extended version of this paper is also available at
\url{https://arxiv.org/abs/1901.05750}~\cite{arxiv}.

\compresspqsets

\section{Some Proofs Conventions}
\label{app:conventions}

\subsection{Specification abbreviations}

Here is a summary of all the abbreviations we use in writing specifications.
The full hybrid specification format is
\[
  \ATRIPLE m;\lvl;\actxt |=
    \A x \in X \eventually[k] X'.
    <\na{P}|\at{P}(x)>
    \cmd
    \E y.<\na{Q}(x,y)|\at{Q}(x,y)>
\]
The $\exists y$ quantification is a normal existential quantification
but its scope extends over both the Hoare and the atomic postconditions.
We omit it when $y$ does not occur in the triple.
\begin{align*}
  \PQ{x} &\is \PQ{x \in \Val}
  \\
  \PQ{x\in X} &\is \PQ{x \in X \eventually[\layBot] X}
  \\
  \PQ{x_1\in X_1 \eventually[k] X'_1,
      x_2\in X_2 \eventually[k] X'_2.}
      &\is
    \PQ{(x_1,x_2) \in (X_1 \times X_2)
          \eventually[k] (X'_1 \times X'_2).}
\end{align*}
An omitted pseudo-quantifier is to be understood
as the trivial pseudo-quantifier $\PQ{x \in \AVal \eventually[\layBot] \AVal}$,
for an unused $x$.

The triples
\begin{gather*}
  \TRIPLE m,\lvl,\actxt |- {P}\cmd{Q}
  \\
  \ATRIPLE m,\lvl,\actxt |- \A x\in X\eventually[k] X'.<P(x)>\cmd<Q(x)> 
  \label{specs:special-cases}
\end{gather*}
are abbreviated with
\begin{gather*}
  \ATRIPLE m;\lvl;\actxt |- <P|\emp> \cmd <Q|\emp>
  \\
  \forall \vec{v}_0\st
  \ATRIPLE m;\lvl;\actxt |-
    \A x \in X \eventually[k] X'.
    <\pvars{v}_0 \doteq \vec{v}_0 | P'(x)>
    \cmd
    \E \vec{v}_1.
    <\pvars{v}_0 \doteq \vec{v}_0 \land \pvars{v}_1 \doteq \vec{v}_1| Q'(x)>
\end{gather*}
respectively,
where
  $\pvars{v}_0 = \progvars(P(x))$,
  $\pvars{v}_1 = \progvars(Q(x)) \setminus \pvars{v}_0$,
$P'(x) = P(x)\subst{\pvars{v}_0->\vec{v}_0}$ and
$Q'(x) = Q(x)\subst{\pvars{v}_0->\vec{v}_0,\pvars{v}_1->\vec{v}_1}$
(for technical reasons the atomic pre/post-conditions in the general triples cannot contain program variables).
In other words, the program variables mentioned in the atomic pre/post-conditions refer to the value stored in them \emph{at the beginning}
of the execution of the command.
Most commonly, the program variables used this way are actually not
modified by the command.

\subsection{Guard and Obligation Algebras}

Defining a guard algebra can be tedious.
In program proofs, we will define guard algebras by
generating them from some \emph{guard constructors}
and some axioms defining the guard operation.

Consider two common guard patterns in \tadalive:
the use of an \emph{exclusive guard} and the
\emph{$\guard{u}$, $\guard{l}$, $\guard{k}$ pattern}
used to represent possession of a lock in ghost state.

An exclusive guard, $\gEx$, is very commonly used
to express some exclusive permission on some shared resource,
which cannot be composed with itself:
i.e.~$\guardUndef{\gEx\guardOp\gEx}$.
Local ownership of $\gEx$ is exclusive in that
no other thread can at the same time assert ownership of $\gEx$.
A ubiquitous use of this guard is in representing the resource offered by a module.

The $\guard{u}$, $\guard{l}$, $\guard{k}$ pattern is commonly used to
represent ownership of a lock guarding a resource. The thread records its
ownership of a lock by holding the ghost state $\guard{k}$,
which cannot be composed with the guard $\guard{u}$,
recording the lock is unlocked: $\guardUndef{\guard{u}\guardOp\guard{k}}$
The region holds the associated guard $\guard{l}$, which can
be recombined with the guard $\guard{k}$ once the thread releases the lock
to form the guard $\guard{u}$:
$\guard{u} = \guard{l} \guardOp \guard{k}$. 

We explain the construction of a guard or obligations algebra
from these axioms by introducing some unsurprising auxiliary definitions.

Given a set $X$,
the set $\Multiset(X) \is X \to \Nat$
is the set of \emph{multisets} over $X$;
$\msetEmpty$ is the empty multiset
(i.e.~the function mapping every element to 0) and
${\msetPlus} \from \Multiset(X) \times \Multiset(X) \to \Multiset(X)$
is multiset union (i.e.~the pointwise lifting of $+$).
The expression $\mset{x_1,\dots,x_n}$
denotes the multiset containing the elements
${x_1,\dots,x_n}$.
Given a set $X$, the \emph{free commutative monoid} over $X$
is the monoid $(\Multiset(X),\msetPlus,\msetEmpty)$.
Given a commutative monoid $(X,\pcmOp,\pcmZero)$ and
a congruence relation ${\cong} \subseteq X \times X $,
the \emph{quotient} $(X/_{\cong}, \pcmOp/_{\cong}, [\pcmZero]_{\cong})$
is a commutative monoid.
Given a commutative monoid $(X,\pcmOp,\pcmZero)$ and
a set $U \subseteq X$ with $\pcmZero \not\in U$,
the \emph{PCM over $X$ induced by $U$} is
$(X|_U, \pcmOp_{U}, \pcmZero)$
where
\[ {X|_U \is \set{x\in X| \forall u\in U\st\nexists y\in X\st x = u\pcmOp y}} \]
and for $x,y \in X|_U$,
$ x \pcmOp_U y = x \pcmOp y $ if $x \pcmOp y \in X|_U$, otherwise undefined.

For each guard algebra to be defined,
we will introduce a number of symbols
$\guard G_1,\dots,\guard G_n$,
called \emph{guard constructors}
each with some \emph{guard domain} $\dom(\guard G_i) \subseteq \AVal^{k_i}$
for some $k_i \in \Nat$.
They induce the set of \emph{guard terms}
$
  \Type{GT} \is
    \Union_{i=1}^n
      \set{\guard G_i(\vec{a}) |
        \vec{a} \in \dom(\guard G_i)
      }
$.
By specifying
  some guard constructors,
  a congruence
    ${\cong} \subseteq \Multiset(\Type{GT}) \times \Multiset(\Type{GT})$ and
  a set
    $U \subseteq \Multiset(\Type{GT})/_{\cong}$
one obtains the guard algebra
$
  ((\Multiset(\Type{GT})/_{\cong})|_U,
   (\msetPlus/_{\cong})_U,
   [\emptyset]_{\cong}
  )
$.

The guard constructors are specified by listing
their domains, writing
$ \guard{G}_{i} \from D_{i} $
to mean
$ \dom(\guard{G}_{i}) = D_{i} \subseteq \AVal^{k_{i}}$,
as, in certain cases, we may want to further restrict
the domain of the guard constructors to simplify the
reasoning.

The congruence $\cong$ is specified as the smallest congruence
satisfying given axioms of the form
\[
  \mset{
    \guard G_{i_1}(\vec{a}_{i_1})
      , \dots ,
    \guard G_{i_k}(\vec{a}_{i_k})
  }
  \cong
  \mset{
    \guard G_{j_1}(\vec{a}_{j_1})
      , \dots ,
    \guard G_{j_{k'}}(\vec{a}_{j_{k'}})
  }
\]
which we write using the syntax
\[
  \guard G_{i_1}(\vec{a}_{i_1})
    \pcmOp \dots \pcmOp
  \guard G_{i_k}(\vec{a}_{i_k})
  =
  \guard G_{j_1}(\vec{a}_{j_1})
    \pcmOp \dots \pcmOp
  \guard G_{j_{k'}}(\vec{a}_{j_{k'}})
\]

The set $U$ is specified as the smallest set satisfying given axioms of the form
\[
  {\bigl[\,\mset{
    \guard G_{i_1}(\vec{a}_{i_1})
    ,\dots,
    \guard G_{i_k}(\vec{a}_{i_k})
  }\,\bigl]_{\cong}}
  \in U  
\]
which we write using the syntax
\[ 
  \guardUndef{
    \guard G_{i_1}(\vec{a}_{i_1})
      \pcmOp \dots \pcmOp
    \guard G_{i_k}(\vec{a}_{i_k})
  }
\]

\begin{example}
    The guard algebra used in \cref{ex:distinguishing-client},
  is expressed by using two guard constructors with empty domain,
  $\guard{k}$ and $\guard{d}$, and axioms:
    $\guardUndef{\guard{k}\guardOp\guard{k}}$,
    $\guardUndef{\guard{d}\guardOp\guard{d}}$
  Note that with no congruence axioms,
  the induced congruence relation is equality.
  These induce the guard algebra 
  with elements
  $
    \Set{
      \msetEmpty,
      \mset{\guard{k}},
      \mset{\guard{d}},
      \mset{\guard{k},\guard{d}}
    }
  $.
\end{example}

\subsection{Levels}

Region levels are used to remove the possibility of
unsound duplication of resources by opening regions.
The presentation of the program proofs omits the level annotations to ease readability.
The levels can be unambiguously derived from the sequence of application
of rules \ref{rule:update-region} and \ref{rule:lift-atomic}.

To see the problem consider a generic region $\region[\lvl]{t}{\rid}(a)$;
we have
$
  \region[\lvl]{t}{\rid}(a) \equiv
    \region[\lvl]{t}{\rid}(a) * \region[\lvl]{t}{\rid}(a)
$: this is the essence of what it means for a region to be a shared resource.
When we open a region however, we obtain ownership of the contents of its interpretation $\rInt(\region[\lvl]{t}{\rid}(a))$;
the interpretation can contain resources that are not shared,
for example heap assertions, in which case we have
$
  \rInt(\region[\lvl]{t}{\rid}(a)) \not\equiv
    \rInt(\region[\lvl]{t}{\rid}(a)) * \rInt(\region[\lvl]{t}{\rid}(a))
    \equiv \False
$.
Without constraining levels, one could start with
$ \region[\lvl]{t}{\rid}(a) $,
produce the equivalent
$ \region[\lvl]{t}{\rid}(a) * \region[\lvl]{t}{\rid}(a) $,
open the first region assertion
with \ref{rule:update-region} or \ref{rule:lift-atomic},
then open the second region assertion and end up with $\False$.
Levels are a mean to avoid unsound derivations that use the above chain of implications.
A level $\lvl$ in the context of a judgement records that all the regions of
level $\lvl$ or higher might have been already opened and should not be opened again.
The rules that do open regions (\cref{rule:update-region,rule:lift-atomic})
can only open a region of level $\lvl$ 
if the level in the context is $\lvl+1$,
and they record the operation by setting the context level to $\lvl$,
so that the region cannot be opened again.

\subsection{Region type specifications}

\paragraph{Abstract state domain}
It can be tedious (and detrimental to readability) to always explicitly
write the domains of quantified variables in the assertions of program proofs,
especially when they can be easily inferred from context.
Consider the case of regions.
Some of the rules, for example~\ref{rule:make-atomic},
need the precise domain of the abstract state ($\exists x\in X$)
because it needs to match the pseudo-quantifier's domain ($\aall x\in X$).
To improve readability, we adopt the following strategy.
Suppose the region type $\rt$ has abstract state in the domain $A$.
We can define the interpretation function so that it constrains the domain
of the abstract state accordingly:
$ \rInt(\region[\lvl]{\rt}{\rid}(a)) = a \in A \land \cdots $.
Then we trivially have that
$
  \viewshift \lvl'; \actxt |=
    \exists a\st \region[\lvl]{\rt}{\rid}(a) =>
      \exists a\in A\st\region[\lvl]{\rt}{\rid}(a)
$.
We thus can omit the domains from existential quantification
and implicitly apply \cref{rule:consequence} whenever the domain
information is needed in the proof.

To further ease the specification of region types,
when defining a new region type we will introduce
the domain of the corresponding abstract state,
and omit the obvious constraint from the interpretation definition.

\paragraph{Fixed parameters}
It is very common to have a product domain as abstract state of regions,
as one needs to assemble in an abstract state many bits of information
that characterise region's state.
Typically, the abstract state domain $A$
can be seen as the product of two domains $F \times S$,
the domain of the \emph{fixed parameters} $F$ and
the domain of the \emph{non-fixed parameters} $S$.
(Both $F$ and $S$ can be themselves products of simpler domains.)
The fixed parameters are set at the point of creation of the region,
and can never be updated; they typically define the ``interface'' of the region.
For example, if the address of a lock module instance $x$ is the fixed parameter
of a hypothetical region $\region{lock}{\rid}(x,l)$
and $l \in \set{0,1}$ the non-fixed parameter
representing the state of the lock.
When introducing a new region type we will specify which parameters are fixed,
and they will be omitted from the region interference specification,
as they are left untouched by every transition.
For example, for the region $\region{lock}{\rid}(x,l)$ above,
we may write
$
  \guard{g}: (0, \oblZero) \interfTo (1, \obl{k} ) 
$ and
$
  \guard{g}: (1, \obl{k} ) \interfTo (0, \oblZero)
$
to denote
$
  \guard{g}: ((x,0), \oblZero) \interfTo ((x,1), \obl{k} )
$ and
$
  \guard{g}: ((x,1), \obl{k} ) \interfTo ((x,0), \oblZero)
$.

\paragraph{Interference protocols and atomicity contexts}
\cref{def:interference} requires $\regLTS[\rt]$ to be
monotone in the guards, reflexive and closed under obligation frames.
Since writing the whole function can be tedious and redundant,
we will only write a number of expressions of the form
\begin{equation}
  G: (a_1, O_1) \interfTo (a_2, O_2)
  \label{ex:interfto}
\end{equation}
which will set $\regLTS[\rt](G) \ni \set{\bigl((a_1, O_1), (a_2, O_2)\bigr)}$,
and implicitly complete the function
by closing $\regLTS[\rt]$ under the properties above.

Similarly, atomicity contexts associate to some region identifier
records $ \actxt(r) = (X,k,X',R) $
that have (unguarded) transition relations as their last component $R$.
We therefore borrow the syntax from \eqref{ex:interfto},
and write $ R = (a_1, O_1) \interfTo (a_2, O_2) $
to specify $ R $ as the minimal relation that include such relations
and is closed under obligation frames.

\subsection{Proof patterns}

There are some recurring patterns in \tadalive{} proofs,
which we summarise here to help the reader navigate the examples.

\paragraph{The exclusive guard}

Take for example a concurrent counter module.
Abstractly we have a (fixed) location $x$ for the module instance
and an abstract state $n \in \Nat$ representing the current value of the counter.
Since this is a concurrent counter it uses internally shared resources.
We therefore have a region
$ \region{cnt}{\rid}(x,n) $
encapsulating the shared internal resources of the counter.
From the perspective of the client, however,
at the moment of creation of the counter with, say,
an operation \code{makeCounter()},
the counter is exclusively owned by the client.
This, for example, is reflected in the fact that, until the client
shares the counter or invokes operations on it,
the value of the counter will be stably $0$.
To represent this fact, one typically defines an exclusive guard $\gEx$
guarding each transition of the region interference:
e.g.~$ \gEx: (n,O_1) \interfTo (m,O_2) $.
Then the \code{makeCounter()} operation can be given the specification
\[
  \TRIPLE |-
    {\emp}
      {\code{x := makeCounter()}}
    {\exists \rid\st \region{cnt}{\rid}(\pvar{x},0) * \guardA{\gEx}{\rid}}
\]
which gives to the client the stable assertion
$ \region{cnt}{\rid}(x,0) * \guardA{\gEx}{\rid} $.
(Note how $\region{cnt}{\rid}(x,0)$ is not stable.)
To re-share the counter, the client will create its own region
encoding the invariants governing the interaction over the counter
(and the other resources of the client)
the interpretation of which will contain
$\region{cnt}{\rid}(x,0) * \guardA{\gEx}{\rid}$.

Note that the assertion $\region{cnt}{\rid}(x,0) * \guardA{\gEx}{\rid}$
has a very different meaning if occurring in the \emph{atomic} precondition
of a triple, as opposed to the \emph{Hoare} precondition:
the resources in the atomic precondition are not owned by the local thread,
but only acquired instantaneously at the linearisation point.
For example, in the triple
\[
  \ATRIPLE |-
    \A n \in \Nat.
      <\region{cnt}{\rid}(\pvar{x},n) * \guardA{\gEx}{\rid}>
        {\code{incr(x)}}
      <\region{cnt}{\rid}(\pvar{x},n+1) * \guardA{\gEx}{\rid}>
\]
the exclusivity of $\gEx$ is only granted \emph{instantaneously} to the thread
acting on it atomically, i.e.~either the environment during the interference phase as allowed by the pseudo-quantifier, or the local thread at the linearisation point.

Since this pattern is ubiquitous,
we reserve the $\gEx$ guard constructor for this use, and
will omit the $\guardUndef{\gEx\guardOp\gEx}$
axiom when specifying guard algebras.

\subsection{Modules}

\tadalive\ is a logics that emphasizes modularity of the proofs.
One aspect of this is that when a program is naturally structured as a collection of modules, one would want the proof of correctness to be
decomposed into independent proofs of each module exporting some specifications for the externally accessible operations,
and a proof that the client of these modules is correct,
which depends only on these abstract module specifications.

In our model, a module is nothing but a conceptually related set of operations
$\pvar{f}_1 , \dots , \pvar{f}_n$
that are defined in a \code{let} statement:
\acode{let f$_1$(VARS$_1$) = CMD$_1$ in ...let f$_n$(VARS$_n$) = CMD$_n$ in CMD}.
Here $\cmd$ is what we call ``client'' of a module offering
operations $\pvar{f}_1 , \dots , \pvar{f}_n$.
The operation deals with let statements by populating a function $\functxt$
associating each function name $\pvar{f}_i$ to its formal parameters $\pvars{x}_i$ and its implementation $\cmd_i$.

Similarly, the proof of correctness of $\cmd$, will need to fetch the
abstract specifications of the functions (which appear as free names in $\cmd$)
from some mapping $\funspec$ from function names to their specifications.
The fact that the implementation of each operation satisfies its specification
is checked in the proof derivation for the let statement (\cref{rule:let})
but then the proof of the client and of the module are done separately.

For this reason, we present proofs of just a module against its abstract specifications, which can be used as if they were axioms in the proof of any client using them.
To talk about modules independently of their clients we introduce the notation
\acode{def f(x) \{CMD\}} which can be understood as populating
an entry of $\functxt$ for \pvar{f}.
We will then prove some specification for \pvar{f} which will populate
an entry of $\funspec$ for \pvar{f}.

In the proof of some client, we will recall the module specifications
that are assumed in $\funspec$, and use \cref{rule:call} to handle
the calls to the operations of the module.
We will omit from the proof outlines $\funspec$ and the applications of \cref{rule:call} for readability.

\subsection{Proof outlines}

In program proof outlines, we adopt a number of notational conventions.
First, unless it involves a viewshift or we want to highlight it,
we will apply \cref{rule:consequence} without mentioning it.
Similarly, we omit the obvious applications of
\cref{rule:var,,rule:call,,rule:subst}
and the axioms (i.e.~the rules associated with primitive commands).

Next, in outline such as

\begin{center}
\begin{proofoutline}
  \CTXT m; \lvl; \actxt |-\\
  \A x\in X \eventually X'.\\
  \PREC<P(x)>\\
  \begin{proofjump}["\textsc{outer}"]
    \PREC<P'(x)>\\
    \begin{proofjump}["\textsc{inner}"]
    $\;\vdots$
    \end{proofjump}\\
    \POST<Q'(x)>
  \end{proofjump}\\
  \POST<Q(x)>
\end{proofoutline}
\hspace{3em}
$
  \infer*[Right={outer},vcenter]{
    \infer*[Right={inner}]{
      \vdots
    }{
      \ATRIPLE m; \lvl; \actxt |-
        \A x\in X \eventually X'.<P'(x)> \cmd <Q'(x)>
    }
  }{
    \ATRIPLE m; \lvl; \actxt |-
      \A x\in X \eventually X'.<P(x)> \cmd <Q(x)>
  }
$
\end{center}

the specification of the inner step
inherits the context and the pseudo-quantifier of the specification
of the outer step, as in the derivation on the right.
 
\section{The \tadalive\ proof system}
\label{app:proof-system}

In this section, we present the full proof system of \tadalive.

For brevity we use the metavariable
$\pqsets{X}$ to range over expressions of the form
$ X_1 \eventually[k] X_2 $ and is used in rules when the pseudo-quantification
is simply copied verbatim from premise to conclusion.

In the rules we use the following abbreviation:
\begin{align*}
  \VALID \actxt |= \minLay{P}{k} & \quad \is \quad
    \forall r\in\RId \st
      \VALID \actxt |= P \implies \minLay{r}{k}
  \\
  k \laygeqq n &\quad\is\quad
    \forall k'\laygt k \st k' \laygeq n
\end{align*}
The \pre\lvl-safety condition is defined in \cref{app:lsafe}
and can be typically proven by using \cref{lemma:lsafe}.

\subsection{Liveness rules}
For reference we reproduce the liveness-related rules.
 
\begingroup
\centering

\begin{proofrule*}
  \infer*[right={LiveCG}]{
  \forall x \in X \st
    \VALID \lvl; \actxt |= \at{P}(x) * T \implies x \in X'
  \\\\
  \ENVLIVE n; \lvl; \actxt |- L -M->> T
  \\
  m \laygeq n \quad k \laygeqq n
  \\
  \progvars(L) \inters \modvars(\cmd) = \emptyset
  \\\\ \ATRIPLE[\funspec] m; \lvl; \actxt |-
    \A x \in X\eventually[k]X'. <\na{P}| \at{P}(x)> \cmd \E y. <\na{Q}(x, y)|\at{Q}(x)>
}{\phantom{{}\eventually[k]X'}
     \ATRIPLE[\funspec] m; \lvl; \actxt |-
         \A x \in X. <\na{P} * L| \at{P}(x)> \cmd \E y. <\na{Q}(x, y) * L|\at{Q}(x)>
}
 \end{proofrule*}

\begin{proofrule*}
  \infer*[right=While]{\forall \beta \le \beta_0 \st
    \ENVLIVE m(\beta); \lvl; \actxt |- L - M ->> T(\beta)
  \\\forall \beta \le \beta_0 \st
    \VALID \actxt |= \minLay{P(\beta)}{m(\beta)} \layleq m
  \\\\\forall \alpha\st
    \STABLE \actxt |=
      {\exists \alpha'\st L * M(\alpha') \land \alpha' \leq \alpha}
  \\\progvars(T,L,M) \cap \modvars(\cmd) = \emptyset
  \\\\\forall \beta \le \beta_0 \st
    \forall b \in \Bool \st
    \TRIPLE[\funspec] m; \lvl; \actxt |-
        {P(\beta) *  (b \dotimplies T(\beta)) \land \bexp}
        \cmd
        {\exists \gamma \st P(\gamma)
          \land \gamma \leq \beta
          * (b \dotimplies \gamma < \beta)}
}{\TRIPLE[\funspec] m; \lvl; \actxt |-
    {P(\beta_0) * L}
      {\acode{while(BEXP)\{CMD\}}}
    {\exists \gamma \st  P(\gamma) * L
      \land \neg \bexp
      \land \gamma \leq \beta_0}
}
 \end{proofrule*}

\begin{proofrule*}
  \infer*[right={Par}]{
  \TRIPLE[\funspec] m_1; \levl; \actxt |- {P_1}{\cmd_1}{Q_1}
  \\
  \VALID \actxt |= \minLay{Q_1}{m_2} \layleq m
  \\\\
  \TRIPLE[\funspec] m_2; \levl; \actxt |- {P_2}{\cmd_2}{Q_2}
  \\
  \VALID \actxt |= \minLay{Q_2}{m_1} \layleq m
}{\TRIPLE[\funspec] m; \levl; \acontext |-
    {P_1 \ssep P_2}
      {\cmd_1 \parallel \cmd_2}
    {Q_1 \ssep Q_2}
}
 \end{proofrule*}

\endgroup
\bigskip

\subsubsection{The Environment liveness rules}
The Environment liveness rules
use the $\decr$ condition (\cref{def:impr}) recalled here for convenience:

\begin{definition}[\/$\decr$]
  Given assertions $L(\alpha)$, $L'(\alpha)$ and $T$,
the condition
  $\decr[\actxt](L', L, T)$
holds if and only if, for arbitrary $\store \in \Store$, letting
\begin{align*}
        l(\alpha) &= \WorldSem{\actxt}{\store}{L(\alpha)} &
        l'(\alpha) &= \WorldSem{\actxt}{\store}{L'(\alpha)} &
        t &= \WorldSem{\actxt}{\store}{T * \True}
\end{align*}
the following holds:
\[
\forall \alpha_1, \alpha_2 \ge \alpha_1 \st
{\relyAt[\actxt]}(l'(\alpha_1)) \inters l(\alpha_2) \subseteq l'(\alpha_1) \union t
\]
 \end{definition}

We reproduce below for completeness
the rules to prove the environment liveness condition.
\begingroup
\centering

\begin{proofrule*}
  \infer*[right={EnvLive}]{
  \STABLE \lvl; \actxt |= L
  \\
  \VALID \lvl; \actxt |=
    L \implies L * \exists \alpha\st M(\alpha)
  \\\\
  \ENVLIVE m; \lvl; \actxt |- L * M(\alpha) : L * M(\alpha) -->> T
}{
  \ENVLIVE m; \lvl; \actxt |- L - M ->> T
}
 \end{proofrule*}

\begin{proofrules}
  \infer*[right=ECase,vcenter]{
  \ENVLIVE m; \levl; \actxt |-
    L(\alpha) : L_1(\alpha) -->> T
  \\\\
  \ENVLIVE m; \levl; \actxt |-
    L(\alpha) : L_2(\alpha) -->> T
}{
  \ENVLIVE m; \levl; \actxt |-
    L(\alpha) : L_1(\alpha) \lor L_2(\alpha) -->> T
}
   \and
  \infer*[right=EQuant,vcenter]{
  \forall x \in X \st
  \ENVLIVE m; \levl; \actxt |-
    L(\alpha) : L(x, \alpha) -->> T
}{
  \ENVLIVE m; \levl; \actxt |-
    L(\alpha) : \exists x \in X \st L(x, \alpha) -->> T
}
   \and
  \infer*[Right=LiveT,vcenter]{
  \forall \alpha. \VALID \actxt |= T'(\alpha) \implies T
}{
  \ENVLIVE m; \lvl; \actxt |-
    L(\alpha) : T'(\alpha) -->> T
}
 \end{proofrules}

\begin{proofrule*}
  \infer*[right=LiveO]{
  \decr[\actxt](L', L, T)
  \\
  \forall \alpha\st
    \VALID \actxt |= \minLayStrict{L'(\alpha)}{\lay(O(x))}
  \\\\
  \lvl < \lvlp
  \\
  \forall \alpha \st
    \SAT[\actxt] |- L'(\alpha) \implies
    \exists x\st \region[\lvl]{\rt}{\rid}(x) * \envObl{O(x)}{\rid} * \True
    \land m \laygt \lay(O(x))
}{ 
\ENVLIVE m; \lvlp; \actxt |-
  L(\alpha) :
  L'(\alpha)
  -->> T
}
 \end{proofrule*}

\begin{proofrule*}
  \infer*[right=LiveA]{
  \decr[\actxt](L', L, T)
  \\
  m \laygt k
  \\
  \forall \alpha\st
    \VALID \actxt |= \minLayStrict{L'(\alpha)}{k}
  \\\\
  (X \eventually[k] X') = \live(\actxt,r)
\\
  \lvl < \lvlp \\
  \SAT[\actxt] |- L'(\alpha) \implies
    \exists x \in X \setminus X' \st
    \region[\lvl]{\rt}{\rid}(x)
    * \done{r}{\lozenge}
    * \True
}{
  \ENVLIVE m; \lvlp; \actxt |-
    L(\alpha) :
    L'(\alpha)
    -->> T
}
 \end{proofrule*}

\endgroup

\subsection{Atomicity rules}
\label{app:atomicity-rules}

We first give the formal definition of \pre\lvl-safety and prove its properties,
and then give the general forms of
\cref{rule:lift-atomic,,rule:make-atomic,,rule:update-region},
which are the ones dealing with proving atomicity.

\subsubsection{The \pre\lvl-safety condition}
\label{app:lsafe}

The rules of \tadalive\ dealing with opening and closing regions 
(\cref{rule:update-region,rule:lift-atomic})
require the \pre\lvl-safety side condition for the postcondition.
While the definition of \pre\lvl-safety is technical, its intuition is simple:
those are the assertions that preserve their meaning
when interpreted at level~$\lvl$ or at level~$\lvl+1$.
The only possible contradictions arising by increasing the level come
from assertions about the state and environment obligations of
regions that are open at~$\lvl$ but not at~$\lvl+1$.

\begin{definition}[Havoc]
  Let $\lvl \in \Level$.
  The set $
    \clreg^{\lvl_2}_{\lvl_1}(\regMap) \is
      \set{r | \regMap(r) = (\wtv,\lvl,\wtv), \lvl_1 \leq \lvl < \lvl_2}
  $ is the set of region ids of~$\regMap$ that are
  closed at level~$\lvl_2$ but not at level~$\lvl_1$.
  We define the function on worlds:
  \[
    \havoc{\lvl}(h, \regMap, \guardMap, \atomMap, \oblMap, \envMap) \is
      \Set{
        (h, \regMap', \guardMap, \atomMap, \oblMap, \envMap') |\;
          \begin{matrix*}[l]
            \clreg^{\lvl+1}_{\lvl}(\regMap) = \set{\rid_{1}, \dots, \rid_{n}},
            \\
            \regMap(r_i) = (\rt_i,\wtv[2]),
            b_i \in \AVal,
            w_i \in \rIntSem[\rt_i]{r_i,\lvl,b_i},
            \\
            \regMap' = \regMap\map{r_1->(\rt_1,\lvl,b_1);;r_n->(\rt_n,\lvl,b_n)},
            \\
            O_i' \oblOp \oblMap[w_i](r_i) = \envMap(r_i),
            \envMap' \resgeq \envMap\map{r_1->O_1';;r_n->O_n'}
          \end{matrix*}
      }
  \]
  We extend it to a function on sets of worlds
  in the obvious way:
  $
    \havoc{\lvl}(p) \is
      \Union\limits_{w\in p} \havoc{\lvl}(w)
  $.
\end{definition}

\begin{definition}[\pre\lvl-safety]
  A set $p\in\UCWorld[\actxt]$ is \emph{\pre\lvl-safe} if 
  $ p = \havoc{\lvl}(p) $.
  An assertion~$P$ is \pre\lvl-safe, written $ \lsafe \lvl;\actxt |= P $ if,
  for all $\varsigma$, $\WorldSem{\actxt}{\varsigma}{P}$ is \pre\lvl-safe.
\end{definition}

Since proving \pre\lvl-safety in general
involves meddling with the semantics of assertions,
we provide the following lemma that can be used to immediately prove all the \pre\lvl-safety side conditions
involved in our program proofs.

\begin{lemma}
\label{lemma:lsafe-again}
  The properties below hold, for arbitrary~$\lvl\in\Level$:
  \begin{enumerate}
    \item $\emp$, $\vexp_1\mapsto \vexp_2$ and $\bexp$ are \pre\lvl-safe.
          \label{lm:lsafe-basic}
    \item $\guardA{G}{\rid}$ and $\locObl{O}{\rid}$ are both \pre\lvl-safe.
          \label{lm:lsafe-g-o}
    \item If\/ $\lvl' < \lvl$,
          then $
              \region[\lvl']{\rt}{\rid}(a) *
              \envObl{O}{\rid}
          $ is \pre\lvl-safe.
          \label{lm:lsafe-closed}
    \item If\/ $P,Q$ are both \pre\lvl-safe,
          then so are
            $P \land Q$, $P \lor Q$, and $P*Q$.
          \label{lm:lsafe-and-or-star}
    \item If $P(v)$ is \pre\lvl-safe for all $v\in\AVal$,
          then $\exists x\st P(x)$ is \pre\lvl-safe.
          \label{lm:lsafe-exists}
  \end{enumerate}
\end{lemma}

\subsubsection{Generalised atomicity rules}
The following rules are the general forms of 
\cref{rule:lift-atomic,,rule:make-atomic,,rule:update-region}
of \cref{fig:proof-rules}.

\begingroup
\centering
\begin{proofrule*}
  \begingroup
\specsdelim{big}
\infer*[right={MkAtomG}]{
  \lvl < \lvlp
  \\
  \rid \notin \dom(\actxt)
  \\
  \actxt' = \actxt\map{r -> (X,k,X',T)}
  \\
  T \subseteq \regLTS_{\rt}( \guard{G} )
  \\
  R = \iorel(T) \\
  \forall x\in X \st
  \STABLE \actxt |= {
    \region[\lvl]{\rt}{\rid}(x) * \guardA{\guard{G}}{\rid}
  }
  \\\\
  \TRIPLE[\funspec] m; \lvlp; \actxt' |-
    {\na{P} * \exists x \in X \st
        \region[\lvl]{\rt}{\rid}(x)
        * \done{\rid}{\blacklozenge}}
    \cmd
    {\exists x, y \st
        R(x,y) * \na{Q}(x,y) * \done{\rid}{(x,y)}}
}{
  \ATRIPLE[\funspec] m; \lvlp ; \actxt |-
    \A x \in X \eventually[k] X'.
      <\na{P} | \region[\lvl]{\rt}{\rid}(x) * \guardA{\guard{G}}{\rid} >
        \cmd
      \E y.
      < \na{Q}(x,y) |
          \region[\lvl]{\rt}{\rid}(y) * \guardA{\guard{G}}{\rid} * R(x,y) >
}
\endgroup
   \label{rule:make-atomic-gen}
\end{proofrule*}
\endgroup

\begingroup
\let\proofrulesize\footnotesize
\begin{proofrule*}
  \infer*[right={UpdRegG}]{
  r \in \dom(\actxt)
  \\
  \actxt' = \actxt\map{r->\bot}\\\\
  \VALID \actxt |= \na{P} \implies \empObl[\rid]
  \\
  \VALID \actxt |= \at{P}(x) \implies \empObl[\lvl{+}1]
  \\\\
  \VALID \actxt |= \na{Q}(x,y) \implies \empObl[\rid]
  \\
  \VALID \actxt |= Q_i(x,y,z) \implies \empObl[\lvl{+}1]
\\\\
  \lsafe \lvl;\actxt |= {\na{P}} \\
  \lsafe \lvl;\actxt |= {\at{P}(x)}
  \\\\
  \lsafe \lvl;\actxt |= {\na{Q}(x,y)}
  \\
  \lsafe \lvl;\actxt |= {\left(
      \begin{array}{@{}r@{}l@{}}
           & (R(x,z) \land Q_1(x,y,z)) \\
      \lor & (\ghost{R(x,z)}{x = z} \land Q_2(x,y))
      \end{array}
    \right)}
  \\\\
\Set{
    ((x, \obl{O}_0) , (z, \obl{O}_1(x,y))) |
      x\in X \land (R(x,z) \lor x=z) \land y \in Y(x)
  }
\subseteq \trrel(\actxt, r)
  \\\\
  {\ATRIPLE[\funspec] m; \levl; \actxt' |-
    \A x \in \pqsets{X}.
    < \na{P} |
      \begin{array}{@{}l@{}}
        \rInt(\region[\levl]{\rt}{\rid}(x)) \\
        {} * \at{P}(x) *
        \oblA{\obl{O}_0}{\rid}
      \end{array}
    >
    \cmd
    \E y. 
    < \na{Q}(x,y) \land y \in Y(x)
      \\ \quad\Big|\;
      \exists z\st
      \rInt(\region[\levl]{\rt}{\rid}(z)) * \oblA{\obl{O}_1(x,y)}{\rid} *
      \left(
        \begin{array}{@{}r@{}l@{}}
             & (R(x,z) \land Q_1(x, y, z)) \\
        \lor & (\ghost{R(x,z)}{x = z} \land Q_2(x, y))
        \end{array}
      \right)
    >}
}{
{\ATRIPLE[\funspec] m; \levl {+} 1;\actxt |-
    \A x \in \pqsets{X}.
      < \na{P} * \oblA{\obl{O}_0}{\rid}
       \\ \quad\big|\;
        \region[\levl]{\rt}{\rid}(x) * \at{P}(x) * \done{\rid}{\blacklozenge} >
      \cmd
    \E y.
    < \na{Q}(x,y) * \oblA{\obl{O}_1(x,y)}{\rid} \land y \in Y(x)
      \\ \quad\Big|\;
      \exists z\st
      \region[\levl]{\rt}{\rid}(z) *
      \left(
        \begin{array}{@{}r@{}l@{}}
                 & (R(x, z) \land Q_1(x, y, z) * \done{\rid}{(x,z)}) \\
        {}\lor{} & (\ghost{R(x, z)}{x = z}   \land \ghost{Q_1(x, y, z)}{Q_2(x, y)}
                * \done{\rid}{\blacklozenge})
        \end{array}
      \right)
    >}
}
   \label{rule:update-region-gen}
\end{proofrule*}
\endgroup

\begin{proofrule*}
  \infer*[right={LiftAG}]{
  \VALID \actxt |= \na{P} \implies \empObl[\rid]
  \\
  \VALID \actxt |= \na{Q}(x,y) \implies \empObl[\rid]
  \\\\
  \VALID \actxt |= \at{P}(x) \implies \empObl[\lvl{+}1]
  \\
  \VALID \actxt |= \at{Q}(x,y,z) \implies \empObl[\lvl{+}1]
  \\\\
  \lsafe \lvl;\actxt |= {\na{P}} \\
  \lsafe \lvl;\actxt |= {\at{P}(x)}
  \\\\
  \lsafe \lvl;\actxt |= {\na{Q}(x,y)} 
  \\
  \lsafe \lvl;\actxt |= {\at{Q}(x,y,z) \land R(x,z)}
  \\\\
  \rid \in \dom(\actxt) \implies R = \mathop{id}
  \\
\Set{ ((x, \obl{O}_1) , (z, \obl{O}_2(x,y))) | x\in X \land R(x,z) \land y\in Y(x) } 
    \subseteq \regLTS_{\rt}( \guard{G} )
  \\ 
  {\ATRIPLE[\funspec] m; \lvl; \actxt |-  
    \A x \in \pqsets{X}. 
      < \na{P} | 
        \rInt(\region[\lvl]{\rt}{\rid}(x)) * \at{P}(x) *
        \guardA{\guard{G}}{\rid} * \oblA{\obl{O}_1}{\rid} >
        \cmd
      \E y.
      < \na{Q}(x,y) \land y \in Y(x)
        \\ \quad\big|\; 
           \begin{array}{@{}r@{}l@{}}
           \exists z\st &
                  \rInt(\region[\lvl]{\rt}{\rid}(z)) * \at{Q}(x,y,z) \\ 
            {}*{} & \oblA{\obl{O}_2(x, y)}{\rid} \land R(x,z)
           \end{array}
      >}
}{
  {\ATRIPLE[\funspec] m; \lvl {+} 1; \actxt |-
    \A x \in \pqsets{X}.
      < \na{P} * \oblA{\obl{O}_1}{\rid} 
        | \region[\lvl]{\rt}{\rid}(x) * 
        \at{P}(x) * \guardA{\guard{G}}{\rid} >
      \cmd
      \E y. 
      < \na{Q}(x,y) * \oblA{\obl{O}_2(x,y)}{\rid} \land y \in Y(x)
        \\ \quad\big|\;
        \exists z\st
          \region[\lvl]{\rt}{\rid}(z) *
          \at{Q}(x,y,z) \land R(x,z) >}
}
   \label{rule:lift-atomic-gen}
\end{proofrule*}

\subsection{General forms}
The following rules are the general forms of 
some of the rules in \cref{fig:proof-rules}.

\begingroup
\centering

\begin{proofrule*}
  \infer*[right={Frame}]{
  \forall x \in X \st
    \VALID \actxt |= \minLay{\na{R} * \at{R}(x)}{m}
  \\
  \progvars(\na{R}) \inters \modvars(\cmd) = \emptyset
  ,
  \progvars(\at{R}(x)) = \emptyset
  \\
  \STABLE \actxt |= { \na{R} }
  \\
  \forall x \in X\st
    \STABLE \actxt |= { \at{R}(x) }
  \\
  \oblfree \lvl;\actxt |= {\at{R}(x)}
  \\\\
  \ATRIPLE[\funspec] m; \lvl; \actxt |-
    \A x \in \pqsets{X}.
    <\na{P}\phantom{{} * \na{R}} | \at{P}(x)\phantom{{} * \at{R}(x)}>
      \cmd
    \E y.<\na{Q}(x,y)\phantom{{} * \na{R}} | \at{Q}(x,y)\phantom{{} * \at{R}(x)}>
}{
  \ATRIPLE[\funspec] m; \lvl; \actxt |-
    \A x \in \pqsets{X}.
    <\na{P} * \na{R} | \at{P}(x) * \at{R}(x)>
    \cmd
    \E y.<\na{Q}(x,y) * \na{R} | \at{Q}(x,y) * \at{R}(x)>
}
   \label{rule:frame}
\end{proofrule*}

\begin{proofrule*}
  \infer*[right={AtomWG}]{
  \STABLE \actxt |= {\na{P} * P}
  \\
  \forall x\in X,y\st
    \STABLE \actxt |= {Q(x, y)}
  \\\\
  \ATRIPLE[\funspec] m; \levl; \actxt |-
    \A x \in \pqsets{X}.
    <\na{P} | P * \at{P}(x)>
      \cmd
    \E y.
    <\na{Q}(x,y) | Q(x,y) * \at{Q}(x,y)>
}{
  \ATRIPLE[\funspec] m; \levl; \actxt |-
    \A x \in \pqsets{X}.
    <\na{P} * P | \at{P}(x)>
      \cmd
    \E y.
    <\na{Q}(x,y) * Q(x,y) | \at{Q}(x,y)>
}
   \label{rule:atomicity-weak-gen}
\end{proofrule*}

\begin{proofrule*}
  \infer*[right={A$\exists$ElimG}]{
  \ATRIPLE[\funspec] m; \lvl; \actxt |-
    \A x \in \pqsets{X}, z \in Z.
    <\na{P} | \at{P}(x,z)>
      \cmd
    \E y.
    <\na{Q}(x,y) |  \at{Q}(x,y,z)>
  \phantom{, z \in Z}}{
  \ATRIPLE[\funspec] m; \lvl; \actxt |-
    \A x \in \pqsets{X}.
    <\na{P} | \exists z \in Z \ldotp \at{P}(x,z)>
      \cmd
    \E y.
    <\na{Q}(x,y) | \exists z \in Z \ldotp \at{Q}(x,y,z)>
}
   \label{rule:atomic-exists-elim-gen}
\end{proofrule*}

\begin{proofrule*}
  \infer*[right={LayWG}]{
  \ATRIPLE[\funspec] k_1; \lvl ; \actxt |-
    \A x \in \pqsets{X}.
    <\na{P} | \at{P}(x) >
      \cmd
    \E y.<\na{Q}(x, y) | \at{Q}(x, y)>
  \\
  k_1 \layleq k_2
}{
  \ATRIPLE[\funspec] k_2; \lvl ; \actxt |-
    \A x \in \pqsets{X}.
    <\na{P} | \at{P}(x) >
      \cmd
    \E y.<\na{Q}(x, y) | \at{Q}(x, y)>
}
   \label{rule:layer-weak-gen}
\end{proofrule*}

\endgroup

\subsection{Logical manipulation rules}

The rules below allow for basic logical manipulation.

\begingroup
\centering

\begin{proofrule*}
  \infer*[right={Cons}]{
  {\begin{array}{r@{\;}l}
  & \STABLE \phantom{\lvl;}\actxt |= {\na{P}}
  \\
  & \viewshift \levl; \actxt |= \na{P} => \na{P'}
  \\
  \forall x\in X\st&
{\def\vshift{\mathrel{{\Lleftarrow}\mkern-10mu{\Rrightarrow}}}
      \viewshift \levl; \actxt |= \at{P}(x) => \at{P'}(x)
    }
  \end{array}}
  \\
  {\begin{array}{r@{\;}l}
  \forall x\in X,y\st&
    \STABLE \phantom{\lvl;}\actxt |= {\na{Q}(x,y)}
  \\
  \forall x\in X,y\st&
    \viewshift \levl; \actxt |= \na{Q'}(x,y) => \na{Q}(x,y)
  \\
  \forall x\in X,y\st&
    \viewshift \levl; \actxt |= \at{Q'}(x,y) => \at{Q}(x,y)
  \end{array}}
  \\\\
  \ATRIPLE[\funspec] m; \levl; \actxt |-
    \A x \in \pqsets{X}.
      <\na{P'} | \at{P'}(x)>
        \cmd
      \E y.<\na{Q'}(x,y) | \at{Q'}(x,y)>
}{
   \ATRIPLE[\funspec] m; \levl; \actxt |-
      \A x \in \pqsets{X}.
        <\na{P} | \at{P}(x)>
          \cmd
        \E y.<\na{Q}(x,y) | \at{Q}(x,y)>
}
   \label{rule:consequence}
\end{proofrule*}

\begin{proofrule*}
  \infer*[right={$\exists$Elim}]{
  \forall v\in X\st\;
    \TRIPLE[\funspec] m; \lvl; \actxt |- {P(v)} \cmd {Q}
}{
  \TRIPLE[\funspec] m; \lvl; \actxt |-
    {\exists x\in X. P(x)} \cmd {Q}
}
   \label{rule:exists-elim}
\end{proofrule*}

\begin{proofrule*}
  \infer*[right={QL}]{
  \forall \ghost{m}{k} \layleq m \st
  \ATRIPLE[\funspec] k; \lvl ; \actxt |-
    \A x \in \pqsets{X}.
    <\na{P}(k) \land k \layleq m | \at{P}(k, x) >
      \cmd
    \E y.<\na{Q}(k, x, y) | \at{Q}(k, x, y)>
}{
  \forall k \layleq m \st
  \ATRIPLE[\funspec] m; \lvl ; \actxt |-
    \A x \in \pqsets{X}.
    <\na{P}(k) \phantom{{} \land k \layleq m} | \at{P}(k, x) >
      \cmd
    \E y.<\na{Q}(k, x, y) | \at{Q}(k, x, y)>
}
   \label{rule:quantify-layer}
\end{proofrule*}

\begin{proofrule*}
  \infer*[right={SubPq}]{
  f \from X \to Y
  \\
  Y' = f(X')
  \\
  \forall x\in X\st
    \VALID \actxt |= \at{P}'(x) \iff \at{P}(f(x))
  \\
  \forall x\in X,z\st
    \VALID \actxt |= \na{Q}(f(x),z) \implies \na{Q}'(x,z)
  \\
  \forall x\in X,z\st
    \VALID \actxt |= \at{Q}(f(x),z) \implies \at{Q}'(x,z)
  \\\\
  \ATRIPLE[\funspec] m; \levl ; \actxt |-
    \A y \in Y \eventually[k] Y'.
    < \na{P} | \at{P}(y) >
      \cmd
    \E z.
    < \na{Q}(y,z) | \at{Q}(y,z) >
}{
  \ATRIPLE[\funspec] m; \levl ; \actxt |-
    \A x \in X \eventually[k] X'.
    < \na{P} | \at{P}'(x) >
      \cmd
    \E z.
    < \na{Q}'(x,z) | \at{Q}'(x,z) >
}
   \label{rule:subst}
\end{proofrule*}

\begin{proofrule*}
  \infer*[right={LiveW}]{
  \ATRIPLE[\funspec] m; \lvl; \actxt |-
    \A x \in X \eventually[k] X''.
      <\na{P} | \at{P}(x)>
        \cmd
      \E y.<\na{Q}(x,y) | \at{Q}(x,y)>
    \\
    X' \subseteq X'' \subseteq X
}{
   \ATRIPLE[\funspec] m; \lvl; \actxt |-
      \A x \in X \eventually[k] X'.
        <\na{P} | \at{P}(x)>
          \cmd
        \E y.<\na{Q}(x,y) | \at{Q}(x,y)>
}
   \label{rule:liveness-weak}
\end{proofrule*}

\endgroup

\subsection{Axioms}

\begingroup
\centering

\begin{proofrule*}
  \infer*[right={Alloc}]{
}{
  \TRIPLE[\funspec] m; \lvl; \actxt |-
         {\vexp \; \dot{\ge} \; 0}
         {\code{x := alloc($\vexp$)}}
         {\Sep*_{i = 0}^{\vexp - 1} \pvar{x} + \pvar{i} \mapsto \wtv }
}
   \label{rule:alloc}
\end{proofrule*}

\begin{proofrule*}
  \infer*[right={Dealloc}]{
  }{
    \TRIPLE[\funspec] m; \lvl; \actxt |-
            {\vexp \mapsto \wtv}
            {\code{dealloc($\vexp$)}}
            {\emp}
  }
   \label{rule:dealloc}
\end{proofrule*}

\begin{proofrule*}
  \infer*[right={Read}]{
}{
  \ATRIPLE[\funspec] m; \lvl; \acontext |-
  \A v .
  <\vexp \mapsto v>
  {\code{x:=[$\vexp$]}}
  <\vexp \mapsto v \land \pvar{x} = v>
}
   \label{rule:read}
\end{proofrule*}

\begin{proofrule*}
  \infer*[right={Mutate}]{
}{
  \ATRIPLE[\funspec] m; \lvl; \actxt |-
  \A v.
  <\vexp_{1} \mapsto v>
  {\code{[$\vexp_{1}$]:=$\vexp_{2}$}}
  <\vexp_{1} \mapsto \vexp_{2}>
}
   \label{rule:mutate}
\end{proofrule*}

\begin{proofrule*}
  \infer*[right={CAS}]{
}{
  {\ATRIPLE[\funspec] m; \lvl; \actxt |-
    \A v.
    <\vexp_{1} \mapsto v>
    {\code{x := CAS($\vexp_{1}$,$\vexp_{2}$,$\vexp_{3}$)}}
    <
    \begin{array}{l}
      (\pvar{x} = 1 \land \vexp_{1} \mapsto \vexp_3 \land v = \vexp_2) \lor {} \\
      (\pvar{x} = 0 \land \vexp_{1} \mapsto \ghost{\vexp_3}{v} \land v \neq \vexp_2)
    \end{array}
    >}
}
   \label{rule:cas}
\end{proofrule*}

\begin{proofrule*}
  \infer*[right={FAS}]{ }{
   \ATRIPLE[\funspec] m; \levl; \actxt |-
      \A v.
        <\vexp_1 \mapsto v>
          \code{x := FAS($\vexp_1$,$\vexp_2$)}
        <\vexp_1\mapsto \vexp_2 \land \p{x} = v>
}
   \label{rule:fas}
\end{proofrule*}

\endgroup

\subsection{Standard Hoare rules}

\begingroup
\centering

\begin{proofrule*}
  \infer*[right=Seq]{\TRIPLE[\funspec] m; \lvl; \actxt |- {P} {\cmd_1} {R}
  \\
  \TRIPLE[\funspec] m; \lvl; \actxt |- {R} {\cmd_2} {Q}
}{\TRIPLE[\funspec] m; \lvl; \actxt |-
    {P} {\cmd_1\p{;}\cmd_2} {Q}
}
   \label{rule:sequential}
\end{proofrule*}

\begin{proofrule*}
  \infer*[right={If}]{
  \TRIPLE[\funspec] m; \lvl; \actxt |- {P \land \bexp} {\cmd_{1}} {Q} \\
  \TRIPLE[\funspec] m; \lvl; \actxt |- {P \land \neg \bexp} {\cmd_{2}} {Q}
}{
  \TRIPLE[\funspec] m; \lvl; \actxt |-
         {P}
         {\code{if($\bexp$)\{$\cmd_{1}$\}else\{$\cmd_{2}$\}}}
         {Q}
}
   \label{rule:if}
\end{proofrule*}

\begin{proofrule*}
  \infer*[right={Var}]{
  \pvar{x} \not\in \fv(\na{P}) \union \fv(\na{Q}) \union \fv(\vexp)
  \\
  \ATRIPLE[\funspec] m; \levl; \acontext |-
    \A x\in \pqsets{X}.
    <\na{P} \land \pvar{x} = \vexp | \at{P}(x)>
    {\cmd}
    <\na{Q}(x,y)|\at{Q}(x,y)>
}{
  \ATRIPLE[\funspec] m; \levl; \acontext |-
    \A x\in \pqsets{X}.
    <\na{P} | \at{P}(x)>
   {\acode{var x=EXP in CMD}}
    <\na{Q}(x,y)|\at{Q}(x,y)>
}
   \label{rule:var}
\end{proofrule*}

\begin{proofrule*}
  \infer*[right={Call}]{
  \bigl(
    \pvars{x},
    \SPEC m;\lvl;\actxt |=
      \A x \in \pqsets{X}. <\na{P}|\at{P}(x)>
      \E y. <\na{Q}(x,y,\pvar{ret})|\at{Q}(x,y)>
  \bigr)
  \in \funspec(\p{f})
}{
  \ATRIPLE[\funspec] m; \lvl;\actxt |-
    \A x \in \pqsets{X}.
      <\na{P}\smash{\subst{\pvars{x}->\vec{\vexp}}}
      |\at{P}(x)>
    {\code{z := f($\smash{\vec{\vexp}}$)}}
    \E y.
<\na{Q}(x,y,\pvar{z})
      |\at{Q}(x,y)>
}
   \label{rule:call}
\end{proofrule*}

\begin{proofrule*}
  \infer*[right={Let}]{
  \progvars(\spec)\subseteq\pvars{x}\union\set{\pvar{ret}}
  \\
  \p{f} \not\in \dom(\funspec)
  \\
  \funspec' = \funspec
    \map{ \p{f} -> (\pvars{x}, \spec ) }
  \\
  \JUDGE[\funspec] |- \cmd_1 : \spec_1
  \\
  \JUDGE[\funspec'] |- \cmd_2 : \spec_2
}{
  \JUDGE[\funspec] |-
    {\acode{let f($\pvars{x}$) = CMD$_1$ in CMD$_2$}} : \spec_2
}
   \label{rule:let}
\end{proofrule*}

\begin{proofrule*}
  \infer*[right={PrAt}]{
P, Q \in \Type{SL}
  \\
  \forall x\in X \st
    \TRIPLE[\funspec] \layBot; 0; \emptyset |-
      {P(x)} \cmd {Q(x)}
}{
   \ATRIPLE[\funspec] \layBot; 0; \emptyset |-
      \A x \in X.
        <P(x)> \atombra{\cmd} <Q(x)>
}
   \label{rule:primitive-atomic}
\end{proofrule*}

\endgroup

\subsection{On Stablity Checks}

A triple is well-defined, according to \cref{def:spec},
if the Hoare pre- and post-conditions are both stable assertions.
The rules all assume the triples in the premises are well-defined
and ensure that the triple in the conclusion is well-defined as well.
The only exceptions are rules
  \ref{rule:make-atomic-gen},
\ref{rule:subst}, and
  \ref{rule:exists-elim},
where the Hoare pre-/post-conditions should be checked for stability
to ensure the conclusion is a well-defined triple.
We omitted these stability checks from these rules to improve readability.

In practice, however, this way of handling stability has a drawback:
if one starts with a goal that has unstable pre-/post-conditions,
one would only see the mistake much further up in the proof,
typically at applications of \ref{rule:atomicity-weak} or \ref{rule:frame}
(which requires stability of the frames)
just before applications of the axioms.
Therefore, in practice, to make the proof fail early in case of mistakes,
one would want to additionally check stability
at the top-level goal, and applications of \ref{rule:parallel}.

\section{Case Study: Lock-Coupling Set}
\label{appendix:lock-coupling-set}

We develop the proof of a lock-coupling set module, which represents
a set of integer numbers using an ordered linked list.
The module interface presents three operations,
\p{add}, \p{remove} and \p{member} for adding and removing elements from the abstract
set representing the module's state, as well as checking membership of an integer
in this set.

Each cell of the linked list contains either a value from this set or~$\pm \infty$
(representing dummy beginning and end nodes respectively),
a pointer to a lock and a pointer to the next cell of the linked list
(null for the final cell, with value~$\infty$). The values
of the cells in the linked list are sorted in strictly increasing
order.

The value and lock associated with a cell in the linked list
are immutable, however, the module's protocol allows a thread
holding the lock associated with a cell to change the value of
the pointer to the next cell, allowing cells to be added and removed
from the linked list.

The internal operation \p{locate} performs a traversal of the
linked list using hand-over-hand locking so as to, given
some value $v$, find and lock the two adjacent cells  with values
$v'$ and $v''$ such that $v' < v \le v''$.
All the operations would use \p{locate} to obtain ownership
of the nodes that they need to modify.

To perform this hand-over-hand locking, the \p{locate} operation must hold
the lock associated with a cell while locking the lock associated with
the next, therefore the layers of the locks associated with each cell
of the linked list must strictly decrease as the list is traversed.

As we explained in \cref{sec:lock-coupling-set},
the example is challenging for the handling of layers.
Intuitively, we want to associate layers with each lock in the list,
in strictly decreasing order.
This represent the dependecies between the locks introduced by the
order of the traversal:
the release of lock at position~$i$ from the head depends on the liveness
of the lock at position~$i+1$.
This introduces two challenges: we need to associate different layers to each instance of a lock while the lock specifications mention fixed layers;
and we need to dynamically reassign layers to locks as the list grows.
As we already anticipated, we can solve both challenges by a suitable
generalisation of the lock specifications.
Let us first introduce this generalisation formally, and then use it for the proof of the lock-coupling set.

\subsection{Interlude: a Generalisation of Fair Lock Specifications}

We generalise the fair lock specifications we used for the CLH lock
in three ways:
\begin{enumerate}
  \item we parametrise the specifications with client-definable layer maps;
  \item we provide a viewshift to the client with which it is possible to reassign layers;
  \item we add the \p{deleteLock} operation since the lock-coupling set's \p{remove} operation disposes of the removed cells; we omit its implementation and proof as it is standard.
\end{enumerate}

First let us recall the definition of a layer map.
Given two partial orders
  $(\Layer_1, \layleq_1, \layTop_1, \layBot_1)$
and
  $(\Layer_2, \layleq_2, \layTop_2, \layBot_2)$,
a function $ \laymap \from \Layer_1 \to \Layer_2  $
is \emph{strictly monotone} if
$
  \forall m,n\in \Layer_1 \st
    m \laylt_1 n \implies \laymap(m) \laylt_2 \laymap(n)
$.
A \emph{layer map} ${\laymap \from \Layer_1 \layto \Layer_2}$ is a strictly monotone function between the two partial orders.

Let $\lvlOf{clh}-2$ be the level of the $\rt[lclh]$ region used in the proof of the CLH lock.
We generalise the client-facing
CLH~lock specifications as follows:
\begin{align*}
  \exists &(\LayerOf{clh}, \layleqOf{clh}, \layTopOf{clh}, \layBotOf{clh}) \st
    \forall \laymap \from \LayerOf{clh} \layto \Layer \st
    \\&\,
\ATRIPLE \laymap(\layTopOf{clh}); \lvlOf{clh} |-
        \A l \in \set{0, 1} \eventually[{\laymap(\layBotOf{clh})}] \set{0}.
          < \ap{P}(s,\pi) | \ap{L}_{\laymap}(s, \pvar{x}, l) >
            \code{lock(x)}
          < \ap{P}(s,\pi) | \ap{L}_{\laymap}(s, \pvar{x}, 1) \land l = 0 >
    \\&
      \ATRIPLE \laymap(\layBotOf{clh}); \lvlOf{clh} |-
        <\ap{L}_{\laymap}(s, \pvar{x}, 1)>
          \code{unlock(x)}
        <\ap{L}_{\laymap}(s, \pvar{x}, 0)>
    \\&
      \TRIPLE \laymap(\layBotOf{clh}); \lvlOf{clh} |-
        {\emp}
          {\p{makeLock()}}
        {\exists s \st \ap{L}_{\laymap}(s, \ret, 0) * \ap{P}(s,1)}
    \\&
      \TRIPLE \laymap(\layBotOf{clh}); \lvlOf{clh} |-
        {\ap{L}_{\laymap}(s, \pvar{x}, \wtv) * \ap{P}(s,1)}
          {\p{deleteLock(x)}}
        {\emp}
\end{align*}
In particular, the abstract predicate
$ \ap{L}_{\laymap}(s, \pvar{x}, l) $
represents a lock resource with abstract identifier~$ s \in \AId[clh] $
(i.e.~a pair of region identifiers; the client will treat this type opaquely),
concrete address~$x \in \Addr$, and abstract state $l\in\set{0,1}$.

Moreover, the specifications would export to the client
the following viewshifts, for every ${\lvl \geq \lvlOf{clh}}$,
and every $\laymap,\laymap' \from \LayerOf{clh} \layto \Layer$:
\begin{align}
  &
  \viewshift \lvl |=
    \ap{L}_{\laymap}(s, x, l)
    * \ap{L}_{\laymap'}(s, x', l')
      => \False
    \label{vshift:L-excl}
  \\&
  \viewshift \lvl |=
    \ap{L}_{\laymap}(s, x, l) * \ap{P}(s,1) =>
    \ap{L}_{\laymap'}(s, x, l) * \ap{P}(s,1)
    \label{vshift:L-laymap}
\end{align}
Note that the naming choice here suggests CLH as the implementation to keep
the discussion grounded,
but the specification would be the same for any other fair lock implementation.

We now sketch the modifications needed to adapt the proof of CLH presented in
\cref{examples:clh_lock} to prove the generalised specification.

First, we pick, just as in \cref{examples:clh_lock},
$\LayerOf{clh} = \Nat \dunion \set{\layBotOf{clh}, \layTopOf{clh}}$.
We then need to parametrise the two regions with
a layer map~$\laymap \from \LayerOf{clh} \layto \Layer$,
for an arbitrary~$\Layer$.
We include it in the regions abstract state:
$\region{clh}{\rid}(\rid', x, \laymap, h, l, o)$
and $\region{lclh}{\rid'}(x, \laymap, h, l, o, t)$.
The abstract predicate for the lock can the be defined as:
\[
\ap{L}_{\laymap}(s, x, l) \is
  \exists \rid,\rid'\st s=(\rid,\rid') \land
  \exists o, h \st
  \region{clh}{\rid}(\rid', x, \laymap, h, l, o) * \guardA{\gEx}{\rid'}
\]

We similarly parametrise every obligation with a layer map as well,
obtaining obligations
$\obl{o}_{\laymap}(o,t)$ and $\obl{p}_{\laymap}(t)$ with layers
$\lay(\obl{o}_{\laymap}(o,t)) = \laymap(\layBotOf{clh})$
$\lay(\obl{p}_{\laymap}(t)) = \laymap(t)$.

The protocol of the regions is extended
by having each transition preserve the layer map.
Before extending the protocol with a transition that can update the layer map,
we motivate the need for fractional permissions by showing what goes wrong without them.
Suppose we just provide a transition, guarded by $\gEx$, to update the current layer map to an arbitrary new one, and define $\ap{P}(s,\pi) = \emp$.
With this protocol it would be impossible to prove the layer-map-altering viewshift~\eqref{vshift:L-laymap}.
The reason lies in the definition of the interpretation of $\rt[clh]$:
\[
  \rInt(\region{clh}{\rid}(\rid', x, \laymap, h, l, o)) \is
  \exists t \in \Nat \st
  \region{lclh}{\rid'}(x, \laymap, h, l, o, t) *
  \guardA{\guard{e}}{\rid'} *
  \smash{\Sep*^{t - 1}_{i=o+1} \envObl{\obl{p}_{\laymap}(i)}{\rid'}}
\]
In the case where $t\neq o$,
which represents the case where there are threads enqueued
waiting to acquire the lock,
the interpretation ensures that the environment will contain obligations
$ \obl{p}_{\laymap}(i) $ for each issued ticket~$i$.
When we try to prove the viewshift, we need to obtain 
$\envObl{\obl{p}_{\laymap'}(i)}{\rid'}$ with the new layer map,
which can be obtained only by creating out-of-thin-air
the corresponding $\locObl{\obl{p}_{\laymap'}(i)}{\rid'}$ resources.
These would be created in the local state, leaving us with
$
  \region{clh}{\rid}(\rid', x, \laymap, h, l, o) * \guardA{\gEx}{\rid'} *
  \smash{\Sep*^{t - 1}_{i=o+1} \locObl{\obl{p}_{\laymap'}(i)}{\rid'}}
$
which cannot be viewshifted to the desired
$ \ap{L}_{\laymap'}(s, x, l) * \ap{P}(s,1) $
since there is no way to get rid of the local obligations.
Conceptually this encodes the following fact:
if we were to remap the layers of the lock when other threads are queued,
the obligations held by those thread would become unfulfillable,
and we would inherit copies of them with the new mapping, which we also would not be able to fulfil on behalf of the other threads.

To resolve this impasse, we need to allow the layer map to be update only
when there is no thread queued to acquire the lock.
This way we would have $t=o$ and so no environment obligation laying around.
We cannot achieve this by exposing the queue in the abstract state of the lock,
however, without loosing the atomicity of the lock specifications.
With the introduction of fractional permissions, giving the right to enqueue to the lock, we can encode the emptiness of the queue by asserting we are the only one with that right.

To achieve this technically, we start by encoding fractional permissions as a guard algebra.
We introduce guards $\guard{f}_\pi$ with the axioms
$\guard{f}_0 = \guardZero$ and
$\guard{f}_{\pi_1 + \pi_2} = \guard{f}_{\pi_1} \guardOp \guard{f}_{\pi_2}$.
We then define the abstract predicate
$
  \ap{P}(s,\pi) =
    \bigl(
      \exists \rid,\rid' \st
        s = (\rid,\rid') \land
        \guardA{\guard{f}_\pi}{\rid'}
    \bigr).
$

For technical reasons explained later we introduce guards $\guard{g}_\pi$
with exactly the same axioms as the $\guard{f}$ guards.
To encode the fact that full permissions imply empty queue,
we adapt the interpretation of $\rt[lclh]$ as follows:
\begin{multline*}
  \rInt(\region{lclh}{\rid'}(x, \laymap, h, l, o, t)) \is
    \exists ns \st \;
      x \mapsto h, \lfun{last}(ns) *
      h \mapsto l * \lfun{ones}(ns) *
      \guardA{\guard{q}(ns, o)}{\rid'} *
      \locObl{\obl{o}_{\laymap}(o, t)}{\rid'} * {} \\
      \exists \pi \st
        \guardA{\guard{f}_{\pi}}{\rid'} *
        \guardA{\guard{r}_{1 - \pi}}{\rid'} *
        (\pi = 0 \dotimplies t = o)
        \land  t - o = \lstLen{ns} \land ns(0) = h
\end{multline*}
From
$
  \region{lclh}{\rid'}(x, \laymap, h, l, o, t) * \guardA{\guard{f}_{1}}{\rid'}
$
we can deduce that $\pi=0$ inside the region interpretation,
and hence $t=o$.

Finally, we add to the protocol of $\rt[lclh]$ the possibility of updating
the layer map when owning full permissions:
\[
  \guard{f}_{1} :
    ((\laymap, h, l, o, t), \oblZero)
      \interfTo
    ((\laymap', h, l, o, t), \oblZero)
\]

The reason for including the $\guard{r}_\pi$ is as follows.
When a thread enqueues on the lock,
it gives up a non-trivial fraction of
the $\guard{f}_\pi$ permission it owns to be able to make $t \neq o$.
When it dequeues, it should get back that fraction;
the $\guard{r}_\pi$ guards are obtained as ``leftovers'' when putting $\guard{f}_\pi$ in the region's interpretation.
Those are proof that the region interpretation has at least $\guard{f}_\pi$ in it when we want to get it back.

Adapting the proof of \cref{examples:clh_lock} to use these generalised definitions is a routine application of standard \tada\ patterns.
The satisfiability of the layer constraints
is preserved by strict monotonicity of layer maps.

\subsection{Correctness of the Lock-Coupling Set}

\paragraph{Code} The implementation of the module's operations is
in \cref{app:lc-set-code} with the implementation of the constructor
\code{makeSet} in \cref{app:lc-set-code-locate}.
We write \code{dealloc(c,3)} for the deallocation of the 3 contiguous cells from
address \code{c}.
The auxiliary operation \code{locate} (also in \cref{app:lc-set-code-locate})
is meant to only be used internally.
The code uses a ``record'' syntax for readability,
desugared as follows:
\begin{align*}
  \pvar{x.lock} & \is \code{[x]} &
  \pvar{x.val} & \is \code{[x + 1]} &
  \pvar{x.next} & \is \code{[x + 2]}
\end{align*}

\begin{figure}[tb]
\small
\begin{tabular}{c@{\hspace{5em}}c@{\hspace{5em}}c}
  {\codefromfile[numbers=left]{examples/lc-set/code-add}} &
  {\codefromfile[numbers=left]{examples/lc-set/code-remove}} &
  {\codefromfile[numbers=left]{examples/lc-set/code-member}}
\end{tabular}
\caption{Implementation of the lock-coupling set operations.}
\label{app:lc-set-code}
\end{figure}

\begin{figure}[tb]
  \small \begin{tabular}{c@{\hspace{5em}}c@{\hspace{5em}}c}
  {\codefromfile[numbers=left]{examples/lc-set/code-make}}&
  {\codefromfile[numbers=left,linerange={1-11}]{examples/lc-set/code-locate}}&
  {\codefromfile[numbers=left,linerange={12-24},firstnumber=last]{examples/lc-set/code-locate}}
  \end{tabular}
  \caption{Implementation of \code{makeSet} and the internal \code{locate} operation.}
\label{app:lc-set-code-locate}
\end{figure}

\paragraph{Specifications}
The abstract predicate $\ap{LCSet}(s, x, S)$ represents a lock-coupling set at address~$x$ abstractly representing the set~$S$.
\begin{align*}
  &\TRIPLE \layBotOf{lc} |-
      {\emp} {\code{makeSet()}}
      {\exists s\st \ap{LCSet}(s, \ret, \emptyset)}
  \\
  &\ATRIPLE \layTopOf{lc} |-
    \A S \in \powerset(\Int).
      <\ap{LCSet}(s, \pvar{x}, S) \land \pvar{e} \in \Int>
        \code{add(x,e)}
      <\ap{LCSet}(s, \pvar{x}, S \cup \set{\pvar{e}})>
  \\
  &\ATRIPLE \layTopOf{lc} |-
    \A S \in \powerset(\Int).
      <\ap{LCSet}(s, \pvar{x}, S) \land \pvar{e} \in \Int>
        \code{remove(x,e)}
      <\ap{LCSet}(s, \pvar{x}, S \setminus \set{\pvar{e}})> 
  \\
  &\ATRIPLE \layTopOf{lc} |-
    \A S \in \powerset(\Int).
      <\ap{LCSet}(s, \pvar{x}, S) \land \pvar{e} \in \Int>
        \code{member(x,e)}
      <\ap{LCSet}(s, \pvar{x}, S) \land \ret = (\pvar{e} \in S)>    
\end{align*}

\paragraph{Region Types}
This proof will utilise two region types: $\region{lcset}{\rid}(\rid', x, hl, shl, S)$ and $\region{lclist}{\rid}(x, hl, shl, ls)$ where
$\rid' \in \RId$,
$x, hl \in \Addr$,
$shl \in \AId[clh]$,
$S \in \powerset(\Int)$,
$ls \in ((\Int \union \set{\infty, -\infty}) \times \set{0,1} \times (\Nat \union \set{\topOf}))^{*}$.
Here $\rid'$, $x$, $hl$ and $shl$ are fixed parameters of both regions.
The lock-coupling set resource is abstractly represented by the predicate
\[
  \ap{LCSet}(s, x, S) \is
  \exists \rid, \rid', hl, shl\st s = (\rid, \rid', hl, shl) \land
    \region{lcset}{\rid}(\rid', x, hl, shl, S) * \guardA{\guard{e}}{\rid}
\]

\paragraph{Guards}
We introduce a number of guards that are used to represent ownership of information
regarding nodes of the linked list.
To ease readability, we will adopt a record notation for tuples
(i.e.~tuples with named positions).
In particular, we will make a record~$n$ with the following information for each node:
  an address ($\fldOf{n}{addr} \in \Addr$),
  a lock address ($\fldOf{n}{lck} \in \Addr$),
  a lock abstract identifier ($\fldOf{n}{lid} \in \AId[clh]$),
  a value ($\fldOf{n}{val} \in \Int \union \set{\infty, -\infty} $), and
  a layer ($\fldOf{n}{lay} \in \Nat \union \set{\topOf}$).
A guard $\guard{c}(vs)$ records the list~$vs$ of values represented by the linked list.
An unlocked node is represented by the guard $\guard{u}(n)$
where~$n$ is a record of the value, lock address/id, layer associated
with the node of the list at address~$\fldOf{n}{addr}$.
So in particular, the cell at~$\fldOf{n}{addr}$
would store the tuple $(\fldOf{n}{lck},\fldOf{n}{val},\fldOf{n}{nxt})$
and we will have the resource $\ap{L}_{\laymap[\fldOf{n}{lay}]}(\fldOf{n}{lck},\fldOf{n}{lid},l)$
associated with its lock (we will explain the layer map $\laymap[\fldOf{n}{lay}]$ when introducing the region interpretations).
A locked node is represented by two guards
$\guard{l}(n, a)$ and $\guard{k}(n, a)$,
following the usual pattern for locks.
These guards additionally store the address~$a$ of the next node
which is stable if we hold the lock at~$\fldOf{n}{lck}$.
Moreover, assuming~$m$ is the node following~$n$,
if we hold the lock at~$\fldOf{n}{lck}$,
we know that all the information in~$m$ is stable
(i.e.~everything but the address of the node following~$m$).
To represent this we make use of a guard~$ \guard{w}(m) $.

The following axioms reflect the operations we desire to perform on the nodes.
For locking/unlocking a non-terminal node, when~$\lvar{vs}' \neq []$:
\begin{multline*}
\guard{c}(\lvar{vs} \lstPlus [\fldOf{n}{val}, \fldOf{m}{val}] \lstPlus \lvar{vs}') \guardOp \guard{u}(n) \guardOp \guard{l}(m, a') = {} \\
\guard{c}(\lvar{vs} \lstPlus [\fldOf{n}{val}, \fldOf{m}{val}] \lstPlus \lvar{vs}') \guardOp \guard{l}(n, \fldOf{m}{addr}) \guardOp \guard{k}(n, \fldOf{m}{addr}) \guardOp \guard{l}(m, a') \guardOp \guard{w}(m)
\end{multline*}

For locking/unlocking the last node:
\[
  \guard{c}(\lvar{vs} \lstPlus v) \guardOp \guard{u}(n) = \guard{c}(\lvar{vs} \lstPlus v) \guardOp \guard{l}(n, \p{null}) \guardOp \guard{k}(n, \p{null})
\]

For inserting a node~$m$ between~$n_1$ and~$n_2$:
\begin{multline*}
  \guard{c}(\lvar{vs} \lstPlus [\fldOf{n_1}{val}, \fldOf{n_2}{val}] \lstPlus \lvar{vs}') \guardOp \guard{l}(n_1, \fldOf{n_2}{addr}) \guardOp \guard{k}(n_1, \fldOf{n_2}{addr}) \guardOp \guard{w}(n_2) = {} \\
  \guard{c}(\lvar{vs} \lstPlus [\fldOf{n_1}{val}, \fldOf{m}{val}, \fldOf{n_2}{val}] \lstPlus \lvar{vs}') \guardOp \guard{l}(n_1, \fldOf{m}{addr}) \guardOp \guard{k}(n_1, \fldOf{m}{addr}) \guardOp
  \guard{u}(m) \guardOp \guard{w}(m)
\end{multline*}

For deleting a node~$m$:
\begin{multline*}
\guard{c}(\lvar{vs} \lstPlus [\fldOf{n}{val}, \fldOf{m}{val}] \lstPlus \lvar{vs}') \guardOp \guard{l}(n, \fldOf{m}{addr}) \guardOp \guard{k}(n, \fldOf{m}{addr}) \guardOp \guard{l}(m, a) \guardOp \guard{k}(m, a) = {} \\
\guard{c}(\lvar{vs} \lstPlus [\fldOf{n}{val}] \lstPlus \lvar{vs}') \guardOp \guard{l}(n, a) \guardOp \guard{k}(n, a)
\end{multline*}

Then the following axioms keep the guard's information for the nodes consistent:
\begin{align*}
  \fldOf{n}{val} \not\in \lvar{vs} &\implies 
    \guardUndef{\guard{c}(\lvar{vs}) \guardOp \guard{k}(n)}
  \\
  \fldOf{n}{val} = \fldOf{n'}{val} &\implies 
    \guardUndef{\guard{k}(n, \wtv) \guardOp \guard{u}(n')}
  \\
  \fldOf{n}{val} = \fldOf{n'}{val} &\implies 
    \guardUndef{\guard{k}(n, \wtv) \guardOp \guard{k}(n', \wtv)}
  \\
  (\fldOf{n}{val} = \fldOf{n'}{val} \land (a\neq a' \lor n\neq n')) &\implies 
  \guardUndef{\guard{k}(n,a) \guardOp \guard{l}(n',a')}
  \\
  (\fldOf{n}{addr}=\fldOf{n'}{addr} \land n\neq n') &\implies 
    \guardUndef{\guard{k}(\wtv, \fldOf{n}{addr}) \guardOp \guard{w}(n) \guardOp \guard{l}(n', \wtv)}
\end{align*}

\paragraph{Layers and Obligations}
We use the layer structure
$\LayerOf{lc} \is (\Nat \union \set{\topOf, \botOf}) \times \LayerOf{clh}$
(where $\forall n \in \Nat \st \topOf > n > \botOf$),
ordered by the lexicographic ordering~$\le$ and
with
$\layTopOf{lc} \is (\topOf, \layTopOf{clh})$ and
$\layBotOf{lc} \is (\botOf, \layBotOf{clh})$.
Roughly, take a non-initial node~$n$ which is at position $\layer$ from the
end of the list; we will associate with it the layer $(\layer,\layTopOf{clh})$,
which is guaranteed to be strictly greater than any layer associated
with the nodes following~$n$ in the list.
Intuitively, no matter what $ \LayerOf{clh} $ has been chosen for the proof of the implementation of locks,
there are enough layers between $(\layer,\layTopOf{clh})$
and $(\layer+1,\layTopOf{clh})$ to allow the proof of the lock of~$n$
not to conflict with the lock of the node ahead.

We construct obligations out of the atoms
$\obl{k}(\layer)$ (representing the ``key'' of the lock associated with the layer $\layer$)
and $\obl{f}(\layer)$ (representing a ``free'' spot at layer $\layer$)
for $\layer \in \Nat \union \set{\topOf}$.
We set
$\lay(\obl{k}(\layer)) \is (\layer, \layBotOf{clh})$
and
$\lay(\obl{f}(\layer)) = \layTopOf{lc}$.
We also define an obligation acting as a ``reservoir'' of atoms:
\begin{align*}
  \obl{r}(\bar\layer) & \is
    \set{\obl{k}(\layer) | \bar\layer \leq \layer \in \Nat}
    \union
    \set{\obl{f}(\layer) | \bar\layer \leq \layer \in \Nat}
  & \lay(\obl{r}(\bar\layer)) &= (\bar\layer, \layBotOf{clh})
\end{align*}
We can always split a pair of~$\obl{f}$ and~$\obl{k}$ atoms from the
reservoir:
$
  \obl{r}(\bar\layer) =
    \obl{r}(\bar\layer+1) \oblOp \obl{k}(\bar\layer)\oblOp \obl{f}(\bar\layer)
$.

\paragraph{Interference protocol}
The guard-labelled transition system of the region $\region{lcset}{\rid}(\rid', x, hl, shl, S)$ is:
\begin{align*}
\guard{e} \; &: \; \forall v \ldotp (S, \oblZero) \interfTo (S \cup \set{v}, \oblZero) \\
\guard{e} \; &: \; \forall v \ldotp (S, \oblZero) \interfTo (S \setminus \set{v}, \oblZero)
\end{align*}and the guard-labelled transition system of the region $\region{lclist}{\rid}(x, hl, shl, ls)$ is:
\begin{align*}
\guard{e} :{}&
    ((-\infty, 0, \topOf) \lstPlus ls, \oblZero) \\
    \interfTo{}&
    ((-\infty, 1, \topOf) \lstPlus ls, \obl{k}(\topOf) \oblOp \obl{f}(\layer))
  \\\guard{k}(n, \wtv) :{}&
    (ls \lstPlus (\fldOf{n}{val}, 1, \layer) \lstPlus (v', 0, \layer') \lstPlus ls', \oblZero)\\
    \interfTo{}&
    (ls \lstPlus (\fldOf{n}{val}, 1, \layer) \lstPlus (v', 1, \layer') \lstPlus ls', \obl{k}(\layer'))
  \\\guard{k}(n, \wtv) :{}&
    (ls \oplus (\fldOf{n}{val}, 1, \layer) \lstPlus (v', 1, \layer') \lstPlus ls', \obl{k}(\layer) \oblOp \obl{f}(\layer^{\dagger}) \oblOp \obl{k}(\layer'))\\
    \interfTo{}&
    (ls \oplus (\fldOf{n}{val}, 0, \layer) \lstPlus (v', 1, \layer^{\dagger}) \lstPlus ls', \obl{k}(\layer^{\dagger}) \oblOp \obl{f}(\layer'))
    && \layer > \layer^{\dagger} > \layer'
  \\\guard{k}(n, \wtv) :{}&
    (ls \oplus (\fldOf{n}{val}, 1, \layer) \lstPlus (v', 1, \layer') \lstPlus ls', \obl{k}(\layer) \oblOp \obl{f}(\layer^{\dagger}) \oblOp \obl{k}(\layer'))\\
    \interfTo{}&
    (ls \oplus (\fldOf{n}{val}, 1, \layer) \lstPlus (v^{\dagger}, 0, \layer^{\dagger}) \lstPlus (v', 1, \layer') \lstPlus ls', \obl{k}(\layer) \oblOp \obl{k}(\layer'))
    && \layer > \layer^{\dagger} > \layer', v < v^{\dagger} < v'
  \\\guard{k}(n, \wtv) :{}&
    (ls \oplus (\fldOf{n}{val}, 1, \layer) \lstPlus (v', 1, \layer') \lstPlus ls', \obl{k}(\layer) \oblOp \obl{f}(\layer^{\dagger}) \oblOp \obl{k}(\layer'))\\
    \interfTo{}&
    (ls \oplus (\fldOf{n}{val}, 1, \layer) \lstPlus ls', \obl{k}(\layer))
    && \\\guard{k}(n, \wtv) :{}&
    (ls \oplus (\fldOf{n}{val}, 1, \layer) \lstPlus (v', 1, \layer') \lstPlus ls', \obl{k}(\layer) \oblOp \obl{f}(\layer^{\dagger}) \oblOp \obl{k}(\layer'))\\
    \interfTo{}&
    (ls \oplus (\fldOf{n}{val}, 1, \layer) \lstPlus (v', 1, \layer') \lstPlus ls', \obl{k}(\layer) \oblOp \obl{k}(\layer'))
    && \layer > \layer^{\dagger} > \layer'
  \\\guard{k}(n, \wtv) :{}&
    (ls \lstPlus (\fldOf{n}{val}, 1, \layer) \lstPlus ls', \obl{k}(\layer)) \\
    \interfTo{}&
    (ls \lstPlus (\fldOf{n}{val}, 0, \layer) \lstPlus ls', \oblZero)
\end{align*}
They represent, in order:
the acquisition of the first lock, obtaining both the key for that lock
and a ``free'' layer spot;
the acquisition of a next lock, obtaining its key;
the release of the previous lock, swapping layer of the next with the free one;
the insertion of a node which gets assigned the free layer between the two adjacent locks held (used by \p{add});
the deletion of a node that also drops the non-needed free layer spot (used by \p{remove});
the drop of a non-needed free layer spot (used by \p{member});
the release of a lock.

\paragraph{Region interpretation}
The lock-coupling set internally
respresents the elements of the set with a lock-coupling
linked list. To represent these in ghost state,
we will use a list of quadruples of each cells value,
the state of the associated lock, as well as its layer
and region identifier. We introduce the predicate~$\lvar{ord}$
which verifies that the value in the list are in strictly increasing
order, while the layers of the associated locks are in strictly
decreasing order:
\[
\lvar{ord}(ls) \is 
\begin{cases}
  \True \CASE ls = [\wtv]\\
  v < v' \land \layer > \layer' \land \lvar{ord}((v', l', \layer'):ls')
  \CASE ls = (v, \_, \layer):(v', l', \layer'):ls'
\end{cases}
\]
We also introduce a function that allows us to extract
the associated set of values from such a list, $\lvar{vals}$,
and a function that similarly allows us to extract
a list of just the values, retaining their order,
$\lvar{lvals}$:
\begin{align*}
  \lvar{vals}(ls) & \is
  \begin{cases}
    \emptyset \CASE ls = [] \\
    \set{v}\dunion \lvar{vals}(ls')
    \CASE ls = (v, \wtv[2]) \lstPlus ls'
  \end{cases} \\
  \lvar{lvals}(ls) & \is
  \begin{cases}
    [] \CASE ls = [] \\
    v \lstPlus \lvar{vals}(ls')
    \CASE ls = (v, \wtv[2]) \lstPlus ls'
  \end{cases}
\end{align*}
The interpretation of the outer region is a straightforward wrapper around
the inner one.
\begin{align*}
  \rInt(\region{lcset}{\rid}(\rid', x, hl, shl, S)) & \is
    \exists ls \st
      \region{lclist}{\rid'}(x, hl, shl, ls) *
      \guardA{\guard{e}}{\rid'} *
      \lfun{envK}_{\rid'}(ls) \land {} \\ &\qquad
      S \dunion \set{-\infty, \infty} = \lvar{vals}(ls) \land
      \lvar{ord}(ls) \land ls = ((-\infty, \_, \topOf) \lstPlus \_)
  \\
  \lfun{envK}_{\rid}(ls) & \is
  \begin{cases}
    \emp \CASE ls = [] \\
    (l = 1 \dotimplies \envObl{\obl{k}(\layer)}{\rid}) * \lfun{envK}_{\rid}(ls') \CASE ls = (v, l, \layer) \lstPlus ls'
  \end{cases}
\end{align*}
As usual, the outer region has two purposes:
hiding internal state so that the operations can be seen as abstractly atomic,
and keeping track of the obligations held by threads.

The interpretation of the inner region encapsulates the concrete
heap cells and the lock-related guards and obligations:
\begin{align*}
  \rInt(\region{lclist}{\rid}(x, hl, \lvar{shl}, \lvar{ls})) & \is
    \exists n_0,l_0,\dots,n_{k+1},l_{k+1}\st \exists \bar\layer\in\Nat\st\\
      &\qquad \guardA{\guard{c}(\lvar{lvals}(\lvar{ls}))}{\rid} *
              \lvar{ls} = [(\fldOf{n_0}{val},l_0,\fldOf{n_0}{lay}), \dots, (\fldOf{n_{k+1}}{val},l_{k+1},\fldOf{n_{k+1}}{lay})] \land {}\\
      &\qquad \fldOf{n_0}{val} = -\infty \land \fldOf{n_{k+1}}{val} = \infty \land \fldOf{n_0}{lay} = \topOf \land \fldOf{n_{k+1}}{lay} = 0 \land{}\\
      &\qquad \fldOf{n_0}{addr} = x \land \fldOf{n_0}{lck} = hl \land \fldOf{n_0}{lid} = \lvar{shl} \land{}\\
      &\qquad \oblA{\obl{r}(\bar{\layer})}{\rid} \land \bar\layer > \fldOf{n_1}{lay} * \lvar{Node}^0_r(n_0,l_0,\fldOf{n_1}{addr}) * {}\\
      &\qquad \lvar{Nodes}_r(\bar\layer, l_0, [(n_1,l_1),\dots,(n_{k+1},l_{k+1})],\bar\ell)
\end{align*}
where the resources associated with each node are described by the following
auxiliary predicates:
\begin{align*}
  \lvar{Node}^0_r(n, l, a) & \is
    (\fldOf{n}{addr} \mapsto \fldOf{n}{lck}, -\infty, a) * \\
    &\qquad
    \ap{L}_{\laymap[1]}(\fldOf{n}{lid}, \fldOf{n}{lck}, l) *
    \exists \pi>0. \ap{P}(\fldOf{n}{lid}, \pi) * \locObl{\obl{f}(\topOf)}{\rid} * {}\\
    &\qquad \bigl(
              (l=0 \land \guardA{\guard{u}(n)}{\rid} * \locObl{\obl{k}(\topOf)}{\rid})
              \lor
              (l=1 \land \guardA{\guard{l}(n,a)}{\rid})
            \bigr)
  \\
  \lvar{Nodes}_r(\layer_p, l_p, [(n, l)]) & \is
    (\fldOf{n}{addr} \mapsto \fldOf{n}{lck}, \fldOf{n}{val}, \p{null}) *
    \lvar{Gaps}_{\rid}(\layer_p,\fldOf{n}{lay}) * {} \\
    &\qquad
    \ap{L}_{\laymap[\fldOf{n}{lay}]}(\fldOf{n}{lid}, \fldOf{n}{lck}, l) *
    \ap{P}(\fldOf{n}{lid}, \ifte{l_p{=}1}{\onehalf}{1}) * \locObl{\obl{f}(\fldOf{n}{lay})}{\rid} * {}\\
    &\qquad \bigl(
              (l=0 \land \guardA{\guard{u}(n)}{\rid} * \locObl{\obl{k}(\fldOf{n}{lay})}{\rid})
              \lor
              (l=1 \land \guardA{\guard{l}(n,\p{null})}{\rid})
            \bigr)
  \\
  \lvar{Nodes}_r(\layer_p, l_p, [(n, l), (\pr{n}, l')] \lstPlus \lvar{ns}) & \is
    (\fldOf{n}{addr} \mapsto \fldOf{n}{lck}, \fldOf{n}{val}, \fldOf{\pr{n}}{addr}) *
    \lvar{Gaps}_{\rid}(\layer_p,\fldOf{n}{lay}) * {} \\
    &\qquad
    \ap{L}_{\laymap[\fldOf{n}{lay}]}(\fldOf{n}{lid}, \fldOf{n}{lck}, l) *
    \ap{P}(\fldOf{n}{lid}, \ifte{l_p{=}1}{\onehalf}{1}) * \locObl{\obl{f}(\fldOf{n}{lay})}{\rid} * {}\\
    &\qquad \bigl(
              (l=0 \land \guardA{\guard{u}(n)}{\rid} * \locObl{\obl{k}(\fldOf{n}{lay})}{\rid})
              \lor
              (l=1 \land \guardA{\guard{l}(n,\fldOf{\pr{n}}{addr})}{\rid})
            \bigr) * {}\\
    & \qquad \lvar{Nodes}_r(\fldOf{n}{lay}, l, (\pr{n}, l') \lstPlus \lvar{ns})
  \\
  \lvar{Gaps}_{\rid}(\layer_1,\layer_2) & \is
    \Sep*_{\layer = \layer_1+1}^{\layer_2-1}
      \pars*{\locObl{\obl{k}(\layer)}{\rid} \lor \locObl{\obl{k}(\layer)\oblOp\obl{f}(\layer)}{\rid}}
\end{align*}
The layer map $\laymap[\layer]$ maps the layers of $\LayerOf{clh}$
to the ones of $\LayerOf{lc}$ as follows:
\[
\laymap[\layer](k) = (\layer, k)
\]

\paragraph{Proof of\/ \code{locate}}
We use the following specification for the internal operation \code{locate}:
\[
\TRIPLE \layTopOf{lc} |-
        {\exists S \ldotp \region{lcset}{\rid}(\rid', \pvar{x}, hl, shl, S)}
        {\code{locate(x,e)}}
        {
            \exists S \st 
            \region{lcset}{\rid}(\rid', \pvar{x}, hl, shl, S) * 
            \mathit{Loc}(r', \p{x}, \p{e},\ret)
        }
\]
where $\mathit{Loc}(r', \p{x}, \p{e}, p)$ represents the ownership
of two adjacent list nodes representing value~$v$ and~$v'$ with
$v < \pvar{e} \le v'$ (where \p{e} is the value we wanted to locate in the list):
\begin{align*}
  \mathit{Loc}(r', \p{x}, \p{e}, p) &\is
    \exists n_1,n_2,n_3,\layer \st \fldOf{n_1}{addr} = p \land {}\\
    &
    \fldOf{n_1}{val} < \pvar{e} \le \fldOf{n_2}{val} \land
    \fldOf{n_1}{lay} > \layer > \fldOf{n_2}{lay} \land {} \\
    &
    \guardA{\guard{k}(n_1, \fldOf{n_2}{addr})}{\rid'} *
    \oblA{\obl{k}(\fldOf{n_1}{lay})}{\rid'} *
    \oblA{\obl{f}(\layer)}{\rid'} * {} \\
    &
    \guardA{\guard{k}(n_2, \fldOf{n_3}{addr})}{\rid'} *
    \oblA{\obl{k}(\fldOf{n_2}{lay})}{\rid'} *
    \guardA{\guard{w}(n_2)}{\rid'} *
    \ap{P}(\fldOf{n_2}{lid}, \onehalf) * {} \\
    &
    (\fldOf{n_3}{addr} \neq \p{null} \dotimplies
      (\guardA{\guard{w}(n_3)}{\rid'} * \ap{P}(\fldOf{n_3}{lid}, \onehalf))
    )
\end{align*}

\begin{mathfig}[p]
  \begin{proofoutline}
  \TITLE{Proof of \code{locate(x,e)}:}
   \CTXT \layTopOf{lc}; \emptyset |- \\
   \PREC{{\exists S \st \region{lcset}{\rid}(\rid', \pvar{x}, hl, shl, S)}} \\
   \CODE{p \:= x;} \\
   \ASSR{{\exists S \st \region{lcset}{\rid}(\rid', \pvar{x}, hl, shl, S)} \land \pvar{p} = \pvar{x}} \\
   \CODE{pl \:= p.lock;} \\
   \ASSR{{\exists S \st \region{lcset}{\rid}(\rid', \pvar{x}, hl, shl, S)} *
     \exists \pi > 0 \st \ap{P}(shl, \pi) * {} \\
     \exists l \st
     \region{lclist}{\rid'}(\pvar{x}, hl, shl, (-\infty, l, \topOf) \lstPlus \_) *
     l = 1 \dotimplies \envObl{\obl{k}(\topOf)}{\rid'} \land
     \pvar{p} = \pvar{x} \land \pvar{pl} = hl
   } \\
   \CODE{lock(pl);} \\
   \ASSR{
     \exists \layer^{\dagger}, \layer', S, c, scl \st
     \region{lcset}{\rid}(\rid', \pvar{x}, hl, shl, S) *
     \guardA{\guard{k}(\pvar{p}, hl, shl, -\infty,\topOf, c)}{\rid'} *
     \oblA{\obl{k}(\topOf)}{\rid'} * {} \\
     \guardA{\guard{w}(c, \_, scl, \wtv, \layer')}{\rid'} *
     \ap{P}(scl, \onehalf) *
     \oblA{\obl{f}(\layer^{\dagger})}{\rid'} \land \topOf > \layer^{\dagger}  > \layer'
   } \\
   \CODE{c \:= p.next;} \\
   \ASSR{
     \exists \layer^{\dagger}, \layer', S, scl \st
     \region{lcset}{\rid}(\rid', \pvar{x}, hl, shl, S) *
     \guardA{\guard{k}(\pvar{p}, hl, shl, -\infty, \topOf, \pvar{c})}{\rid'} *
     \oblA{\obl{k}(\topOf)}{\rid'} * {} \\
     \guardA{\guard{w}(\pvar{c}, \_, scl, \wtv, \layer')}{\rid'} *
     \ap{P}(scl, \onehalf) *
     \oblA{\obl{f}(\layer^{\dagger})}{\rid'} \land \topOf > \layer^{\dagger} > \layer'
   } \\
   \CODE{cl \:= c.lock;} \\
   \ASSR{
     \exists \layer^{\dagger}, \layer', S, scl, v' \st
     \region{lcset}{\rid}(\rid', \pvar{x}, hl, shl, S) *
     \guardA{\guard{k}(\pvar{p}, hl, shl, -\infty,\topOf, \pvar{c})}{\rid'} *
     \oblA{\obl{k}(\topOf)}{\rid'} * {} \\
     \guardA{\guard{w}(\pvar{c}, \pvar{cl}, scl, v',\layer')}{\rid'} *
     \ap{P}(scl, \onehalf) *
     \oblA{\obl{f}(\layer^{\dagger})}{\rid'} \land \topOf > \layer^{\dagger} > \layer' * {} \\
     \exists l \st
     \region{lclist}{\rid'}(\pvar{x}, hl, shl, \_ \lstPlus (v', l, \layer') \lstPlus \_) *
     l = 1 \dotimplies \envObl{\obl{k}(\layer')}{\rid'} 
   } \\
   \CODE{lock(cl);} \\
   \ASSR{
     \exists \layer^{\dagger}, \layer', S, scl, snl, n, v' \st
     \region{lcset}{\rid}(\rid', \pvar{x}, hl, shl, S) *
     \guardA{\guard{k}(\pvar{p}, hl, shl, -\infty, \topOf, \pvar{c})}{\rid'} *
     \oblA{\obl{k}(\topOf)}{\rid'} * {} \\
      \guardA{\guard{w}(\pvar{c}, \wtv[3])}{\rid'} *
     \guardA{\guard{k}(\pvar{c}, \pvar{cl}, scl, v',\layer', n)}{\rid'} *
     \oblA{\obl{k}(\layer')}{\rid'} *
     \ap{P}(scl, \onehalf) * {} \\
     (n \neq \p{null} \dotimplies (\guardA{\guard{w}(n, \_, snl, \_)}{\rid'} * \ap{P}(snl, \onehalf))) * 
     \oblA{\obl{f}(\layer^{\dagger})}{\rid'} \land 
     \topOf > \layer^{\dagger} > \layer'
   } \\
   \CODE{v \:= c.value;} \\
   \ASSR{
     \exists \layer^{\dagger}, \layer', S, scl, snl, n \st
     \region{lcset}{\rid}(\rid', \pvar{x}, hl, shl, S) *
     \guardA{\guard{k}(\pvar{p}, hl, shl, -\infty, \topOf, \pvar{c})}{\rid'} *
     \oblA{\obl{k}(\topOf)}{\rid'} * {} \\
     \guardA{\guard{w}(\pvar{c}, \wtv[4])}{\rid'} *
     \guardA{\guard{k}(\pvar{c}, \pvar{cl}, scl, \pvar{v},\layer', n)}{\rid'} *
     \oblA{\obl{k}(\layer')}{\rid'} *
     \ap{P}(scl, \onehalf) * {} \\
     (n \neq \p{null} \dotimplies (\guardA{\guard{w}(n, \_, snl, \_)}{\rid'} * \ap{P}(snl, \onehalf))) *
     \oblA{\obl{f}(\layer^{\dagger})}{\rid'} \land 
     \topOf > \layer^{\dagger} > \layer'
   } \\
   \ASSR{
     \exists \layer, \layer^{\dagger}, \layer', S, scl, snl, n, v \st
     \region{lcset}{\rid}(\rid', \pvar{x}, hl, shl, S) *
     \guardA{\guard{k}(\pvar{p}, \wtv[2], v,\layer, \pvar{c})}{\rid'} *
     \oblA{\obl{k}(\layer)}{\rid'} * {} \\
     \guardA{\guard{w}(\pvar{c}, \wtv[4])}{\rid'} *
     \guardA{\guard{k}(\pvar{c}, \_, scl, \pvar{v},\layer', n)}{\rid'} *
     \oblA{\obl{k}(\layer')}{\rid'} *
     \ap{P}(scl, \onehalf) * {} \\
     (n \neq \p{null} \dotimplies (\guardA{\guard{w}(n, \_, snl, \wtv[2])}{\rid'} * \ap{P}(snl, \onehalf))) * 
     \oblA{\obl{f}(\layer^{\dagger})}{\rid'} \land 
     \layer > \layer^{\dagger} > \layer'  \land v < \pvar{e} \land v < \pvar{v}
   } \\
   \CODE{while(v < e) \{} \\
   \quad \CODE{pl := p.lock;} \\
   \quad \CODE{c' := c.next;} \\
   \quad \CODE{cl' := c'.lock;} \\
   \quad \CODE{lock}(\pvar{cl'}); \\
   \quad \CODE{v := c'.val;} \\
   \quad \CODE{unlock(pl)}; \\
   \quad \CODE{p := c;} \\
   \quad \CODE{c := c';} \\
   \CODE{\}} \\
   \ASSR{
     \exists \layer, \layer^{\dagger}, \layer', S, scl, snl, c, n, v, v' \st
     \region{lcset}{\rid}(\rid', \pvar{x}, hl, shl, S) *
     \guardA{\guard{k}(\pvar{p}, \wtv[2], v, \layer, c)}{\rid'} *
     \oblA{\obl{k}(\layer)}{\rid'} * {} \\
     \guardA{\guard{w}(c, \wtv[4])}{\rid'} *
     \guardA{\guard{k}(c, \_, scl, v',\layer', n)}{\rid'} *
     \oblA{\obl{k}(\layer')}{\rid'} *
     \ap{P}(scl, \onehalf) * {} \\
     (n \neq \p{null} \dotimplies (\guardA{\guard{w}(n, \_, snl, \wtv[2])}{\rid'} * \ap{P}(snl, \onehalf))) * 
     \oblA{\obl{f}(\layer^{\dagger})}{\rid'} \land 
     \layer > \layer^{\dagger} > \layer'  \land v < \pvar{e} \le v'
   } \\
   \CODE{ret := p;} \\
   \ASSR{
     \exists \layer, \layer^{\dagger}, \layer', S, scl, snl, c, n, v, v' \st
     \region{lcset}{\rid}(\rid', \pvar{x}, hl, shl, S) *
     \guardA{\guard{k}(\ret, \wtv[2], v,\layer, c)}{\rid'} *
     \oblA{\obl{k}(\layer)}{\rid'} * {} \\
     \guardA{\guard{w}(c, \wtv[3])}{\rid'} *
     \guardA{\guard{k}(c, \_, scl, v',\layer', n)}{\rid'} *
     \oblA{\obl{k}(\layer')}{\rid'} *
     \ap{P}(scl, \onehalf) * {} \\
     (n \neq \p{null} \dotimplies (\guardA{\guard{w}(n, \_, snl, \wtv[2])}{\rid'} * \ap{P}(snl, \onehalf))) * 
     \oblA{\obl{f}(\layer^{\dagger})}{\rid'} \land 
     \layer > \layer^{\dagger} > \layer'  \land v < \pvar{e} \le v'
   } \\
\end{proofoutline}   \caption{Proof outline of \code{locate}.}
  \label{fig:lc-set-locate-outline}
\end{mathfig}

\begin{mathfig}[p]
\begin{proofoutline}
  \CTXT \layTopOf{lc}; \emptyset |- \\
  \PREC{
    \exists \layer, \layer^{\dagger}, \layer', S, scl, snl, n, v \st
    \region{lcset}{\rid}(\rid', \pvar{x}, hl, shl, S) *
    \guardA{\guard{k}(\pvar{p}, \wtv[2], v,\layer, \pvar{c})}{\rid'} *
    \oblA{\obl{k}(\layer)}{\rid'} * {} \\
    \guardA{\guard{w}(\pvar{c}, \wtv[4])}{\rid'} *
    \guardA{\guard{k}(\pvar{c}, \_, scl, \pvar{v},\layer', n)}{\rid'} *
    \oblA{\obl{k}(\layer')}{\rid'} *
    \ap{P}(scl, \onehalf) * {} \\
    (n \neq \p{null} \dotimplies (\guardA{\guard{w}(n, \_, snl, \wtv[2])}{\rid'} * \ap{P}(snl, \onehalf))) * 
    \oblA{\obl{f}(\layer^{\dagger})}{\rid'} \land 
    \layer > \layer^{\dagger} > \layer'  \land v < \pvar{e} \land v < \pvar{v}
  } \\
  \CODE{while(v < e) \{} \\
  $\forall \beta \st$ \CTXT \layTopOf{lc}; \emptyset |- \\
  \quad \PREC{
    \exists \layer, \layer^{\dagger}, \layer', S, scl, snl, n, v \st
    \region{lcset}{\rid}(\rid', \pvar{x}, hl, shl, S) *
    \guardA{\guard{k}(\pvar{p}, \wtv[2], v,\layer, \pvar{c})}{\rid'} *
    \oblA{\obl{k}(\layer)}{\rid'} * {} \\
    \guardA{\guard{w}(\pvar{c}, \wtv[4])}{\rid'} *
    \guardA{\guard{k}(\pvar{c}, \_, scl, \pvar{v},\layer', n)}{\rid'} *
    \oblA{\obl{k}(\layer')}{\rid'} *
    \ap{P}(scl, \onehalf) *
    \guardA{\guard{w}(n, \_, snl, \wtv[2])}{\rid'} * \ap{P}(snl, \onehalf) * {} \\
    \oblA{\obl{f}(\layer^{\dagger})}{\rid'} \land 
    \beta \ge \layer > \layer^{\dagger} > \layer' \land v < \pvar{v} < \pvar{e}
  } \\
  \quad \CODE{pl \:= p.lock;} \\
  \quad \CODE{c' \:= c.next;} \\
  \quad \CODE{cl' := c'.lock;} \\
  \quad \ASSR{
    \exists \layer, \layer^{\dagger}, \layer', \layer'', S, scl, snl, v, v'' \st
    \region{lcset}{\rid}(\rid', \pvar{x}, hl, shl, S) *
    \guardA{\guard{k}(\pvar{p}, \pvar{pl}, \_, v,\layer, \pvar{c})}{\rid'} *
    \oblA{\obl{k}(\layer)}{\rid'} * {} \\
    \guardA{\guard{w}(\pvar{c}, \wtv[4])}{\rid'} *
    \guardA{\guard{k}(\pvar{c}, \_, scl, \pvar{v},\layer', \pvar{c'})}{\rid'} *
    \oblA{\obl{k}(\layer')}{\rid'} *
    \ap{P}(scl, \onehalf) *
    \guardA{\guard{w}(\pvar{c'}, \pvar{cl'}, snl, v'',\layer'')}{\rid'} * \ap{P}(snl, \onehalf) * {} \\
    \oblA{\obl{f}(\layer^{\dagger})}{\rid'} *
    \exists l\st
    \region{lclist}{\rid'}(\pvar{x}, hl, shl, \_ \lstPlus (v'', l, \layer'') \lstPlus \_) *
    l = 1 \dotimplies \envObl{\obl{k}(\layer'')}{\rid'} \land {} \\
    \beta \ge \layer > \layer^{\dagger} > \layer' > \layer'' \land \pvar{v} < \pvar{e} \land \pvar{v} < v''
  } \\
  \quad \CODE{lock(cl');} \\
  \quad \ASSR{
    \exists \layer, \layer^{\dagger}, \layer', \layer'', S, scl, snl, n, v, v'' \st
    \region{lcset}{\rid}(\rid', \pvar{x}, hl, shl, S) *
    \guardA{\guard{k}(\pvar{p}, \pvar{pl}, \_, v,\layer, \pvar{c})}{\rid'} *
    \oblA{\obl{k}(\layer)}{\rid'} * {} \\
    \guardA{\guard{w}(\pvar{c}, \wtv[4])}{\rid'} *
    \guardA{\guard{k}(\pvar{c}, \_, scl, \pvar{v},\layer', \pvar{c'})}{\rid'} *
    \oblA{\obl{k}(\layer')}{\rid'} *
    \ap{P}(scl, \onehalf) *
    \guardA{\guard{w}(\pvar{c'}, \pvar{cl'}, snl, v'',\layer'')}{\rid'} * {} \\
    \guardA{\guard{k}(\pvar{c}', \_, snl, v'',\layer'', n)}{\rid'} *
    \oblA{\obl{k}(\layer'')}{\rid'} * \ap{P}(snl, \onehalf) * {} \\
    n \neq \p{null} \dotimplies (\exists snl' \st \guardA{\guard{w}(n, \_, snl', \wtv[2])}{\rid'} * \ap{P}(snl', \onehalf)) *
    \oblA{\obl{f}(\layer^{\dagger})}{\rid'} \land {} \\
    \beta \ge \layer > \layer^{\dagger} > \layer' > \layer'' \land \pvar{v} < \pvar{e} \land \pvar{v} < v''
  } \\
  \quad \CODE{v \:= c'.val;} \\
  \quad \ASSR{
    \exists \layer, \layer^{\dagger}, \layer', \layer'', S, scl, snl, n, v, v', \pvar{v} \st
    \region{lcset}{\rid}(\rid', \pvar{x}, hl, shl, S) *
    \guardA{\guard{k}(\pvar{p}, \pvar{pl}, \_, v,\layer, \pvar{c})}{\rid'} *
    \oblA{\obl{k}(\layer)}{\rid'} * {} \\
    \guardA{\guard{w}(\pvar{c}, \wtv[4])}{\rid'} *
    \guardA{\guard{k}(\pvar{c}, \_, scl, v',\layer', \pvar{c'})}{\rid'} *
    \oblA{\obl{k}(\layer')}{\rid'} *
    \ap{P}(scl, \onehalf) *
    \guardA{\guard{w}(\pvar{c'}, \pvar{cl'}, snl, \pvar{v},\layer'')}{\rid'} * {} \\
    \guardA{\guard{k}(\pvar{c}', \_, snl, \pvar{v},\layer'', n)}{\rid'} *
    \oblA{\obl{k}(\layer'')}{\rid'} * \ap{P}(snl, \onehalf) * {} \\
    n \neq \p{null} \dotimplies (\exists snl' \st \guardA{\guard{w}(n, \_, snl', \wtv[2])}{\rid'} * \ap{P}(snl', \onehalf)) *
    \oblA{\obl{f}(\layer^{\dagger})}{\rid'} \land {} \\
    \beta \ge \layer > \layer^{\dagger} > \layer' > \layer'' \land v' < \pvar{e} \land v' < \pvar{v}
  } \\
  \quad \CODE{unlock(pl);} \\
  \quad \ASSR{
    \exists \layer, \layer^{\dagger}, \layer', \layer'', S, scl, snl, n, v, v', \pvar{v} \st
    \region{lcset}{\rid}(\rid', \pvar{x}, hl, shl, S) * {} \\
    \guardA{\guard{k}(\pvar{c}, \_, scl, v',\layer^{\dagger}, \pvar{c'})}{\rid'} *
    \oblA{\obl{k}(\layer^{\dagger})}{\rid'} *
    \guardA{\guard{w}(\pvar{c'}, \pvar{cl'}, snl, v'',\layer'')}{\rid'} * {} \\
    \guardA{\guard{k}(\pvar{c}', \_, snl, \pvar{v},\layer'', n)}{\rid'} *
    \oblA{\obl{k}(\layer'')}{\rid'} * \ap{P}(snl, \onehalf) * {} \\
    n \neq \p{null} \dotimplies (\exists snl' \st \guardA{\guard{w}(n, \_, snl', \wtv[2])}{\rid'} * \ap{P}(snl', \onehalf)) *
    \oblA{\obl{f}(\layer')}{\rid'} \land {} \\
    \beta \ge \layer > \layer^{\dagger} > \layer' > \layer'' \land v' < \pvar{e} \land v' < \pvar{v}
  } \\
  \quad \CODE{p \:= c;} \\
  \quad \CODE{c \:= c';} \\
  \quad \POST{
    \exists \layer, \layer^{\dagger}, \layer', S, scl, snl, n, v \st
    \region{lcset}{\rid}(\rid', \pvar{x}, hl, shl, S) *
    \guardA{\guard{k}(\pvar{p}, \wtv[2], v,\layer, \pvar{c})}{\rid'} *
    \oblA{\obl{k}(\layer)}{\rid'} * {} \\
    \guardA{\guard{w}(\pvar{c}, \wtv[4])}{\rid'} *
    \guardA{\guard{k}(\pvar{c}, \_, scl, \pvar{v},\layer', n)}{\rid'} *
    \oblA{\obl{k}(\layer')}{\rid'} *
    \ap{P}(scl, \onehalf) * {} \\
    (n \neq \p{null} \dotimplies (\guardA{\guard{w}(n, \_, snl, \wtv[2])}{\rid'} * \ap{P}(snl, \onehalf))) * 
    \oblA{\obl{f}(\layer^{\dagger})}{\rid'} \land 
    \layer > \layer^{\dagger} > \layer'  \land v < \pvar{e} \land v < \pvar{v}
  } \\
  \CODE{\}} \\
  \POST{
     \exists \layer, \layer^{\dagger}, \layer', S, scl, snl, c, n, v, v' \st
     \region{lcset}{\rid}(\rid', \pvar{x}, hl, shl, S) *
     \guardA{\guard{k}(\pvar{p}, \wtv[2], v,\layer, c)}{\rid'} *
     \oblA{\obl{k}(\layer)}{\rid'} * {} \\
     \guardA{\guard{w}(c, \wtv[4])}{\rid'} *
     \guardA{\guard{k}(c, \_, scl, v',\layer', n)}{\rid'} *
     \oblA{\obl{k}(\layer')}{\rid'} *
     \ap{P}(scl, \onehalf) * {} \\
     (n \neq \p{null} \dotimplies (\guardA{\guard{w}(n, \_, snl, \wtv[2])}{\rid'} * \ap{P}(snl, \onehalf))) * 
     \oblA{\obl{f}(\layer^{\dagger})}{\rid'} \land 
     \layer > \layer^{\dagger} > \layer'  \land v < \pvar{e} \le v'
   }
\end{proofoutline} \caption{Details of while loop in \code{locate}.}
\label{fig:lc-set-locate-while}
\end{mathfig}

\begin{mathfig}[tbp]
  \begin{proofoutline}
\CTXT \layTopOf{lc}; \emptyset |- \\
    \PREC{
      \exists \layer, \layer^{\dagger}, \layer', \layer'', S, scl, snl, v, v'' \st
      \region{lcset}{\rid}(\rid', \pvar{x}, hl, shl, S) *
      \guardA{\guard{k}(\pvar{p}, \pvar{pl}, \_, v,\layer, \pvar{c})}{\rid'} *
      \oblA{\obl{k}(\layer)}{\rid'} * {} \\
      \guardA{\guard{w}(\pvar{c}, \wtv[4])}{\rid'} * 
      \guardA{\guard{k}(\pvar{c}, \_, scl, \pvar{v},\layer', \pvar{c'})}{\rid'} *
      \oblA{\obl{k}(\layer')}{\rid'} *
      \ap{P}(scl, \onehalf) *
      \guardA{\guard{w}(\pvar{c'}, \pvar{cl'}, snl, v'',\layer'')}{\rid'} * \ap{P}(snl, \onehalf) * {} \\
      \oblA{\obl{f}(\layer^{\dagger})}{\rid'} *
      \exists l \st
      \region{lclist}{\rid'}(\pvar{x}, hl, shl, \_ \lstPlus (v'', l, \layer'') \lstPlus \_) *
      l = 1 \dotimplies \envObl{\obl{k}(\layer'')}{\rid'} \land {} \\
      \beta \ge \layer > \layer^{\dagger} > \layer' > \layer'' \land \pvar{v} < \pvar{e} \land \pvar{v} < v''
    } \\
    \begin{proofjump*}[rule:exists-elim,rule:atomicity-weak,rule:atomic-exists-elim,rule:lift-atomic,rule:quantify-layer,rule:frame]
      \label{step:locate-lock-hoare}
      \CTXT (\layer', \layTopOf{clh}); \emptyset |- \\
      \A ls, ls' \in \Int^{*}, \layer'', l \in \set{0, 1}. \\
      \PREC<
        \exists l, \layer'' \st
        \region{lclist}{\rid'}(\pvar{x}, hl, shl, \_ \lstPlus (v'', l, \layer'') \lstPlus \_) *
        \ap{P}(snl, \onehalf) * {} \\
        \guardA{\guard{w}(\pvar{c'}, \pvar{cl'}, snl, v'', \layer'')}{\rid'} *
        l = 1 \dotimplies \envObl{\obl{k}(\layer'')}{\rid'} \land {\layer' > \layer''}
      |
        \region{lclist}{\rid'}(\pvar{x}, hl, shl, ls \oplus ((v'', l, \layer'') \lstPlus ls')) *
        \guardA{\guard{e}}{\rid'} * {} \\
        \guardA{\guard{k}(\pvar{c}, \_, scl, \pvar{v},\layer', \pvar{c'})}{\rid'}
        \land {\layer' > \layer''}
      > \\
      \begin{proofjump}[rule:liveness-check]
        \CTXT (\layer', \layTopOf{clh}); \emptyset |- \\
        \A ls, ls' \in \Int^{*}, l \in \set{0, 1} \eventually[(\layer'', \layBotOf{clh})] \set{0}. \\
        \PREC<
          \ap{P}(snl, \onehalf) |
          \region{lclist}{\rid'}(\pvar{x}, hl, shl, ls \oplus ((v'', l, \layer'') \lstPlus ls')) *
          \guardA{\guard{e}}{\rid'} * {} \\
          \guardA{\guard{k}(\pvar{c}, \_, scl, \pvar{v},\layer', \pvar{c'})}{\rid'} *
          \guardA{\guard{w}(\pvar{c'}, \pvar{cl'}, snl, v'',\layer'')}{\rid'} 
          \land {\layer' > \layer''}
        > \\
        \begin{proofjump*}[rule:lift-atomic,rule:quantify-layer,rule:consequence,rule:frame]
          \label{step:locate-lock}
          \CTXT (\layer'', \layTopOf{clh}); \emptyset |- \\
          \A l \in \set{0, 1} \eventually[(\layer'', \layBotOf{clh})] \set{0}. \\
          \PREC<
          \ap{P}(snl, \onehalf) |
          \ap{L}_{\laymap_{\layer''}}(snl, \pvar{cl'}, l)
          > \\
          \CODE{lock(cl');} \\
          \POST<\ap{P}(snl, \onehalf) |
          \ap{L}_{\laymap_{\layer''}}(snl, \pvar{cl'}, 1) \land l = 0
          > 
        \end{proofjump*} \\
        \POST<
          \ap{P}(snl, \onehalf) |
          \exists n \st
          \region{lclist}{\rid'}(\pvar{x}, hl, shl, ls \oplus ((v'', 1, \layer'') \lstPlus ls')) *
          \guardA{\guard{e}}{\rid'} * {} \\
          \guardA{\guard{k}(\pvar{c}, \_, scl, \pvar{v}, \layer', \pvar{c'})}{\rid'} *
          \guardA{\guard{k}(\pvar{c}', \_, snl, v'', \layer'', n)}{\rid'} *
          \oblA{\obl{k}(\layer'')}{\rid'} * {} \\
          n \neq \p{null} \dotimplies (\exists snl' \st \guardA{\guard{w}(n, \_, snl', \wtv[2])}{\rid'} * \ap{P}(snl', \onehalf))
        >
      \end{proofjump} \\
      \POST<
        \exists l, \layer'' \st
        \region{lclist}{\rid'}(\pvar{x}, hl, shl, \_ \lstPlus (v'', l, \layer'') \lstPlus \_) * {} \\
        \ap{P}(snl, \onehalf) *
        \guardA{\guard{w}(\pvar{c'}, \pvar{cl'}, snl, v'',\layer'')}{\rid'} * {} \\
        l = 1 \dotimplies \envObl{\obl{k}(\layer'')}{\rid'} \land {\layer' > \layer''}
      |
        \exists n \st
        \region{lclist}{\rid'}(\pvar{x}, hl, shl, ls \oplus ((v'', 1, \layer'') \lstPlus ls')) *
        \guardA{\guard{e}}{\rid'} * {} \\
        \guardA{\guard{k}(\pvar{c}, \_, scl, \pvar{v},\layer', \pvar{c'})}{\rid'} *
        \guardA{\guard{k}(\pvar{c}', \_, snl, v'',\layer'', n)}{\rid'} *
        \oblA{\obl{k}(\layer'')}{\rid'} * {} \\
        n \neq \p{null} \dotimplies (\exists snl' \st \guardA{\guard{w}(n, \_, snl', \wtv[2])}{\rid'} * \ap{P}(snl', \onehalf))
      >
    \end{proofjump*} \\
    \POST{
      \exists \layer, \layer^{\dagger}, \layer', \layer'', S, scl, snl, n, v, v'' \st
      \region{lcset}{\rid}(\rid', \pvar{x}, hl, shl, S) * {} \\
      \guardA{\guard{k}(\pvar{p}, \pvar{pl}, \_, v,\layer, \pvar{c})}{\rid'} *
      \oblA{\obl{k}(\layer)}{\rid'} *
      \guardA{\guard{w}(\pvar{c}, \wtv[4])}{\rid'} * {} \\
      \guardA{\guard{k}(\pvar{c}, \_, scl, \pvar{v},\layer', \pvar{c'})}{\rid'} *
      \oblA{\obl{k}(\layer')}{\rid'} *
      \ap{P}(scl, \onehalf) * \guardA{\guard{w}(\pvar{c'}, \wtv[4])}{\rid'} * {} \\
      \guardA{\guard{k}(\pvar{c}', \_, snl, v'',\layer'', n)}{\rid'} *
      \oblA{\obl{k}(\layer'')}{\rid'} * \ap{P}(snl, \onehalf) * {} \\
      n \neq \p{null} \dotimplies (\exists snl' \st \guardA{\guard{w}(n, \_, snl', \wtv[2])}{\rid'} * \ap{P}(snl', \onehalf)) *
      \oblA{\obl{f}(\layer^{\dagger})}{\rid'} \land {} \\
      \beta \ge \layer > \layer^{\dagger} > \layer' > \layer'' \land \pvar{v} < \pvar{e} \land \pvar{v} < v''
    }
\end{proofoutline} \caption{Details of \code{lock} in while loop of \code{locate}.\\
\cref{step:locate-lock-hoare} is \explainproofjump{step:locate-lock-hoare}.\\
\cref{step:locate-lock} is \explainproofjump{step:locate-lock}.}
\label{fig:lc-set-locate-livec}
\end{mathfig}

The outline of the proof of \code{locate} is shown in 
\cref{fig:lc-set-locate-outline}.
\Cref{fig:lc-set-locate-while,fig:lc-set-locate-livec} show the details of the derivation for the crucial steps, i.e.~the while loop and the acquisition of a lock.
In the outlines,
we expand the record notation to tuples,
e.g.~$\guard{k}(\fldOf{n}{addr}, \fldOf{n}{lck}, \fldOf{n}{lid}, \fldOf{n}{val}, \fldOf{n}{lay}, a)$.
We detail here the application of \ref{rule:liveness-check}.
The associated environment liveness condition is proved by:
\[
\infer*[right={\ref{rule:envlive}}]{
  \infer*[right={\ref{rule:envlive-case}}]{
    \infer*[right={\ref{rule:envlive-target}}]{
      \forall \alpha \st \VALID \actxt |= T'(\alpha) \implies T
    }{
      \ENVLIVE (\layer', \layTopOf{clh}); \lvl; \actxt |- L(\alpha) : T'(\alpha) -->> T
    } \quad
    \infer*[right={\ref{rule:envlive-quant}}]{
      (\ref{lc-set:live-o})
    }{
      \ENVLIVE (\layer', \layTopOf{clh}); \lvl; \actxt |- L(\alpha) : L_1(\alpha) -->> T }
  }{
    \ENVLIVE (\layer', \layTopOf{clh}); \lvl; \actxt |- L(\alpha) : L(\alpha) -->> T }
}{
  \ENVLIVE (\layer', \layTopOf{clh}); \lvl; \actxt |- L - M ->> T
}
\]
where $L(\alpha) \is L * M(\alpha)$ and
\begin{align*}
M(\alpha) &\is
  \exists l\st
    \region{lclist}{\rid'}(\pvar{x}, hl, shl, \_ \oplus (v'', l, \_) \lstPlus \_) \land \alpha = l
\\
L &\is
\exists l, \layer'' \st
  \region{lclist}{\rid'}(\pvar{x}, hl, shl, \_ \oplus (v'', l, \layer'') \lstPlus \_) * 
  \guardA{\guard{w}(\pvar{c'}, \pvar{cl'}, snl, v'',\layer'')}{\rid'}
  \\&\qquad * 
  l = 1 \dotimplies \envObl{\obl{k}(\layer'')}{\rid} \land \layer' > \layer''
\\
L_{1}(\alpha) &\is
\exists \layer'' \st L'_{\layer''}(\alpha)
\\
L'_{\layer''}(\alpha) &\is
 \region{lclist}{\rid'}(\pvar{x}, hl, shl, \_ \oplus (v'', 1, \layer'') \lstPlus \_) * 
  \guardA{\guard{w}(\pvar{c'}, \pvar{cl'}, snl, v'',\layer'')}{\rid'}
  \\&\qquad * \envObl{\obl{k}(\layer'')}{\rid} \land \layer' > \layer'' \land \alpha = 1
\\
T'(\alpha) &\is
\exists \layer'' \st
 \region{lclist}{\rid'}(\pvar{x}, hl, shl, \_ \oplus (v'', 0, \layer'') \lstPlus \_) * 
 \guardA{\guard{w}(\pvar{c'}, \pvar{cl'}, snl, v'',\layer'')}{\rid'} \land {} \\
 &\qquad \layer' > \layer'' \land \alpha = 0
\end{align*}

\begin{equation}\small
\infer*[right={\ref{rule:envlive-quant}}]{
\infer*[right={\ref{rule:envlive-obl}}]{
  \decr[\actxt](L'_{\layer''}, L, T) \\
  \forall \alpha\st
  \VALID \actxt |= \minLayStrict{L'_{\layer''}(\alpha)}{\lay(\obl{k}(\layer''))} \\
  \forall \alpha \st
  \SAT[\actxt] |- L'_{\layer''}(\alpha) \implies
  {
      \region[\lvlp]{lclist}{\rid'}(\pvar{x}, \pvar{hl}, shl, \_ \lstPlus (v'', 1, \layer'') \lstPlus \_)
      * \envObl{\obl{k}(\layer'')}{\rid} * \True \land
      (\layer', \layTopOf{clh}) \laygt \lay(\obl{k}(\layer''))
  }
}{
  \ENVLIVE (\layer', \layTopOf{clh}); \lvl; \actxt |- L(\alpha) : L'_{\layer''}(\alpha) -->> T
}}{
  \ENVLIVE (\layer', \layTopOf{clh}); \lvl; \actxt |- L(\alpha) : L_{1}(\alpha) -->> T
}
\label{lc-set:live-o}
\end{equation}

\paragraph{Proof of \code{add}}
The proof of the \code{add} operation builds on the specification
of \code{locate}. We show its outline in~\cref{fig:lc-set-add-outline}
with a more detailed derivation showing how the first unlock operation is handled
in \cref{fig:lc-set-add-unlock}.

\begin{mathfig}[ptb]
\begin{proofoutline}
\TITLE{Proof of \code{add(x,e)}:}
  \CTXT \layTopOf{lc}; \emptyset |-
  \A S \in \mathcal{P}(\Int). \\
  \PREC<\ap{LCSet}(s, \pvar{x}, S)> \\
  \begin{proofjump}[rule:consequence,"Sub $s=(r\text{,}r'\text{,}hl)$"]
    \PREC<\region{lcset}{\rid}(\rid', \pvar{x}, hl, S) * \guardA{\guard{e}}{\rid}> \\
    \begin{proofjump}[rule:make-atomic]
      \CTXT \layTopOf{lc};
            \actxt = {\map{\rid->(\powerset(\Int),\layBot,\powerset(\Int), \set{((S, \oblZero),(S\union\set{\pvar{e}}, \oblZero)) | S\subseteq \Int} )}}
            |- \\
      \PREC{\exists S \st
        \region{lcset}{\rid}(\rid', \pvar{x}, hl, S) * \done{\rid}{\blacklozenge}
      } \\
      \CODE{p \:= locate(x, e);} \\
      \ASSR{
        \exists \layer, \layer^{\dagger}, \layer', S, scl, snl, c, n, v, v' \st
        \region{lcset}{\rid}(\rid', \pvar{x}, hl, shl, S) * \done{\rid}{\blacklozenge} *
        \guardA{\guard{k}(\pvar{p}, \wtv[2], v,\layer, c)}{\rid'} *
        \oblA{\obl{k}(\layer)}{\rid'} * {} \\
        \guardA{\guard{w}(c, \wtv[4])}{\rid'} *
        \guardA{\guard{k}(c, \_, scl, v',\layer', n)}{\rid'} *
        \oblA{\obl{k}(\layer')}{\rid'} *
        \ap{P}(scl, \onehalf) * {} \\
        (n \neq \p{null} \dotimplies (\guardA{\guard{w}(n, \_, snl, \wtv[2])}{\rid'} * \ap{P}(snl, \onehalf))) * 
        \oblA{\obl{f}(\layer^{\dagger})}{\rid'} \land 
        \layer > \layer^{\dagger} > \layer'  \land v < \pvar{e} \le v'
      } \\
      \begin{proofjump}[rule:exists-elim]
\ASSR{
          \exists S \st
          \region{lcset}{\rid}(\rid', \pvar{x}, hl, shl, S) * \done{\rid}{\blacklozenge} *
          \guardA{\guard{k}(\pvar{p}, \wtv[2], v, \layer, c)}{\rid'} *
          \oblA{\obl{k}(\layer)}{\rid'} * {} \\
          \guardA{\guard{w}(c, \wtv[4])}{\rid'} *
          \guardA{\guard{k}(c, \_, scl, v',\layer', n)}{\rid'} *
          \oblA{\obl{k}(\layer')}{\rid'} *
          \ap{P}(scl, \onehalf) * {} \\
          (n \neq \p{null} \dotimplies (\guardA{\guard{w}(n, \_, snl, \wtv[2])}{\rid'} * \ap{P}(snl, \onehalf))) *
          \oblA{\obl{f}(\layer^{\dagger})}{\rid'} \land 
          \layer > \layer^{\dagger} > \layer'  \land v < \pvar{e} \le v'
        } \\
        \begin{proofjump}[rule:frame]
\ASSR{
            \exists S \st
            \region{lcset}{\rid}(\rid', \pvar{x}, hl, shl, S) * \done{\rid}{\blacklozenge} *
            \guardA{\guard{k}(\pvar{p}, \wtv[2], v,\layer, c)}{\rid'} *
            \oblA{\obl{k}(\layer)}{\rid'} *
            \guardA{\guard{w}(c, \wtv[4])}{\rid'} * {} \\
            \guardA{\guard{k}(c, \_, scl, v', n)}{\rid'} *
            \oblA{\obl{k}(\layer')}{\rid'} *
            \ap{P}(scl, \onehalf) *
            \oblA{\obl{f}(\layer^{\dagger})}{\rid'} \land 
            \layer > \layer^{\dagger} > \layer'  \land v < \pvar{e} \le v'
          } \\
          \CODE{c \:= p.next;} \\
\CODE{v \:= c.val;} \\
          \ASSR{
            \exists S \st
            \region{lcset}{\rid}(\rid', \pvar{x}, hl, shl, S) * \done{\rid}{\blacklozenge} *
            \guardA{\guard{k}(\pvar{p}, \wtv[2], v,\layer, \pvar{c})}{\rid'} *
            \oblA{\obl{k}(\layer)}{\rid'} *
            \guardA{\guard{w}(\pvar{c}, \wtv[4])}{\rid'} * {} \\
            \guardA{\guard{k}(\pvar{c}, \_, scl, v',\layer', n)}{\rid'} *
            \oblA{\obl{k}(\layer')}{\rid'} *
            \ap{P}(scl, \onehalf) *
            \oblA{\obl{f}(\layer^{\dagger})}{\rid'} \land \layer > \layer^{\dagger} > \layer'  \land v < \pvar{e} \le v' \land \pvar{v} = v'
          } \\
          \CODE{if(v != e) \{} \\
          \begin{proofindent}
            \ASSR{
              \exists S \st
              \region{lcset}{\rid}(\rid', \pvar{x}, hl, shl, S) * \done{\rid}{\blacklozenge} *
              \guardA{\guard{k}(\pvar{p}, \wtv[2], v,\layer, \pvar{c})}{\rid'} *
              \oblA{\obl{k}(\layer)}{\rid'} *
              \guardA{\guard{w}(\pvar{c}, \wtv[4])}{\rid'} * {} \\
              \guardA{\guard{k}(\pvar{c}, \_, scl, v',\layer', n)}{\rid'} *
              \oblA{\obl{k}(\layer')}{\rid'} *
              \ap{P}(scl, \onehalf) *
              \oblA{\obl{f}(\layer^{\dagger})}{\rid'} \land 
              \layer > \layer^{\dagger} > \layer'  \land v < \pvar{e} < v'
            } \\
            \CODE{n \:= alloc(3);} \\
            \CODE{nl \:= makeLock();} \\
            \CODE{n.lock \:= nl;} \\
            \CODE{n.val \:= e;} \\
            \CODE{n.next \:= c;} \\
            \ASSR{
              \exists S \st
              \region{lcset}{\rid}(\rid', \pvar{x}, hl, shl, S) * \done{\rid}{\blacklozenge} *
              \guardA{\guard{k}(\pvar{p}, \wtv[2], v,\layer, \pvar{c})}{\rid'} *
              \oblA{\obl{k}(\layer)}{\rid'} *
              \guardA{\guard{w}(\pvar{c}, \wtv[4])}{\rid'} * {} \\
              \guardA{\guard{k}(\pvar{c}, \_, scl, v',\layer', n)}{\rid'} * 
              \oblA{\obl{k}(\layer')}{\rid'} *
              \ap{P}(scl, \onehalf) *
              \oblA{\obl{f}(\layer^{\dagger})}{\rid'} * {} \\
              \exists s \st \pvar{n} \mapsto \pvar{nl}, \pvar{e}, \pvar{c} *
              \ap{L}_{\laymap_{\layer^{\dagger}}}(s, \pvar{nl}, 0) * \ap{P}(s, 1) \land 
              \layer > \layer^{\dagger} > \layer'  \land v < \pvar{e} < v'
            } \\
            \CODE{p.next \:= n;} \\
            \ASSR{
              \exists S, S', snl \st
              \region{lcset}{\rid}(\rid', \pvar{x}, hl, shl, S') * \done{\rid}{(S, S \union \set{\pvar{e}})} *
              \guardA{\guard{k}(\pvar{p}, \wtv[2], v,\layer, \pvar{n})}{\rid'} *
              \oblA{\obl{k}(\layer)}{\rid'} * {} \\
              \guardA{\guard{w}(\pvar{n}, \_, snl, \wtv[2])}{\rid'} *
              \ap{P}(snl, \onehalf) *
              \guardA{\guard{k}(\pvar{c}, \_, scl, v',\layer', n)}{\rid'} *
              \oblA{\obl{k}(\layer')}{\rid'} \land
              \layer > \layer' 
            }
          \end{proofindent} \\
          \CODE{\}} \\
          \ASSR{
            \exists S, S', snl \st
            \region{lcset}{\rid}(\rid', \pvar{x}, hl, shl, S') * \done{\rid}{(S, S \union \set{\pvar{e}})} *
            \guardA{\guard{k}(\pvar{p}, \wtv[2], v,\layer, \pvar{n})}{\rid'} *
            \oblA{\obl{k}(\layer)}{\rid'} * {} \\
            \guardA{\guard{w}(\pvar{n}, \_, snl, \wtv[2])}{\rid'} *
            \ap{P}(snl, \onehalf) 
            \guardA{\guard{k}(\pvar{c}, \_, scl, v',\layer', n)}{\rid'} *
            \oblA{\obl{k}(\layer')}{\rid'} \land
            \layer > \layer' 
          } \\
          \CODE{pl \:= p.lock;} \\
          \CODE{cl \:= c.lock;} \\
          \ASSR{
            \exists S, S', snl \st
            \region{lcset}{\rid}(\rid', \pvar{x}, hl, shl, S') * \done{\rid}{(S, S \union \set{\pvar{e}})} *
            \guardA{\guard{k}(\pvar{p}, \pvar{pl}, \_, v,\layer, \pvar{n})}{\rid'} *
            \oblA{\obl{k}(\layer)}{\rid'} * {} \\
            \guardA{\guard{w}(\pvar{n}, \_, snl, \wtv[2])}{\rid'} *
            \ap{P}(snl, \onehalf) 
            \guardA{\guard{k}(\pvar{c}, \pvar{cl}, scl, v',\layer', n)}{\rid'} *
            \oblA{\obl{k}(\layer')}{\rid'} \land
            \layer > \layer' 
          }
        \end{proofjump} \\
        \ASSR{
          \exists S, S, snl' \st
          \region{lcset}{\rid}(\rid', \pvar{x}, hl, shl, S') * \done{\rid}{(S, S \union \set{\pvar{e}})} * {} \\
          \guardA{\guard{k}(\pvar{p}, \pvar{pl}, \_, v,\layer, \pvar{n})}{\rid'} *
          \oblA{\obl{k}(\layer)}{\rid'} *
          \guardA{\guard{w}(\pvar{n}, \_, snl', \wtv[2])}{\rid'} *
          \ap{P}(snl', \onehalf) * {} \\
          \guardA{\guard{k}(\pvar{c}, \pvar{cl}, \_, v',\layer', n)}{\rid'} *
          \oblA{\obl{k}(\layer')}{\rid'} * {} \\
          (n \neq \p{null} \dotimplies (\guardA{\guard{w}(n, \_, snl, \wtv[2])}{\rid'} * \ap{P}(snl, \onehalf))) \land
          \layer > \layer' 
        } \\
        \CODE{unlock(cl);} \\
        \ASSR{
          \exists S, S', snl' \st
          \region{lcset}{\rid}(\rid', \pvar{x}, hl, shl, S') * \done{\rid}{(S, S \union \set{\pvar{e}})} * {} \\
          \guardA{\guard{k}(\pvar{p}, \pvar{pl}, \_, v,\layer, \pvar{n})}{\rid'} *
          \oblA{\obl{k}(\layer)}{\rid'} * 
          \guardA{\guard{w}(\pvar{n}, \_, snl', \wtv[2])}{\rid'} *
          \ap{P}(snl', \onehalf)             
        } \\
        \CODE{unlock(pl);} \\
        \POST{
          \exists S \st
          \done{\rid}{(S, S \cup \set{\pvar{e}})}
        }
      \end{proofjump} \\
      \POST{
          \exists S \st
          \done{\rid}{(S, S \cup \set{\pvar{e}})}
      }
    \end{proofjump} \\
    \POST<\region{lcset}{\rid}(\rid', hl, x, S \cup \set{\pvar{e}}) * \guardA{\guard{e}}{\rid}>
  \end{proofjump} \\
  \POST<\ap{LCSet}(s, \pvar{x}, S \cup \set{\pvar{e}})>
\end{proofoutline} \caption{Proof outline of \code{add} operation.}
\label{fig:lc-set-add-outline}
\end{mathfig}

\begin{mathfig}[ptb]
\begin{proofoutline}
  \CTXT \layTopOf{lc}; \actxt |- \\
  \PREC{
    \exists S, S', snl' \st
    \region{lcset}{\rid}(\rid', \pvar{x}, hl, shl, S') * \done{\rid}{(S, S \union \set{\pvar{e}})} * {} \\
    \guardA{\guard{k}(\pvar{p}, \pvar{pl}, \_, v,\layer, \pvar{n})}{\rid'} *
    \oblA{\obl{k}(\layer)}{\rid'} *
    \guardA{\guard{w}(\pvar{n}, \_, snl', \wtv[2])}{\rid'} *
    \ap{P}(snl', \onehalf){} \\
    \guardA{\guard{k}(\pvar{c}, \pvar{cl}, \_, v',\layer', n)}{\rid'} *
    \oblA{\obl{k}(\layer')}{\rid'} * {} \\
    (n \neq \p{null} \dotimplies (\guardA{\guard{w}(n, \_, snl, \wtv[2])}{\rid'} * \ap{P}(snl, \onehalf))) \land
    \layer > \layer' 
  } \\
  \begin{proofjump}[rule:quantify-layer,rule:consequence,rule:frame,rule:exists-elim]
    \CTXT (\layer', \layBotOf{clh}); \actxt |- \\
    \PREC{
      \exists S \st
      \region{lcset}{\rid}(\rid', \pvar{x}, hl, shl, S) *
      \guardA{\guard{k}(\pvar{c}, \pvar{cl}, \_, v',\layer', n)}{\rid'} *
      \oblA{\obl{k}(\layer')}{\rid'} * {}\\
      (n \neq \p{null} \dotimplies (\guardA{\guard{w}(n, \_, snl, \wtv[2])}{\rid'} * \ap{P}(snl, \onehalf)))
    } \\
    \begin{proofjump}[rule:atomic-exists-elim]
      \CTXT (\layer', \layBotOf{clh}); \actxt |- \\
      \A S \in \mathcal{P}(\Int) . \\
      \PREC<
      \region{lcset}{\rid}(\rid', \pvar{x}, hl, shl, S) *
      \guardA{\guard{k}(\pvar{c}, \pvar{cl}, \_, v',\layer', n)}{\rid'} *
      \oblA{\obl{k}(\layer')}{\rid'} * {}\\
      (n \neq \p{null} \dotimplies (\guardA{\guard{w}(n, \_, snl, \wtv[2])}{\rid'} * \ap{P}(snl, \onehalf)))
      > \\
      \begin{proofjump}[rule:lift-atomic,rule:atomic-exists-elim,rule:frame]
        \CTXT (\layer', \layBotOf{clh}); \actxt |- \\
        \A ls, ls' \in ((\Int \dunion \set{-\infty,\infty}) \times \set{0,1} \times \Nat)^{*} . \\
        \PREC<
        \region{lclist}{\rid'}(\pvar{x}, hl, shl, ls \lstPlus (v', 1, \layer') \lstPlus ls') *
        \guardA{\guard{k}(\pvar{c}, \pvar{cl}, \_, v',\layer', n)}{\rid'} *
        \oblA{\obl{k}(\layer')}{\rid'} *  {} \\
        (n \neq \p{null} \dotimplies (\guardA{\guard{w}(n, \_, snl, \wtv[2])}{\rid'} * \ap{P}(snl, \onehalf)))
        > \\
        \begin{proofjump*}[rule:lift-atomic,rule:consequence,rule:frame]\label{step:add-unlock-lift}\CTXT (\layer', \layBotOf{clh}); \actxt |- \\
          \PREC<\ap{L}_{\laymap_{\layer'}}(s, la, 1)> \\
          \CODE{unlock(cl);} \\
          \PREC<\ap{L}_{\laymap_{\layer'}}(s, la, 0)>
        \end{proofjump*} \\
        \POST<
        \region{lclist}{\rid'}(\pvar{x}, hl, shl, ls \lstPlus (v', 0, \layer') \lstPlus ls')
        > \\
      \end{proofjump} \\
      \POST<\region{lcset}{\rid}(\rid', \pvar{x}, hl, shl, S)>
    \end{proofjump} \\
    \POST{\exists S \st \region{lcset}{\rid}(\rid', \pvar{x}, hl, shl, S)}
  \end{proofjump} \\
  \POST{
    \exists S, S', snl' \st
    \region{lcset}{\rid}(\rid', \pvar{x}, hl, shl, S') * \done{\rid}{(S, S \union \set{\pvar{e}})} * {} \\
    \guardA{\guard{k}(\pvar{p}, \pvar{pl}, \_, v,\layer, \pvar{n})}{\rid'} *
    \oblA{\obl{k}(\layer)}{\rid'} *
    \guardA{\guard{w}(\pvar{n}, \_, snl', \wtv[2])}{\rid'} *
    \ap{P}(snl', \onehalf)
  }
\end{proofoutline} \caption{Details of \code{unlock(cl)} in \code{add}.
  \cref{step:add-unlock-lift} is \explainproofjump{step:add-unlock-lift}.}
\label{fig:lc-set-add-unlock}
\end{mathfig}

\paragraph{Proof of \code{makeSet}, \code{member} and \code{remove}}
We omit the proofs of the \code{makeSet}, \code{member} and \code{remove} operations
as they do not add much to the presentation.
\code{makeSet} can be proved as standard by keeping track of the nodes
created locally and with a final viewshift to create the two nested regions
representing an empty set.
The hard part of the proofs of \code{member} and \code{remove} is the call
to \code{locate} which has been already presented in detail.
The rest is handled analogously to \code{add}.

\clearpage

\section{Programming Language Definition}
\label{app:command-semantics}

We will make regular use of partial functions.
We write $X \pto Y$
for the set of
partial function from $X$ to $Y$, and  $X \finpto Y$ for the set of finite
partial function.
Given $f \from X \pto Y$, we write $f(x) = \bot$  if $f$ is undefined
on $x$,  and
$\dom(f) \is \set{x | f(x) \neq \bot}$.
We will use the notation $\map{x_1 -> y_1;;x_n -> y_n}$
for the finite function that maps each of the $x_i$ to $y_i$
and is undefined on any other input.
Given elements $x\in X$ and  $y \in Y$, and functions
  $f\from X \pto Y$ and
  $g\from X' \pto Y'$,
we define the functions $f\map{x -> y}$ and  $ f\dunion g$ by:
\begin{align*}
  (f\map{x -> y}) (z) & \is
      \begin{cases}
        y \CASE z = x\\
        f(z) \OTHERWISE
      \end{cases}
  \\
 (f\dunion g)(x)  & \is
      \begin{cases}
        f(x) \CASE x \in \dom(f)\\
        g(x) \CASE x \in \dom(g)
      \end{cases}
  & \text{if} \dom(f)\inters\dom(g) = \set{}
\end{align*}
We write $f\map{x->\bot}$ for the partial function that is undefined on $x$
but otherwise behaves  like $f$.
The union of two partial function $f \union g$ is a well-defined partial function as long as
$f(x) = g(x)$ where their domains overlap.

We use
the \emph{set of Booleans},
  $\Bool \is \set{\True,\False} \ni b, b_1, b_2$,
a \emph{set of values},
  $\Val \is  \Int \cup \Bool \ni v, v_1, v_2, \cdots$,
a  \emph{set of program variables},
  $\PVar \ni \pvar{x}, \pvar{y}, \cdots$,
and a \emph{set of function names},
  $\FName \ni \code{f}, \code{g}, \cdots$.
The set $\PVar$ contains a special element,
$\pvar{ret}$, that holds a function's return value.
Heap addresses are represented by natural numbers, $\Addr \is \Nat$.
The natural numbers in $\Val$ represent both numeric values and heap addresses.

\begin{definition}[Numeric and Boolean Expressions]
\label{def:expressions}
  Let $\Type{Vars}$ be an arbitrary  set of variables,
  and $\Type{Values}$ and arbitrary set of values.
  The \emph{set of numerical expressions},
    $\Exp{\Type{Vars},\Type{Values}} \ni \vexp, \vexp_1, \vexp_2, \cdots$,
  and the \emph{set of boolean expressions},
    $\BExp{\Type{Vars},\Type{Values}} \ni \bexp, \bexp_1, \bexp_2, \cdots$,
  are defined by the grammars:

  \begin{grammar}
    \vexp \is v
            | x
            | \vexp + \vexp
            | \vexp - \vexp
            | \vexp * \vexp
            | \cdots
    & \text{where } v \in \Type{Values}, x \in \Type{Vars}
    \\
    \bexp \is b
            | x
            | \neg \bexp
            | \bexp \land \bexp
            | \vexp = \vexp
            | \vexp < \vexp
            | \cdots
    & \text{where } b \in \Bool, x \in \Type{Vars}
  \end{grammar}

  The numeric and Boolean program expressions  are defined
  by the sets $\Exp{\PVar,\Val}$ and $\BExp{\PVar,\Val}$ respectively.
  In \cref{sec:assertions}, we
  also work with logical expressions built from both program and
  logical variables and values,
  hence the reason for the expression
  definition defined over an arbitrary variable and value sets.

  The functions $\fve$ and $\fvb$ provide  the sets of free variables for
  the  numeric and Boolean  expressions respectively. They are defined
  inductively on the structure of expressions by:

  \vspace{\abovedisplayskip}\par \noindent
  \begin{minipage}{.5\linewidth}
  \[
  \begin{array}{l@{}l}
    \fve(v) = \emptyset & \mathllap{v \in \Type{Values} }\\
    \fve(\pvar{x}) = \set{\pvar{x}} & \mathllap{\pvar{x} \in \Type{Vars}}\\
    \fve(\vexp_{1} + \vexp_{2}) = \fve(\vexp_{1}) \cup \fve(\vexp_{2}) & \\
    \fve(\vexp_{1} - \vexp_{2}) = \fve(\vexp_{1}) \cup \fve(\vexp_{2}) & \\
    \fve(\vexp_{1} * \vexp_{2}) = \fve(\vexp_{1}) \cup \fve(\vexp_{2}) & \\
    \cdots
  \end{array}
  \]
  \end{minipage}\begin{minipage}{.5\linewidth}
  \[
  \begin{array}{l@{}l}
    \fvb(b) = \emptyset & \mathllap{b \in \set{\True, \False} }\\
    \fvb(\pvar{x}) = \set{\pvar{x}} & \mathllap{\pvar{x} \in \Type{Vars}}\\
    \fvb(\neg \bexp) = \fvb(\bexp) & \\
    \fvb(\bexp_{1} \land \bexp_{2}) = \fvb(\bexp_{1}) \cup \fvb(\bexp_{2}) & \\
    \fvb(\vexp_{1} = \vexp_{2}) = \fve(\vexp_{1}) \cup \fve(\vexp_{2}) & \\
    \fvb(\vexp_{1} < \vexp_{2}) = \fve(\vexp_{1}) \cup \fve(\vexp_{2}) & \\
    \cdots
  \end{array}
  \]
  \end{minipage}
\end{definition}

\begin{mathfig}
  \[
    \begin{array}{r@{\;}c@{\;}l@{\hspace{5em}}l}
      \cmd  &::=  & \acode{skip}                          & (\text{skip})\\
      & | & \acode{x:=EXP}                                & (\text{assignment})\\
      & | & \acode{x:=[EXP]}                              & (\text{read})\\
      & | & \acode{[EXP]:=EXP}                            & (\text{write})\\
      & | & \acode{x:=CAS(EXP,EXP,EXP)}                   & (\text{compare-and-swap})\\
      & | & \acode{x:=FAS(EXP,EXP,EXP)}                   & (\text{fetch-and-set})\\
      & | & \acode{x:=alloc(EXP)}                         & (\text{allocate})\\
      & | & \acode{dealloc(EXP)}                          & (\text{deallocate})\\
      & | & \acode{CMD;CMD}                               & (\text{sequential composition})\\
      & | & \cmd \parallel \cmd                           & (\text{parallel composition}) \\
      & | & \acode{let f(VARS)=CMD in CMD}                & (\text{function definition})\\
      & | & \acode{var x = EXP in CMD}                    & (\text{local variable binding}) \\
      & | & \acode{if(BEXP)\{CMD\}else\{CMD\}}            & (\text{if}) \\
      & | & \acode{while(BEXP)\{CMD\}}                    & (\text{while loop})\\
      & | & \acode{x:=f($\vec{\vexp}$)}                   & (\text{function call})\\
      & | & \acode{<<CMD>>}                               & (\text{primitive atomic block})
    \end{array}
  \]
  \caption{Syntax of commands}
  \label{app:fig:commands}
\end{mathfig}

\begin{definition}[Commands]
  The \emph{set of commands}, $\Cmd \ni \cmd$, is defined by the grammar
  in \cref{app:fig:commands} where
    $\vexp \in \Exp{\PVar,\Val}$,
    $\bexp \in \BExp{\PVar,\Val}$,
    $\pvar{x} \in \PVar$,
    $\pvars{x} \in \PVar^*$ is a list of pairwise distinct variables, and
    $\p{f} \in \FName$.
\end{definition}

We use [$\vexp$] to denote the value of the heap cell with address
given by $\vexp$.  In \cref{fig:freevars-mods}, we define operators
$\fv$ and $\mods$, which identify the variables that a command can
access and the variables that are potentially modified by a command,
respectively.
In a command $\cmd_1 \parallel \cmd_2$, we apply a
strong syntactic restriction that
${\mods(\cmd_{1}) = \mods(\cmd_{2}) = \emptyset}$. Each individual
thread is still able to modify variables that are created locally and
to modify shared heap cells, but are not allowed to modify the free
variables.\footnote{To lift this restriction, one could use standard techniques,
  such as ``variables as resources''~\cite{BornatCY06}.
  Our restriction minimises the noise
  generated by handling local state in the formalisation
  of the model and the assertions.
  Note that expressivity is not really limited by our restriction:
  any local variable in the scope common to both threads, that needs to be modified,
  can be instead implemented by using a shared memory cell.
}
In a function definition
$\acode{let f(x$_1$,$\ldots$,x$_n$)=$\cmd_1$ in\ $\cmd_2$}$,
we use the natural restriction
$\fv(\cmd_1) \subseteq \set{ \pvar{x}_1,\dots,\pvar{x}_n, \pvar{ret}}$.
Also for simplicity, we assume each function name is given a definition at most once.
The function $\funnames\from\Cmd \to \powerset(\FName)$
returns the function names occurring in $\Cmd$
that are not bound by a \code{let}.

\begin{figure}[tb]
  \centering\footnotesize\noindent \begin{minipage}{.5\linewidth}
\[
\let\fv\progvars
\begin{array}{l}
  \fv(\p{skip}) = \emptyset \\
  \fv(\p{x\:= $\vexp$}) = \set{\pvar{x}} \cup \fve(\vexp) \\
  \fv(\p{x\:= [$\vexp$]}) = \set{\pvar{x}} \cup \fve(\vexp) \\
  \fv(\p{[$\vexp_{1}$]\:= $\vexp_{2}$}) = \fve(\vexp_{1}) \cup \fve(\vexp_{2}) \\
  \fv(\p{x\:= CAS($\vexp_{1}$,$\vexp_{2}$,$\vexp_{3}$)}) = \\
    \qquad\set{\pvar{x}} \cup \fve(\vexp_{1}) \cup \fve(\vexp_{2}) \cup \fve(\vexp_{3}) \\
  \fv(\p{x\:= alloc($\vexp$)}) = \set{\pvar{x}} \cup \fve(\vexp) \\
  \fv(\p{dealloc($\vexp$)}) = \fve(\vexp) \\
  \fv(\p{let $\pvar{f}$($\vec{\pvar{x}}$)\,=\,$\cmd_{\pvar{f}}$\;in\;$\cmd$}) = \fv(\cmd) \\
  \fv(\p{var $\pvar{x} = \vexp$ in $\cmd$}) = (\fv(\cmd) \setminus \set{\pvar{x}}) \cup \fve(\vexp)\\
  \fv(\p{if($\bexp$)\{$\cmd_{1}$\}else\{$\cmd_{2}$\}}) = \fvb(\bexp) \cup \fv(\cmd_{1}) \cup \fv(\cmd_{2}) \\
  \fv(\p{while($\bexp$)\{$\cmd$\}}) = \fvb(\bexp) \cup \fv(\cmd) \\
  \fv(\p{x\:= f($\vec{\vexp}$)}) = \set{\pvar{x}} \cup \fve(\vexp) \\
  \fv(\cmd_1\p;\cmd_2) = \fv(\cmd_1) \cup \fv(\cmd_2) \\
  \fv(\cmd_{1} \parallel \cmd_{2}) = \fv(\cmd_{1}) \cup \fv(\cmd_{2})
\end{array}
\]
\end{minipage}\begin{minipage}{.5\linewidth}
\[
\begin{array}{l}
  \mods(\p{skip}) = \emptyset \\
  \mods(\p{x\:= $\vexp$}) = \set{\pvar{x}} \\
  \mods(\p{x\:= [$\vexp$]}) = \set{\pvar{x}} \\
  \mods(\p{[$\vexp_{1}$]\:= $\vexp_{2}$}) = \emptyset \\
  \mods(\p{x\:= CAS($\vexp_{1}$,$\vexp_{2}$,$\vexp_{3}$)}) = \set{\pvar{x}} \\
  \mods(\p{x\:= alloc($\vexp$)}) = \set{\pvar{x}} \\
  \mods(\p{dealloc($\vexp$)}) = \emptyset \\
  \mods(\p{let $\pvar{f}$($\vec{\pvar{x}}$) = $\cmd_{\pvar{f}}$ in $\cmd$}) = \mods(\cmd) \\
  \mods(\p{var $\pvar{x} = \vexp$ in $\cmd$}) = \mods(\cmd) \setminus \set{\pvar{x}} \\
  \mods(\p{if($\bexp$)\{$\cmd_{1}$\}else\{$\cmd_{2}$\}}) = \mods(\cmd_{1}) \cup \mods(\cmd_{2}) \\
  \mods(\p{while($\bexp$)\{$\cmd$\}}) = \mods(\cmd) \\
  \mods(\p{x\:= f($\vec{\vexp}$)}) = \set{\pvar{x}} \\
  \mods(\cmd_1\p;\cmd_2) = \mods(\cmd_1) \cup \mods(\cmd_2) \\
  \mods(\cmd_{1} \parallel \cmd_{2}) = \mods(\cmd_{1}) \cup \mods(\cmd_{2})
\end{array}
\]
\end{minipage}
   \caption{The sets of free and modified program variables}
  \label{fig:freevars-mods}
\end{figure}

\begin{definition}[Variable Store]
  A \emph{program variable store}, $\store \in \Store \is \PVar
  \pto \Val$, is a finite partial function from program variables to values.
The \emph{right-biased union} of variable stores,
$\store_1 \sunion \store_2$,
is defined~by:
  \[
  (\store_1 \sunion \store_2)(\pvar{x}) =
  \begin{cases}
    \store_2(\pvar{x}) \CASE \pvar{x} \in \dom(\store_2) \\
    \store_1(\pvar{x}) \OTHERWISE
  \end{cases}
  \]
\end{definition}

\begin{definition}[Expression evaluation]
\label{def:expr-eval}
  Let $\varsigma \from \Type{Vars} \finpto \Type{Values}$ be an arbitrary function from an
  arbitray set of variables to values.
  The \emph{numeric expression evaluation function},
  $\esem{\hole}(\varsigma) \from \Exp{\Type{Vars},\Type{Values}} \to \Type{Values}$,
  and
  the \emph{Boolean expression evaluation function},
  ${\bsem{\hole}(\varsigma) \from \BExp{\Type{Vars,\Type{Values}}} \to \Bool}$,
  are defined~by:
  \begin{align*}
    \esem{v}(\varsigma) & = v &
      \bsem{b}(\varsigma) & = b \\
    \esem{\pvar{x}}(\varsigma) & = \varsigma(\pvar{x}) &
      \bsem{\neg \bexp}(\varsigma) & = \neg \bsem{\bexp}(\varsigma) \\
   \esem{\vexp_1 + \vexp_2}(\varsigma) & = \esem{\vexp_1}(\varsigma) + \esem{\vexp_2}(\varsigma) &
      \bsem{\bexp_1 \land \bexp_2}(\varsigma) & = \bsem{\bexp_1}(\varsigma) \land \bsem{\bexp_2}(\varsigma) \\
    \esem{\vexp_1 - \vexp_2}(\varsigma) & = \esem{\vexp_1}(\varsigma) - \esem{\vexp_2}(\varsigma) &
      \bsem{\vexp_1 = \vexp_2}(\varsigma) & = (\esem{\vexp_1}(\varsigma) = \esem{\vexp_2}(\varsigma)) \\
    \esem{\vexp_1 \cdot \vexp_2}(\varsigma) & = \esem{\vexp_1}(\varsigma) \cdot \esem{\vexp_2}(\varsigma) &
      \bsem{\vexp_1 < \vexp_2}(\varsigma) & = (\esem{\vexp_1}(\varsigma) < \esem{\vexp_2}(\varsigma)) \\
    & \cdots &
    & \cdots
  \end{align*}
\end{definition}
The program expressions are evaluated using program store $\store \in \Store$.
In \cref{sec:assertions}, we also work with
logical expressions which are evaluated over both program and logical
variables and values.
The right-biased union of stores is used to describe
how, when nesting scopes, a variable occurrence is bound by the innermost binder surrounding it.
The notation  \acode{var x1,x2...,xn in CMD} denotes
\acode{var x1=0 in var x2=0 in ... var xn=0 in CMD}.

\begin{definition}[Heap]
  A \emph{heap}, $h \in \Heap \is \Addr \finpto \Val$, is a finite partial function from addresses to values.
  The \emph{set of  heaps}, $\Heap$,  forms a PCM
    $(\Heap, \dunion, \set{\emptyset})$
  with $h_1 \dunion h_2$ defined only if
  ${\dom(h_1) \inters \dom(h_2) = \emptyset}$.
\end{definition}

\begin{definition}[Function Implementation Context]
  A \emph{function implementation context},
  $
    \functxt \in \FunCtxt \is
      \FName \pto (\PVar^{*}, \Cmd)
  $,
  is a finite partial function from function names to
  pairs comprising a finite list of distinct variables and a command.
\end{definition}
We write $
  \functxt(\pvar{f}) = (\pvars{x}, \cmd ),
$ where variable list $\pvars{x}$ represents the
function arguments  and $\cmd$ represents the function body.
We use the notation $\functxt_{\mathsf{var}}$ and $\functxt_{\mathsf{cmd}}$
to refer to the arguments and function body of \pvar{f} respectively.

In order to describe the behaviour of local variable binding and
function calls,
we define program states which extends commands with variable stores. For
example, the program state  $(\store, \cmd)$ indicates that the
command $\cmd$ is evaluated in the current  store updated with the
variables in $\store$.

\begin{definition}[Program States]
  The \emph{set of program states}, $\PState \ni \pstate, \pstate_1, \pstate_2,
  \cdots$  is defined by the grammar:
  \begin{grammar}
    \pstate
      \is \checkmark
      |   (\store, \pstate)
      |   \pstate; \cmd
      |   \acode{let f(VARS) = CMD in\ $\pstate$}
      |   \pstate \parallel \pstate
      |   \cmd
  \end{grammar}
\end{definition}

The   $\checkmark$ indicates a terminated program. It is a technical
device so that every $\cmd \in \Cmd$, including $ \code{skip} $,
takes at least one step.

In the operational semantics,
we need to keep track of which thread is originating each step
to be able to define later concepts of fairness of the scheduling.
We do this tracking using \emph{thread identifiers}
$ t \in \TId \is \set{\threadL,\threadR}^* $
which are strings of letters 
  $\threadL$ (for the left thread) and
$\threadR$ (for the right thread).
$\epsilon$ will be used to denote the thread identifier which is an empty sequence.
Intuitively, such a string identifies a single thread as the path in the syntax tree of parallel compositions at which the thread is found.

\begin{definition}[Command Semantics]
  A \emph{scheduler annotation} $ \sched $ is an element of the set
  \[
    \Type{Sched} \is
      \set{\LocOf{t} | t \in \TId}
        \dunion
      \set{\env}
      .
  \]
A~\emph{program configuration} $\pconf$ is an element of the set
  $ \Type{PConf} \is (\Store \times \Heap \times \PState) \dunion \set{\fault} $.
  Let $\functxt \in \FunCtxt$.
  The \emph{operational semantics of the commands} is given by the labelled relation,
  $
    {\redto} \subseteq \Type{PConf} \times \Type{Sched} \times \Type{PConf}
  $,
  defined in \cref{fig:oper-semantics} and \cref{fig:oper-semantics-fail}.
  We write $ a \step{\sched} b$ for $ (a,\sched,b) \in {\redto} $.
  We also define $ {\locsteps} \is (\union_{t \in \TId} \step{\loc_{t}} )^* $.
\end{definition}

To simplify the development, in our programming language
the initial state's store assigns arbitrary values
to the free variables of a program.
With such assumption, every reference to a local variable
will be in the domain of the current store.
This ensures that in every application of the rules
in \cref{fig:oper-semantics} and \cref{fig:oper-semantics-fail} to construct a trace,
the evaluations of (boolean) expressions are well-defined.

\begin{figure}[tbp]
  \thisfloatpagestyle{empty}\centering\footnotesize \begin{mathpar}
  \infer{
  }{
    \store, h, \p{skip} \step{\LocOf{\epsilon}} \store, h, \checkmark
  }
\and \infer{
  }{
    \store, h, \p{x\:= \vexp}
      \step{\LocOf{\epsilon}}
    \store[\pvar{x} \mapsto \esem{\vexp}(\store)], h, \checkmark
  }
\\ \infer{
    \esem{\vexp}(\store) \in \dom(h)
  }{
    \store, h, \p{x\:= [$\vexp$]}
      \step{\LocOf{\epsilon}}
    \store[\pvar{x} \mapsto h(\esem{\vexp}(\store))], h, \checkmark
  }
\and \infer{
    \esem{\vexp_1}(\store) \in \dom(h)
  }{
    \store, h, \p{[$\vexp_1$]\:= $\vexp_2$}
      \step{\LocOf{\epsilon}}
    \store, h\Map{{\esem{\vexp_1}(\store)} -> {\esem{\vexp_2}(\store)}}, \checkmark
  }
\and \infer{
    \esem{\vexp_1}(\store) \in \dom(h)
      \and
    h(\esem{\vexp_1}(\store)) = \esem{\vexp_2}(\store)
  }{
    \store, h, \p{x\:= CAS($\vexp_1$,$\vexp_2$,$\vexp_3$)}
      \step{\LocOf{\epsilon}}
    \store\map{{\pvar{x}} -> 1}, h\map{{\esem{\vexp_1}(\store)} -> {\esem{\vexp_3}(\store)}}, \checkmark
  }
\and \infer{
    \esem{\vexp_1}(\store) \in \dom(h) \and
    h(\esem{\vexp_1}(\store)) \neq \esem{\vexp_2}(\store)
  }{
    \store, h, \p{x\:= CAS($\vexp_1$,$\vexp_2$,$\vexp_3$)}
      \step{\LocOf{\epsilon}}
    \store\map{{\pvar{x}} -> 0}, h, \checkmark
  }
\and \infer{
    a=\esem{\vexp_1}(\store) \in \dom(h) \and
    v=\esem{\vexp_2}(\store)
  }{
    \store, h, \p{x\:= FAS($\vexp_1$,$\vexp_2$)}
      \step{\LocOf{\epsilon}}
    \store\map{{\pvar{x}} -> h(a)}, h\map{a->v}, \checkmark
  }
\and \infer{
    l = \esem{\vexp}(\store) \and
    l > 0 \and
    \set{r, r + 1, \cdots, r + l - 1} \cap \dom(h) = \emptyset \and
    v_0, v_1, \cdots, v_{l - 1} \in \Val
  }{
    \store, h, \p{x\:= alloc($\vexp$)}
      \step{\LocOf{\epsilon}}
    \store[\pvar{x} \mapsto r], h\bigl[r \mapsto v_0, r + 1 \mapsto v_1, \cdots, r + l - 1 \mapsto v_{l - 1}\bigr], \checkmark
  }
\\ 

\infer{
    \esem{\vexp}(\store) \in \dom(h)
  }{
    \store, h, \p{dealloc($\vexp$)}
      \step{\LocOf{\epsilon}}
    \store, h\map{{\esem{\vexp}(\store)} -> \bot}, \checkmark
  }
\end{mathpar}
\begin{mathpar}
\infer{
    \store, h, \pstate \step[\functxt']{\LocOf{t}} \store', h', \pstate' \and
    \functxt' = \functxt[\pvar{f} \mapsto (\vec{\pvar{x}}, \cmd_{\pvar{f}})]
  }{
    \store, h, \p{let\ f($\vec{\pvar{x}}$)\,=\,$\cmd_{\pvar{f}}$ in\ $\pstate$}
      \step{\LocOf{t}}
    \store', h', \p{let\ f($\vec{\pvar{x}}$)\,=\,$\cmd_{\pvar{f}}$ in\ $\pstate'$}
  }
\and \infer{
  }{
    \store, h, \p{let\ f($\vec{\pvar{x}}$)\,=\,$\cmd_{\pvar{f}}$ in\ \checkmark} \step{\LocOf{\epsilon}} \store, h, \checkmark
  }
\\ \infer{
  }{
    \store, h, \p{var $\pvar{x} = \vexp$ in\ $\cmd$}
      \step{\LocOf{\epsilon}}
    \store, h, ([\pvar{x} \mapsto \esem{\vexp}(\store)], \cmd)
  }
\and \infer{
  }{
    \store, h, (\store', \checkmark)
      \step{\LocOf{\epsilon}}
    \store, h, \checkmark
  }
\and \infer{
    \store \sunion \store_1, h, \pstate
      \step{\LocOf{t}}
    \store' \sunion \store_1', h', \pstate' \and
    \dom(\store) = \dom(\store') \and
    \dom(\store_1) = \dom(\store_1 ')
  }{
    \store, h, (\store_1, \pstate)
      \step{\LocOf{t}}
    \store', h', (\store_1', \pstate')
  }
\and \infer{
    \bsem{\bexp}(\store)
  }{
    \store, h, \p{if(\bexp)\{$\cmd_1$\}else\{$\cmd_2$\}}
      \step{\LocOf{\epsilon}}
    \store, h, \p{$\cmd_1$}
  }
\and \infer{
    \neg \bsem{\bexp}(\store)
  }{
    \store, h, \p{if(\bexp)\{$\cmd_1$\}else\{$\cmd_2$\}}
      \step{\LocOf{\epsilon}}
    \store, h, \p{$\cmd_2$}
  }
\and \infer{
    \bsem{\bexp}(\store)
  }{
    \store, h, \p{while(\bexp)\{\cmd\}}
      \step{\LocOf{\epsilon}}
    \store, h, \p{\cmd;while(\bexp)\{\cmd\}}
  }
\and \infer{
    \neg \bsem{\bexp}(\store)
  }{
    \store, h, \p{while(\bexp)\{\cmd\}}
      \step{\LocOf{\epsilon}}
    \store, h, \p{\checkmark}
  }
\and \infer{
    \functxt(\p{f}) = (\vec{\pvar{x}}, \cmd)
  }{
    \store, h, \p{y\:= f($\vec{\vexp}$)}
      \step{\LocOf{\epsilon}}
    \store, h, \p{var ret=0 in (var $\vec{\pvar{x}} = \vec{\vexp}$ in \cmd); y\:=ret}
  }
  \and \infer{
    \store, h, \p{$\pstate_1$} \step{\LocOf{t}} \store', h', \p{$\pstate_1'$}
  }{
    \store, h, \p{$\pstate_1$;$\cmd_2$} \step{\LocOf{t}} \store', h', \p{$\pstate_1'$;$\cmd_2$}
  }
\and \infer{
  }{
    \store, h, \p{$\checkmark$;$\cmd$}
      \step{\LocOf{\epsilon}}
    \store, h, \p{$\cmd$}
  }
\and \infer{
    \store, h, \pstate_1
      \step{\LocOf{t}}
      \store', h', \pstate_1'
  }{
    \store, h, \pstate_1 \parallel \pstate_2
      \step{\LocOf{\threadL t}}
    \store', h', \pstate_1' \parallel \pstate_2
  }
\and \infer{
    \store, h, \pstate_2
      \step{\LocOf{t}}
      \store', h', \pstate_2'
  }{
    \store, h, \pstate_1 \parallel \pstate_2
      \step{\LocOf{\threadR t}}
    \store', h', \pstate_1 \parallel \pstate_2'
  }
\and \infer{
  }{
    \store, h, \checkmark \parallel \checkmark
      \step{\LocOf{\epsilon}}
    \store, h, \checkmark
  }
\and \infer{
    \store, h, \cmd \locsteps \store', h', \checkmark
  }{
    \store, h, \atombra{\cmd} \step{\LocOf{\epsilon}} \store', h', \checkmark
  }
\and \infer{
    h'\in \Heap
  }{
    \store, h, \pstate \envstep \store, h', \pstate
  }
\end{mathpar}
   \caption{The small-step operational semantics}
  \label{fig:oper-semantics}
\end{figure}

\begin{figure}[tbp]
  \thisfloatpagestyle{empty}\centering\footnotesize \begin{mathpar}
  \infer{
    \esem{\vexp}(\store) \not\in \dom(h)
  }{
    \store, h, \p{x\:= [$\vexp$]} \step{\LocOf{\epsilon}} \fault
  }
\and \infer{
    \esem{\vexp_1}(\store) \not\in \dom(h)
  }{
    \store, h, \p{[$\vexp_1$]\:= $\vexp_2$} \step{\LocOf{\epsilon}} \fault
  }
\and \infer{
    \esem{\vexp_1}(\store) \not\in \dom(h)
  }{
    \store, h, \p{x\:= CAS($\vexp_1$,$\vexp_2$,$\vexp_3$)}
      \step{\LocOf{\epsilon}}
    \fault
  }
\\ \infer{
    \esem{\vexp_1}(\store) \not\in \dom(h)
  }{
    \store, h, \p{x\:= FAS($\vexp_1$,$\vexp_2$)}
      \step{\LocOf{\epsilon}}
    \fault
  }
\and \infer{
    \esem{\vexp}(\store) \not\in \dom(h)
  }{
    \store, h, \p{dealloc($\vexp$)} \step{\LocOf{\epsilon}} \fault
  }
\\ \infer{
    \store, h, \pstate \step[\functxt']{\LocOf{t}} \fault \and
    \functxt' = \functxt[\pvar{f} \mapsto (\vec{\pvar{x}}, \cmd_{\pvar{f}})]
  }{
    \store, h, \p{let\ f($\vec{\pvar{x}}$)\,=\,$\cmd_{\pvar{f}}$ in\ $\pstate$} \step{\LocOf{t}} \fault
  }
\and \infer{
    \store \sunion \store', h, \pstate \step{\LocOf{t}} \fault
  }{
    \store, h, (\store', \pstate) \step{\LocOf{t}} \fault
  }
  \\ \infer{
    \p{f} \not\in \dom(\functxt)
  }{
    \store, h, \p{y\:= f($\vec{\vexp}$)} \step{\LocOf{\epsilon}} \fault
  }
\and \infer{
    \store, h, \p{$\pstate_1$} \step{\LocOf{t}} \fault
  }{
    \store, h, \p{$\pstate_1$;$\cmd_2$} \step{\LocOf{t}} \fault
  }
\and \infer{
    \store, h, \pstate_1 \step{\LocOf{t}} \fault
  }{
    \store, h, \pstate_1 \parallel \pstate_2 \step{\LocOf{\threadL t}} \fault
  }
\\ \infer{
    \store, h, \cmd \locsteps \fault
  }{
    \store, h, \atombra{\cmd} \step{\LocOf{\epsilon}} \fault
  }
\and \infer{
    \store, h, \pstate_2 \step{\LocOf{t}} \fault
  }{
    \store, h, \pstate_1 \parallel \pstate_2 \step{\LocOf{\threadR t}} \fault
  }
\and \infer{
    \pconf \in \PConf
  }{
    \pconf \envstep \fault
  }
\end{mathpar}
   \caption{The small-step operational semantics, failure cases}
  \label{fig:oper-semantics-fail}
\end{figure}

\begin{definition}[Threads]
  Given a program state $\pconf \in \PConf$,
  the set $\threads(\pconf)$ is the set of threads of $\pconf$
  that can take a step.
  The function $\threads\from\PConf \to \powerset(\TId)$
  is defined as follows:
  \begin{align*}
    \threads(\fault) &\is \emptyset \\
    \threads(\pconf) &\is
    \set{ t \in \TID |
      \pconf \step[\functxt]{\LocOf{t}} \_
    }
  \end{align*}
\end{definition}

\begin{definition}[Program Traces and Fairness]
\label{def:prog-traces}
\label{def:fairness}
  We call \emph{program traces},
  the infinite sequences of the form
  $
    \pconf[0] \schpl[0] \pconf[1] \schpl[1] \cdots
  $
  where,
for all $i\in\Nat$,
      $\pconf[i] \in \Type{PConf}$,
      ${\schpl[i]} \in \Type{Sched}$.
  We use $\ptrace$ for ranging over infinite suffixes of program traces
  and $\PTrace$ for the set of all program traces.
  For a program trace
    $ \ptrace = \pconf[0] \schpl[0] \pconf[1] \schpl[1] \cdots $,
  we define $\trAt[\ptrace]{i} \is (\pconf[i],\schpl[i])$,
and $\trFrom[\ptrace]{i} \is \pconf[i] \schpl[i] \pconf[i+1] \schpl[i+1] \cdots$.
We define the \emph{set of \pre\functxt-program traces}
  \[
  \PTrace_{\functxt} \is
  \set{ \pconf[0]\schpl[0]\pconf[1]\schpl[1] \cdots
    | \forall i\in\Nat\st
    \pconf[i] \step{\schpl[i]} \pconf[i+1]
  }.
  \]
  A program trace
    $ (\pconf[0]\schpl[0]\pconf[1]\schpl[1]\cdots) \in \PTrace[\functxt] $
  is \emph{(weakly) fair} if and only if:
  \begin{gather}
    \forall i \in \Nat \st
      \forall t \in \threads(\pconf[i]) \st
        \exists j \ge i\st
          ({\schpl[j]} = \LocOf{t} \lor
          \pconf[j] = \fault)
    \\
    \forall i \in \Nat \st
      \exists j \ge i \st \schpl[j] = \env
  \end{gather}
  That is: a trace is fair if, at any point in time,
  every thread that can take a step (and the environment)
  will eventually be scheduled.
\end{definition}

The open-world program semantics defines the behaviour of a command
when run concurrently with an arbitrary environment.
This semantics interleaves steps from two ``players'':
  the local thread given by the   $\loc$  relation; and its
  environment given by the $\env$ relation,  respectively.

\begin{definition}[Open World Semantics]
\label{def:prog-semantics}
  We call \emph{traces}
  the infinite sequences
  $
    {\conf[0] \pl[0] \conf[1] \pl[1] \cdots}
  $
  where,
for all $i\in\Nat$,
      $\conf[i] \in \Conf \is (\Store \times \Heap) \union \set{\fault}$,
      ${\pl[i]} \in \set{\loc,\env}$.
  We use $\trace$ for ranging over infinite suffixes of traces
  and $\Trace$ for the set of all traces.
  For a trace $ \trace = \conf[0] \pl[0] \conf[1] \pl[1] \cdots $,
  we define $\trAt{i} \is (\conf[i],\pl[i])$,
and $\trFrom{i} \is \conf[i] \pl[i] \conf[i+1] \pl[i+1] \cdots$.
The function $\stripTr{\hole} \from \PTrace \to \Trace$ is defined by
  $
    \stripTr{\pconf[0]\schpl[0]\pconf[1]\schpl[1]\cdots}
      \is
        \conf[0] \pl[0] \conf[1] \pl[1] \cdots
  $
  where
  \begin{align*}
    \conf[i] &\is
      \begin{cases}
        (\store, h) \CASE \pconf[i] = (\store,h,\wtv[2])\\
        \fault      \CASE \pconf[i] = \fault
      \end{cases}
    &
    \pl[i] &\is
      \begin{cases}
        \loc \CASE {\schpl[i]} \in \Type{Sched}\setminus \set{\env}\\
\env \CASE {\schpl[i]} = {\env} \\
      \end{cases}
  \end{align*}

  The \emph{open-world program semantics function},
  $
    \sem{\hole}_{\functxt} \from \Cmd \to \powerset(\Trace)
  $
  is the function such that
  \[
    \sem{\cmd}_{\functxt}
      \is
      \Set{ \stripTr{\pconf[0] \ptrace}
        |
          (\pconf[0]\ptrace) \in \PTrace[\functxt],
          \fv(\cmd) \subseteq \dom(\store_0),
          \pconf[0] = (\store_0,\wtv,\cmd),
          \pconf[0]\ptrace \text{ is fair}
      }
  \]
  The notation $\sem{\cmd}$ is syntactic sugar for $\sem{\cmd}_{\emptyset}$.
\end{definition}

\begin{definition}
  A trace $\trace \in \Trace$ is \emph{locally terminating},
  written $\locterm(\trace)$,
  if it contains finitely many occurrences of $\loc$.
\end{definition}

\begin{remark}[Design of semantics]
  We made some design choices in crafting this semantics,
  with the motivation of making manipulation easier in the proofs.
The first choice is to model environmental steps explicitly.
  These steps drive the argument about progress in the presence of blocking,
  where the local thread is not able to make progress in isolation
  but is relying on the environment actively performing some state changes
  that would lead to local progress.

  The second choice we highlight is that the semantics of a program
  only contains infinite traces.
  This might seem odd when the goal is proving termination.
  Traces that locally terminate simply have an infinite tail of environment steps.
  To simulate a closed system one can select for the traces where the environment steps preserve the heaps.
  More importantly, we strip the information about threads and program state,
  which means that information about when the local thread terminated
  (in the form of $\checkmark$ or $\EndOf{t}$)
  has been erased.
  However, by construction,
  traces obtained from fair program traces
  can only contain finitely many local steps
  if the program terminated,
  justifying our definition of local termination.
\end{remark}

\begin{example}
  The traces in
  $
    \sem[\big]{\p{[x]\:=y}}
  $
  can be characterised as follows.
  They all start from some configuration $(\store,h_0)$
  with $ x,y\in\dom(\store) $.
  A (possibly zero) finite number of environment steps follow;
  these steps preserve the store, but arbitrarily alter the heap,
  or they lead to a fault, terminating the trace with an infinite tail of
  $\fault \env \fault \env \cdots$ steps.
  If no fault happened,
  a local step is taken from some configuration $(\store,h)$
  for an arbitrary $h\in \Heap$.
  If $\store(x)\not\in\dom(h)$
    then the local step leads to a fault,
    leading again to a $\fault \env \fault \env \cdots$ tail.
  Otherwise,
    it leads to the configuration $(\store, h\map{\store(x)->\store(y)})$.
  After that there is an infinite number of environment steps,
  which again preserve the store but arbitrarily mutate the heap,
  or lead to an infinite fault tail.
\end{example}

 \section{Soundness}
\label{app:soundness}

In this section, we provide the details of the soundness of three rules: \ref{rule:liveness-check}, \ref{rule:parallel}, \ref{rule:while}, \ref{rule:frame}, \ref{rule:envlive-obl} and \ref{rule:envlive-pq}. These are the only proof rules in TaDA-Live that bring in non-trivial liveness information. All other proof rules follow in the same way as for TaDA, with the liveness constraints on the traces being identical between the antecedent and consequent of such rules or being trivial in the case of command axioms. We will focus particularly on the liveness argument for these rules.

We start by giving some technical definitions omitted from the main text,
and then move to the soundness argument.

\subsection{Atomic World Rely}
\label{app:atomic-rely}

Recall that the \emph{atomic world rely relation}, $\relyAt[\actxt]$,
coincides with the smallest reflexive relation closed under
the rules of the world rely (\cref{fig:world-rely}),
with the restriction that \cref{rule:rely-interf,rule:rely-linpt} can be
applied at most once per region identifier.

\begin{definition}[Atomic World Rely Relation]
  \label{def:atomic-rely}
  The atomic world rely relation, $\relyAt[\actxt]$,
  is defined as $\relyAt[\actxt] = \rely[\actxt]^\emptyset$,
  where $\rely[\actxt]^{R}$, taking $R \subseteq \RId$,
  is defined to be the smallest
  reflexive relation closed under:
  \begin{mathpar}
    \infer*[Right=wr$_1$]{
      \guardMap(\rid) \compat G
      \\
      ((a_1,O_1), (a_2,O_2)) \in \glts_{\rt}(G)
      \\
      \atomMap(\rid) \in \set{\blacklozenge, \lozenge}
        \implies a_2 \in \safe(\actxt,\rid)
      \\
      O_2 \compat \oblMap(r)
      \\
      \rid \not\in R
      \\
      (h,
      \regMap \map{ \rid -> (\rt, \lvl, a_2)},
      \guardMap,
      \atomMap,
      \oblMap,
      \envMap\map{ \rid -> O_2 })
      \rely[\actxt]^{R \dunion \set{\rid}} w'
    }{
      (h,
        \regMap \map{ \rid -> (\rt, \lvl, a_1)},
        \guardMap,
        \atomMap,
        \oblMap,
        \envMap\map{ \rid -> O_1 })
      \rely[\actxt]^{R}
      w'
    }
  \\\infer*[Right=wr$_2$]{
      ((a_1 ,O_1), (a_2,O_2)) \in \trrel(\actxt,\rid)
      \\
      O_2 \compat \oblMap(r)
      \\
      \rid \not\in R
      \\
      (h,
      \regMap \map{ \rid -> (\rt, \lvl, a_2)},
      \guardMap,
      \atomMap \map{ \rid -> (a_1, a_2)},
      \oblMap,
      \envMap \map{ \rid -> O_2 })
      \rely[\actxt]^{R \dunion \set{\rid}}
      w'
    }{
      (h,
        \regMap \map{ \rid -> (\rt, \lvl, a_1)},
        \guardMap,
        \atomMap \map{ \rid -> \lozenge},
        \oblMap,
        \envMap \map{ \rid -> O_1 })
      \rely[\actxt]^{R}
      w'
    }
  \end{mathpar}
\end{definition}

\subsection{Environment Liveness Judgement Semantics}
\label{app:envlive-semantics}

We give semantics to the judgements defined in \cref{fig:envlive}.

\newcommand{\sat}{\lfun{sat}}
\newcommand{\act}{\lfun{active}}

\begin{definition}[Auxiliary Environment Liveness Judgement Semantics]
  Let $m \in \Layer, \lvl \in \Level, \actxt \in ACtxt, L, L' \in \Ord \to \Assrt, T \in \Assrt$ such that
  \begin{itemize}
  \item $\STABLE \levl; \actxt |= {\exists \alpha \st L(\alpha)}$.
  \item $\forall \alpha \st \viewshift \lvl; \actxt |= {L'(\alpha)} => {L(\alpha)} $.
  \end{itemize}
  and let
  \begin{align*}
    t_{\store} &= \WorldSem{\actxt}{\store}{T * \True} &
    l_{\store}(\alpha) &= \WorldSem{\actxt}{\store}{L(\alpha)} &
l'_{\store}(\alpha) &= \WorldSem{\actxt}{\store}{L'(\alpha)}
  \end{align*}
  Then, the auxiliary semantic environmental liveness judgement
  $
    \SENVLIVE m; \lvl; \actxt |= L : L' - ->> T
  $
  holds when, for arbitrary $\store \in \Store$, there exist
  $P \subseteq \powerset(\Ord \to \UCWorld[\actxt])$ such that
  $l'_{\store}(\alpha) = \bigcup\limits_{\lvar{lf} \in P} \lvar{lf}(\alpha)$ and
  for all $\lvar{lf} \in P$, either $\forall\alpha\st \lvar{lf}(\alpha) \subseteq t$ or 
  there exists some $\rid \in \RId$ and 
  \[
  \pobl\in \AObl_{<m} \dunion \Set{\live(\actxt,\rid) | \lay(\live(\actxt,\rid)) < m}
  \]
  such that
  \begin{itemize}
    \item $\forall \alpha \in \Ord, w \in \lvar{lf}(\alpha) \st \act_{\rid;\lvl}(w, \pobl)$
    \item
      $
      \forall \alpha_1, \alpha_2 \ge \alpha_1 \st
      \relyAt[\actxt](\lvar{lf}(\alpha_1)) \inters l_{\store}(\alpha_2) \subseteq \lvar{lf}(\alpha_1) \union t
      $
  \end{itemize}
  hold, where:
  \[
  \act_{\rid;\lvl}(w, \pobl) \is
  \begin{cases}
    \lfun{dep}_{\rid;\lvl}(w, \pobl) \land
    \envMap[w](\rid) \resgeq \pobl & \pobl \in \AObl \\
   
    \lfun{dep}_{\rid;\lvl}(w, \pobl) \land
    \atomMap[w](\rid) \compat \lozenge \land
    \astate[w](\rid) \in X \setminus X' & \pobl = X \eventually[k] X'
  \end{cases}
  \]
  and
  \[
  \lfun{dep}_{\rid;\lvl}(w, \pobl) \is
  \forall \rid' \in \dom(\oblMap[w]) \st \lay(\oblMap[w](\rid')) > \lay(\pobl) \land \level[w](\rid) < \lvl
  \]

\end{definition}

\begin{definition}[Environment Liveness Judgement Semantics]
  The semantic environmental liveness judgement:
  \[
    \SENVLIVE m; \lvl; \actxt |= L - \smash{M} ->> T
  \]
  where $m \in \Layer, \lvl \in \Level, \actxt \in ACtxt, L \in \Assrt, M \in \Ord \to \Assrt, T \in \Assrt$, 
  holds when
  \begin{align*}
    & \STABLE \levl; \actxt |= L \\
    & \VALID \levl; \actxt |= L \implies \exists \alpha \st L * M(\alpha) \\
    & \SENVLIVE m; \lvl; \actxt |= L * M(\alpha) : L * M(\alpha) - ->> T
  \end{align*}
\end{definition}

\begin{theorem}
  \label{thrm:aux-env-live-soundness}
  For arbitrary $m \in \Layer, \lvl \in \Level$,
  $\actxt$ an atomicity context,
  $L, L' \in \Ord \to \Assrt, T \in \Assrt$
  such that
  \begin{align}
    \STABLE \levl; \actxt |= {\exists \alpha \st L(\alpha)}
    \label{ass:envlive-l-stable}\\
  \forall \alpha \st \viewshift \lvl; \actxt |= {L'(\alpha)} => {L(\alpha)}
  \label{ass:envlive-implication}
  \end{align}
  if 
    $\, \ENVLIVE m; \lvl; \actxt |- L : L' - ->> T$,
  then
    $\SENVLIVE m; \lvl; \actxt |= L : L' - ->> T$.
\end{theorem}

\begin{proof}
  Assuming $\, \ENVLIVE m; \lvl; \actxt |- L : L' - ->> T$ and
  taking $\store \in \Store$ arbitrary,
  the proof proceeds by induction on the structure of
  derivation trees of the auxiliary environmental liveness
  condition.
  We start of with the bases cases: \ref{rule:envlive-obl},
  \ref{rule:envlive-pq} and \ref{rule:envlive-target}.

  \paragraph{case \ref{rule:envlive-obl}}
  In this case, for some $\rid \in \RId, \rt \in \RType,
  \lvlp \in \Level, \obl{O} \in \AObl$,
  \begin{align}
    & \decr[\actxt](L', L, T)
    \label{ass:liveO-impr}
    \\
    & m \laygt \lay(O) \label{ass:liveO-ctxt-lay} \\
    & \forall \alpha\st \VALID \actxt |= \minLayStrict{L'(\alpha)}{\lay(O)}
    \label{ass:liveO-ctxt-minlay}
    \\
    & \lvlp < \lvl
    \label{ass:liveO-level} \\
    & \forall \alpha \st
    \SAT[\actxt] |- L'(\alpha) \implies
    \exists x \st \region[\lvlp]{\rt}{\rid}(x) * \envObl{O}{\rid} * \True
    \label{ass:liveO-impl-oblig}
  \end{align}
  hold. From this, we need to show
  $\SENVLIVE m; \lvl; \actxt |= L(\alpha) : L'(\alpha) -->> T$.
  
  Let $P = \set{l'_{\store}(\alpha)}$,
  clearly the union of the elements of this set equals $l'_{\store}(\alpha)$
  as required. Assuming $l'_\store(\alpha) \not\subseteq t_\store$ and
  setting $\pobl = \obl{O}$, which is in $\AObl_{<m}$ given
  (\ref{ass:liveO-ctxt-lay}), it suffices to show
  \begin{align}
    & \forall \alpha \in \Ord, w \in l'_{\store}(\alpha) \st \act_{\rid;\lvl}(w, \obl{O})
    \label{thrm:aux-env-live-soundness-oblo-act} \\
    & \forall \alpha_1, \alpha_2 \ge \alpha_1 \st
    \relyAt[\actxt](l'_{\store}(\alpha_1)) \inters l_{\store}(\alpha_2) \subseteq l'_{\store}(\alpha_1) \union t
    \label{thrm:aux-env-live-soundness-oblo-prog}
  \end{align}
  hold to complete the proof.

  We start off by showing that (\ref{thrm:aux-env-live-soundness-oblo-act}) holds.
  Taking $\alpha \in \Ord$ and $w \in l'_{\store}(\alpha)$ arbitrary,
  given (\ref{ass:liveO-ctxt-minlay}), it is clear that $\lay(\oblMap[w](\rid)) \ge \lay(\obl{O})$
  holds and, given (\ref{ass:liveO-level}) and (\ref{ass:liveO-impl-oblig}), $\level[w](\rid) < \lvl$ holds.
  From these two conclusions, we can infer that $\lfun{dep}_{\rid;\lvl}(w, \obl{O})$ holds.
  Then, from (\ref{ass:liveO-impl-oblig}), it is clear that
  $\envMap[w](\rid) \resgeq \obl{O}$ holds, and therefore, $\act_{\rid;\lvl}(w, \obl{O})$.

  Finally, (\ref{thrm:aux-env-live-soundness-oblo-prog}) follows immediately from
  (\ref{ass:liveO-impr}) and the definition of $\decr$.
  
  \paragraph{case \ref{rule:envlive-pq}}
  In this case, for some $\rid \in \RId, \rt \in \RType,
  \lvlp \in \Level$,
  \begin{align}
    & \decr[\actxt](L', L, T)
    \label{ass:liveA-impr}
    \\
    & m \laygt k \label{ass:liveA-ctxt-lay} \\
    & \forall \alpha\st
    \VALID \actxt |= \minLayStrict{L'(\alpha)}{k}
    \label{ass:liveA-ctxt-minlay}
    \\
    & \live(\actxt,r) = X \eventually[k] X' \label{ass:liveO-live-actxt} \\
    & \lvlp < \lvl
    \label{ass:liveA-level} \\
    & \forall \alpha \st
    \SAT[\actxt] |- L'(\alpha) \implies
    \exists x \st \region[\lvlp]{\rt}{\rid}(x) * \done{\rid}{\lozenge} \land
    x \in X \setminus X' * \True
    \label{ass:liveA-impl-oblig}
  \end{align}
  hold. From this, we need to show
  $\SENVLIVE m; \lvl; \actxt |= L(\alpha) : L'(\alpha) -->> T$.
  
  Let $P = \set{l'_{\store}(\alpha)}$,
  clearly the union of the elements of this set equals $l'_{\store}(\alpha)$
  as required. Assuming $l'_\store(\alpha) \not\subseteq t_\store$ and
  setting $\pobl = X \eventually[k] X'$, which is in
  $\Set{\live(\actxt,r) | \lay(\live(\actxt,r)) < m}$ given
  (\ref{ass:liveA-ctxt-lay}) and (\ref{ass:liveO-live-actxt}),
  it suffices to show
  \begin{align}
    & \forall \alpha \in \Ord, w \in l'_{\store}(\alpha) \st \act_{\rid;\lvl}(w, X \eventually[k] X')
    \label{thrm:aux-env-live-soundness-obla-act} \\
    & \forall \alpha_1, \alpha_2 \ge \alpha_1 \st
    \relyAt[\actxt](l'_{\store}(\alpha_1)) \inters l_{\store}(\alpha_2) \subseteq l'_{\store}(\alpha_1) \union t
    \label{thrm:aux-env-live-soundness-obla-prog}
  \end{align}
  to complete the proof.

  We start off by showing that (\ref{thrm:aux-env-live-soundness-obla-act}) holds.
  Taking $\alpha \in \Ord$ and $w \in l'_{\store}(\alpha)$ arbitrary,
  given (\ref{ass:liveA-ctxt-minlay}), it is clear that $\lay(\oblMap[w](\rid)) \ge \lay(X \eventually[k] X')$
  holds and, given (\ref{ass:liveA-level}) and (\ref{ass:liveA-impl-oblig}), $\level[w](\rid) < \lvl$ holds.
  From these two conclusions, we can infer that $\lfun{dep}_{\rid;\lvl}(w, X \eventually[k] X')$ holds.
  Then, from (\ref{ass:liveA-impl-oblig}), it is clear that
  $\atomMap[w](\rid) \compat \lozenge \land \astate[w](\rid) \in X \setminus X'$ holds, and therefore, $\act_{\rid;\lvl}(w, X \eventually[k] X')$.

  Finally, (\ref{thrm:aux-env-live-soundness-obla-prog}) follows immediately from
  (\ref{ass:liveA-impr}) and the definition of $\decr$.
    
  \paragraph{case \ref{rule:envlive-target}}
  In this case
  $
  \forall \alpha. \VALID \actxt |= L'(\alpha) \implies T
  $
  holds. From this, we need to show
  $
  \SENVLIVE m; \lvl; \actxt |= L(\alpha) : L'(\alpha) -->> T
  $.
  Let $P = \set{l'_\store(\alpha)}$,
  clearly the union of the elements of this set equals $l'_{\store}(\alpha)$ as required.
  From $\forall \alpha. \VALID \actxt |= L'(\alpha) \implies T$, clearly
  $l'_\store(\alpha) \subseteq t_\store$, therefore
  $
  \SENVLIVE m; \lvl; \actxt |= L(\alpha) : L'(\alpha) -->> T
  $ holds, as required.

  Finally, we complete this theorem's proof with a
  proof of the soundness of the one inductive case, \ref{rule:envlive-quant}.
  Note that \ref{rule:envlive-case} can be derived directly from
  \ref{rule:envlive-quant}.

  \paragraph{case \ref{rule:envlive-quant}}
  In this case, $L'(\alpha) = \exists x \in X \st L''(x, \alpha)$
  for some $L'' \in X \times \Ord \to \Assrt$ and
  \begin{align}
    \forall x \in X \st
    \ENVLIVE m; \levl; \actxt |-
    L(\alpha) : L''(x, \alpha) -->> T
    \label{ass:live-quant-forall}
  \end{align}
  hold. Letting
  \[
    l'_{x,\store}(\alpha) = \WorldSem{\actxt}{\store}{L''(x, \alpha)}
  \]
  From (\ref{ass:live-quant-forall}), for any $x \in X$
  there exists $P_x \subseteq \mathcal{P}(\World[\actxt])$
  such that $l'_{x,\store}(\alpha) = \bigcup P_x$ with the appropriate
  conditions holding for each $\lvar{lf} \in \Union\limits_{x \in X} P_x$.

  Setting $P = \Union\limits_{x \in X} P_x$, given the definition
  of $L'(\alpha)$, clearly $l'_\store(\alpha) = \bigcup P$
  and for each $l \in P$, there exists some $x \in X$ such that $l \in P_x$
  and therefore, the appropriate properties hold due to
  (\ref{ass:live-quant-forall}) as required.

  By induction on the structure of
  derivation trees of the auxiliary environmental liveness
  condition, $\SENVLIVE m; \lvl; \actxt |= L : L' - ->> T$
  holds, as required.
  
 \end{proof}

\begin{theorem}
  If 
    $\ENVLIVE m; \lvl; \actxt |- L - M ->> T$
  then
    $\SENVLIVE m; \lvl; \actxt |= L - M ->> T$.
\end{theorem}

\begin{proof}
  This theorem follows trivially from theorem
  \ref{thrm:aux-env-live-soundness}.
\end{proof}
 
\subsection{Soundness of \ref{rule:frame}}

For the rest of the section, we let
\begin{align*}
  \spec &= \SPEC m; \lvl; \actxt |=
  \A x \in \pqsets{X}.
  <\na{P} | \at{P}(x)>
  \E y.<\na{Q}(x,y) | \at{Q}(x,y)> \\
  \spec' &=  \SPEC m; \lvl; \actxt |=
  \A x \in \pqsets{X}.
  <\na{P} * \na{R} | \at{P}(x) * \at{R}(x)>
  \E y.<\na{Q}(x,y) * \na{R} | \at{Q}(x,y) * \at{R}(x)>
\end{align*}
such that
\begin{align*}
  & \STABLE \actxt |= { \na{R} } \\
  & \forall x \in X\st \STABLE \actxt |= { \at{R}(x) } \\
  & \forall x \in X \st  \VALID \actxt |= \at{R}(x) \implies \empObl[\lvl]
\end{align*}

\begin{lemma}
  \label{lemma:AFPU_extend}
  For arbitrary $\levl \in \Level$, $\actxt$ and atomicity context, $h_0, h_1 \in \Heap$, $p, q \in \UCWorld[\actxt]$ and $r \in \View[\actxt]$:
  \[
  \update h_0 -> h_1 |= p -> q \implies \update h_0 -> h_1 |= p * r -> q * r
  \]
\end{lemma}
\begin{proof}
  Assume $\update h_0 -> h_1 |= p -> q$, which is equivalent to:
  \[
  \forall f \in \UCWorld[\actxt] \st
  h_1 \in \sem{p_1 * f}_{\lvl}
  \implies
  h_2 \in \sem{p_2 * {\relyAt[\actxt]}(f)}_{\lvl}
  \]
  Substituting $f = r * f'$, this is equivalent to:
  \[
  \forall f' \in \UCWorld[\actxt] \st
  h_1 \in \sem{p_1 * r * f'}_{\lvl}
  \implies
  h_2 \in \sem{p_2 * {\relyAt[\actxt]}(r * f')}_{\lvl}
  \]
  As $r \in \View[\actxt]$, ${\relyAt[\actxt]}(r * f') = r * {\relyAt[\actxt]}(f')$
  holds, and therefore, as required:
  \[
  \forall f' \in \UCWorld[\actxt] \st
  h_1 \in \sem{p_1 * r * f'}_{\lvl}
  \implies
  h_2 \in \sem{p_2 * r * {\relyAt[\actxt]}(f')}_{\lvl}
  \qedhere
  \]
\end{proof}

\begin{lemma}
  \label{lemma:AFPU_unfold_heap}
  For arbitrary $\levl \in \Level$, $\actxt$ and atomicity context, $h_0, h_1 \in \Heap$, $p, q \in \UCWorld[\actxt]$ and $f \in \View[\actxt]$:
  \[
  h_0 \in \sem{p * f} \land \update h_0 -> h_1 |= p -> q \implies h_1 \in \sem{q * f}
  \]
\end{lemma}
\begin{proof}
  To start off, assume
    $h_0 \in \sem{p * f}$ and
    $\update h_0 -> h_1 |= p -> q$.
  Clearly, this second assumption entails $\update h_0 -> h_1 |= p ->* q$, which is equivalent to:
  \[
  \forall f \in \View[\actxt] \st h_0 \in \sem{p * f} \implies h_1 \in \sem{q * f}
  \]
  Chosing the initial $f$ and applying the first assumption yields $h_1 \in \sem{q * f}$ as required.
\end{proof}

\begin{definition}
  For arbitrary $V \subset \PVar$ and $\trace \in \Trace$,
  we define the predicate $\lfun{noMods}_{V}(\trace)$,
  indentifying traces that only modify the program variables
  in~$V$:
  \[
  \lfun{noMods}_{V}((\store_0, h_0) \pl (\store_1, h_1)\trace') \is
  \lfun{noMods}_{V}((\store_1, h_1)\trace') \land
  \forall \pvar{v} \in V \st
  \store_0(\pvar{v}) = \store_1(\pvar{v})
  \]
\end{definition}

\begin{definition}
  For $V \subseteq \PVar$:
  $
  \Trace_{V} \is \set{\trace \in \Trace | \lfun{noMods}_{V}(\trace)}.
  $
\end{definition}

\begin{lemma}
  \label{lemma:mod_vars}
  Given $\cmd \in \Cmd, \functxt \in \FunCtxt$ and
  $V \subseteq \PVar$ arbitrary such that
  $V \inters \mods(\cmd) = \emptyset$, then:
  $
  \sem{\cmd}_{\functxt} \subseteq \Trace_{V}.
  $
\end{lemma}

\begin{proof}
  Easy coinduction on the small-step operational
  semantics of commands.
\end{proof}

\begin{definition}We define an auxiliary operation that takes a Hoare frame $\na{r} \in \View[\actxt]$
  and an atomic frame $\at{r} \in \UCWorld[\actxt]$
  and applies the frames at each position of a specification trace,
  if the heaps at each position are compatible with said frames (and returns the empty set otherwise).
  \begin{multline*}
  \unframe{\na{r}, \at{r}}{((\store, h, \na{p}, \at{p}, v) \pl \strace)} \is {} \\
    \Set{
      (\store, h, \na{p} * \na{r}, \at{p} * \at{r}, v) \pl \strace'
      |
      \begin{array}{c}
        \strace' \in (\unframe{\na{r}, \at{r}}{\strace}) \land {} \\
        h \in \sem{\na{p} * \na{r} * \at{p}(v) * \at{r}(v) * \True}_\lvl
      \end{array}
    }
  \end{multline*}
  This can be lifted to sets of specification traces, $\straces \subseteq \STrace$:
  \[
  \unframe{\na{r}, \at{r}}{\straces} \is
  \bigcup\limits_{\strace \in \straces} \unframe{\na{r}, \at{r}}{\strace}
  \]
\end{definition}

\begin{lemma}
  \label{lemma:frame_safety}
  For arbitrary $(\store_0, h_0)\trace \in \Trace_{\fv(\na{R})}$, $p_h \in \View[\actxt]$, $v_0 \in \AVal'$ and $\straces \in \mathcal{P}(\STrace)$, then
  \begin{multline*}
    h_0 \in \sem{\na{p} * \na{r} * \at{p}(v_0) * \at{r}(v_0) * \True} \land
    \accept{(\store_0, h_0)\;\trace}{\na{p}}{\at{p}}{v_0} : \straces \implies {} \\
    \accept[\spec']{(\store_0, h_0)\;\trace}{\na{p} * \na{r}}{\at{p} * \at{r}}{v_0} :
    \unframe{\na{r}, \at{r}}{\straces}
  \end{multline*}
  holds, where
  \begin{align*}
    \na{r} &= \WorldSem{\actxt}{\store_0}{\na{R}} \\
    \at{p}(v) &=
    \begin{cases}
      \WSem[\actxt]{\at{P}(v) \land v \in X} \CASE x \in \AVal \\
      \wEmp[\actxt] \OTHERWISE
    \end{cases} \\
    \at{r}(v) &=
    \begin{cases}
      \WSem[\actxt]{\at{R}(v) \land v \in X} \CASE x \in \AVal \\
      \wEmp[\actxt] \OTHERWISE
    \end{cases}
  \end{align*}
\end{lemma}
\begin{proof}
  Taking $(\store_0, h_0)\trace \in \Trace_{\fv(\na{R}) \inters \PVar}$, $p_h \in \View[\actxt]$, $v_0\in \AVal'$ and
  $\straces \in \mathcal{P}(\STrace)$ arbitrary such that:
  \begin{align}
    & h_0 \in \sem{\na{p} * \na{r} * \at{p}(v_0) * \at{r}(v_0) * \True} \label{ass:frame_safety_heap_ass} \\
    & \accept{(\store_0, h_0)\;\trace}{\na{p}}{\at{p}}{v_0} : \straces \label{ass:frame_safety_frameless_ass}
  \end{align}

  The proof proceeds by coinduction on the structure of $\trace$. We consider the rules can
  apply from the trace safety judgement: \ref{rule:stutter}, \ref{rule:linpt}, \ref{rule:env},
  \ref{rule:env2} and \ref{rule:env-fault}.

  \paragraph{Case \ref{rule:stutter}} In this case,
  $(\store_0, h_0)\trace = (\store_0, h_0)\loc(\store_1, h_1)\trace'$ and
  $\straces = (\store_0, h_0, \na{p}, \at{p}, v)\loc\straces'$.
  From \eqref{ass:frame_safety_frameless_ass}, for some $\na{p}' \in \View[\actxt]$, the following hold:
  \begin{align}
    & \update
    h_0 -> h_1 |= \na{p} * \at{p}(v_0) -> \na{p'} * \at{p}(v_0) \label{ass:stutter_frame_fpu}
    \\
    & \accept{(\store_1, h_1)\;\trace'}{\na{p'}}{\at{p}}{v_0} : \straces' \label{ass:stutter_frame_cont}
    \\
    & \term(\trace')
    \implies \exists v_1, v_2 \st
    v = \DONE(v_1,v_2) \land
    \na{p}' = \WorldSem{\actxt}{\store_1}{\na{Q}(v_1,v_2)} \label{ass:frame_done}
  \end{align}

  Given that $\na{r}, \at{r}(v_0) \in \View[\actxt]$, using \cref{lemma:AFPU_extend},
  \eqref{ass:stutter_frame_fpu} implies
  \begin{align}
    \update h_0 -> h_1 |= \na{p} * \na{r} * \at{p}(v_0) * \at{r}(v_0) ->
    \na{p'} * \na{r} * \at{p}(v_0) * \at{r}(v_0) \label{goal:stutter_frame_fpu}
  \end{align}
  By \cref{lemma:AFPU_unfold_heap}, \eqref{ass:frame_safety_heap_ass} and \eqref{ass:stutter_frame_fpu} imply:
  \begin{align}
    h_1 \in \sem{\na{p}' * \na{r} * \at{p}(v_0) * \at{r}(v_0) * \True} \label{goal:stutter_frame_heap}
  \end{align}

  Given that $(\store_0, h_0)\trace \in \Trace_{\fv(\na{R}) \inters \PVar}$, $\forall \pvar{v} \in \fv(\na{R}) \inters \PVar \st \store_o(\pvar{v}) = \store_1(\pvar{v})$ holds,
  and therefore:
  \begin{align}
    \na{r} = \WorldSem{\actxt}{\store_1}{\na{R}} \label{goal:stutter_frame_rh_pres}
  \end{align}
  
  From this, given \eqref{goal:stutter_frame_rh_pres}, \eqref{goal:stutter_frame_heap} and
  \eqref{ass:stutter_frame_cont} and using the inductive assumption, we derive:
  \begin{align}
    \accept[\spec']{(\store_1, h_1)\;\trace'}{\na{p'} * \na{r}}{\at{p} * \at{r}}{v_0} : \unframe{\na{r}, \at{r}}{\straces'}
    \label{goal:stutter_frame_cont}
  \end{align}

  Finally, assuming $\term(\trace')$, given \eqref{ass:frame_done},
  we know $\exists v_1, v_2 \st v = \DONE(v_1,v_2) \land \na{p}' = \WorldSem{\actxt}{\store_1}{\na{Q}(v_1,v_2)}$.
  From this and \eqref{goal:stutter_frame_rh_pres}, we infer that
  $\na{p}' * \na{r} = \WorldSem{\actxt}{\store_1}{\na{Q}(v,v') * \na{R}}$ and therefore,
  \begin{align}
    \term(\trace') \implies \exists v_1, v_2 \st v = \DONE(v_1,v_2) \land \na{p}' * \na{r} = \WorldSem{\actxt}{\store_1}{\na{Q}(v,v') * \na{R}}
    \label{goal:stutter_frame_done}
  \end{align}

  From \eqref{goal:stutter_frame_fpu}, \eqref{goal:stutter_frame_cont} and \eqref{goal:stutter_frame_done} by \ref{rule:stutter},
  $\accept[\spec']{(\store_0, h_0)\;\trace}{\na{p} * \na{r}}{\at{p} * \at{r}}{v_0} : \unframe{\na{r}, \at{r}}{\straces}$ holds as required. 

  \paragraph{Case \ref{rule:linpt}}
  In this case,
  $(\store_0, h_0)\trace = (\store_0, h_0)\loc(\store_1, h_1)\trace'$ and
  $\straces = (\store_0, h_0, \na{p}, \at{p}, v)\loc\straces'$.
  From \eqref{ass:frame_safety_frameless_ass}, the following hold for some $v' \in \AVal$:
  \begin{align}
    & \update
    h_0 -> h_1 |=     \na{p} \ssep \at{p}(v_0)
    ->
    \na{q'} \ssep \WorldSem{\actxt}{}{\at{Q}(v_0,v')} \label{ass:linpt_frame_fpu}
    \\
    & \accept{(\store_1, h_1)\;\trace'}{\na{q'}}{\emp}{\DONE(v_0,v')} : \straces' \label{ass:linpt_frame_cont}
    \\
    &\term(\trace')
    \implies
    \na{q}' = \WorldSem{\actxt}{\store_1}{\na{Q}(v_0,v')} \label{ass:linpt_frame_done}
  \end{align}

  Given that $\na{r}, \at{r}(v_0) \in \View[\actxt]$, using \cref{lemma:AFPU_extend},
  \eqref{ass:linpt_frame_fpu} implies
  \begin{align}
    \update h_0 -> h_1 |= \na{p} * \na{r} * \at{p}(v_0) * \at{r}(v_0) ->
    \na{q'} * \na{r} * \WorldSem{\actxt}{}{\at{Q}(v,v')} * \at{r}(v_0) \label{goal:linpt_frame_fpu}
  \end{align}
  By \cref{lemma:AFPU_unfold_heap}, \eqref{ass:frame_safety_heap_ass} and \eqref{ass:linpt_frame_fpu} imply:
  \begin{align}
    h_1 \in \sem{\na{q}' * \na{r} * \WorldSem{\actxt}{}{\at{Q}(v_0,v')} * \at{r}(v_0) * \True}
    \label{goal:linpt_frame_heap}
  \end{align}

  Given that $(\store_0, h_0)\trace \in \Trace_{\fv(\na{R}) \inters \PVar}$, $\forall \pvar{v} \in \fv(\na{R}) \inters \PVar \st \store_o(\pvar{v}) = \store_1(\pvar{v})$ holds,
  and therefore:
  \begin{align}
    \na{r} = \WorldSem{\actxt}{\store_1}{\na{R}} \label{goal:linpt_frame_rh_pres}
  \end{align}

  From this, given \eqref{goal:linpt_frame_rh_pres}, \eqref{goal:linpt_frame_heap} and
  \eqref{ass:linpt_frame_cont} and using the inductive assumption, we derive:
  \begin{align}
    \accept[\spec']{(\store_1, h_1)\;\trace'}{\na{q'} * \na{r}}{\at{p} * \at{r}}{\DONE(v_0, v')} : \unframe{\na{r}, \at{r}}{\straces'}
    \label{goal:linpt_frame_cont}
  \end{align}

  Finally, assuming $\term(\trace')$, given \eqref{ass:linpt_frame_done},
  we know $\na{q}' = \WorldSem{\actxt}{\store_1}{\na{Q}(v_0,v')}$.
  From this and \eqref{goal:linpt_frame_rh_pres}, we infer that
  $\na{q}' * \na{r} = \WorldSem{\actxt}{\store_1}{\na{Q}(v_0,v') * \na{R}}$ and therefore,
  \begin{align}
    \term(\trace') \implies \na{q}' * \na{r} = \WorldSem{\actxt}{\store_1}{\na{Q}(v_0,v') * \na{R}}
    \label{goal:linpt_frame_done}
  \end{align}

  From \eqref{goal:linpt_frame_fpu}, \eqref{goal:linpt_frame_cont} and \eqref{goal:linpt_frame_done} by \ref{rule:linpt},
  $\accept[\spec']{(\store_0, h_0)\;\trace}{\na{p} * \na{r}}{\at{p} * \at{r}}{v_0} : \unframe{\na{r}, \at{r}}{\straces}$ holds as required. 
  
  \paragraph{case \ref{rule:env}} In this case,
  $(\store_0, h_0)\trace = (\store_0, h_0)\env(\store_1, h_1)\trace'$ and
  \[\straces =  \textstyle\Union \Set{
      (\store,h_1, \na{p},\at{p},v_0) \env \straces[v']' | v' \in X, E(v') }.\]
  From \eqref{ass:frame_safety_frameless_ass} we have that
  $
  \forall v' \in X\st
  E(v')
  \implies
  \accept{(\store, h_2)\;\trace}{\na{p}}{\at{p}}{v'} : \straces[v']
  $.
Taking $v' \in X$ arbitrary and, assuming $E(v')$ given some $\overline{\pe},\overline{\pe}'$, for the goal specification:
  \begin{align*}
    &h_0 \in \sem{ \na{p} \ssep \na{r} \ssep \at{p}(v_0) \ssep \at{r}(v_0) \ssep \overline{\pe}}_\lvl
    \\
    &\update h_0 -> h_1 |= \at{p}(v_0) \ssep \at{r}(v_0) \ssep \overline{\pe} ->
    \at{p}(v') \ssep \at{r}(v')  \ssep \overline{\pe}'
  \end{align*}
  It then suffices to show that $\accept[\spec']{(\store_1, h_1)\;\trace'}{\na{p} * \na{r}}{\at{p} * \at{r}}{v'} : \straces[v']$ holds. This follows from \cref{lemma:AFPU_extend} by using
  $
    \pe  = \overline{\pe} \ssep \na{r} \ssep \at{r}(v_0)
  $ and $
    \pe' = \overline{\pe}' \ssep \na{r} \ssep \at{r}(v')
  $, yielding:
  \begin{align*}
    &h_1 \in \sem{ \na{p} \ssep \at{p}(v) \ssep \pe}_\lvl &
    &\update h_1 -> h_2 |= \at{p}(v) \ssep \pe ->
    \at{p}(v')  \ssep \pe'
  \end{align*}
  as required.

  \paragraph{Case \ref{rule:env2}} This case follows similarly to the \ref{rule:env} case.
  
  \paragraph{Case \ref{rule:env-fault}} This case is trivially true.
\end{proof}

\begin{lemma}
  \label{lemma:frame-liveness}
  Letting
  \begin{align*}
    \na{r} &= \WorldSem{\actxt}{\store_0}{\na{R}} \\
    \at{r}(v) &=
    \begin{cases}
      \WSem[\actxt]{\at{R}(v) \land v \in X} \CASE x \in \AVal \\
      \wEmp[\actxt] \OTHERWISE
    \end{cases}
  \end{align*}
  and assuming $\VALID \actxt |= \minLay{\na{R} * \at{R}(x)}{m}$, then,
  given $\straces \subseteq \STrace$:
  \[
  \forall \wtrace' \in \wtr{\unframe{\na{r}, \at{r}}{\straces}} \st
  \goodenv_{\spec'}(\wtrace') \implies \exists \wtrace \in \wtr{\straces} \st
  \goodenv_{\spec}(\wtrace)
  \]
\end{lemma}

\begin{proof}
  Taking $\wtrace' \in  \wtr{\unframe{\na{r}, \at{r}}{\straces}}{}$
  arbitrary such that $\goodenv_{\spec'}(\wtrace')$. This implies that:
  \begin{align*}
    \forall \pobl \in \POLay{m}^{\spec'}\st
      \;&\textbf{if }
        \forall \rid, O \in \AObl_{\le\lay(\pobl)}\st
        \forall i\in \Nat\st
          \exists j \geq i\st
            \neg \locheld_\lvl(\rid, O,\trAt[\wtrace']{j})
      \\
      &\textbf{then }
        \forall i\in \Nat\st
          \exists j \geq i\st
            \neg \envheld_\lvl(\smash{\pobl},\trAt[\wtrace']{j})
  \end{align*}

  As $\wtrace' \in  \wtr{\unframe{\na{r}, \at{r}}{\straces}}{}$,
  there must be some $\strace \in \straces$ such that
  $\wtrace' \in  \wtr{\unframe{\na{r}, \at{r}}{\strace}}{}$.
  Taking $\wtrace \in \wtr{\strace}{}$ arbitrary, to show $\goodenv_{\spec}(\wtrace)$,
  take $\pobl \in \POLay{m}^{\spec}$ arbitrary such that
  \[
  \forall \rid, O \in \AObl_{\le\lay(\pobl)}\st
  \forall i\in \Nat\st
  \exists j \geq i\st
  \neg \locheld_\lvl(\rid, O,\trAt[\wtrace]{j})
  \]

  Given $\VALID \actxt |= \minLay{\na{R} * \at{R}(x)}{m}$ and
  the definition of $\unframe*$,
  the following holds:
  \[
  \forall \rid, O \in \AObl_{\le\lay(\pobl)}\st
  \forall i\in \Nat\st
  \exists j \geq i\st
  \neg \locheld_\lvl(\rid, O,\trAt[\wtrace']{j})
  \]

  Now, from $\goodenv_{\spec'}(\wtrace')$:
  \[
  \forall i\in \Nat\st
  \exists j \geq i\st
  \neg \envheld_\lvl(\smash{\pobl},\trAt[\wtrace']{j})
  \]
  From this, as required:
  \[
  \forall i\in \Nat\st
  \exists j \geq i\st
  \neg \envheld_\lvl(\smash{\pobl},\trAt[\wtrace]{j})
  \]

\end{proof}

\begin{theorem}[Soundness of \ref{rule:frame}]
  \label{soundness:frame}
  Assuming
  \begin{align}
    & \forall x \in X \st \VALID \actxt |= \minLay{\na{R} * \at{R}(x)}{m} \label{ass:layer}
  \end{align}
  and given arbitrary $\cmd \in \Cmd$ such that
  \begin{align}
    & \progvars(\na{R}) \inters \modvars(\cmd) = \emptyset
    \label{ass:frame_progvars}
  \end{align}
  and arbitrary $\funspec\in\FunSpec$ such that
  \[
  \JUDGE[\funspec] |= \cmd : \spec
  \]
  then
  \[
  \JUDGE[\funspec] |= \cmd : \spec'
  \]
\end{theorem}

\begin{proof}

  To start off, as $\STABLE \actxt |= { \na{R} }$, clearly $\na{P} \ssep \na{R} \in  \Stable[\actxt]$,
  $\forall x \in X, y \st \na{Q}(x,y) \ssep \na{R} \in  \Stable[\actxt]$,
  $\forall x \in X \st  \VALID \actxt |= \at{P}(x) * \at{R}(x) \implies \empObl[\lvl]$ and
  $\forall x \in X, y \st  \VALID \actxt |= \at{Q}(x,y) * \at{R}(x) \implies \empObl[\lvl]$
  therefore, $\spec' \in \Spec$.

  Taking $\cmd \in \Cmd$ arbitrary such that \eqref{ass:frame_progvars} holds,
  $\funspec \in \FunSpec$ arbitrary such that $\JUDGE[\funspec] |= \cmd : \spec$ holds
  and arbitrary $\functxt \in \FunCtxt$ such that $\JUDGE |= \functxt : \funspec$ holds,
  then $\sem{\cmd}_{\functxt} \subseteq \Sem{\spec}$.
  From \cref{lemma:mod_vars} and \eqref{ass:frame_progvars},
  we can also infer that $\sem{\cmd}_{\functxt} \subseteq \Trace_{\progvars(\na{R})}$
  and therefore, it is clear that $\sem{\cmd}_{\functxt} \subseteq \Trace_{\progvars(\na{R})} \inters \Sem{\spec}$.
  From this, we know that it is sufficient to show that
  $\Trace_{\progvars(\na{R})} \inters \Sem{\spec} \subseteq \Sem{\spec'}$,
  to show that $\sem{\cmd}_{\functxt} \subseteq \Sem{\spec'}$ holds,
  and therefore, $\JUDGE[\funspec] |= \cmd : \spec'$, as required.
  
  Therefore, taking $(\store_0,h_0)\trace \in \Sem{\spec} \inters \Trace_{\progvars(\na{R})}$ arbitratry,
  it is sufficient to show $(\store_0,h_0)\trace \in \Sem{\spec'}$.

  Let
  \begin{align*}
    \na{p}    &= \WorldSem{\actxt}{\store_0}{\na{P}}\\
    \na{r}    &= \WorldSem{\actxt}{\store_0}{\na{R}}\\
    \at{p}(v) &=
    \begin{cases}
      \WSem[\actxt]{\at{P}(v) \land v \in X} \CASE x \in \AVal \\
      \wEmp[\actxt] \OTHERWISE
    \end{cases} \\
    \at{r}(v) &=
    \begin{cases}
      \WSem[\actxt]{\at{R}(v) \land v \in X} \CASE x \in \AVal \\
      \wEmp[\actxt] \OTHERWISE
    \end{cases}
  \end{align*}
  
  To show $(\store_0,h_0)\trace \in \Sem{\spec'}$, for some arbitrary $v_0 \in X$, assume $h_0 \in \sem{\na{p} \ssep \na{r} \ssep \at{p}(v_0) \ssep \at{r}(v_0) \ssep \True}_\lvl$, from which it follows that $h_0 \in \sem{\na{p} \ssep \at{p}(v_0) \ssep \True}_\lvl$. Then, as  $(\store_0,h_0)\trace \in \Sem{\spec}$, for some $\straces \subseteq \STrace$:
  \begin{align}
  &\accept{(\store_0, h_0)\;\trace}{\na{p}}{\at{p}}{v_0} : \straces \label{given:safety} \\
  &\forall \wtrace \in \wtr{\straces} \st \goodenv_{\spec}(\wtrace) \implies \locterm(\wtrace) \label{given:liveness}
  \end{align}

  From \cref{lemma:frame_safety} and \eqref{given:safety},
  $\accept{(\store_0, h_0)\;\trace}{\na{p} * \na{r}}{\at{p} * \at{r}}{v_0} : \unframe{\na{r}, \at{r}}{\straces}$.
  To reach the goal now, it suffices to show that for some arbitrary
  $\wtrace' \in \wtr{\unframe{\na{r}, \at{r}}{\straces}}{}$:
  \[
  \goodenv_{\spec'}(\wtrace') \implies \locterm(\wtrace')
  \]
  This holds trivially from \cref{lemma:frame-liveness}, \eqref{ass:layer} and \eqref{given:liveness}.
  
\end{proof}

Note that \cref{soundness:frame} has the side condition $\progvars(\na{R}) \inters \modvars(\cmd) = \emptyset$
rather than $\forall x \in X \st \progvars(\na{R}, \at{R}(x)) \inters \modvars(\cmd) = \emptyset$ as in \ref{rule:frame}.
This is because this theorem applies to \tadalive\ specifications without the syntactic sugar that permits
program variables to be directly referenced in the atomic precondition and postcondition of a \tadalive\ hybrid triple.
The side condition present in the \ref{rule:frame} rule permits it to be applied directly to sugared hybrid specifications,
as it guarantees that the necessary side condition for the corresponding desugared specification holds.
 
\subsection{Soundness of \ref{rule:liveness-check}}

Let
\begin{align*}
    \spec_{\twoheadrightarrow} &= \SPEC m;\lvl;\actxt |= \A x \in X \eventually[k] X'. <\na{P}|\at{P}(x)> \E y. <\na{Q}(x,y)|\at{Q}(x,y)> \\
    \spec &= \SPEC m;\lvl;\actxt |= \A x \in X. <\na{P} * L|\at{P}(x)> \E y. <\na{Q}(x,y) * L|\at{Q}(x,y)>
\end{align*}
where $L \in \Stable[\actxt]$.

\newcommand{\goodLoc}{\lfun{goodLoc}}

\begin{definition}
  For atomicity context $\actxt$ and layer $m$ from the context of $\spec$
  and sets $X$ and $X$' as well as the layer $k$ from the pseudo-quantifier of $\spec$,
  let
  \[
  \PObl^{\spec} \is
    \Set{(\rid, O) | \rid \in \RId, O \in \AObl} \dunion
    \Set{(\rid,\live(\actxt,\rid)) | \rid \in \dom(\actxt)} \dunion
    \set{X \eventually[k] X'}
  \]
  Then $\goodenv_\spec(\strace)$ predicate checks whether the environment
  is satisfying the liveness assumptions of the specification:
  \begin{align*}
  \goodenv_\spec(\wtrace) \is
    \forall \pobl \in \POLay{m}^{\spec}\st
      \;&\textbf{if }
        \forall \rid, O \in \AObl_{\le\lay(\pobl)}\st
        \forall i\in \Nat\st
          \exists j \geq i\st
            \neg \locheld_\lvl(\rid, O,\trAt[\wtrace]{j})
      \\
      &\textbf{then }
        \forall i\in \Nat\st
          \exists j \geq i\st
            \neg \envheld_\lvl(\smash{\pobl},\trAt[\wtrace]{j})
  \end{align*}
\end{definition}

\begin{lemma}
  \label{lemma:liveC-key-liveness-pq}
  Given $M \in \Ord \to \UCWorld[\actxt], T \in \Assrt, n \le m, k$
  such that
  \begin{align}
    & \SENVLIVE n; \lvl; \actxt |= L -\smash{M}->> T
    \label{ass:liveC-key-liveness-pq-envlive} \\
    & \forall x \in X \st \VALID \lvl; \actxt |= P(x) * T \implies x \in X'
    \label{ass:liveC-key-liveness-pq-T-imp-X'}
  \end{align}
  hold. Take $(\store_0, h_0) \trace \in \Trace$ and let
  \begin{align*} 
    \at{p}(v) &= \begin{cases}
      \WSem[\actxt]{\at{P}(v) \land v \in X} \CASE x \in \AVal \\
      \wEmp[\actxt] \OTHERWISE
    \end{cases} \\
    l &= \WorldSem{\actxt}{\store_0}{L} \\
    l(\alpha) &= \WorldSem{\actxt}{\store_0}{L * M(\alpha)} \\
    t &= \WorldSem{\actxt}{\store_0}{T * \True}
  \end{align*}
  Taking arbitrary $\na{p}' \in \View[\actxt], \straces \subseteq \STrace$ and $v_0 \in X$ such that
  \begin{align}
    & h_0 \in \sem{\na{p}' * l * \at{p}(v_0) * \True}_\lvl
    \label{ass:liveC-key-liveness-pq-init} \\
    & \accept[\spec_{\twoheadrightarrow}]{(\store_0, h_0) \trace}{\na{p}'}{\at{p}}{v_0} : \straces
    \label{ass:liveC-key-liveness-pq-traces-ant}\\
    & \accept[\spec]{(\store_0, h_0) \trace}{\na{p}' * l}{\at{p}}{v_0} : \unframe{l, \emp}{\straces}
    \label{ass:liveC-key-liveness-pq-traces-cons}
  \end{align}
  for arbitrary $\wtrace' \in \wtr{\unframe{l, \emp}{\straces}}{\lvl;\actxt}$ such that
  \begin{align}
    & \goodenv_{\spec}(\wtrace')
    \label{ass:liveC-key-liveness-pq-liveEnv}
  \end{align}
  there exists $\wtrace \in \wtr{\straces}{\lvl;\actxt}$, such that, if
  \begin{align}
    \forall \pobl' \in \RId \times \AObl_{\le k}\st
    \forall i\in \Nat\st \exists j \ge i\st \neg \locheld_{\lvl}(\pobl',\trAt[\wtrace]{j})
    \label{ass:liveC-key-liveness-pq-liveEnv-pre}
  \end{align}
  then
  $
  \forall i \in \Nat \st
  \exists j \ge i \st \neg \envheld_{\lvl}(X \eventually[k] X',\trAt[\wtrace]{j}).
  $
\end{lemma}

\begin{proof}
  Taking $(\store_0, h_0) \trace \in \Trace, \na{p}' \in \View[\actxt], \straces \subseteq \STrace$ and $v_0 \in X$
  arbitrary such that (\ref{ass:liveC-key-liveness-pq-init}), (\ref{ass:liveC-key-liveness-pq-traces-ant}) and
  (\ref{ass:liveC-key-liveness-pq-traces-cons}) hold and
  $\wtrace' \in \wtr{\unframe{l, \emp}{\straces}}{\lvl;\actxt}$
  satisfying (\ref{ass:liveC-key-liveness-pq-liveEnv}).
  From this,
  we know that, there exists $\wtrace \in \wtr{\straces}{\lvl;\actxt}$ and $\lvar{ls} \in l^{\omega}$,
  such that, for all $i \in \Nat$:
  \begin{align}
    & \lvar{ls}(i) \relyAt[\actxt] \lvar{ls}(i + 1)
    \label{ass:liveC-key-liveness-pq-step} \\
    & \exists h \in \Heap, \na{w}, \at{w}, \en{w} \in \View[\actxt], v \in \Val' \st
    \begin{array}{c}
      \wtrace(i) = ((h,\na{w}, \at{w}, \en{w} \worldSsep \lvar{ls}(i), v), \_) \land {} \\
      \wtrace'(i) = ((h,\na{w} \worldSsep \lvar{ls}(i), \at{w}, \en{w}, v), \_)
    \end{array}
    \label{ass:liveC-key-liveness-trace-rel}
  \end{align}
  hold. If (\ref{ass:liveC-key-liveness-pq-liveEnv-pre}) does not hold, our proof
  is complete, otherwise, from (\ref{ass:liveC-key-liveness-pq-envlive}), there exists some
  $P \subseteq \powerset(\Ord \to \UCWorld[\actxt])$ such
  that $\forall \alpha \st l(\alpha) = \Union_{\lvar{lf} \in P} \lvar{lf}(\alpha)$.

  We now show by transfinite induction on $\alpha \in \Ord$, that
  \[
  \forall \alpha \in \Ord, i \in \Nat \st
  \lvar{ls}(i) \in l(\alpha) \implies
  \exists j \ge i \st \neg \envheld_{\lvl}(X \eventually[k] X',\trAt[\wtrace]{j})
  \]

  \begin{proofscript}
  \item[Base case ($\alpha = 0$):]
  Take $i \in \Nat$ and assume
  $
  \lvar{ls}(i) \in l(0)
  $.
  Since $l(0) = \Union_{\lvar{lf} \in P} \lvar{lf}(0)$,
  for some $\lvar{lf} \in P$ we have $\lvar{ls}(i) \in \lvar{lf}(0)$.
  We now assume, towards a contradiction,
  that $
    \forall j \ge i \st
      \envheld_{\lvl}(X \eventually[k] X',\trAt[\wtrace]{j})
  $
  and therefore $
    \forall j \ge i \st
      \exists v \in X \setminus X' \st \trAt[\wtrace]{j} = ((\wtv[4], v),\_)
  $.
  We now demonstrate, that under this assumption, by induction on $j \ge i$ that
  \begin{align}
    \forall j \ge i \st \lvar{ls}(j) \in \lvar{lf}(0)
    \label{int:liveC-key-liveness-base-worst-case}
  \end{align}
  Assume that for $j \ge i$, $\lvar{ls}(j) \in \lvar{lf}(0)$
  holds. From (\ref{ass:liveC-key-liveness-pq-step}), $\lvar{ls}(j + 1) \in {\relyAt[\actxt]}(\lvar{lf}(0))$
  holds and from (\ref{ass:liveC-key-liveness-pq-envlive}), setting $\alpha_1 = 0$,
  $\relyAt[\actxt](\lvar{lf}(0)) \subseteq \lvar{lf}(0) \union t$,
  therefore, either $\lvar{ls}(j + 1) \in \lvar{lf}(0)$ or $\lvar{ls}(j + 1) \in t$ hold.
  In the latter case, from (\ref{ass:liveC-key-liveness-pq-T-imp-X'}),
  $\exists v \in X' \st \wtrace(j + 1) = ((\wtv[4], v), \_)$, a contradiction.
  Therefore, $\lvar{ls}(j + 1) \in \lvar{lf}(0)$ holds,
  proving \eqref{int:liveC-key-liveness-base-worst-case}.

  From (\ref{ass:liveC-key-liveness-pq-envlive}), there exists some $\rid \in \RId$ and
  $\pobl \in \AObl_{<n} \dunion \Set{\live(\actxt,\rid) | \lay(\live(\actxt,\rid)) < n}$
  such that:
  \begin{align}
    \forall w \in \lvar{lf}(0) \st \act_{\rid;\lvl}(w, \pobl)
    \label{int:liveC-key-liveness-act}
  \end{align}

  Taking $\rid' \in \RId$ and $\obl{O} \in \AObl_{\le\lay(\pobl)}$ arbitrary,
  since $\lay(\pobl) < n$, $\lay(\obl{O}) < n$ and therefore $\lay(\obl{O}) < k$ and $\lay(\obl{O}) < m$ hold.
  As $\lay(\obl{O}) < k$ holds, given (\ref{ass:liveC-key-liveness-pq-liveEnv-pre}),
  for some $j' \ge i$, $\neg \locheld_{\lvl}((\rid', \obl{O}),\trAt[\wtrace]{j'})$ holds.
  Given (\ref{ass:liveC-key-liveness-trace-rel}), we know
  $\trAt[\wtrace]{j'} = (h, \na{w}, \at{w}, \en{w} \worldSsep \lvar{ls}(j'), v)$ and
  $\trAt[\wtrace']{j'} = (h, \na{w} \worldSsep \lvar{ls}(j'), \at{w}, \en{w}, v)$ for some
  $h \in \Heap, \na{w}, \at{w}, \en{w} \in \World[\actxt], v \in X \setminus X'$, and
  therefore, $\oblMap[\na{w}](\rid') \not\resgeq \obl{O}$.
  Given (\ref{int:liveC-key-liveness-base-worst-case}), $\lvar{ls}(j') \in \lvar{lf}(0)$
  holds, and, therefore, given (\ref{int:liveC-key-liveness-act}), we know that:
  \begin{align}
    & \lay(\oblMap[\lvar{ls}(j')](\rid')) > \lay(\pobl)
    \label{int:liveC-key-liveness-loc-obl-lay}\\
    & \level[\lvar{ls}(j')](\rid) < \lvl
    \label{int:liveC-key-liveness-region-lvl}
  \end{align}
  Given (\ref{int:liveC-key-liveness-loc-obl-lay}), $\oblMap[\lvar{ls}(j')](\rid') \not\resgeq \obl{O}$,
  as otherwise, $\lay(\oblMap[\lvar{ls}(j')](\rid')) \le \lay(\obl{O})$ and therefore, as
  $\lay(\obl{O}) < \lay(\pobl)$, we reach a contradiction.
  Then, from $\oblMap[\na{w}](\rid) \not\resgeq \obl{O}$
  and $\oblMap[\lvar{ls}(j')](\rid) \not\resgeq \obl{O}$, we know
  $\oblMap[\na{w} \worldSsep \lvar{ls}(j')](\rid) \not\resgeq \obl{O}$,
  as $\obl{O} \in \AObl$, and therefore, $\neg \locheld_{\lvl}((\rid',\obl{O}),\trAt[\strace']{j'})$.
  From this, it follows that
  \begin{align}
    \forall \pobl' \in \RId \times \AObl_{\le \lay(\pobl)}, i \in \Nat \st
    \exists j \ge i \st \neg \locheld_{\lvl}(\pobl',\trAt[\wtrace']{j})
    \label{int:liveC-key-liveness-base-loc-obl-eventually}
  \end{align}
  holds.

  Letting $\trAt[\wtrace']{i} = ((h^i, \na{w}^i \worldSsep \lvar{ls}(i), \at{w}^i, \en{w}^i, v^i), \_)$,
  first, consider the case $\pobl \in \AObl_{<n}$. As an obligation's
  composition with itself within the obligation algebra of the region type
  of the shared region $\rid$ must be undefined, one of
  \begin{align*}
    \locheld_{\lvl}(\pobl,\trAt[\wtrace']{i}) \land \neg \envheld_{\lvl}(\pobl,\trAt[\wtrace']{i}) \\
    \neg \locheld_{\lvl}(\pobl,\trAt[\wtrace']{i}) \land \envheld_{\lvl}(\pobl,\trAt[\wtrace']{i}) \\
    \neg \locheld_{\lvl}(\pobl,\trAt[\wtrace']{i}) \land \neg \envheld_{\lvl}(\pobl,\trAt[\wtrace']{i})
  \end{align*}
  holds, as if both the environment and local worlds hold $\pobl$, their composition
  would be undefined.

  If $\neg \locheld_{\lvl}(\pobl,\trAt[\strace']{i}) \land \neg \envheld_{\lvl}(\pobl,\trAt[\strace']{i})$
  holds, then $\oblMap[\na{w}^i \worldSsep \lvar{ls}(i)](\rid) \not\resgeq \pobl$ and
  $\oblMap[\en{w}^i](\rid) \not\resgeq \pobl$ hold. 
  From (\ref{int:liveC-key-liveness-base-worst-case}), we know that $\lvar{ls}(i) \in \lvar{lf}(0)$,
  which, given (\ref{int:liveC-key-liveness-act}) and $\pobl \in \AObl_{<n}$
  implies that $\envMap[\lvar{ls}(i)](\rid) \resgeq \pobl$.
  Given (\ref{int:liveC-key-liveness-region-lvl}) and
  the invariant on atomic resources, we also know that $\oblMap[\at{w}^i](\rid) \not\resgeq \pobl$,
  and therefore, since we know that
  $\envMap[\na{w}^i \worldSsep ls(i) \worldSsep \at{w}^i \worldSsep \en{w}^i](\rid) = \oblZero$,
  given the definition of $\worldSsep$, we reach a contradiction, as required.

  Otherwise, if $\neg \locheld_{\lvl}(\pobl,\trAt[\wtrace']{i}) \land \envheld_{\lvl}(\pobl,\trAt[\wtrace']{i})$
  holds, from (\ref{ass:liveC-key-liveness-pq-liveEnv})
  \begin{align*}
      \;&\textbf{if }
        \forall \pobl' \in \RId \times \AObl_{\le\lay(\pobl)}, i \in \Nat\st
          \exists j \geq i\st
            \neg \locheld(\pobl',\trAt[\wtrace']{j})
      \\
      &\textbf{then }
        \forall i\in \Nat\st
          \exists j \geq i\st
            \neg \envheld(\smash{\pobl},\trAt[\wtrace']{j})
  \end{align*}
  holds, and therefore, given (\ref{int:liveC-key-liveness-base-loc-obl-eventually}),
  there exists some minimal $j \geq i$, such that
  $\neg \envheld(\smash{\pobl},\trAt[\wtrace']{j + 1})$ holds.
  Since $j$ is minimal,
  $\forall k \in \Nat \st i \le k \le j \implies \envheld(\smash{\pobl},\trAt[\wtrace']{k})$
  also holds. From this, letting
  \begin{align*}
    \trAt[\wtrace']{j} &= ((h^j, \na{w}^j \worldSsep \lvar{ls}(j), \at{w}^{j}, \en{w}^{j}, v^{j}), \pl)\\
    \trAt[\wtrace']{j+1} &= ((h^{j+1}, \na{w}^{j+1} \worldSsep \lvar{ls}(j + 1), \at{w}^{j+1}, \en{w}^{j+1}, v^{j+1}), \_)
  \end{align*}
  we know that $\oblMap[\en{w}^{j}](\rid) \resgeq \pobl$ and
  $\oblMap[\en{w}^{j+1}](\rid) \not\resgeq \pobl$.
  Then, given (\ref{int:liveC-key-liveness-region-lvl}) and the invariant
  on atomic resources, we know $\oblMap[\at{w}^{j}](\rid) = \oblZero$ and
  $\oblMap[\at{w}^{j+1}](\rid) = \oblZero$, and therefore, by the definition of $\worldSsep$,
  $\envMap[\na{w}^{j} \worldSsep \lvar{ls}(j) \worldSsep \at{w}^{j}](\rid) \resgeq \pobl$ and
  $\envMap[\na{w}^{j+1} \worldSsep \lvar{ls}(j + 1) \worldSsep \at{w}^{j+1}](\rid) \not\resgeq \pobl$.
  
  If $\pl = \loc$, by construction of $\wtrace'$, $\en{w}^{j} \relyAt[\actxt] \en{w}^{j+1}$
  holds, and therefore, from the definition of $\relyAt[\actxt]$,
  $\oblMap[\en{w}^{j}](\rid) = \oblMap[\en{w}^{j + 1}](\rid)$,
  a contradiction.
  
  Otherwise, if $\pl = \env$, by construction of $\wtrace'$, $\na{w}^{j} \worldSsep \lvar{ls}(j) \relyAt[\actxt] \na{w}^{j+1} \worldSsep \lvar{ls}(j+1)$
  holds, and therefore, from the definition of $\relyAt[\actxt]$,
  $\oblMap[\na{w}^{j} \worldSsep \lvar{ls}(j)](\rid) = \oblMap[\na{w}^{j + 1} \worldSsep \lvar{ls}(j+1)](\rid)$.
  Given (\ref{ass:liveC-key-liveness-pq-step}), $\oblMap[\lvar{ls}(j)](\rid) = \oblMap[\lvar{ls}(j+1)](\rid)$,
  and therefore, from the definition of $\worldSsep$, $\oblMap[\na{w}^{j}](\rid) = \oblMap[\na{w}^{j+1}](\rid)$.
  Given that we know that
  $\oblMap[\na{w}^j \worldSsep ls(j) \worldSsep \at{w}^j](\rid) \compat \oblMap[\en{w}^j](\rid)$,
  then clearly $\oblMap[\na{w}^j \worldSsep ls(j) \worldSsep \at{w}^j](\rid) \not\resgeq \pobl$,
  from which it follows that
  $\oblMap[\na{w}^{j+1} \worldSsep ls(j+1) \worldSsep \at{w}^{j+1}](\rid) \not\resgeq \pobl$. From
  this and (\ref{int:liveC-key-liveness-base-worst-case}) it is clear that
  $\envMap[\na{w}^{j+1} \worldSsep ls(j+1) \worldSsep \at{w}^{j+1}](\rid) \resgeq \pobl$.
  However, given that $\oblMap[\en{w}^{j+1}](\rid) \not\resgeq \pobl$,
  from the definition of $\worldSsep$, it must be the case that
  $\envMap[\na{w}^{j+1} \worldSsep ls(j+1) \worldSsep \at{w}^{j+1} \worldSsep \en{w}^{j+1}](\rid) \resgeq \pobl$,
  a contradiction.

  Finally, the case $\locheld_{\lvl}(\pobl,\trAt[\wtrace']{i}) \land \neg \envheld_{\lvl}(\pobl,\trAt[\wtrace']{i})$
  reaches a contradiction similarly. To finish the base case, it now suffices to consider the case
  $\pobl = X \eventually[k] X'$. Given (\ref{ass:liveC-key-liveness-pq-liveEnv}) and
  (\ref{int:liveC-key-liveness-base-loc-obl-eventually}), we know that for some $j \ge i$,
  $\neg \envheld_{\lvl}(\smash{\pobl},\trAt[\wtrace']{j})$ holds.
  Letting $\trAt[\wtrace']{j} = ((\_,\na{w}^{j} \worldSsep \lvar{ls}(j),\wtv[3]),\_)$,
  given (\ref{int:liveC-key-liveness-act}), $\level[\lvar{ls}(j)](\rid) < \lvl$ and
  $\astate[\lvar{ls}(j)](\rid) \in X \setminus X'$ hold.
  Given $\level[\lvar{ls}(j)](\rid) < \lvl$ and $\neg \envheld_{\lvl}(\smash{\pobl},\trAt[\wtrace']{j})$,
  $\astate[\lvar{ls}(j)](\rid) \in X'$ holds, a contradiction.

  By contradiction, the base case holds as required.
  
  \item[Inductive case:]
  Take $\alpha \in \Ord, i \in \Nat$ and assume
  $
  \lvar{ls}(i) \in l(\alpha)
  $.
  Since $l(\alpha) = \Union_{\lvar{lf} \in P} \lvar{lf}(\alpha)$
  holds for some $\lvar{lf} \in P$,
  we have $\lvar{ls}(i) \in \lvar{lf}(\alpha)$.
  Now assume, towards a contradiction,
  that $
    \forall j \ge i \st
      \envheld_{\lvl}(X \eventually[k] X',\trAt[\wtrace]{j})
  $
  and therefore $
    \forall j \ge i \st
      \exists v \in X \setminus X' \st \trAt[\wtrace]{j} = ((\wtv[4], v),\_)
  $.
  We now demonstrate, that under this assumption the following holds:
  \begin{align}
    (\forall j \ge i \st \lvar{ls}(j) \in \lvar{lf}(\alpha)) \lor
    (\exists j > i, \beta < \alpha \st \lvar{ls}(j) \in l(\beta))
    \label{int:liveC-key-liveness-ind-worst-case}
  \end{align}
  To show this, it is sufficient to show
  that $\neg (\exists j > i, \beta < \alpha \st \lvar{ls}(j) \in l(\beta))$
  implies $\forall j \ge i \st \lvar{ls}(j) \in \lvar{lf}(\alpha)$.
We proceed to prove $\forall j \ge i \st \lvar{ls}(j) \in \lvar{lf}(\alpha)$
  by induction on $j \ge i$. The base case holds by our assumptions.
  Now for the inductive case, assume that for $j \ge i$, $\lvar{ls}(j) \in \lvar{lf}(\alpha)$
  holds. From (\ref{ass:liveC-key-liveness-pq-step}), $\lvar{ls}(j + 1) \in {\relyAt[\actxt]}(\lvar{lf}(\alpha))$
  holds and from (\ref{ass:liveC-key-liveness-pq-envlive}), setting $\alpha_1 = \alpha$,
  $\relyAt[\actxt](\lvar{lf}(\alpha)) \subseteq \lvar{lf}(\alpha) \union \Union_{\beta < \alpha} l(\beta) \union t$,
  therefore, either $\lvar{ls}(j + 1) \in \lvar{lf}(\alpha)$, $\lvar{ls}(j+1) \in \Union_{\beta < \alpha} l(\beta)$ or
  $\lvar{ls}(j + 1) \in t$ hold. In the case where $\lvar{ls}(j + 1) \in t$ holds, from (\ref{ass:liveC-key-liveness-pq-T-imp-X'}),
  $\exists v \in X' \st \wtrace(j + 1) = ((\wtv[4], v), \_)$, a contradiction
  and in the case where $\lvar{ls}(j+1) \in \Union_{\beta < \alpha} l(\beta)$ holds,
  we reach a contradiction with $\neg (\exists j > i, \beta < \alpha \st \lvar{ls}(j) \in l(\beta))$
  which implies $\forall j > i, \beta < \alpha \st \lvar{ls}(j) \not\in l(\beta)$.
  Therefore, $\lvar{ls}(j + 1) \in \lvar{lf}(0)$ holds, as required, completing the proof by induction.

  The inductive case then follows from (\ref{int:liveC-key-liveness-ind-worst-case}).
  The goal follows similarly to the base case for the first disjunct
  and by inductive assumption in the second.
  \qedhere
  \end{proofscript}
\end{proof}

\begin{lemma}
  \label{lemma:liveC-key-liveness}
  Given $M \in \Ord \to \UCWorld[\actxt], T \in \Assrt, n \le m, k$
  such that
  \begin{align}
    & \SENVLIVE n; \lvl; \actxt |= L -\smash{M}->> T
    \label{ass:liveC-key-liveness-envlive} \\
    & \forall x \in X \st \VALID \lvl; \actxt |= P(x) * T \implies x \in X'
    \label{ass:liveC-key-liveness-T-imp-X'}
  \end{align}
  hold. Take $(\store_0, h_0) \trace \in \Trace$ and let
  \begin{align*}
\forall v\in \AVal\st
    \at{p}(v) &= \WorldSem{\actxt}{}{v\in X \land \at{P}(v)} \\
    l &= \WorldSem{\actxt}{\store_0}{L} \\
    t &= \WorldSem{\actxt}{\store_0}{T * \True}
  \end{align*}
  Taking arbitrary $\na{p}', \en{p} \in \View[\actxt], \straces \subseteq \STrace$ and $v_0 \in X$ such that
  \begin{align}
    & h_0 \in \sem{\na{p}' * l * \at{p}(v_0) * \True * \en{p}}_\lvl
    \label{ass:liveC-key-liveness-init} \\
    & \accept[\spec_{\twoheadrightarrow}]{(\store_0, h_0) \trace}{\na{p}'}{\at{p}}{v_0} : \straces
    \label{ass:liveC-key-liveness-traces-ant}\\
    & \accept[\spec]{(\store_0, h_0) \trace}{\na{p}' * l}{\at{p}}{v_0} : \unframe{l, \emp}{\straces}
    \label{ass:liveC-key-liveness-traces-cons}
  \end{align}
  and arbitrary $\wtrace' \in \wtr{\unframe{l, \emp}{\straces}}{}$ such that
  \begin{align}
    & \goodenv_{\spec}(\wtrace')
    \label{ass:liveC-key-liveness-liveEnv}
  \end{align}
  holds, then, there exists $\wtrace \in \wtr{\straces}{}$, such that:
  \[
  \goodenv_{\spec_{\twoheadrightarrow}}(\wtrace)
  \]
  
\end{lemma}

\begin{proof}
  This lemma follows straightforwardly from lemma \ref{lemma:liveC-key-liveness-pq}.
\end{proof}

\begin{theorem}
  Taking $n \in \Layer, T \in \Assrt$ and $M \in \Ord \to \Assrt$ such that $m \laygeq n, k \laygeqq n$ and
  \begin{align}
    & \SENVLIVE n; \lvl; \actxt |= L -\smash{M}->> T
    \label{ass:liveC-envlive} \\
    & \forall x \in X \st \VALID \lvl; \actxt |= \at{P}(x) * T \implies x \in X'
    \label{ass:liveC-t-pq}
  \end{align}
  Then, for any $\funspec\in\FunSpec$ and $\cmd \in \Cmd$, if
  \begin{align}
    & \progvars(L) \inters \modvars(\cmd) = \emptyset
    \label{ass:liveC-progvars} \\
    & \JUDGE[\funspec] |= \cmd : \spec_{\twoheadrightarrow}
  \end{align}
  then
  \[
  \JUDGE[\funspec] |= \cmd : \spec
  \]
\end{theorem}

\begin{proof}
  Taking $n \in \Layer, T \in \Assrt$ and $M \in \Ord \to \Assrt$ arbitrary such that $m \laygeq n, k \laygeqq n$,
  (\ref{ass:liveC-envlive}) and (\ref{ass:liveC-t-pq}) hold. Then, to start off, given (\ref{ass:liveC-envlive}),
  $\STABLE \actxt |= { L }$ holds, and therefore, $\na{P} \ssep L \in  \Stable[\actxt]$.
  From this we can infer, $\spec \in \Spec$.

  Then, taking $\cmd \in \Cmd$ arbitrary such that (\ref{ass:frame_progvars}) holds,
  $\funspec \in \FunSpec$ arbitrary such that $\JUDGE[\funspec] |= \cmd : \spec_{\twoheadrightarrow}$ holds
  and arbitrary $\functxt \in \FunCtxt$ such that $\JUDGE |= \functxt : \funspec$ holds,
  then $\sem{\cmd}_{\functxt} \subseteq \Sem{\spec_{\twoheadrightarrow}}$.
  From \cref{lemma:mod_vars} and (\ref{ass:liveC-progvars}),
  we can also infer that $\sem{\cmd}_{\functxt} \subseteq \Trace_{\progvars(L)}$
  and therefore, it is clear that
  $\sem{\cmd}_{\functxt} \subseteq \Trace_{\progvars(L)} \inters \Sem{\spec_{\twoheadrightarrow}}$.
  From this, we know that it is sufficient to show that
  $\Trace_{\progvars(\na{R})} \inters \Sem{\spec_{\twoheadrightarrow}} \subseteq \Sem{\spec}$,
  to show that $\sem{\cmd}_{\functxt} \subseteq \Sem{\spec}$ holds,
  and therefore, $\JUDGE[\funspec] |= \cmd : \spec$, as required.

  Therefore, taking $(\store_0,h_0)\trace \in \Sem{\spec_{\twoheadrightarrow}} \inters \Trace_{\progvars(L)}$ arbitratry,
  it is sufficient to show $(\store_0,h_0)\trace \in \Sem{\spec}$. Let
  \begin{align*}
    \hspace{-1cm}
    \na{p} &= \WorldSem{\actxt}{\store_0}{\na{P}} &
    \at{p}(v) &=
    \begin{cases}
      \WSem[\actxt]{\at{P}(v) \land v \in X} \CASE x \in \AVal \\
      \wEmp[\actxt] \OTHERWISE
    \end{cases} \\
    m(\alpha) &= \WorldSem{\actxt}{\store_0}{M(\alpha)} &
    l(\alpha) &= \WorldSem{\actxt}{\store_0}{L * M(\alpha)} \\
    l &= \WorldSem{\actxt}{\store_0}{L} &
    t &= \WorldSem{\actxt}{\store_0}{T} &
  \end{align*}
  To show $(\store_0,h_0)\trace \in \Sem{\spec}$, assume for some $v_0 \in X$, $h_0 \in \sem{\na{p} \ssep \at{p}(v_0) \ssep l \ssep \True}_\lvl$. Then, given $(\store_0,h_0)\trace \in \Sem{\spec_{\twoheadrightarrow}}$, for some $\straces \in \powerset(\STrace)$:
  \[
  \accept[\spec_{\twoheadrightarrow}]{(\store_0, h_0)\;\trace}{\na{p}}{\at{p}}{v_0} : \straces \land
  \forall \wtrace \in \wtr{\straces}{} \st \goodenv_{\spec_{\twoheadrightarrow}}(\wtrace) \implies \locterm(\wtrace)
  \]
  Given \cref{lemma:frame_safety} and that the definition of the trace safety judgement does not depend
  on the good states of a specification, $X'$, clearly,
  $\accept[\spec]{(\store_0, h_0)\;\trace}{\na{p} * l}{\at{p}}{v_0} : \unframe{L, \emp}{\straces}$ holds.
  To complete the proof, it suffices show that
  \[
  \forall \wtrace \in \wtr{\unframe{L, \emp}{\straces}}{} \st
  \goodenv_{\spec_{\twoheadrightarrow}}(\wtrace) \implies \goodenv_{\spec}(\wtrace)
  \]
This follows straightfowardly from \cref{lemma:liveC-key-liveness}.
\end{proof}
 
\subsection{Soundness of \ref{rule:while}}

\begin{definition}[Concrete trace sequence operator]
  \[
  \trace = \trace_1 \trseq \trace_2 \iff
  \begin{array}{l}
    (\neg \locterm(\trace_1) \land \trace = \trace_1) \lor {} \\
    \left(
    \begin{array}{l}
      \exists \store \in \Store, h \in \Heap, \trace_1'\loc(\store,h)\trace_1'', (\store,h)\trace_2' \in \mathcal{\Trace} \st \\
      \trace_1 = \trace_1'\loc(\store,h)\trace_1'' \land \trace_2 = (\store,h)\trace_2' \land \term((\store,h)\trace_1'') \land \trace = \trace_1'\loc(\store, h)\loc(\store, h)\trace_2'
    \end{array}
    \right)
  \end{array}
  \]
  A similarly defined overloading of this operator exists for specification traces, $\strace_1 \trseq \strace_2$ and the obvious lifting to sets $\straces_1 \trseq \straces_2$.
\end{definition}

\begin{lemma}
  \label{lemma:while_trace_destruct}
  For arbitrary $\functxt \in \FunCtxt$, $(\store_0, h_0)\trace \in \sem{\p{while($\bexp$)\{\cmd\}}}_{\functxt}$, either $\neg \Sem[B]{\bexp}_{\store_0}$ and $(\store_0, h_0)\trace \in \sem{\p{skip}}_{\functxt}$,
  or $\Sem[B]{\bexp}_{\store_0} $ and there exists $(\store_0, h_0)\trace' \in \sem{\cmd}_\functxt$ and $\trace'' \in \sem{\p{while($\bexp$)\{\cmd\}}}_{\functxt}$,
  such that $(\store_0, h_0)\trace = (\store_0, h_0)loc(\store_0, h_0)\trace' \trseq \trace''$.
\end{lemma}
\begin{proof}
  Straightforward by induction on $\step[\functxt]{\_}$.
\end{proof}

\begin{lemma}
  \label{lemma:safety_post}
  Given an arbitrary specification
  \[
  \spec = \SPEC \levl; \actxt |= \A x \in X \eventually X'. <\na{P}|\at{P}(x)> \E y. <\na{Q}(x,y)|\at{Q}(x,y)>
  \]
  for an arbitrary trace $(\store_0,h_0)\trace \in \Trace$, let
  \begin{align*}
    \na{p}    &= \WorldSem{\actxt}{\store_0}{\na{P}}
    &
    \at{p}(v) &= \WorldSem{\actxt}{\store_0}{\at{P}(v)}
  \end{align*}
  If for some $v \in X$ and $\straces \in \mathcal{P}(\STrace)$,
  $
    h_0 \in \sem{\na{p} \ssep \at{p}(v) \ssep \True}
  $ and $
    \accept[\spec]{(\store_0,h_0)\trace}{\na{p}}{\at{p}}{v} : \straces
  $, then:
  \begin{multline}
    \forall \strace \in \straces \st \forall i \in \Nat \st
      \term(\trFrom[\strace]{i}) \implies
        \exists h \in \Heap, \store \in \Store,
        \na{p} \in \View[\actxt], v \in X, v' \in \AVal \st\\
          \strace(i) = (\store, h, \na{p}, \emp, \DONE(v,v')) \land
          \na{p} = \WorldSem{\actxt}{\store}{\na{Q}(v, v')} \land
          h \in \sem{\na{p} * \True}_\lvl
  \end{multline}
\end{lemma}

\begin{proof}
    Straightforward by induction on the specification semantics rules.
\end{proof}

For the rest of the section, let
\begin{gather*}
  \spec'(\beta, b)  = \HSPEC m; \lvl; \actxt |= {P(\beta) *  (b \dotimplies T(\beta)) \land \bexp} {\exists \gamma \st P(\gamma) \land \gamma \leq \beta * (b \dotimplies \gamma < \beta)}\\
  \spec(\beta_0)   = \HSPEC m; \lvl; \actxt |= {P(\beta_0) * L} {\exists \beta \st  P(\beta) * L \land \neg \bexp \land \beta_0 \ge \beta}
\end{gather*}

\begin{lemma}
  \label{lemma:while_safety}
  Take $\functxt \in \FunCtxt$ and $\beta_0 \in \Ord$ arbitrary and take $(\store_0,h_0)\trace \in \sem{\p{while($\bexp$)\{\cmd\}}}_{\functxt}$ such that $\Sem[B]{\bexp}_{\store_0}$. Let
  \begin{align*}
    p'(\beta, b) &= \WorldSem{\actxt}{\store_0}{P(\beta) *  (b \dotimplies T)} &
    l            &= \WorldSem{\actxt}{\store_0}{L}
  \end{align*}

  As $\Sem[B]{\bexp}_{\store_0}$, by lemma \ref{lemma:while_trace_destruct}, there exists $(\store_0, h_0)\trace' \in \sem{\cmd}_\functxt$ and $(\store_1,h_1)\trace'' \in \sem{\p{while($\bexp$)\{\cmd\}}}_{\functxt}$, such that $(\store_0, h_0)\trace = (\store_0, h_0)\trace' \trseq (\store_1,h_1)\trace''$. If, for arbitrary $\beta' \le \beta \le \beta_0$ and $\straces', \straces'' \in \mathcal{P}(\STrace)$, there exists $b \in \Bool$, such that:
  \begin{align*}
    & h_0 \in \sem{p'(\beta, b) * l * \True}_\lvl \\
    & \accept[\spec'(\beta, b)]{(\store_0, h_0)\trace'}{p'(\beta,b)}{\emp}{1} : \straces' \\
    & \accept[\spec(\beta')]{(\store_1,h_1)\trace''}{p'(\beta', \False) * l}{\emp}{1} : \straces'' \end{align*}
  then:
  \[
  \accept[\spec(\beta)]{(\store_0,h_0)\trace}{p'(\beta, \False) * l}{\emp}{1} : \straces' \trseq \straces''
  \]
  and one of the following hold:
  \begin{gather*}
    \locterm((\store_0, h_0)\trace) \\
    \forall \wtrace \in \wtr{\straces' \trseq \straces''}{} \st \neg \goodenv_{\spec(\beta)}(\wtrace) \\
    \forall \strace \in \straces' \trseq \straces'' \st \forall i \in \Nat \st \exists j \ge i, \beta \st \strace(j) = (\store, h, p'(\beta,\False), \emp, 1)
  \end{gather*}
  
\end{lemma}

\begin{proof}
  This lemma is proven by coinduction on the structure of $(\store_0,h_0)\trace$. First, assume:
  \begin{align}
    & h_0 \in \sem{p'(\beta, b) * l * \True}_\lvl \label{eq:while_pre}\\
    & \accept[\spec'(\beta, b)]{(\store_0, h_0)\trace'}{p'(\beta,b)}{\emp}{1} : \straces' \label{eq:while_first_it}\\
    & \accept[\spec(\beta')]{(\store_1,h_1)\trace''}{p'(\beta', \False) * l}{\emp}{1} : \straces'' \label{eq:while_cont} \end{align}

  As, clearly, $\forall \beta \st p'(\beta, \True) \subseteq p'(\beta, \False)$,
  using (\ref{eq:while_pre}), (\ref{eq:while_first_it}), (\ref{eq:while_cont}),
  \cref{lemma:frame_safety} and \cref{lemma:safety_post}, by coinduction,
  we can derive:
  \[
  \accept[\spec(\beta)]{(\store_0,h_0)\trace}{p'(\beta, \False) * l}{\emp}{1} : \straces' \trseq \straces''
  \]

  Now, split on $\locterm((\store_0,h_0)\trace)$. If $\locterm((\store_0,h_0)\trace)$, then the goal holds, otherwise, split again on $\locterm((\store_0,h_0)\trace')$. If $\neg \locterm((\store_0,h_0)\trace')$, then $\straces' \trseq \straces'' = \straces'$, so from (\ref{eq:while_first_it}), $\forall \wtrace \in \wtr{\straces' \trseq \straces''}{} \st \goodenv_{\spec'(\beta, b)}(\wtrace) \implies \locterm(\wtrace)$, from this we infer that
  $\forall \wtrace \in \wtr{\straces' \trseq \straces''}{} \st \neg \goodenv_{\spec'(\beta, b)}(\wtrace)$. Given that the definition of $\goodenv$ only reference the pseudo-quantifier,
  context layer and atomicity context of the parametrising specification, this clearly implies our goal, $\forall \wtrace \in \wtr{\straces' \trseq \straces''}{} \st \neg \goodenv_{\spec(\beta)}(\wtrace)$, as required.
  Otherwise, $\neg \locterm((\store_1,h_1)\trace'')$. To not terminate, the while loop must iterate at least one more time, as $(\store_1,h_1)\trace''$ is a fair trace, therefore $\Sem[B]{\bexp}_{\store_1}$ holds. We can then use lemma \ref{lemma:while_trace_destruct} and our coinductive assumption to obtain
  $
  h_1 \in \sem{p'(\beta, b) * l * \True}_\lvl
  $
  and that one of the following holds:
  \begin{gather*}
    \forall \wtrace \in \wtr{\straces''}{} \st \neg \goodenv_{\spec(\beta')}(\wtrace) \\
    \forall \strace \in \straces'' \st \forall i \in \Nat \st \exists j \ge i, \beta \st \strace(j) = (\store, h, p'(\beta,\False), \emp, 1)
  \end{gather*}
  If the first holds, then $\forall \strace \in \straces' \trseq \straces'' \st \goodenv_{\spec(\beta)}(\strace) \implies \locterm(\strace)$, so the goal is proven; if the second holds, then
  from $h_1 \in \sem{p'(\beta, b) * l * \True}_\lvl$:
  \[
  \forall \strace \in \straces' \trseq \straces'' \st \forall i \in \Nat \st \exists j \ge i, \beta \st \strace(j) = (\store, h, p'(\beta,\False), \emp, 1)
  \qedhere
  \]
\end{proof}

\begin{theorem}
  Given
  \begin{align}
    & \forall \beta \le \beta_0 \st \forall b \in \set{0,1}\st \JUDGE[\funspec] |= \cmd : \spec'(\beta, b) \label{ass:while-body}\\
    & \forall \beta \le \beta_0 \st \SENVLIVE m(\beta); \lvl; \actxt |= L - M ->> T(\beta) \label{ass:while-envlive}\\
    & \forall \alpha\st \STABLE \actxt |= {\exists \alpha'\st L * M(\alpha') \land \alpha' \leq \alpha} \label{ass:while-stable-dec}\\
    & \STABLE \actxt |= L \\
    & \forall \beta \le \beta_0 \st \VALID \actxt |= \minLay{P(\beta)}{m(\beta)} \layleq m \\
    & \progvars(T,L,M) \cap \modvars(\cmd) = \emptyset
  \end{align}
  then:
  \[
  \JUDGE[\funspec] |= {\acode{while(BEXP)\{CMD\}}} : \spec(\beta_0)
  \]
\end{theorem}

\begin{proof}
  Taking $\functxt \in \FunCtxt$ arbitrary such that $\JUDGE |= \functxt : \funspec$ and
  $(\store_0,h_0)\trace \in \sem{\p{while($\bexp$)\{\cmd\}}}_{\functxt}$ arbitary.
  We need to show $(\store_0,h_0)\trace \in \sem{\spec(\beta_0)}$. Let
  \begin{align*}
    p'(\beta, b) &= \WorldSem{\actxt}{\store_0}{P(\beta) *  (b \dotimplies T)} &
    l            &= \WorldSem{\actxt}{\store_0}{L}
  \end{align*}
  To reach the goal, assume $h_0 \in \sem{p'(\beta_0, \False) * l * \True}_\lvl$. By lemma \ref{lemma:while_safety}, in the case that $\Sem[B]{\bexp}_{\store_0}$, and our assumptions, there exists $\straces \subseteq \STrace$:
  \[
  \accept[\spec]{(\store_0,h_0)\trace}{\WorldSem{\actxt}{\store_0}{P(\beta_0) * L}}{\emp}{1} : \straces
  \]
    and one of the following holds:
  \begin{gather*}
    \locterm((\store_0, h_0)\trace) \\
    \forall \strace \in \straces \st \neg \goodenv_\spec(\strace) \\
    \forall \strace \in \straces \st \forall i \in \Nat \st \exists j \ge i, \beta \st \strace(j) = (\store, h, p'(\beta,\False), \emp, 1)
  \end{gather*}
  In the first case, $\forall \strace \in \straces \st \locterm(\strace)$, therefore, $\forall \strace \in \straces \st \goodenv_{\spec(\beta_0)}(\strace) \implies \locterm(\strace)$, as required. In the second, $\forall \strace \in \straces \st \goodenv_{\spec(\beta_0)}(\strace) \implies \locterm(\strace)$ clearly also holds. Finally, we consider the third case. Take $\strace \in \straces$ arbitrary and asume $\goodenv_{\spec(\beta_0)}(\strace)$. Now, for a contradiction, assume $\neg \locterm(\strace)$. In this case, due to (\ref{ass:while-envlive}), with an argument similar to that in the soundness of (\ref{rule:liveness-check}), at every point, every $\strace \in \straces$ eventually reaches a state satisfying $T(\beta_0)$. This must eventually be stable due to the metric stabily decreasing due to assumption (\ref{ass:while-stable-dec}), holding till the next iteration, at which point, the loop variant decreases due to (\ref{ass:while-body}) with $b = \True$. By repeating this argument with the continuation, by well-foundness of ordinals, the while loop must eventually terminate if $\goodenv(\strace)$ holds, leading to a contradiction. Therefore, in all cases, $\forall \strace \in \straces \st \goodenv_{\spec(\beta_0)}(\strace) \implies \locterm(\strace)$ holds, as required, concluding the proof.
\end{proof}

\subsection{Soundness of \ref{rule:parallel}}

\def\pe{p_{\mathsf{e}}}

\newcommand{\sbowtie}{\overset{\raisebox{-3pt}[0pt]{\scriptsize\textsf{s}}}{\bowtie}}
\newcommand{\wbowtie}{\overset{\raisebox{-3pt}[0pt]{\scriptsize\textsf{w}}}{\bowtie}}

\begin{definition}[Bowtie operator]
The bowtie operator, $\bowtie$, which interleaves the subjective traces of two commands executed in parallel into a command from their combined perspective:
\begin{align*}
    (\store, h)\env\trace_1' \bowtie (\store,h)\env\trace_2' &= (\store,h)\env(\trace_1' \bowtie \trace_2') \\
    (\store, h)\env\trace_1' \bowtie (\store,h)\loc\trace_2' &= (\store,h)\loc(\trace_1' \bowtie \trace_2') \\
    (\store, h)\loc\trace_1' \bowtie (\store,h)\env\trace_2' &= (\store,h)\loc(\trace_1' \bowtie \trace_2')
  \end{align*}
  All other cases are undefined.
\end{definition}

\begin{definition}[Specification Bowtie operator]
The specification bowtie operator, $\sbowtie$, which interleaves the subjective specification traces of two commands executed in parallel into a command from their combined perspective:
\begin{align*}
    (\store, h, p_1, \emp, 1)\env\trace_1' \sbowtie (\store, h, p_2, \emp, 1)\env\trace_2' &= (\store, h, p_1 \ssep p_2, \emp, 1)\env(\trace_1' \sbowtie \trace_2') \\
    (\store, h, p_1, \emp, 1)\env\trace_1' \sbowtie (\store, h, p_2, \emp, 1)\loc\trace_2' &= (\store, h, p_1 \ssep p_2, \emp, 1)\loc(\trace_1' \sbowtie \trace_2') \\
    (\store, h, p_1, \emp, 1)\loc\trace_1' \sbowtie (\store, h, p_2, \emp, 1)\env\trace_2' &= (\store, h, p_1 \ssep p_2, \emp, 1)\loc(\trace_1' \sbowtie \trace_2')
  \end{align*}
  All other cases are undefined.
\end{definition}

\begin{lemma}
  \label{lemma:bowtie-comb}
  For any $\functxt \in \FunCtxt$:
  \[
    \forall \trace \in \sem{\cmd_1 || \cmd_2}_\functxt \st
      \exists \trace_1 \in \sem{\cmd_1}_\functxt,
              \trace_2 \in \sem{\cmd_2}_\functxt \st
        \trace = \trace_1 \bowtie \trace_2
  \]\unskip
\end{lemma}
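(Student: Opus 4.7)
The plan is to unfold $\sem{\cmd_1 \parallel \cmd_2}_{\functxt}$, extract the underlying fair program trace $\pconf_0 \schpl_0 \pconf_1 \schpl_1 \cdots$ from which $\trace$ is obtained via $\stripTr{\cdot}$, and then \emph{project} it onto each thread. By the operational rules for parallel composition in \cref{fig:oper-semantics}, every non-terminal program configuration in the trace has the shape $(\store_i, h_i, \pstate_i^L \parallel \pstate_i^R)$ (or the collapsed $\checkmark$ case), and each scheduler label $\schpl_i$ is either $\env$, $\LocOf{\threadL t}$, $\LocOf{\threadR t}$, or the bookkeeping $\LocOf{\epsilon}$ arising from $\checkmark \parallel \checkmark \step{\LocOf{\epsilon}} \checkmark$. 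I would attribute each step to the ``owner'' thread: $\threadL$-labelled steps belong to thread~1 and appear as $\loc$ in $\trace_1$ and as $\env$ in $\trace_2$, and symmetrically for $\threadR$; pure $\env$ steps are $\env$ in both; the residual $\LocOf{\epsilon}$ join-step I would attribute arbitrarily (say to thread~1) since after the join both components are already $\checkmark$.

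Next I would construct the two program traces $\pconf_i^L$ and $\pconf_i^R$ inductively: starting from $\pstate_0^L$ and $\pstate_0^R$, and for each step of the combined trace either advancing the relevant side by one operational step (using the premise of the \p{par-L}/\p{par-R} rules, which gives exactly a step of $\pstate^L$ or $\pstate^R$ with the same store and heap change), or letting the inactive side see the update as an environment step. By construction the heaps and stores visible at each moment agree, so $\stripTr{\pconf^L}$ and $\stripTr{\pconf^R}$ bowtie back to $\stripTr{\pconf}$; this is essentially a routine induction checking each case of the bowtie definition.

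The main obstacle is \emph{fairness preservation}: I must verify that $\pconf^L$ and $\pconf^R$ are themselves fair program traces, so that $\trace_1 \in \sem{\cmd_1}$ and $\trace_2 \in \sem{\cmd_2}$. The non-trivial direction is that for every active sub-thread of $\pstate_i^L$ (an element of $\threads(\pstate_i^L)$), it eventually appears as the local scheduler annotation in $\pconf^L$. But $\threads(\pstate_i^L \parallel \pstate_i^R) \supseteq \set{\threadL t \mid t \in \threads(\pstate_i^L)}$, so the fairness of the combined trace guarantees each such $\threadL t$ is eventually scheduled (or a fault occurs); by our projection, that exact step becomes a $\LocOf{t}$ step of $\pconf^L$. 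Symmetrically for $\pconf^R$. Infinite occurrence of $\env$ in each projection follows from the combined trace's infinite $\env$ occurrences plus the fact that opposite-thread local steps project to $\env$. This closes the fairness side condition and yields the required $\trace_1, \trace_2$.
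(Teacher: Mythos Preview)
Your plan—projecting the combined fair program trace onto the two components, showing each projection is itself a valid fair program trace, and checking that the strips bowtie back—is exactly the argument the paper intends; the paper's entire proof is the one-liner ``Straightforward by induction on $\step[\functxt]{\_}$,'' so you have simply supplied the details it omits. Your fairness-preservation argument for the projections is correct.

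There is, however, a real gap in your handling of the join step $\checkmark \parallel \checkmark \step{\LocOf{\epsilon}} \checkmark$. You propose to ``attribute it arbitrarily (say to thread~1),'' making it $\loc$ in $\trace_1$ and $\env$ in $\trace_2$. But at that moment thread~1's projected program state is already $\checkmark$, which has no outgoing $\LocOf{\cdot}$ transition, so the sequence you build for thread~1 is not in $\PTrace_{\functxt}$ and its strip is not in $\sem{\cmd_1}$. Attributing it to thread~2 fails symmetrically; letting both projections take $\env$ there makes $\trace_1 \bowtie \trace_2$ carry $\env$ where $\trace$ carries $\loc$. In fact no attribution works: any $\trace_i \in \sem{\cmd_i}$ can contain a $\loc$ label only while the underlying program state has not yet reached $\checkmark$, so the number of $\loc$ labels available in $\trace_1 \bowtie \trace_2$ is bounded by the sum of the local step counts of the two components and cannot account for the extra administrative join step. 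This is a technical wrinkle that the paper's one-line proof also elides; as literally stated, the lemma fails on terminating runs unless one adjusts either the $\bowtie$ operator (adding a clause to absorb the join $\loc$) or the accounting of that step.
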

\begin{proof}
  Straightforward by induction on $\step[\functxt]{\_}$.
\end{proof}

\begin{lemma}
\label{lemma:mods-const-tr}
For any trace $(\store_0,h_0) \;\trace\; (\store_1,h_1) \;\trace' \in \sem{\cmd}_{\functxt}$, we have
$\forall \pvar{x} \in \PVar \setminus \mods(\cmd)\st \store_0(\pvar{x}) = \store_1(\pvar{x})$.
\end{lemma}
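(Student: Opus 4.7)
The plan is to lift $\mods$ from commands to arbitrary program states $\pstate$, and then prove by induction on the trace prefix that non-modified variables of the running program state are preserved by every step (local and environment). The strip function $\stripTr{\cdot}$ discards program states but not stores, so it suffices to establish the invariant at the program-trace level and observe that $\mods(\cmd)$ is preserved under reduction.

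\textbf{Step 1: Extend $\mods$ to program states.} Define $\mods(\pstate)$ by recursion: $\mods(\checkmark) = \emptyset$; $\mods(\pstate;\cmd) = \mods(\pstate) \cup \mods(\cmd)$; $\mods(\acode{let f(VARS)=CMD in\,$\pstate$}) = \mods(\pstate)$; $\mods((\store',\pstate)) = \mods(\pstate) \setminus \dom(\store')$; $\mods(\pstate_1 \parallel \pstate_2) = \mods(\pstate_1)\cup\mods(\pstate_2)$; and $\mods(\cmd)$ as originally defined. The $(\store',\pstate)$ clause is the key nuance: a local binder shadows any outer variable, so modifications to $\pvar{x}\in\dom(\store')$ inside $\pstate$ only affect the local store, not the outer one.

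\textbf{Step 2: Preservation and locality at one step.} I would prove two auxiliary lemmas simultaneously by induction on the derivation of $\step{\sched}$ (using \cref{fig:oper-semantics,fig:oper-semantics-fail}):
\begin{enumerate}
  \item \emph{Preservation}: if $(\store,h,\pstate) \step{\sched} (\store',h',\pstate')$ with $\pstate' \neq \checkmark$ and neither configuration is $\fault$, then $\mods(\pstate') \subseteq \mods(\pstate)$.
  \item \emph{Locality}: under the same hypotheses, $\forall \pvar{x} \in \PVar\setminus\mods(\pstate)\st \store(\pvar{x}) = \store'(\pvar{x})$.
\end{enumerate}
The axiom cases (assignment, read, CAS, FAS, alloc) are immediate since the modified program variable belongs to $\mods$ of the command by the syntactic definition. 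The structural cases (sequential composition, let, parallel, atomic block) follow from the IH and the definition. The two delicate cases are \code{var}-binding and function call. For \code{var\ x = $\vexp$ in\ $\cmd$}, the step produces $([\pvar{x} \mapsto \esem{\vexp}(\store)],\cmd)$; the outer store is unchanged, and $\mods((\store[\pvar{x}\mapsto\cdot],\cmd)) = \mods(\cmd)\setminus\{\pvar{x}\} \subseteq \mods(\acode{var x=$\vexp$ in $\cmd$})$. For the wrapped reduction rule on $(\store_1,\pstate)$, I rely on the fact that steps under a local binder use $\store \sunion \store_1$ where $\store_1$ right-biases, so any modification to a variable in $\dom(\store_1)$ is absorbed by $\store_1'$ and leaves $\store$ untouched — exactly matching the set subtraction in $\mods((\store_1,\pstate))$. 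The function call rule expands to a nested \code{var} block, so it reduces to the previous cases. Environment steps are trivial since $\envstep$ never changes the store.

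\textbf{Step 3: Induction over the trace.} By \cref{def:semantics}, a trace $(\store_0,h_0)\trace$ in $\sem{\cmd}_{\functxt}$ is the image under $\stripTr{\cdot}$ of a fair program trace starting at $\pconf_0 = (\store_0,\wtv,\cmd)$. The intermediate configuration $(\store_1,h_1)$ corresponds to some $\pconf_i = (\store_1,h_1,\pstate_i)$. Iterating Step 2 along the finite prefix $\pconf_0 \schpl_0 \cdots \schpl_{i-1} \pconf_i$ yields $\mods(\pstate_i) \subseteq \mods(\cmd)$ and $\store_0(\pvar{x}) = \store_1(\pvar{x})$ for every $\pvar{x} \notin \mods(\cmd)$, which is the required conclusion. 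Faulting configurations never alter the store, so their presence in the prefix causes no issue.

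The main obstacle is getting the $(\store',\pstate)$ bookkeeping right: the operational rule composes the inner store with the outer via $\sunion$ and threads a potentially changed $\store_1'$ back, so one has to verify carefully that the shadowed variables contribute nothing to the outer store transition and that $\mods((\store_1',\pstate'))$ remains bounded by $\mods((\store_1,\pstate))$ even after the inner state has evolved. Everything else is a routine case analysis.
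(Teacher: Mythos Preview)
Your proposal is correct and follows the same approach the paper indicates: the paper's proof is simply ``Straightforward by induction on the length of the trace,'' and your plan spells out exactly that induction, including the auxiliary extension of $\mods$ to program states and the per-rule case analysis that the paper leaves implicit. The only thing to note is that your careful treatment of the $(\store',\pstate)$ binder case is precisely the detail the paper elides; otherwise there is nothing to add.
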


\begin{proof}
  Straightforward by induction on the length of the trace.
\end{proof}

For the rest of the section, we name the specifications involved in the
\ref{rule:parallel} rule as follows:
\begin{align*}
  \spec_1  &= \HSPEC m_1;\lvl;\actxt |= {P_1}  {Q_1} &
  \spec_2  &= \HSPEC m_2;\lvl;\actxt |= {P_2}  {Q_2} &
  \spec    &= \HSPEC m;\lvl;\actxt |= {P_1 \ssep P_2}  {Q_1 \ssep Q_2}
\end{align*}

\begin{lemma}
  \label{lemma:parallel-safety}
  
  For arbitrary $(\store_0,h_0)\trace, (\store_0,h_0)\trace_1, (\store_0,h_0)\trace_2 \in \Trace$, $\straces_1, \straces_2 \in \mathcal{P}(\STrace)$, $v_1, v_2 \in \set{1, \DONE(1,1)}$, and, $p_1', p_2' \in \View_\actxt$ , then:
  \[
  \left.
  \begin{aligned}
    (\store_0, h_0)\trace = (\store_0, h_0)\trace_1 \bowtie (\store_0, h_0)\trace_2 \\
    \accept[\spec_1]{(\store_0,h_0)\trace_1}{p_1'}{\emp}{v_1} : \straces_1 \\
    \accept[\spec_2]{(\store_0,h_0)\trace_2}{p_2'}{\emp}{v_2} : \straces_2 \\
    h_0 \in \sem{p_1' \ssep p_2' \ssep \True}_\lvl \\
    \term((\store_0,h_0)\trace_1) \implies p_1' = \WorldSem{\actxt}{\store}{Q_1}\\
    \term((\store_0,h_0)\trace_2) \implies p_2' = \WorldSem{\actxt}{\store}{Q_2}
  \end{aligned} \right\}
  \implies
  \begin{aligned}
    \exists
      \straces &\in \mathcal{P}(\STrace),
      v \in \set{1, \DONE(1,1)} \st
    \\
    &\quad\accept{(\store, h)\trace}{p_1' \ssep p_2'}{\emp}{v} : \straces \land {} \\
    &\quad
    \forall \strace \in \straces \st \exists \strace_1 \in \straces_1, \strace_2 \in \straces_2 \st \strace = \strace_1 \sbowtie \strace_2 \land {} \\
    &\quad
      (v_1 = \DONE(1,1) \land v_2 = \DONE(1,1)) \iff v = \DONE(1,1)
  \end{aligned}
  \]
\end{lemma}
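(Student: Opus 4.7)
The plan is to construct the instrumented traces of $\straces$ by coinductively interleaving elements of $\straces_1$ and $\straces_2$ according to the pattern dictated by the bowtie decomposition of the concrete trace. Set $v = \DONE(1,1)$ precisely when $v_1 = v_2 = \DONE(1,1)$, and $v = 1$ otherwise. The safety derivation for $(\store_0,h_0)\trace$ will be built step by step, at each step invoking a corresponding step from whichever sub-derivation contributes a step in the bowtie, while absorbing the other thread's local view into the frame.

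The argument proceeds by coinduction on the trace structure, with a case split on the first step. In the \emph{environment-step} case, the definition of $\bowtie$ forces both sub-traces to also begin with an environment step. We apply \cref{rule:env} (or \cref{rule:env2} after the pseudo-linearisation point) to the combined configuration: since the atomic precondition is $\emp$, the required FPU reduces to showing frame-preservation of $\na{p}$, which follows from the two sub-FPUs witnessed by the premises of \cref{rule:env} applied to $\straces_1$ and $\straces_2$ (each uses the other's local view as part of its frame). In the \emph{local-step} case, say thread~1 acts, we appeal to \cref{rule:stutter} (or \cref{rule:linpt}) in $\straces_1$, yielding an FPU from $p_1'$ to some $p_1''$. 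We then instantiate the corresponding rule for the combined derivation, using $p_2' \ssep f$ as the frame; the key property is that frame-preserving updates are monotone in their frames, so enlarging the frame by $p_2'$ preserves the update. The resulting Hoare view is $p_1'' \ssep p_2'$, and we recurse coinductively with $p_2'$ unchanged (thread~2 stutters in this step). The symmetric case for thread~2 is analogous.

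For the termination conditions of \cref{rule:stutter,rule:linpt}, we observe that $\term((\store_0,h_0)\trace)$ in the combined trace occurs precisely when both sub-traces are terminal in the sense of containing no further local steps; by the hypotheses of the lemma, this yields $p_1' = \WorldSem{\actxt}{\store}{Q_1}$ and $p_2' = \WorldSem{\actxt}{\store}{Q_2}$, and hence $p_1' \ssep p_2' = \WorldSem{\actxt}{\store}{Q_1 \ssep Q_2}$, as needed for the combined postcondition. The condition on $v$ follows because the only way to reach $\DONE(1,1)$ is via \cref{rule:linpt}, which for a Hoare triple (where $\at{P} = \at{Q} = \emp$) can fire at any local step; $v = \DONE(1,1)$ in the combined derivation whenever both sub-derivations have transitioned through their respective \cref{rule:linpt} applications. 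Consistency of the variable stores across the bowtie follows from \cref{lemma:mods-const-tr} combined with the syntactic restriction $\mods(\cmd_1) \cap \mods(\cmd_2) = \emptyset$ baked into parallel composition.

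The main obstacle is ensuring that the two sub-derivation FPUs compose coherently into a single FPU for the combined configuration. Concretely, when $\straces_1$ justifies thread~1's step with frame $f_1 \supseteq p_2' \ssep f$, we must witness that $h_1 \in \sem{\na{p}_1' \ssep f_1}_{\lvl}$ whenever $h_1 \in \sem{(p_1' \ssep p_2') \ssep f}_{\lvl}$; this is a matter of re-associating the views using the PCM structure of $\View[\actxt]$ and $\ssep$. The other delicate point is the \emph{environment-step} case after only one thread has pseudo-linearised: here the bowtie still produces an environment step in the combined trace, but the sub-derivation of the linearised thread uses \cref{rule:env2} rather than \cref{rule:env}. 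Once these bookkeeping issues are handled, the coinductive construction goes through uniformly.
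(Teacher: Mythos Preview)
Your proposal is correct and follows essentially the same approach as the paper: coinduction on the trace structure, case-split on whether the head step is local or environmental, and in the local case using the acting thread's FPU with the other thread's view absorbed into the frame (while advancing the other thread's derivation via \textsc{Env}/\textsc{Env$'$}). One small inaccuracy: the syntactic restriction on parallel is the stronger $\mods(\cmd_1) = \mods(\cmd_2) = \emptyset$, not merely disjointness of the modified sets; this is what actually guarantees (via \cref{lemma:mods-const-tr}) that the store is unchanged across every step of the combined trace, which you rely on implicitly.
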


\begin{proof}
  This lemma is proven by coinduction on the structure of $(\store_0, h_0)\trace$.

  The trace either starts with a local, or an environment step. We split on the two cases:
    \begin{proofscript}
    \item[Case~$(\store,h)\trace = (\store,h)\env(\store,h')\trace'$:] Take $(\store_0,h_0)\trace_1, (\store_0,h_0)\trace_2 \in \Trace$, $\straces_1, \straces_2 \in \mathcal{P}(\STrace)$, $v_1, v_2 \in \set{1, \DONE(1,1)}$, and, $p_1', p_2' \in \View_\actxt$ arbitrary, and assume:
      \begin{align}
        & (\store_0, h_0)\trace = (\store_0, h_0)\trace_1 \bowtie (\store_0, h_0)\trace_2  \label{eq:par_env_bowtie_eq} \\
        & \accept[\spec_1]{(\store_0,h_0)\trace_1}{p_1'}{\emp}{v_1} : \straces_1 \label{eq:par_env_t_1} \\
        & \accept[\spec_2]{(\store_0,h_0)\trace_2}{p_2'}{\emp}{v_2} : \straces_2 \label{eq:par_env_t_2} \\
        & h_0 \in \sem{p_1' \ssep p_2' \ssep \True}_\lvl \label{eq:par_env_sat} \\
        & \term((\store_0,h_0)\trace_1) \implies p_1' = \WorldSem{\actxt}{\store}{Q_1} \label{eq:par_env_done_t_1} \\
        & \term((\store_0,h_0)\trace_2) \implies p_2' = \WorldSem{\actxt}{\store}{Q_2} \label{eq:par_env_done_t_2}
      \end{align}
      Given \eqref{eq:par_env_bowtie_eq} and the definition of $\bowtie$:
      \begin{align*}
        & (\store_0, h_0)\trace_1 = (\store_0, h_0)\env(\store_0,h')\trace_1' \\
        & (\store_0, h_0)\trace_2 = (\store_0, h_0)\env(\store_0,h')\trace_2' \\
        & (\store_0, h')\trace' = (\store_0,h')\trace_1' \bowtie (\store_0,h')\trace_2'
      \end{align*}
      Now to prove the goal, consider the case $v_1 = \DONE(1,1)$ and $v_2 = \DONE(1,1)$. In this case, take $v = \DONE(1,1)$, so \ref{rule:env2} must hold for the goal as well as \eqref{eq:par_env_t_1} and \eqref{eq:par_env_t_2}.
      Note that this choice of $v$ immediately satisfies the third conjunct of the goal. To show \ref{rule:env2} holds for the goal, given some $\pe, \pe' \in \View_\actxt$, assume:
      \[
      h_0 \in \sem{ p_1' \ssep p_2' \ssep \pe } \land \update h_0 -> h' |= p_1' \ssep p_2' \ssep \pe -> p_1' \ssep p_2' \ssep \pe'
      \]
      By substitution, this implies both:
      \begin{align*}
        \exists \pe,\pe' \st h_0 \in \sem{ p_1' \ssep \pe } \land \update h -> h' |= p_1' \ssep \pe -> p_1' \ssep \pe' \\
        \exists \pe,\pe' \st h_0 \in \sem{ p_2' \ssep \pe } \land \update h -> h' |= p_2' \ssep \pe -> p_2' \ssep \pe'
      \end{align*}
      Given \ref{eq:par_env_t_1} and \ref{eq:par_env_t_2}, these imply:
      \begin{align*}
        & \accept[\spec_1]{(\store_0,h')\trace_1'}{p_1'}{\emp}{v_1} : \straces_1' & \straces_1 = (\store_0,h_0, p_1,\emp,\DONE(1,1)) \env \straces'_1 \\
        & \accept[\spec_2]{(\store_0,h')\trace_2}{p_2'}{\emp}{v_2} : \straces_2' &
        \straces_2 = (\store_0,h_0, p_2,\emp,\DONE(1,1)) \env \straces'_2
      \end{align*}
      Assumption~\eqref{eq:par_env_sat} and $\update h_0 -> h' |= p_1' \ssep p_2' \ssep \pe -> p_1' \ssep p_2' \ssep \pe'$ yield:
      \begin{align}
      h' \in \sem{p_1' \ssep p_2' \ssep \True}_\lvl \label{eq:par_sat_ind}
      \end{align}
      Now, by using the inductive assumption, as \eqref{eq:par_env_done_t_1} and \eqref{eq:par_env_done_t_2} clearly imply the same assertions for $(\store_0,h')\trace_1'$ and $(\store_0,h')\trace_2'$ respectively, for some $\straces' \in \mathcal{P}(\STrace)$:
      \begin{align}
        & \accept{(\store_0, h')\trace'}{p_1' \ssep p_2'}{\emp}{v} : \straces' \\
        & \forall \strace \in \straces' \st \exists \strace_1 \in \straces_1', \strace_2 \in \straces_2' \st \strace = \strace_1 \sbowtie \strace_2
      \end{align}
      From this first consequence:
      \[
      \accept[\spec]{(\store_0,h)\trace}{p_1' \ssep p_2'}{\emp}{v} : \straces
      \]
      holds, where $\straces = (\store_0, h, p_1' \ssep p_2', \emp, v)\env\straces'$. This is the first conjunct of the goal.

      Finally, taking $\strace \in \straces$ arbitrary, there exists $\strace' \in \straces'$ such that $\strace = (\store_0, h, p_1' \ssep p_2', \emp, v)\env\strace'$. From the second consequence of our inductive assumption, it follows that there exist $\strace_1' \in \straces_1'$ and $\strace_2' \in \straces_2'$ such that $\strace' = \strace_1' \sbowtie \strace_2'$. Then, from the definitions of $\straces_1$ and $\straces_2$, $(\store_0,h_0, p_1,\emp,\DONE(1,1))\env\strace_1' \in \straces_1$ and $(\store_0,h_0, p_2,\emp,\DONE(1,1))\env\strace_2' \in \straces_2$ hold, and $\strace = (\store_0,h_0, p_1,\emp,\DONE(1,1))\env\strace_1' \sbowtie (\store_0,h_0, p_2,\emp,\DONE(1,1))\env\strace_2' \in \straces_2$ holds as required.
      
      Other cases for $v_1$, $v_2$ follow similarly.
    \item[Case~$(\store,h)\trace = (\store,h)\loc(\store,h')\trace'$:]
      Here the variable store does not change as $\mods(\cmd_1 || \cmd_2) = \emptyset$, due to \cref{lemma:mods-const-tr} and the syntactic restriction on parallel commands, requiring both threads to not modify the value of any variable.
      To prove the goal, take $(\store_0,h_0)\trace_1, (\store_0,h_0)\trace_2 \in \Trace$, $\straces_1, \straces_2 \in \mathcal{P}(\STrace)$, $v_1, v_2 \in \set{1, \DONE(1,1)}$, and, $p_1', p_2' \in \View_\actxt$ arbitrary, and assume:
      \begin{align}
        & (\store_0, h_0)\trace = (\store_0, h_0)\trace_1 \bowtie (\store_0, h_0)\trace_2  \label{eq:par_loc_bowtie_eq} \\
        & \accept[\spec_1]{(\store_0,h_0)\trace_1}{p_1'}{\emp}{v_1} : \straces_1 \label{eq:par_loc_t_1} \\
        & \accept[\spec_2]{(\store_0,h_0)\trace_2}{p_2'}{\emp}{v_2} : \straces_2 \label{eq:par_loc_t_2} \\
        & h_0 \in \sem{p_1' \ssep p_2' \ssep \True}_\lvl \label{eq:par_loc_sat}\\
        & \term((\store_0,h_0)\trace_1) \implies p_1' = \WorldSem{\actxt}{\store}{Q_1} \label{eq:par_loc_done_t_1} \\
        & \term((\store_0,h_0)\trace_2) \implies p_2' = \WorldSem{\actxt}{\store}{Q_2} \label{eq:par_loc_done_t_2}
      \end{align}
      Given \eqref{eq:par_loc_bowtie_eq} and the definition of $\bowtie$, either:
      \begin{align*}
        & (\store_0, h_0)\trace_1 = (\store_0, h_0)\loc(\store_0,h')\trace_1' \\
        & (\store_0, h_0)\trace_2 = (\store_0, h_0)\env(\store_0,h')\trace_2'
      \end{align*}
      or:
      \begin{align*}
        & (\store_0, h_0)\trace_1 = (\store_0, h_0)\env(\store_0,h')\trace_1' \\
        & (\store_0, h_0)\trace_2 = (\store_0, h_0)\loc(\store_0,h')\trace_2'
      \end{align*}
      and in both cases:
      \[
      (\store_0, h')\trace' = (\store_0,h')\trace_1' \bowtie (\store_0,h')\trace_2'
      \]
      Consider the first case, the second will follow symmetrically. Assume that the \ref{rule:stutter} rule holds for $\accept[\spec_1]{(\store,h)\loc(\store, h')\trace_1}{p_1'}{\emp}{v_1} : \straces_1$, then, for some $p_1'' \in \View[\actxt]$:
      \begin{align*}
      & \update h_0 -> h' |= p_1' -> p_1'' \\
      & \accept[\spec_1]{(\store_0, h')\;\trace_1'}{p_1''}{\emp}{v_1} : \straces_1' \\
      & \term((\store_0,h')\trace_1') \implies v_1 = \DONE(1,1) \land p_1'' = \WorldSem{\actxt}{\store_0}{Q_1}
      \end{align*}
      where $\straces_1 = (\store_0,h_0, p_1,\emp,v_1)\loc\straces_1'$.
      Given \ref{eq:par_loc_sat} and $\update h_0 -> h' |= p_1' -> p_1''$, $h' \in \sem{p_1'' \ssep p_2' \ssep \True}_\lvl$ holds. Given $\update h_0 -> h' |= p_1' -> p_1''$,  $\update h_0 -> h' |= p_1' \ssep p_2 -> p_1'' \ssep p_2$, also holds. Using this and \ref{rule:env} or \ref{rule:env2}:
      \[
      \accept[\spec_2]{(\store, h')\;\trace_2'}{p_2'}{\emp}{v_2} : \straces_2'
      \]
      where $\straces_2 = (\store_0,h_0, p_2,\emp,v_2)\loc\straces_2'$.
      Now using the inductive assumption, as, once again, \eqref{eq:par_loc_done_t_1} and \eqref{eq:par_loc_done_t_2} clearly imply the same assertions for
      $(\store_0,h')\trace_1'$ and $(\store_0,h')\trace_2'$ respectively, for some $\straces' \in \mathcal{P}(\STrace)$:
      \begin{align}
        & \accept[\spec]{(\store_0, h')\trace'}{p_1'' \ssep p_2'}{\emp}{v} : \straces' \land {} \\
        & \forall \strace \in \straces' \st \exists \strace_1 \in \straces_1', \strace_2 \in \straces_2' \st \strace = \strace_1 \sbowtie \strace_2 \land {} \\
        & (v_1 = \DONE(1,1) \land v_2 = \DONE(1,1)) \iff v = \DONE(1,1) \label{eq:par_loc_lin}
      \end{align}
      The second and third consequents imply the equivalent conjuncts of the goal with the same method as in the $\env$ case and directly respectively.
      As we have shown $\update h -> h' |= p_1' * p_2 -> p_1'' * p_2$ holds, using the \ref{rule:stutter} rule, to show that $\accept[\spec]{(\store_0, h_0)\trace}{p_1' \ssep p_2'}{\emp}{v} : \straces$ holds, where $\straces = (\store_0,h_0, p_1' \ssep p_2',\emp,v)\loc\straces'$, it suffices to show:
      \[
      \term((\store_0,h')\trace') \implies v = \DONE(1,1) \land p_1'' \ssep p_2' = \WorldSem{\actxt}{\store_0}{Q_1 \ssep Q_2}
      \]
      Assuming $\term((\store_0,h')\trace')$ holds, then $\term((\store_0,h')\trace_1')$ and $\term((\store_0,h')\trace_2')$ hold.
      From this it follows that $v_1, v_2 = \DONE(1,1)$, so, due to \eqref{eq:par_loc_lin}, $v = \DONE(1,1)$.

      Finally, due to $\term((\store_0,h')\trace_1')$ and $\term((\store_0,h')\trace_2')$, $p_1'' = \WorldSem{\actxt}{\store_0}{Q_1}$ and $p_2' = \WorldSem{\actxt}{\store}{Q_2}$ hold respectively, yielding $p_1'' \ssep p_2' = \WorldSem{\actxt}{\store_0}{Q_1 \ssep Q_2}$, as required.
      
      The \ref{rule:linpt} rule follows similarly.
      \qedhere
    \end{proofscript}
\end{proof}

\begin{theorem}

  Given
  \begin{align}
    \TRIPLE[\funspec] m_1; \levl; \actxt |= {P_1}{\cmd_1}{Q_1} \label{ass:parallel_1} \\
    \TRIPLE[\funspec] m_2; \levl; \actxt |= {P_2}{\cmd_2}{Q_2} \label{ass:parallel_2} \\
    \minLayer[\levl;\acontext]{Q_1}{m_2} \layleq m \\
    \minLayer[\levl;\acontext]{Q_2}{m_1} \layleq m
  \end{align}
  then:
  \[
  \TRIPLE[\funspec] m; \levl; \actxt |= {P_1 \ssep P_2}{\cmd_1 || \cmd_2}{Q_1 \ssep Q_2}
  \]
  
\end{theorem}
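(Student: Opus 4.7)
The plan is to combine the bowtie decomposition of Lemma~\ref{lemma:bowtie-comb} with the safety assembly of Lemma~\ref{lemma:parallel-safety}, and then add a termination argument that relies essentially on the layer side-conditions. Take an arbitrary $(\store_0,h_0)\trace \in \sem{\cmd_1 \parallel \cmd_2}_\functxt$ with $h_0 \in \sem{\WorldSem{\actxt}{\store_0}{P_1 \ssep P_2} \ssep \True}_\lvl$, and split the initial worlds as $p_1 \ssep p_2$ where $p_i = \WorldSem{\actxt}{\store_0}{P_i}$. Lemma~\ref{lemma:bowtie-comb} gives $\trace = \trace_1 \bowtie \trace_2$ with each $\trace_i \in \sem{\cmd_i}_\functxt$. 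Each subtrace starts in a heap where $p_i \ssep \True$ holds (the other thread's view lives in the frame), so the premises~\eqref{ass:parallel_1} and~\eqref{ass:parallel_2} produce sets $\straces_i$ with $\accept[\spec_i]{(\store_0,h_0)\trace_i}{p_i}{\emp}{1} : \straces_i$ and $\forall \strace_i \in \straces_i\st \goodenv_{m_i}(\strace_i) \implies \locterm(\strace_i)$.

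Next I would apply Lemma~\ref{lemma:parallel-safety} to obtain a set $\straces$ such that $\accept[\spec]{(\store_0,h_0)\trace}{p_1 \ssep p_2}{\emp}{v} : \straces$ and every $\strace \in \straces$ decomposes as $\strace_1 \sbowtie \strace_2$ with $\strace_i \in \straces_i$. To place $(\store_0,h_0)\trace$ in $\Sem{\spec}$, what remains is to show that $\goodenv_m(\strace) \implies \locterm(\strace)$ for every $\strace \in \straces$. Since a bowtie trace is locally terminating exactly when both of its components are, it suffices to establish the key lemma: $\goodenv_m(\strace) \implies \goodenv_{m_i}(\strace_i)$ for $i = 1, 2$.

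This key lemma is the main obstacle, and it is where the layer conditions $\minLay{Q_j}{m_i} \layleq m$ (which also embed $m_i \layleq m$) are indispensable. Under the subjective composition of worlds, thread~$j$'s local obligations appear as environment obligations in $\strace_i$ and vice versa, so a pseudo-obligation $\pi \in \POLay{m_i}$ that is held forever in $\strace_i$'s environment stems either from the real environment (which by $\goodenv_m(\strace)$ behaves well at layers $< m$, and $m_i \layleq m$), or from $\strace_j$ retaining some local obligation forever. The retained obligation can either (a)~be dropped eventually, which is supplied by $\goodenv_{m_j}(\strace_j) \implies \locterm(\strace_j)$, or (b)~persist in $Q_j$ upon termination — but then its layer is at least $m_i$ by $\minLay{Q_j}{m_i} \layleq m$, placing it outside the quantification range $\POLay{m_i}$ altogether.

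The obvious risk of unsound circularity, namely using $\goodenv_{m_j}(\strace_j) \Rightarrow \locterm(\strace_j)$ to conclude $\goodenv_{m_i}(\strace_i)$ and symmetrically in the other direction, is defused by performing the argument by well-founded induction on the layer $\lay(\pi)$ of the pseudo-obligation being analysed: any appeal to fulfilment of an obligation held by the sibling thread is made only at strictly smaller layers, so well-foundedness of $(\Layer, \layleq)$ terminates the induction. Combining the key lemma with the two premise implications yields $\locterm(\strace_1) \land \locterm(\strace_2)$, hence $\locterm(\strace)$, completing the soundness argument.
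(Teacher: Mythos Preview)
Your use of Lemmas~\ref{lemma:bowtie-comb} and~\ref{lemma:parallel-safety} to reduce the problem to a termination argument on $\strace = \strace_1 \sbowtie \strace_2$ matches the paper exactly. The gap is in the termination step.

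Your cases (a) and (b) for a sibling-held pseudo-obligation both presuppose that thread~$j$ terminates: (a) invokes $\locterm(\strace_j)$ directly, and (b) speaks of an obligation ``persisting in $Q_j$ upon termination''. Neither covers the possibility that thread~$j$ diverges. You propose to close this by well-founded induction on $\lay(\pi)$, but that does not discharge the appeal in~(a): obtaining $\locterm(\strace_j)$ from the premise requires the \emph{full} $\goodenv_{m_j}(\strace_j)$, i.e.\ the implication at every $\pi' \in \POLay{m_j}$, whereas the induction hypothesis only supplies those implications at layers strictly below $\lay(\pi)$. So the circularity is not broken. In fact the key lemma as a bare implication (without the termination premises) is false: one can have $\goodenv_m(\strace)$ hold while $\goodenv_{m_1}(\strace_1)$ fails, provided $\goodenv_{m_2}(\strace_2)$ holds---the premises merely make such traces impossible, they do not make the implication provable by your induction.

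The paper's route is to case-split on $\bigl(\locterm(\strace_1),\locterm(\strace_2)\bigr)$ first. If both terminate, done. If exactly one terminates, say thread~1, then $\neg\goodenv_{m_2}(\strace_2)$ gives a witness $\pi \in \POLay{m_2}$, and the side-condition $\minLay{Q_1}{m_2}$ ensures that after thread~1's last step it holds nothing at layers $< m_2$; this lets the failure of $\pi$ transfer verbatim from $\strace_2$ to $\strace$, giving $\neg\goodenv_m(\strace)$. Only when \emph{both} diverge does well-foundedness enter, and there it is a descending-chain argument fed by the \emph{negations} $\neg\goodenv_{m_i}(\strace_i)$: the witness $\pi$ from thread~1, together with $\goodenv_m(\strace)$, forces thread~2 to eventually always hold $\pi$ locally, which combined with the witness $\pi'$ from thread~2 yields $\lay(\pi) > \lay(\pi')$; alternating produces an infinite strictly descending chain of layers. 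The case split is essential because the postcondition layer conditions only constrain obligations \emph{after} termination, and the chain argument needs the negated $\goodenv$ hypotheses that only arise once non-termination is assumed.
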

\begin{proof}
  Taking $\functxt \in \FunCtxt$ arbitrary such that $\JUDGE |= \functxt : \funspec$, from \eqref{ass:parallel_1} and \eqref{ass:parallel_2}, $\sem{\cmd_1}_{\functxt} \subseteq \sem{\spec_1}$ and $\sem{\cmd_2}_{\functxt} \subseteq \sem{\spec_2}$ hold. Given an arbitrary $(\store_0, h_0)\trace \in \sem{\cmd_1 || \cmd_2}_\functxt$, need to show $(\store_0, h_0)\trace \in \sem{\spec}$. Let
  \begin{align*}
    p_1 &= \WorldSem{\actxt}{\store_0}{P_1} &
    p_2 &= \WorldSem{\actxt}{\store_0}{P_2}
  \end{align*}
  To reach the goal, assume $h_0 \in \sem{p_1 \ssep p_2 \ssep \True}_\lvl$. Then $h_0 \in \sem{p_1 \ssep \True}_\lvl$ and $h_0 \in \sem{p_2 \ssep \True}_\lvl$ hold. From \ref{lemma:bowtie-comb} and the definition of $\bowtie$, there exists $(\store_0, h_0)\trace_1 \in \sem{\cmd_1}_\functxt$ and $(\store_0, h_0)\trace_2 \in \sem{\cmd_2}_\functxt$ such that $(\store_0, h_0)\trace = (\store_0, h_0)\trace_1 \bowtie (\store_0, h_0)\trace_2$. As $\sem{\cmd_1}_{\functxt} \subseteq \sem{\spec_1}$ and $\sem{\cmd_2}_{\functxt} \subseteq \sem{\spec_2}$, $(\store_0, h_0)\trace_1 \in \sem{\spec_1}$ and $(\store_0, h_0)\trace_2 \in \sem{\spec_2}$ hold. Now, as $h_0 \in \sem{p_1 \ssep \True}_\lvl$ and $h_0 \in \sem{p_2 \ssep \True}_\lvl$, then for some $\straces_1, \straces_2 \in \mathcal{P}(\STrace)$:
  \begin{align*}
    & \accept{(\store_0, h_0)\trace_1}{p_1}{\emp}{1} : \straces_1 &
    \forall \strace_1 \in \wtr{\straces_1}{} \st & \goodenv_\spec(\strace_1) \implies \locterm(\strace_1)
    \\
    & \accept{(\store_0, h_0)\trace_2}{p_2}{\emp}{1} : \straces_2 &
    \forall \strace_2 \in \wtr{\straces_2}{} \st & \goodenv_\spec(\strace_2) \implies \locterm(\strace_2)
  \end{align*}
  As all commands must take at least one step, $\neg \term((\store_0, h_0)\trace_1)$ and $\neg \term((\store_0, h_0)\trace_2)$ hold, therefore:
  \begin{align*}
    & \term((\store_0,h_0)\trace_1) \implies p_1 = \WorldSem{\actxt}{\store}{Q_1}\\
    & \term((\store_0,h_0)\trace_2) \implies p_2 = \WorldSem{\actxt}{\store}{Q_2}
  \end{align*}
  hold. 
  Now, using lemma \ref{lemma:parallel-safety}, there exists $\straces \in \mathcal{P}(\STrace)$ such that:
  \begin{mathpar}
    \accept{(\store_0, h_0)\trace}{p_1 \ssep p_2}{\emp}{1} : \straces
  \end{mathpar}
  and for any $\strace \in \straces$, there exist $\strace_1 \in \straces_1$ and $\strace_2 \in \straces_2$, such that $\strace = \strace_1 \sbowtie \strace_2$. It now suffices to show that
  $\forall \wtrace \in \wtr{\straces}{} \st \goodenv_\spec(\wtrace) \implies \locterm(\wtrace)$. Take $\strace\in\straces$ arbitrary and $\strace_1 \in \straces_1$ and $\strace_2 \in \straces_2$ such that $\strace = \strace_1 \sbowtie \strace_2$ and
  $\wtrace \in \wtr{\strace}, \wtrace_1 \in \wtr{\strace_1}{}, \wtrace_2 \in \wtr{\strace_2}{}$.
  From above:
  \begin{align}
    & \goodenv_{\spec_1}(\wtrace_1) \implies \locterm(\wtrace_1)
    \label{ass:liveness_condition_1} \\
    & \goodenv_{\spec_2}(\wtrace_2) \implies \locterm(\wtrace_2)
    \label{ass:liveness_condition_2}
  \end{align}
  holds. To reach the goal, split on $\locterm(\wtrace_1)$ and $\locterm(\wtrace_2)$.
  \begin{proofscript}
  \item[Case~$\locterm(\wtrace_1) \land \locterm(\wtrace_2)$:] In this case, clearly $\locterm(\wtrace)$ holds, therefore $\goodenv_{\spec}(\wtrace) \implies \locterm(\wtrace)$ holds trivially,
    as required.
  \item[Case~$\locterm(\wtrace_1) \land \neg \locterm(\wtrace_2)$:]
    From $\neg \locterm(\wtrace_2)$, by \eqref{ass:liveness_condition_2}, $\neg \goodenv_{\spec_2}(\wtrace_2)$ holds:
    \begin{multline*}
      \exists \pobl \in \POLay{m_2}^{\spec_2}\st
      (\forall O \in \ObLay{\lay(\pobl)}\st \forall i\in \Nat\st \exists j \geq i\st \neg \locheld(O,\trAt[\wtrace_2]{j})) \land {} \\
      (\exists i\in \Nat\st \forall j \geq i\st \envheld(\smash{\pobl},\trAt[\wtrace_2]{j}))
    \end{multline*}
    As $\locterm(\wtrace_1)$, by \cref{lemma:safety_post}, there exists some $i_1 \in \Nat$,
    an index after which the trace $\wtrace_1$ only performs $\env$ steps, in particular,
    for any $j \ge i_1$, $\wtrace_1(j) = (\store, h, \na{w}^1, \at{w}^1, \DONE(1,1))$,
    where $\na{w}^1 \in \WorldSem{\store}{\actxt}{Q_1}$ and $\at{w}^1 \in \wEmp[\actxt]$.
    Therefore $\wtrace(j) = (\store, h, \na{w}^1 \worldSsep \na{w}^2, \at{w}^1 \worldSsep \at{w}^2, \DONE(1,1))$,
    where $\wtrace_2(j) = (\store, h, \na{w}^2, \at{w}^2, \DONE(1,1))$, such that $\at{w}^2 \in \wEmp[\actxt]$.
    Given $\minLayer[\levl;\acontext]{Q_1}{m_2}$, it is clear that:
    \begin{multline*}
      (\forall O \in \ObLay{\lay(\pobl)}\st \forall i\in \Nat\st \exists j \geq i\st \neg \locheld(O,\trAt[\wtrace_2]{j})) \implies {} \\
      (\forall O \in \ObLay{\lay(\pobl)}\st \forall i\in \Nat\st \exists j \geq i\st \neg \locheld(O,\trAt[\wtrace]{j}))
    \end{multline*}
    and similarly as $\pobl \in \POLay{m_2}^{\spec_2}$:
    \[
      (\exists i\in \Nat\st \forall j \geq i\st \envheld(\smash{\pobl},\trAt[\wtrace_2]{j})) \implies (\exists i\in \Nat\st \forall j \geq i\st \envheld(\smash{\pobl},\trAt[\wtrace]{j}))
    \]
    As $m_2 \le m$, $\pobl \in \POLay{m}^{\spec}$. Finally, from this $\neg \goodenv_{\spec}(\wtrace)$ holds, implying $\goodenv_{\spec}(\wtrace) \implies \locterm(\wtrace)$, as required.
  \item[Case~$\neg \locterm(\strace_1) \land \locterm(\strace_2)$:]
    Similarly to the previous case.
  \item[Case~$\neg \locterm(\wtrace_1) \land \neg \locterm(\wtrace_2)$:] Given \eqref{ass:liveness_condition_1} and \eqref{ass:liveness_condition_2},
    we can infer $\neg \goodenv_{\spec_1}(\wtrace_1)$ and $\neg \goodenv_{\spec_2}(\wtrace_2)$.
    Assume $\goodenv_{\spec}(\wtrace)$ for a contradiction.
    From $\neg \goodenv_{\spec_1}(\wtrace_1)$, for some $\pobl \in \POLay{m_1}^{\spec}$:
    \[
      (\forall O \in \ObLay{\lay(\pobl)}\st \forall i\in \Nat\st \exists j \geq i\st \neg \locheld(O,\trAt[\wtrace_1]{j})) \land (\exists i\in \Nat\st \forall j \geq i\st \envheld(\smash{\pobl},\trAt[\wtrace_1]{j}))
    \]
    From this and $\goodenv_{\spec}(\wtrace)$, there is some $i \in \Nat$ such that:
    \[
    \forall j \ge i \st \locheld(\pobl, \wtrace_2(j))
    \]
    From $\neg \goodenv_{\spec_2}(\wtrace_2)$, for some $\pobl' \in \POLay{m_2}^{\spec_2}$:
    \[
      (\forall O \in \ObLay{\lay(\pobl')}\st \forall i\in \Nat\st \exists j \geq i\st \neg \locheld(O,\trAt[\wtrace_2]{j})) \land (\exists i\in \Nat\st \forall j \geq i\st \envheld(\smash{\pobl'},\trAt[\wtrace_2]{j}))
    \]
    Given that $\forall j \ge i \st \locheld(\pobl, \wtrace_2(j))$, for $\forall O \in \ObLay{\lay(\pobl')}\st \forall i\in \Nat\st \exists j \geq i\st \neg \locheld(O,\trAt[\wtrace_2]{j})$ to hold,
    it must be the case that $\lay(\pobl) > \lay(\pobl')$. This argument can be repeated ad-infinitum, which, by the well-foundedness of layers, leads to a contradiction, and
    therefore $\neg \goodenv_{\spec}(\wtrace)$ holds.
    This implies $\goodenv_{\spec}(\wtrace) \implies \locterm(\wtrace)$.
  \end{proofscript}

  From these cases, we deduce that $\forall \wtrace \in \wtr{\straces} \st \goodenv_{\spec}(\wtrace) \implies \locterm(\wtrace)$.
  
  From this, we can infer $(\store_0, h_0)\trace \in \sem{\spec}$ and consequently, $\sem{\cmd_1 || \cmd_2}_\functxt \subseteq \sem{\spec}$, as required.
\end{proof}

\end{document}